\title{Algebras of Quantum Variables for Loop Quantum Gravity\\[5pt]
\textbf{IV. A new formulation of the holonomy-flux $^*$-algebra}}
\author{Diana Kaminski\\[3pt]
kaminski@math.uni-paderborn.de\\ 
\small{Europe - Germany}}
\date{August 19, 2011}
\newcommand{\Ab}{\begin{large}\bar{\mathcal{A}}\end{large}}
\newcommand{\Alg}{\begin{large}\mathfrak{A}\end{large}}
\newcommand{\Aut}{\begin{large}\mathfrak{Aut}\end{large}}
\newcommand{\CB}{\mathbb{C}}
\newcommand{\DD}{\mathcal{D}}
\newcommand{\E}{\mathcal{E}}
\newcommand{\Ep}{\mathbbm{E}}
\newcommand{\FF}{\mathfrak{F}}
\newcommand{\Goid}{\mathcal{G}}
\newcommand{\GG}{G}
\newcommand{\HS}{\mathcal{H}}
\newcommand{\KD}{\mathcal{K}}
\newcommand{\la}{\langle}
\newcommand{\LD}{\mathcal{L}}
\newcommand{\Lop}{\mathfrak{L}}
\newcommand{\N}{\mathbb{N}}
\newcommand{\op}{\mathfrak{o}}
\newcommand{\PD}{\mathcal{P}}
\newcommand{\ra}{\rangle}
\newcommand{\TD}{\mathcal{T}}
\newcommand{\FD}{\mathcal{F}}
\newcommand{\R}{\mathbb{R}}
\newcommand{\SimGroup}{\mathfrak{G}}
\newcommand{\UD}{\mathcal{U}}
\newcommand{\WF}{\mathfrak{W}}
\newcommand{\ZD}{\mathcal{Z}}
\newcommand{\ho}{\mathfrak{h}}
\newcommand{\go}{\mathfrak{g}}
\newcommand{\gop}{\mathbbm{g}}
\newcommand{\Go}{\mathfrak{G}}
\DeclareMathOperator{\Act}{Act}
\DeclareMathOperator{\adm}{A}
\DeclareMathOperator{\Der}{Der}
\DeclareMathOperator{\dif}{d}
\DeclareMathOperator{\diff}{surf}
\DeclareMathOperator{\Diff}{Diff}
\DeclareMathOperator{\Exp}{Exp}
\DeclareMathOperator{\Hol}{Hol}
\DeclareMathOperator{\Hom}{Hom}
\DeclareMathOperator{\id}{id}
\DeclareMathOperator{\Map}{Map}
\DeclareMathOperator{\Pol}{Pol}
\DeclareMathOperator{\ori}{or}
\DeclareMathOperator{\pr}{pr}
\DeclareMathOperator{\Rep}{Rep}
\newcommand{\gp}{{\gamma^\prime}}
\newcommand{\gpi}{{\gamma^\prime_i}}
\newcommand{\gpj}{{\gamma^\prime_j}}
\newcommand{\gpk}{{\gamma^\prime_K}}
\newcommand{\gppi}{{\gamma^{\prime\prime}_i}}
\newcommand{\gppj}{{\gamma^{\prime\prime}_j}}
\newcommand{\gppje}{{\gamma^{\prime\prime}_{j+1}}}
\newcommand{\gppl}{{\gamma^{\prime\prime}_l}}
\newcommand{\gpe}{{\gamma^\prime_1}}
\newcommand{\gpz}{{\gamma^\prime_2}}
\newcommand{\gpm}{{\gamma^\prime_M}}
\newcommand{\gppe}{{\gamma^{\prime\prime}_1}}
\newcommand{\gppm}{{\gamma^{\prime\prime}_M}}
\newcommand{\gpppn}{{\gamma^{\prime\prime\prime}_N}}
\newcommand{\tg}{{\tilde\gamma}}
\newcommand{\Gp}{{\Gamma^\prime}}
\newcommand{\Gpp}{\Gamma^{\prime\prime}}
\newcommand{\Gppp}{\Gamma^{\prime\prime\prime}}
\newcommand{\idf}{\mathbbm{1}}
\newcommand{\bra}{[}
\newcommand{\ket}{]}
\newcommand{\beq}{\begin{equation}\begin{aligned}}
\newcommand{\beqs}{\begin{equation*}\begin{aligned}}
\newcommand{\be}{\begin{flalign}}
\newcommand{\bes}{\begin{equation*}}
\newcommand{\eq}{\end{aligned}\end{equation}}
\newcommand{\eqs}{\end{aligned}\end{equation*}}
\newcommand{\ee}{\end{flalign}}
\newcommand{\ees}{\end{equation}}
\newcommand{\limi}{\underset{i\rightarrow\infty}{\underrightarrow{\lim}}}
\newcommand{\limpPD}{\underset{\PD_\Gamma\in \PD_{\Gamma_\infty}}{\underleftarrow{\lim}}}
\newtheorem{theo}{Theorem }[section]
\newtheorem{lem}[theo]{Lemma}
\newtheorem{rem}[theo]{Remark}
\newtheorem{prop}[theo]{Proposition}
\newtheorem{cor}[theo]{Corollary}
\newtheorem{defi}[theo]{Definition}
\newenvironment{proofs}[1][Proof ]{\noindent\textbf{#1}: }{\ \begin{flushright}
                                                                         \rule{0.5em}{0.5em}
                                                                        \end{flushright}}
\newcounter{exa}[section]
 \newenvironment{exa}{\refstepcounter{exa}
  \textbf{Example} \thesection.\arabic{exa}: }{ {\begin{flushright}
                                                                         \rule{0.2em}{0.2em}
                                                                        \end{flushright}}}
\newcounter{problem}[subsection]
 \newenvironment{problem}{\refstepcounter{problem}
  \textbf{Problem} \thesection.\arabic{problem}: }{{\begin{flushright}
                                                                         \rule{0.2em}{0.2em}
                                                                        \end{flushright}}}
\newcommand{\GGi}{\xymatrix{
  \Goid_1  \ar@<-2pt>[r] \ar@<2pt>[r] &  \Goid^0_1    \\
}}
\newcommand{\GGii}{\xymatrix{
  \Goid_2  \ar@<-2pt>[r] \ar@<2pt>[r] &  \Goid^0_2    \\
}}
\newcommand{\GGm}{\xymatrix{
  \Goid  \ar@<-1pt>[r]^{s} \ar@<1pt>[r]_{t} &  \Goid^0    \\
}}
\newcommand{\GGim}{\xymatrix{
  \Goid_1  \ar@<-1pt>[r]^{s_1} \ar@<1pt>[r]_{t_1} &  \Goid^0_1    \\
}}
\newcommand{\GGiim}{\xymatrix{
  \Goid_2  \ar@<-1pt>[r]^{s_2} \ar@<1pt>[r]_{t_2} &  \Goid^0_2    \\
}}
\newcommand{\PGm}{\xymatrix{
  \PD  \ar@<-1pt>[r]^{s} \ar@<1pt>[r]_{t} &  \Sigma    \\
}}
\newcommand{\PGs}{\PD\Sigma\rightrightarrows\Sigma}
\newcommand{\PGoS}{\PD\rightrightarrows\Sigma}
\newcommand{\fPGm}{\xymatrix{
  \PD_\Gamma  \ar@<-1pt>[r]^{s} \ar@<1pt>[r]_{t} &  V_\Gamma    \\
}}
\newcommand{\PGsm}{\xymatrix{
  \PD\Sigma \ar@<-1pt>[r]^{s_{\PD\Sigma}} \ar@<1pt>[r]_{t_{\PD\Sigma}} &  \Sigma   \\
}}
\newcommand{\fPGms}{\xymatrix{
  \PD^s_\Gamma  \ar@<-1pt>[r]^{s} \ar@<1pt>[r]_{t} &  V_\Gamma    \\
}}
\newcommand{\fPG}{\PD_\Gamma\Sigma \rightrightarrows V_\Gamma
}
\newcommand{\fPSGm}{\xymatrix{
  \PD_\Gamma\Sigma  \ar@<-1pt>[r]^{s} \ar@<1pt>[r]_{t} &  V_\Gamma    \\
}}
\newcommand{\fPSG}{\xymatrix{
  \PD_\Gamma\Sigma  \ar@<-2pt>[r] \ar@<2pt>[r] &  V_\Gamma    \\
}}
\newcommand{\fgHGm}{\xymatrix{
  H(\Gamma)  \ar@<-1pt>[r]^/0.3em/{\hat s_H} \ar@<1pt>[r]_/0.3em/{\hat t_H} &  V_\Gamma    \\
}}
\newcommand{\fHGm}{\xymatrix{
  H_\Gamma  \ar@<-1pt>[r]^/0.3em/{\hat s_H} \ar@<1pt>[r]_/0.3em/{\hat t_H} &  V_\Gamma    \\
}}
\newcommand{\fGGm}{\xymatrix{
  \G^G_\Gamma  \ar@<-1pt>[r]^/0.3em/{s_P} \ar@<1pt>[r]_/0.3em/{t_P} &  V_\Gamma    \\
}}
\newcommand{\fGHm}{\xymatrix{
  \G^H_\Gamma  \ar@<-1pt>[r]^/0.3em/{s_P} \ar@<1pt>[r]_/0.3em/{t_P} &  V_\Gamma    \\
}}
\newcounter{count}
\begin{document}
\maketitle
\begin{abstract}\noindent In this article the holonomy-flux $^*$-algebra, which has been introduced by Lewandowski, Oko\l{}ow, Sahlmann and Thiemann \cite{LOST06}, is modificated. The new $^*$-algebra is called the holonomy-flux cross-product $^*$-algebra. This algebra is an abstract cross-product $^*$-algebra. It is given by the universal algebra of the algebra of continuous and differentiable functions on the configuration space of generalised connections and the universal enveloping flux algebra associated to a surface set, and some canonical commutator relations. There is a uniqueness result for a certain path- and graph-diffeomorphism invariant state of the holonomy-flux cross-product $^*$-algebra. This new $^*$-algebra  is not the only $^*$-algebra, which is generated by the algebra of certain continuous and differentiable functions on the configuration space of generalised connections and the universal enveloping flux algebra associated to a surface set. The theory of abstract cross-product algebras allows to define different new $^*$-algebras. Some of these algebras are presented in this article. 
\end{abstract}

\thispagestyle{plain}
\pdfbookmark[0]{\contentsname}{toc}
\tableofcontents

\section{Introduction}

In \cite{AshCorZap98} Ashtekar, Corichi and Zapata have introduced the concept of an algebra generated by holonomies along paths, which is Lie group-valued, and quantum fluxes associated to surfaces and paths, which take values in the Lie algebra of the Lie group. A further analysed $^*$-algebra and representations of this $^*$-algebra have been presented by Sahlmann \cite{Sahlmann02,Sahl02}, or by Oko\l{}\'{o}w and Lewandowski \cite{OkolLew03,OkolowLewa05}. Finally, the holonomy-flux $^*$-algebra has been presented by the project group Lewandowski, Oko\l{}\'{o}w, Sahlmann and Thiemann in \cite{LOST06}. In this article the holonomy-flux $^*$-algebra is reformulated in a slightly different way such that the resulting $^*$-algebra is different from the $^*$-algebra presented by Lewandowski, Oko\l{}\'{o}w, Sahlmann and Thiemann. 

The ideas for the definition of the quantum configuration and momentum variables in the project \textit{AQV} have been presented in \cite{Kaminski0} and are shortly introduced in section \ref{quantum variables}. In this article the Lie algebra-valued or enveloping algebra-valued quantum flux operators are used intensively. These operators replace and generalise the flux-like variables of  Lewandowski, Oko\l{}\'{o}w, Sahlmann and Thiemann. Moreover, new $^*$-algebras are derived from certain functions depending on holonomies along paths in a graph and Lie algebra-valued or enveloping algebra-valued quantum flux operators and some canonical commutator relation among them. The concept of the construction is introduced in section \ref{subsec constrholfluxalg}.

The new $^*$-algebras are in particular abstract cross-product algebras, which have been presented by Schm\"udgen and Klimyk \cite{KlimSchmued94} in the context of Hopf algebras. The holonomy-flux $^*$-algebra is generated by all multiplication operators defined by functions in $C(\Ab)$ and left (or right) vector fields $e^L$ (or $e^R$) on $C^\infty(\Ab)$. Similarly to the $^*$-algebras in Quantum Mechanics the new \textit{holonomy-flux cross-product $^*$-algebra} is generated by the identity $\idf$, the holonomies along paths and the Lie algebra-valued quantum flux operators satisfying certain canonical commutator relations. If the surfaces are restricted to a certain set of surfaces, then this algebra is called the \textit{holonomy-flux cross-product $^*$-algebra associated to a surface set}. In contrast to the holonomy-flux $^*$-algebra the construction of the holonomy-flux cross-product $^*$-algebra is independent of the Hilbert space and the representation of the operators on the Hilbert space. For the definition of the \textit{holonomy-flux cross-product $^*$-algebra associated to a graph $\Gamma$ and a surface set $\breve S$} the enveloping flux algebra $\bar\E_{\breve S,\Gamma}$ associated to a surface set $\breve S$ and a graph $\Gamma$ is necessary. Then this abstract cross-product $^*$-algebra is given by the tensor vector space of the analytic holonomy $C^*$-algebra restricted to a graph and the enveloping flux algebra associated to a surface set equipped with a multiplication operation, which is derived from a certain action of enveloping flux algebra associated to a surface set on the analytic holonomy $C^*$-algebra restricted to a graph. In particular, it is used that the analytic holonomy $C^*$-algebra restricted to a graph is a right (or left) $\bar\E_{\breve S,\Gamma}$-module algebra. 

The $^*$-representation of the enveloping flux algebra associated to a surface set and a graph is given by \textit{infinitesimal representation of the flux group associated to the surface set $\breve S$ and the graph $\Gamma$} on the Hilbert space $\HS_\Gamma$. The $^*$-representation $\pi$ of the holonomy-flux cross-product $^*$-algebra associated to the graph $\Gamma$ and the surface set $\breve S$ is given by this representation $\dif U_{\overrightarrow{L}}$ of the enveloping flux algebra associated to the surface set and the graph and the representation $\Phi_M$ of the analytic holonomy $C^*$-algebra restricted to the graph. Consequently, an element $f_\Gamma\otimes E_S(\gamma)$ is represented on the Hilbert space $\HS_\Gamma$ by
\beqs \pi(f_\Gamma\otimes E_S(\gamma)):=\frac{1}{2}\Phi_M(e^{\overrightarrow{L}}(f_\Gamma)) + \frac{1}{2} \Phi_M(f_\Gamma)\dif U_{\overrightarrow{L}}(E_S(\gamma))
\eqs
where $e^{\overrightarrow{L}}$ denotes the right-invariant vector field and $E_S(\gamma)$ is an element of the enveloping flux algebra associated to a surface set $\breve S$ and a graph $\Gamma$. The representation extends to a representation of the  holonomy-flux cross-product $^*$-algebra associated to a surface set $\breve S$. In theorem \ref{theo stateholfluxalg} it is shown that, the corresponding state is the unique surface-orientation-preserving graph-diffeomorphism invariant state of the holonomy-flux cross-product $^*$-algebra associated to the surface set $\breve S$. Moreover, for a restricted notion of graph-diffeomorphism invariance other $^*$-representations and states are studied in section \ref{subsec Repholfluxcross}. Simple tensor products of the holonomy-flux cross-product $^*$-algebra are shortly presented in section \ref{subsec tensorholflux}.

On the other hand, the Heisenberg double has been introduced by Schm\"udgen and Klimyk \cite{KlimSchmued94}. The Heisenberg double $\HS(C^\infty(G),\E)$ depends on a Lie algebra $G$ and the enveloping algebra $\E$ of the Lie algebra $\go$ associated to $G$. The universal enveloping algebra $\E$ of the complexified Lie algebra $\go^{\CB}$ is equipped with an antilinear and antimultiplicative involution $Y\mapsto Y^+$ such that $X^+=-X$ for all $X\in\go$. Therefore, the universal enveloping flux algebra itself is a unital $^*$-algebra, which is isomorphic to a particular $O^*$-algebra in the sense of Inoue \cite{Inoue} and Schm\"udgen \cite{Schmuedgen90}. The holonomy-flux cross-product $^*$-algebra associated to surfaces presented in section \ref{subsec constrholfluxalg}, is regarded as an abstract cross-product algebra, which is constructed from holonomies ($G$-valued) and quantum fluxes ($\go$- or $\E$-valued).  But it is discovered in section \ref{sec summaryofholfluxstar} that, the holonomy-flux cross-product $^*$-algebra is not equivalent to the Heisenberg algebra $\HS(C^\infty(\bar G_{\breve S,\Gamma}),\bar\E_{\breve S,\Gamma})$ associated to a surface set and a graph, which is presented in section \ref{subsec Heisenbergalgebras}. Indeed there are three different $\bar \E_{\breve S,\Gamma}$-module algebras that define three different holonomy-flux cross-product $^*$-algebras. The definitons of these algebras are related to different bilinear maps that define left (or right) module algebras and different multiplication operations on vector spaces that define cross-products.

In section \ref{sec tableQMholflux} the construction of algebras of quantum configuration and momentum variables are compared for Quantum Mechanics, the Weyl $C^*$-algebras associated to a graph and surface sets and the holonomy-flux cross-product $^*$-algebra associated to a graph and surface sets.

To summarise the holonomy-flux $^*$-algebra \cite{LOST06} is reformulated and generalised to the holonomy-flux cross-product $^*$-algebra by using the theory of abstract cross-product algebras invented by Schm\"udgen and Klimyk \cite{KlimSchmued94}. In particular this concept is based on Hopf algebras and has been used in the context of quantum groups. A further project is the reformulation of the algebra defined by Oko\l{}\'{o}w and Lewandowski \cite{LewOk08} in this new context.
\section{The basic quantum operators}\label{quantum variables}
\subsection{Finite path groupoids and graph systems}\label{subsec fingraphpathgroup}
Let $c:[0,1]\rightarrow\Sigma$ be continuous curve in the domain $\bra 0,1\ket$, which is (piecewise) $C^k$-differentiable ($1\leq k\leq \infty$), analytic ($k=\omega$) or semi-analytic ($k=s\omega$) in $\bra 0,1\ket$ and oriented such that the source vertex is $c(0)=s(c)$ and the target vertex is $c(1)=t(c)$. Moreover assume that, the range of each subinterval of the curve $c$ is a submanifold, which can be embedded in $\Sigma$. An \textbf{edge} is given by a \hyperlink{rep-equiv}{reparametrisation invariant} curve of class (piecewise) $C^k$. The maps $s_{\Sigma},t_{\Sigma}:P\Sigma\rightarrow\Sigma$ where $P\Sigma$ is the path space are surjective maps and are called the source or target map.    

A set of edges $\{e_i\}_{i=1,...,N}$ is called \textbf{independent} iff the only intersections points of the edges are source $s_{\Sigma}(e_i)$ or $t_{\Sigma}(e_i)$ target points. Composed edges are called \textbf{paths}. An \textbf{initial segment} of a path $\gamma$ is a path $\gamma_1$ such that there exists another path $\gamma_2$ and $\gamma=\gamma_1\circ\gamma_2$. The second element $\gamma_2$ is also called a \textbf{final segment} of the path $\gamma$.

\begin{defi}
A \textbf{graph} $\Gamma$ is a union of finitely many independent edges $\{e_i\}_{i=1,...,N}$ for $N\in\N$. The set $\{e_1,...,e_N\}$ is called the \textbf{generating set for $\Gamma$}. The number of edges of a graph is denoted by $\vert \Gamma\vert$. The elements of the set $V_\Gamma:=\{s_{\Sigma}(e_k),t_{\Sigma}(e_k)\}_{k=1,...,N}$ of source and target points are called \textbf{vertices}.
\end{defi}

A graph generates a finite path groupoid in the sense that, the set $\PD_\Gamma\Sigma$ contains all independent edges, their inverses and all possible compositions of edges. All the elements of $\PD_\Gamma\Sigma$ are called paths associated to a graph. Furthermore the surjective source and target maps $s_{\Sigma}$ and $t_{\Sigma}$ are restricted to the maps $s,t:\PD_\Gamma\Sigma\rightarrow V_\Gamma$, which are required to be surjective.

\begin{defi}\label{path groupoid} Let $\Gamma$ be a graph. Then a \textbf{finite path groupoid} $\PD_\Gamma\Sigma$ over $V_\Gamma$ is a pair $(\PD_\Gamma\Sigma, V_\Gamma)$ of finite sets equipped with the following structures: 
\begin{enumerate}
 \item two surjective maps \(s,t:\PD_\Gamma\Sigma\rightarrow V_\Gamma\), which are called the source and target map,
\item the set \(\PD_\Gamma\Sigma^2:=\{ (\gamma_i,\gamma_j)\in\PD_\Gamma\Sigma\times\PD_\Gamma\Sigma: t(\gamma_i)=s(\gamma_j)\}\) of finitely many composable pairs of paths,
\item the  composition \(\circ :\PD_\Gamma^2\Sigma\rightarrow \PD_\Gamma\Sigma,\text{ where }(\gamma_i,\gamma_j)\mapsto \gamma_i\circ \gamma_j\), 
\item the inversion map \(\gamma_i\mapsto \gamma_i^{-1}\) of a path,
\item the object inclusion map \(\iota:V_\Gamma\rightarrow\PD_\Gamma\Sigma\) and
\item $\PD_\Gamma\Sigma$ is defined by the set $\PD_\Gamma\Sigma$ modulo the algebraic equivalence relations generated by
\beq\label{groupoid0} \gamma_i^{-1}\circ \gamma_i\simeq \idf_{s(\gamma_i)}\text{ and }\gamma_i\circ \gamma_i^{-1}\simeq \idf_{t(\gamma_i)}
\eq 
\end{enumerate}
Shortly write $\fPSGm$. 
\end{defi} 
Clearly, a graph $\Gamma$ generates freely the paths in $\PD_\Gamma\Sigma$. Moreover the map $s \times t: \PD_\Gamma\Sigma\rightarrow V_\Gamma\times V_\Gamma$ defined by $(s\times t)(\gamma)=(s(\gamma),t(\gamma))$ for all $\gamma\in\PD_\Gamma\Sigma$ is assumed to be surjective ($\PD_\Gamma\Sigma$ over $V_\Gamma$ is a transitive groupoid), too. 

A general groupoid $\GG$ over $\GG^{0}$ defines a small category where the set of morphisms is denoted in general by $\GG$ and the set of objects is denoted by $\GG^{0}$. Hence in particular the path groupoid can be viewed as a category, since,
\begin{itemize}
\item the set of morphisms is identified with $\PD_\Gamma\Sigma$,
\item the set of objects is given by $V_\Gamma$ (the units) 
\end{itemize}

From the condition (\ref{groupoid0}) it follows that, the path groupoid satisfies additionally 
\begin{enumerate}
 \item $ s(\gamma_i\circ \gamma_j)=s(\gamma_i)\text{ and } t(\gamma_i\circ \gamma_j)=t(\gamma_j)\text{ for every } (\gamma_i,\gamma_j)\in\PD_\Gamma^2\Sigma$
\item $s(v)= v= t(v)\text{ for every } v\in V_\Gamma$
\item\label{groupoid1} $ \gamma \circ\idf_{s(\gamma)} = \gamma = \idf_{t(\gamma)}\circ \gamma\text{ for every } \gamma\in \PD_\Gamma\Sigma\text{ and }$
\item $\gamma \circ (\gamma_i\circ \gamma_j)=(\gamma \circ \gamma_i) \circ \gamma_j$
\item $\gamma \circ (\gamma^{-1}\circ \gamma_1)=\gamma_1= (\gamma_1 \circ \gamma) \circ \gamma^{-1}$
\end{enumerate}

The condition \ref{groupoid1} implies that the vertices are units of the groupoid. 

\begin{defi}
Denote the set of all finitely generated paths by
\beqs \PD_\Gamma\Sigma^{(n)}:=\{(\gamma_1,...,\gamma_n)\in \PD_\Gamma\times ...\PD_\Gamma: (\gamma_i,\gamma_{i+1})\in\PD^{(2)}, 1\leq i\leq n-1 \}\eqs
The set of paths with source point $v\in V_\Gamma$ is given by
\beqs \PD_\Gamma\Sigma^{v}:=s^{-1}(\{v\})\eqs
The set of paths with target  point $v\in V_\Gamma$ is defined by
\beqs \PD_\Gamma\Sigma_{v}:=t^{-1}(\{v\})\eqs
The set of paths with source point $v\in V_\Gamma$ and target point $u\in V_\Gamma$ is 
\beqs \PD_\Gamma\Sigma^{v}_u:=\PD_\Gamma\Sigma^{v}\cap \PD_\Gamma\Sigma_{u}\eqs
\end{defi}

A graph $\Gamma$ is said to be \hypertarget{disconnected}{\textbf{disconnected}} iff it contains only mutually pairs $(\gamma_i,\gamma_j)$ of non-composable independent paths $\gamma_i$ and $\gamma_j$ for $i\neq j$ and $i,j=1,...,N$. In other words for all $1\leq i,l\leq N$ it is true that $s(\gamma_i)\neq t(\gamma_l)$ and $t(\gamma_i)\neq s(\gamma_l)$ where $i\neq l$ and $\gamma_i,\gamma_l\in\Gamma$.

\begin{defi}
Let $\Gamma$ be a graph. A \textbf{subgraph $\Gp$ of $\Gamma$} is a given by a finite set of independent paths in $\PD_\Gamma\Sigma$. 
\end{defi}
For example let $\Gamma:=\{\gamma_1,...,\gamma_N\}$ then $\Gp:=\{\gamma_1\circ\gamma_2,\gamma_3^{-1},\gamma_4\}$ where $\gamma_1\circ\gamma_2,\gamma_3^{-1},\gamma_4\in\PD_\Gamma\Sigma$ is a subgraph of $\Gamma$, whereas the set $\{\gamma_1,\gamma_1\circ\gamma_2\}$ is not a subgraph of $\Gamma$. Notice if additionally $(\gamma_2,\gamma_4)\in\PD_\Gamma^{(2)}$ holds, then $\{\gamma_1,\gamma_3^{-1},\gamma_2\circ\gamma_4\}$ is a subgraph of $\Gamma$, too. Moreover for $\Gamma:=\{\gamma\}$ the graph $\Gamma^{-1}:=\{\gamma^{-1}\}$ is a subgraph of $\Gamma$. As well the graph $\Gamma$ is a subgraph of $\Gamma^{-1}$. A subgraph of $\Gamma$ that is generated by compositions of some paths, which are not reversed in their orientation, of the set $\{\gamma_1,...,\gamma_N\}$ is called an \textbf{orientation preserved subgraph of a graph}. For example for $\Gamma:=\{\gamma_1,...,\gamma_N\}$ orientation preserved subgraphs are given by $\{\gamma_1\circ\gamma_2\}$, $\{\gamma_1,\gamma_2,\gamma_N\}$ or $\{\gamma_{N-2}\circ\gamma_{N-1}\}$ if $(\gamma_1,\gamma_2)\in\PD_\Gamma\Sigma^{(2)}$ and $(\gamma_{N-2},\gamma_{N-1})\in\PD_\Gamma\Sigma^{(2)}$.   

\begin{defi}
A \textbf{finite graph system $\PD_\Gamma$ for $\Gamma$} is a finite set of subgraphs of a graph $\Gamma$. A finite graph system $\PD_{\Gp}$ for $\Gp$ is a \hypertarget{finite graph subsystem}{\textbf{finite graph subsystem}} of $\PD_\Gamma$ for $\Gamma$ iff the set $\PD_{\Gp}$ is a subset of $\PD_{\Gamma}$ and $\Gp$ is a subgraph of $\Gamma$. Shortly write $\PD_{\Gp}\leq\PD_{\Gamma}$.

A \hypertarget{finite orientation preserved graph system}{\textbf{finite orientation preserved graph system}} $\PD^{\op}_\Gamma$ for $\Gamma$ is a finite set of orientation preserved subgraphs of a graph $\Gamma$. 
\end{defi}

Recall that, a finite path groupoid is constructed from a graph $\Gamma$, but a set of elements of the path groupoid need not be a graph again. For example let $\Gamma:=\{\gamma_1\circ\gamma_2\}$ and $\Gp=\{\gamma_1\circ\gamma_3\}$, then $\Gpp=\Gamma\cup\Gp$ is not a graph, since this set is not independent. Hence only appropriate unions of paths, which are elements of a fixed finite path groupoid, define graphs. The idea is to define a suitable action on elements of the path groupoid, which corresponds to an action of diffeomorphisms on the manifold $\Sigma$. The action has to be transfered to graph systems. But the action of bisection, which is defined by the use of the groupoid multiplication, cannot easily generalised for graph systems. 

\begin{problem}\label{problem group structure on graphs systems}
Let $\breve\Gamma:=\{\Gamma_i\}_{i=1,..,N}$ be a finite set such that each $\Gamma_i$ is a set of not necessarily independent paths such that 
\begin{enumerate}
\item the set contains no loops and
\item each pair of paths satisfies one of the following conditions
\begin{itemize}
\item the paths intersect each other only in one vertex,
\item the paths do not intersect each other or
\item one path of the pair is a segment of the other path.
\end{itemize}
\end{enumerate}

Then there is a map $\circ:\breve\Gamma\times \breve\Gamma\rightarrow\breve\Gamma$ of two elements $\Gamma_1$ and $\Gamma_2$ defined by
\beqs \{\gamma_1,...,\gamma_M\}\circ\{\tg_1,...,\tg_M\}:= &\Big\{ \gamma_i\circ\tg_j:t(\gamma_i)=s(\tg_j)\Big\}_{1\leq i,j\leq M}\\
\eqs for $\Gamma_1:=\{\gamma_1,...,\gamma_M\},\Gamma_2:=\{\tg_1,...,\tg_M\}$. 
Moreover define a map $^{-1}:\breve\Gamma\rightarrow\breve\Gamma$ by
\beqs  \{\gamma_1,...,\gamma_M\}^{-1}:= \{\gamma^{-1}_1,...,\gamma^{-1}_M\}\eqs 

Then the following is derived
\beqs \{\gamma_1,...,\gamma_M\}\circ\{\gamma^{-1}_1,...,\gamma^{-1}_M\}&=\Big\{ \gamma_i\circ\gamma^{-1}_j: t(\gamma_i)=t(\gamma_j)\Big\}_{1\leq i,j\leq M}\\
&=\Big\{ \gamma_i\circ\gamma^{-1}_j:t(\gamma_i)=t(\gamma_j)\text{ and }i\neq j\Big\}_{1\leq i,j\leq M}\\
&\quad\cup\{\idf_{s_{\gamma_j}}\}_{1\leq j\leq M}\\
\neq &\quad\cup\{\idf_{s_{\gamma_j}}\}_{1\leq j\leq M}
\eqs The equality is true, if the set $\breve\Gamma$ contains only graphs such that all paths are mutually non-composable. Consequently this does not define a well-defined multiplication map. Notice that, the same is discovered if a similar map and inversion operation are defined for a finite graph system $\PD_\Gamma$. 
\end{problem}

Consequently the property of paths being independent need not be dropped for the definition of a suitable multiplication and inversion operation. In fact the independence property is a necessary condition for the construction of the holonomy algebra for analytic paths. Only under this circumstance each analytic path is decomposed into a finite product of independent piecewise analytic paths again. 

\begin{defi}
A finite path groupoid $\PD_{\Gp}\Sigma$ over $V_{\Gp}$ is a \textbf{finite path subgroupoid} of $\PD_{\Gamma}\Sigma$ over $V_\Gamma$ iff the set $V_{\Gp}$ is contained in $V_\Gamma$ and the set $\PD_{\Gp}\Sigma$ is a subset of $\PD_{\Gamma}\Sigma$. Shortly write $\PD_{\Gp}\Sigma\leq\PD_{\Gamma}\Sigma$.
\end{defi}

Clearly for a subgraph $\Gamma_1$ of a graph $\Gamma_2$, the associated path groupoid $\PD_{\Gamma_1}\Sigma$ over $V_{\Gamma_1}$ is a subgroupoid of $\PD_{\Gamma_2}\Sigma$ over $V_{\Gamma_2}$.  This is a consequence of the fact that, each path in $\PD_{\Gamma_1}\Sigma$ is a composition of paths or their inverses in $\PD_{\Gamma_2}\Sigma$. 

\begin{defi}
A \textbf{family of finite path groupoids} $\{\PD_{\Gamma_i}\Sigma\}_{i=1,...,\infty}$, which is a set of finite path groupoids $\PD_{\Gamma_i}\Sigma$ over $V_{\Gamma_i}$, is said to be \textbf{inductive} iff for any $\PD_{\Gamma_1}\Sigma,\PD_{\Gamma_2}\Sigma$ exists a $\PD_{\Gamma_3}\Sigma$ such that $\PD_{\Gamma_1}\Sigma,\PD_{\Gamma_2}\Sigma\leq\PD_{\Gamma_3}\Sigma$.

A \textbf{family of graph systems} $\{\PD_{\Gamma_i}\}_{i=1,...,\infty}$, which is a set of finite path systems $\PD_{\Gamma_i}$ for $\Gamma_i$, is said to be \textbf{inductive} iff for any $\PD_{\Gamma_1},\PD_{\Gamma_2}$ exists a $\PD_{\Gamma_3}$ such that $\PD_{\Gamma_1},\PD_{\Gamma_2}\leq \PD_{\Gamma_3}$.
\end{defi}

\begin{defi}
Let $\{\PD_{\Gamma_i}\Sigma\}_{i=1,...,\infty}$ be an inductive family of path groupoids and $\{\PD_{\Gamma_i}\}_{i=1,...,\infty}$ be an inductive family of graph systems.

The \textbf{inductive limit path groupoid $\PD$ over $\Sigma$} of an inductive family of finite path groupoids such that $\PD:=\limi\PD_{\Gamma_i}\Sigma$ is called the \textbf{(algebraic) path groupoid} $\PGoS$.

Moreover there exists an \textbf{inductive limit graph $\Gamma_\infty$} of an inductive family of graphs such that $\Gamma_\infty:=\limi \Gamma_i$.

The \textbf{inductive limit graph system} $\PD_{\Gamma_\infty}$ of an inductive family of graph systems such that $\PD_{\Gamma_\infty}:=\limi \PD_{\Gamma_i}$
\end{defi}

Assume that, the inductive limit $\Gamma_\infty$ of a inductive family of graphs is a graph, which consists of an infinite countable number of independent paths. The inductive limit $\PD_{\Gamma_\infty}$ of a inductive family $\{\PD_{\Gamma_i}\}$ of finite graph systems contains an infinite countable number of subgraphs of $\Gamma_\infty$ and each subgraph is a finite set of arbitrary independent paths in $\Sigma$. 

\subsection{Holonomy maps for finite path groupoids, graph systems and transformations}\label{subsec holmapsfinpath}
In section \ref{subsec fingraphpathgroup} the concept of finite path groupoids for analytic paths has been given. Now the holonomy maps are introduced for finite path groupoids and finite graph systems. The ideas are familar with those presented by Thiemann \cite{Thiembook07}. But for example the finite graph systems have not been studied before. Ashtekar and Lewandowski \cite{AshLew93} have defined the analytic holonomy $C^*$-algebra, which they have based on a finite set of independent hoops. The hoops are generalised for path groupoids and the independence requirement is implemented by the concept of finite graph systems. 

\subsubsection{Holonomy maps for finite path groupoids}\label{subsubsec holmap}

\paragraph*{Groupoid morphisms for finite path groupoids}\hspace{10pt} 

Let $\GGim, \GGiim$ be two arbitrary groupoids.

\begin{defi}
A \hypertarget{groupoid-morphism}{\textbf{groupoid morphism}} between two groupoids $\GG_1$ and $\GG_2$ consists of two maps  $\ho:\GG_1\rightarrow\GG_2$  and $h:\GG_1^0\rightarrow\GG_2^0$ such that
\beqs (\hypertarget{G1}{G1})\qquad \ho(\gamma\circ\gp)&= \ho(\gamma)\ho(\gp)\text{ for all }(\gamma,\gp)\in \GG_1^{(2)}\eqs
\beqs (\hypertarget{G2}{G2})\qquad s_{2}(\ho(\gamma))&=h(s_{1}(\gamma)),\quad t_2(\ho(\gamma))=h(t_{1}(\gamma))\eqs 
 
A \textbf{strong groupoid morphism} between two groupoids $\GG_1$ and $\GG_2$ additionally satisfies
\beqs (\hypertarget{SG2}{SG})\qquad \text{ for every pair }(\ho(\gamma),\ho(\gp))\in\GG_2^{(2)}\text{ it follows that }(\gamma,\gp)\in \GG_1^{(2)}\eqs
\end{defi}

Let $G$ be a Lie group. Then $G$ over $e_G$ is a groupoid, where the group multiplication $\cdot: G^2\rightarrow G$ is defined for all elements  $g_1,g_2,g\in G$ such that $g_1\cdot g_2 = g$. A groupoid morphism between a finite path groupoid $\PD_\Gamma\Sigma$ to $G$ is given by the maps
\[\ho_\Gamma: \PD_\Gamma\Sigma\rightarrow G,\quad h_\Gamma:V_\Gamma\rightarrow e_G \] Clearly
\beq \ho_\Gamma(\gamma\circ\gp)&= \ho_\Gamma(\gamma)\ho_\Gamma(\gp)\text{ for all }(\gamma,\gp)\in \PD_\Gamma\Sigma^{(2)}\\
s_G(\ho_\Gamma(\gamma))&=h_\Gamma(s_{\PD_\Gamma\Sigma}(\gamma)),\quad t_G(\ho_\Gamma(\gamma))=h_\Gamma(t_{\PD_\Gamma\Sigma}(\gamma))
\eq But for an arbitrary pair $(\ho_\Gamma(\gamma_1),\ho_\Gamma(\gamma_2))=:(g_1,g_2)\in G^{(2)}$ it does not follows that, $(\gamma_1,\gamma_2)\in \PD_\Gamma\Sigma^{(2)}$ is true. Hence $\ho_\Gamma$ is not a strong groupoid morphism.

\begin{defi}\label{def sameholanal}Let $\fPG$ be a finite path groupoid.

Two paths $\gamma$ and $\gp$ in $\PD_\Gamma\Sigma$ have the \textbf{same-holonomy for all connections} iff 
\beqs \ho_\Gamma(\gamma)=\ho_\Gamma(\gp)\text{ for all }&(\ho_\Gamma,h_\Gamma)\text{ groupoid morphisms }\\ & \ho_\Gamma:\PD_\Gamma\Sigma\rightarrow G, h:V_\Gamma\rightarrow\{e_G\}
\eqs Denote the relation by $\sim_{\text{s.hol.}}$.
\end{defi}
\begin{lem}
The same-holonomy for all connections relation is an equivalence relation. 
\end{lem}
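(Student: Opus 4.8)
The plan is to verify the three defining properties of an equivalence relation — reflexivity, symmetry and transitivity — directly from the definition, exploiting the fact that $\sim_{\text{s.hol.}}$ is nothing but equality in $G$ quantified universally over the (fixed) collection of groupoid morphisms $(\ho_\Gamma,h_\Gamma)$ with $\ho_\Gamma:\PD_\Gamma\Sigma\rightarrow G$ and $h_\Gamma:V_\Gamma\rightarrow\{e_G\}$. The central observation I would record first is that, for each individual such morphism, the condition $\ho_\Gamma(\gamma)=\ho_\Gamma(\gp)$ is an honest equality of elements of the Lie group $G$; since equality is itself an equivalence relation and since the universal quantifier ranges over one and the same morphism set in every clause, each of the three properties of equality lifts verbatim through that quantifier.

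First I would check reflexivity: for any $\gamma\in\PD_\Gamma\Sigma$ and any admissible pair $(\ho_\Gamma,h_\Gamma)$ one trivially has $\ho_\Gamma(\gamma)=\ho_\Gamma(\gamma)$, whence $\gamma\sim_{\text{s.hol.}}\gamma$. For symmetry I would suppose $\gamma\sim_{\text{s.hol.}}\gp$, so that $\ho_\Gamma(\gamma)=\ho_\Gamma(\gp)$ holds for every morphism; reversing each of these equalities in $G$ gives $\ho_\Gamma(\gp)=\ho_\Gamma(\gamma)$ for every morphism, i.e.\ $\gp\sim_{\text{s.hol.}}\gamma$. For transitivity I would assume $\gamma\sim_{\text{s.hol.}}\gp$ and $\gp\sim_{\text{s.hol.}}\gpp$, fix an arbitrary morphism $(\ho_\Gamma,h_\Gamma)$, and use the two hypotheses to obtain $\ho_\Gamma(\gamma)=\ho_\Gamma(\gp)$ and $\ho_\Gamma(\gp)=\ho_\Gamma(\gpp)$; chaining these equalities in $G$ yields $\ho_\Gamma(\gamma)=\ho_\Gamma(\gpp)$, and since the morphism was arbitrary this gives $\gamma\sim_{\text{s.hol.}}\gpp$.

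I do not anticipate any genuine obstacle: the argument is a formal transfer of the equivalence-relation axioms for equality through a universal quantifier. The only point deserving a moment's care is that the quantifier ranges over the very same (fixed) set of admissible groupoid morphisms in all three clauses, so that the chaining step in transitivity is legitimate morphism-by-morphism; one may also note in passing that this set is nonempty, since the assignment sending every path to $e_G$ and every vertex to $e_G$ is a groupoid morphism, so the relation is not trivially degenerate.
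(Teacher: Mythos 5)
Your proof is correct: the relation is pointwise equality in $G$ quantified over a fixed set of groupoid morphisms, so reflexivity, symmetry and transitivity lift verbatim from equality, exactly as you argue. The paper itself states this lemma without any proof, treating it as immediate, so your verification (including the sensible remark that the morphism set is nonempty via the trivial morphism) is precisely the routine argument the author left implicit.
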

Notice that, the quotient of the finite path groupoid and the same-holonomy relation for all connections replace the hoop group, which has been used in \cite{AshLew93}.
\begin{defi}\label{genrestgroupoidforgraph}
Let $\fPG$ be a finite path groupoid modulo same-holonomy for all connections equivalence.

A \hypertarget{holonomy map for a finite path groupoid}{\textbf{holonomy map for a finite path groupoid}} $\PD_\Gamma\Sigma$ over $V_\Gamma$ is a groupoid morphism consisting of the maps $(\ho_\Gamma,h_\Gamma)$, where
\(\ho_\Gamma:\PD_\Gamma\Sigma\rightarrow G,h_\Gamma:V_\Gamma\rightarrow \{e_G\}\). 
The set of all holonomy maps is abbreviated by $\Hom(\PD_\Gamma\Sigma,G)$.
\end{defi}

For a short notation observe the following.
In further sections it is always assumed that, the finite path groupoid $\fPG$ is considered modulo same-holonomy for all connections equivalence although it is not stated explicitly.

\paragraph*{Admissable maps and equivalent groupoid morphisms}\hspace{10pt}

Now consider a finite path groupoid morphism $(\ho_\Gamma,h_\Gamma)$ from a finite path groupoid $\PD_\Gamma\Sigma$ over $V_\Gamma$ to the groupoid $G$ over $\{e_G\}$, which is contained in $\Hom(\PD_\Gamma\Sigma,G)$.

Consider an arbitrary map $\go_\Gamma: \PD_\Gamma\Sigma\rightarrow G$. Then there is a groupoid morphism defined by
\beq\label{eq similarity_2b} \Go_\Gamma(\gamma)&:= \go_\Gamma(\gamma)\ho_\Gamma(\gamma)\go_\Gamma(\gamma^{-1})^{-1}\text{ for all }\gamma\in\PD_\Gamma\Sigma\\
\eq if and only if 
\beqs \Go_\Gamma(\gamma_1\circ\gamma_2)&=\Go_\Gamma(\gamma_1)\Go_\Gamma(\gamma_2)\text{ for all }(\gamma_1,\gamma_2)\in\PD_\Gamma\Sigma^{(2)}\eqs holds. Then $\Go_\Gamma\in \Hom(\PD_\Gamma\Sigma,G)$.

Hence for all $(\gamma_1,\gamma_2)\in\PD_\Gamma\Sigma^{(2)}$ it is necessary that
\beqs  \Go_\Gamma(\gamma_1\circ\gamma_2)
&= \go_\Gamma(\gamma_1\circ\gamma_2)\ho_\Gamma(\gamma_1\circ\gamma_2)\go_\Gamma(\gamma_2^{-1}\circ\gamma_1^{-1})^{-1}\\
&=\go_\Gamma(\gamma_1\circ\gamma_2)\ho_\Gamma(\gamma_1)\ho_\Gamma(\gamma_2)\go_\Gamma(\gamma_2^{-1}\circ\gamma_1^{-1})^{-1}\\
&\overset{!}{=}\go_\Gamma(\gamma_1)\ho_\Gamma(\gamma_1)\go_\Gamma(\gamma_1^{-1})^{-1}\go_\Gamma(\gamma_2)\ho_\Gamma(\gamma_2)\go_\Gamma(\gamma_2^{-1})^{-1}
\eqs is satisfied. Therefore the map is required to fulfill
\beq\label{equ2 g_gamma} &\go_\Gamma(\gamma_1)=\go_\Gamma(\gamma_1\circ\gamma_2)\text{, }\go_\Gamma(\gamma_2^{-1})=\go_\Gamma((\gamma_1\circ\gamma_2)^{-1})\text{ and }\\
&\go_\Gamma(\gamma_1^{-1})^{-1}\go_\Gamma(\gamma_2)=e_G\text{ for all }(\gamma_1,\gamma_2)\in\PD_\Gamma\Sigma^{(2)}\text{ in particular, }\\
&\go_\Gamma(\gamma^{-1})^{-1}\go_\Gamma(\gamma)=e_G\text{ for all }(\gamma^{-1},\gamma)\in\PD_\Gamma\Sigma^{(2)}
\eq
for every refinement $\gamma_1\circ\gamma_2$ of each $\gamma$ in $\PD_\Gamma\Sigma$ and $\gamma_1$ being an initial segment of $\gamma_1\circ \gamma_2$ and $\gamma_2^{-1}$ an final segment of $(\gamma_1\circ\gamma_2)^{-1}$.
In comparison with Fleischhack's definition in \cite[Def. 3.7]{Fleischhack06} such maps are called admissible.
\begin{defi}\label{def admiss}
The set of maps $\go_\Gamma:\PD_\Gamma\Sigma\rightarrow G$ satisfying \eqref{equ2 g_gamma} for all pairs of decomposable paths in $\PD_\Gamma^{(2)}\Sigma$ is called the \textbf{set of admissible maps} and is denoted by $\Map^{\adm}(\PD_\Gamma\Sigma,G)$. 
\end{defi} 

Consider a map $g_\Gamma:V_\Gamma\rightarrow G$ such that 
\beqs(g_\Gamma,\ho_\Gamma)\in \Map(V_\Gamma,G)\times \Hom(\PD_\Gamma\Sigma,G)
\eqs which is also called a local gauge map.
Then the map $ \tilde\Go_\Gamma$ defined by
\beq\label{eq similarity1} \tilde\Go_\Gamma(\gamma)&:= g_\Gamma(s(\gamma))\ho_\Gamma(\gamma)g_\Gamma(s(\gamma^{-1}))^{-1}\text{ for all }\gamma\in\PD_\Gamma\Sigma\eq is a groupoid morphism. This is a result of the computation: 
\beqs \tilde\Go_\Gamma(\gamma_1\gamma_2)&= g_\Gamma(s(\gamma_1))\ho_\Gamma(\gamma_1\gamma_2)g_\Gamma(t(\gamma_2))^{-1}\\&= g_\Gamma(s(\gamma_1))\ho_\Gamma(\gamma_1)g_\Gamma(t(\gamma_1))^{-1}
g_\Gamma(s(\gamma_2))\ho_\Gamma(\gamma_2)g_\Gamma(t(\gamma))^{-1}\eqs
since $t(\gamma_1)=s(\gamma_2)$.  

\begin{defi}\label{def similargroupoidhom}
Two groupoid morphisms $(\ho_\Gamma,h_\Gamma)$ and $(\Go_\Gamma,h_\Gamma)$, or respectively $(\tilde\Go_\Gamma,h_\Gamma)$, between the groupoids $\PD_\Gamma$ over $V_\Gamma$ and the groupoid $G$ over $\{e_G\}$, which are defined for $(\go_\Gamma,\ho_\Gamma)\in\Map(\PD_\Gamma\Sigma,G)\times\Hom(\PD_\Gamma\Sigma,G)$ by \eqref{eq similarity_2b}, or respectively for $(g_\Gamma,\ho_\Gamma)\in \Map(V_\Gamma,G)\times \Hom(\PD_\Gamma\Sigma,G)$ by \eqref{eq similarity1}, are said to be \textbf{similar or equivalent groupoid morphisms}.
\end{defi}

\subsubsection{Holonomy maps for finite graph systems}\label{subsec graphhol}

Ashtekar and Lewandowski \cite{AshLew93} have presented the loop decomposition into a finite set of independent hoops (in the analytic category). This structure is replaced by a graph, since a graph is a set of independent edges. Notice that, the set of hoops that is generated by a finite set of independent hoops, is generalised to the set of finite graph systems. A finite path groupoid is generated by the set of edges, which defines a graph $\Gamma$, but a set of elements of the path groupoid need not be a graph again. The appropriate notion for graphs constructed from sets of paths is the finite graph system, which is defined in section \ref{subsec fingraphpathgroup}. Now the concept of holonomy maps is generalised for finite graph systems. Since the set, which is generated by a finite number of independent edges, contains paths that are composable, there are two possibilities to identify the image of the holonomy map for a finite graph system on a fixed graph with a subgroup of $G^{\vert\Gamma\vert}$. One way is to use the generating set of independend edges of a graph, which has been also used in \cite{AshLew93}. On the other hand, it is also possible to identify each graph with a disconnected subgraph of a fixed graph, which is generated by a set of independent edges. Notice that, the author implements two situations. One case is given by a set of paths that can be composed further and the other case is related to paths that are not composable. This is necessary for the definition of an action of the flux operators. Precisely the identification of the image of the holonomy maps along these paths is necessary to define a well-defined action of a flux element on the configuration space. This issue has been studied in \cite{Kaminski1,KaminskiPHD}.

First of all consider a graph $\Gamma$ that is generated by the set $\{\gamma_1,...,\gamma_N\}$ of edges. Then each subgraph of a graph $\Gamma$ contain paths that are composition of edges in $\{\gamma_1,...,\gamma_N\}$ or inverse edges. For example the following set $\Gp:=\{\gamma_1\circ\gamma_2\circ\gamma_3,\gamma_4\}$ defines a subgraph of $\Gamma:=\{\gamma_1,\gamma_2,\gamma_3,\gamma_4\}$. Hence there is a natural identification available.

\begin{defi}
A subgraph $\Gp$ of a graph $\Gamma$ is always generated by a subset $\{\gamma_1,...,\gamma_M\}$ of the generating set $\{\gamma_1,...,\gamma_N\}$ of independent edges that generates the graph $\Gamma$. Hence each subgraph is identified with a subset of $\{\gamma_1^{\pm 1},...,\gamma_N^{\pm 1}\}$. This is called the \hypertarget{natural identification}{\textbf{natural identification of subgraphs}}.
\end{defi}

\begin{exa}\label{exa natidentif}
For example consider a subgraph $\Gp:=\{\gamma_1\circ\gamma_2,\gamma_3\circ\gamma_4,...,\gamma_{M-1}\circ\gamma_M\}$, which is identified naturally with a set $\{\gamma_1,...,\gamma_M\}$. The set $\{\gamma_1,...,\gamma_M\}$ is a subset of $\{\gamma_1,...,\gamma_N\}$ where $N=\vert \Gamma\vert$ and $M\leq N$. 

Another example is given by the graph $\Gpp:=\{\gamma_1,\gamma_2\}$ such that $\gamma_2=\gpe\circ\gpz$, then $\Gpp$ is identified naturally with $\{\gamma_1,\gpe,\gpz\}$. This set is a subset of $\{\gamma_1,\gpe,\gpz,\gamma_3,...,\gamma_{N-1}\}$. 
\end{exa}

\begin{defi}
Let $\Gamma$ be a graph, $\PD_\Gamma$ be the finite graph system. Let $\Gp:=\{\gamma_1,...,\gamma_M\}$be a subgraph of $\Gamma$.

A \hypertarget{holonomy map for a finite graph system}{\textbf{holonomy map for a finite graph system}} $\PD_\Gamma$ is a given by a pair of maps $(\ho_\Gamma,h_\Gamma)$ such that there exists a holonomy map\footnote{In the work the holonomy map for a finite graph system and the holonomy map for a finite path groupoid is denoted by the same pair $(\ho_\Gamma,h_\Gamma)$.} $(\ho_\Gamma,h_\Gamma)$ for the finite path groupoid $\fPG$ and
\beqs &\ho_\Gamma:\PD_\Gamma\rightarrow G^{\vert \Gamma\vert},\quad \ho_\Gamma(\{\gamma_1,...,\gamma_M\})=(\ho_\Gamma(\gamma_1),...,\ho_\Gamma(\gamma_M), e_G,...,e_G)\\
&h_\Gamma:V_\Gamma\rightarrow \{e_G\}
\eqs 
The set of all holonomy maps for the finite graph system is denoted by $\Hom(\PD_\Gamma,G^{\vert \Gamma\vert})$.

The image of a map $\ho_\Gamma$ on each subgraph $\Gp$ of the graph $\Gamma$ is given by
\beqs (\ho_\Gamma(\gamma_1),...,\ho_\Gamma(\gamma_M),e_G,...,e_G)
\eqs is an element of $G^{\vert \Gamma\vert}$. The set of all images of maps on subgraphs of $\Gamma$ is denoted by $\Ab_\Gamma$.
\end{defi}
The idea is now to study two different restrictions of the set $\PD_\Gamma$ of subgraphs. For a short notation of a ''set of  holonomy maps for a certain restricted set of subgraphs of a graph'' in this article the following notions are introduced.
\begin{defi}
If the subset of all disconnected subgraphs of the finite graph system $\PD_\Gamma$ is considered, then the restriction of $\Ab_\Gamma$, which is identified with $G^{\vert \Gamma\vert}$ appropriately, is called the \hypertarget{non-standard identification}{\textbf{non-standard identification of the configuration space}}. If the subset of all natural identified subgraphs of the finite graph system $\PD_\Gamma$ is considered, then the restriction of $\Ab_\Gamma$, which is identified with $G^{\vert \Gamma\vert}$ appropriately, is called the \hypertarget{natural identification}{\textbf{natural identification of the configuration space}}.
\end{defi}

A comment on the non-standard identification of $\Ab_\Gamma$ is the following. If $\Gp:=\{\gamma_1\circ\gamma_2\}$ and $\Gpp:=\{\gamma_2\}$ are two subgraphs of $\Gamma:=\{\gamma_1,\gamma_2,\gamma_3\}$. The graph $\Gp$ is a subgraph of $\Gamma$. Then evaluation of a map $\ho_\Gamma$ on a subgraph $\Gp$ is given by
\beqs \ho_\Gamma(\Gp)=(\ho_\Gamma(\gamma_1\circ\gamma_2),\ho_\Gamma(s(\gamma_2)),\ho_\Gamma(s(\gamma_3)))=(\ho_\Gamma(\gamma_1)\ho_\Gamma(\gamma_2),e_G,e_G)\in G^3
\eqs and the holonomy map of the subgraph $\Gpp$ of $\Gp$ is evaluated by
\beqs \ho_\Gamma(\Gpp)=(\ho_\Gamma(s(\gamma_1)),\ho_\Gamma(s(\gamma_2))\ho_\Gamma(\gamma_2),\ho_\Gamma(s(\gamma_3)))=(\ho_\Gamma(\gamma_2),e_G,e_G)\in G^3
\eqs

\begin{exa}
Recall example \thesubsection.\ref{exa natidentif}.
For example for a subgraph $\Gp:=\{\gamma_1\circ\gamma_2,\gamma_3\circ\gamma_4,...,\gamma_{M-1}\circ\gamma_M\}$, which is naturally identified with a set $\{\gamma_1,...,\gamma_M\}$. Then the holonomy map is evaluated at $\Gp$ such that \[\ho_\Gamma(\Gp)=(\ho_\Gamma(\gamma_1),\ho_\Gamma(\gamma_2),....,\ho_\Gamma(\gamma_M),e_G,...,e_G)\in G^N\] where $N=\vert \Gamma\vert$. For example, let $\Gp:=\{\gamma_1,\gamma_2\}$ such that $\gamma_2=\gpe\circ\gpz$ and which is naturally identified with $\{\gamma_1,\gpe,\gpz\}$. Hence \[\ho_\Gamma(\Gp)=(\ho_\Gamma(\gamma_1),\ho_\Gamma(\gpe),\ho_\Gamma(\gpz),e_G,...,e_G)\in G^N\] is true.

Another example is given by the disconnected graph $\Gp:=\{\gamma_1\circ\gamma_2\circ\gamma_3,\gamma_4\}$, which is a subgraph of $\Gamma:=\{\gamma_1,\gamma_2,\gamma_3,\gamma_4\}$. Then the non-standard identification is given by
\[\ho_\Gamma(\Gp)=(\ho_\Gamma(\gamma_1\circ\gamma_2\circ\gamma_3),\ho_\Gamma(\gamma_4),e_G,e_G)\in G^4\]

If the natural identification is used, then $\ho_\Gamma(\Gp)$ is idenified with 
\[(\ho_\Gamma(\gamma_1),\ho_\Gamma(\gamma_2),\ho_\Gamma(\gamma_3),\ho_\Gamma(\gamma_4))\in G^4\]

Consider the following example. Let $\Gppp:=\{\gamma_1,\alpha,\gamma_2,\gamma_3\}$ be a graph such that 
 \begin{center}
\includegraphics[width=0.2\textwidth]{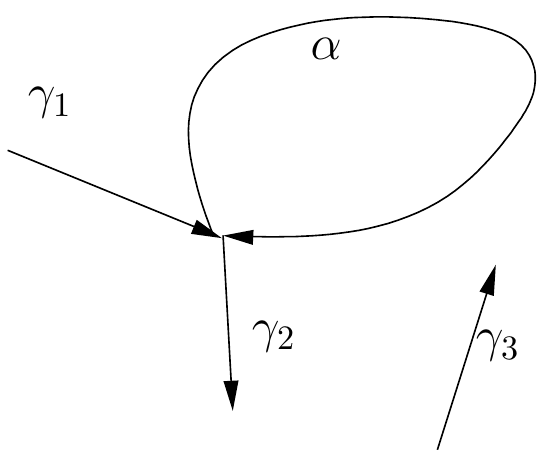}
\end{center}
Then notice the sets $\Gamma_1:=\{\gamma_1\circ\alpha,\gamma_3\}$ and $\Gamma_2:=\{\gamma_1\circ\alpha^{-1},\gamma_3\}$. In the non-standard identification of the configuration space $\Ab_{\Gppp}$ it is true that,
\beqs \ho_{\Gppp}(\Gamma_1)=(\ho_{\Gppp}(\gamma_1\circ\alpha),\ho_{\Gppp}(\gamma_3),e_G,e_G)\in G^4,\\
\ho_{\Gppp}(\Gamma_2)=(\ho_{\Gppp}(\gamma_1\circ\alpha^{-1}),\ho_{\Gppp}(\gamma_3),e_G,e_G)\in G^4
\eqs holds. Whereas in the natural identification of $\Ab_{\Gppp}$
 \beqs \ho_{\Gppp}(\Gamma_1)=(\ho_{\Gppp}(\gamma_1),\ho_{\Gppp}(\alpha),\ho_{\Gppp}(\gamma_3),e_G)\in G^4,\\
\ho_{\Gppp}(\Gamma_2)=(\ho_{\Gppp}(\gamma_1),\ho_{\Gppp}(\alpha^{-1}),\ho_{\Gppp}(\gamma_3),e_G)\in G^4
\eqs yields.
\end{exa}

The equivalence class of similar or equivalent groupoid morphisms defined in definition \ref{def similargroupoidhom} allows to define the following object.
The set of images of all holonomy maps of a finite graph system modulo the similar or equivalent groupoid morphisms equivalence relation is denoted by $\Ab_\Gamma/\bar\SimGroup_\Gamma$. 

\subsubsection{Transformations in finite path groupoids and finite graph systems}\label{subsubsec bisections}

The aim of this section is to clearify the graph changing operators in LQG framework and the role of finite diffeomorphisms in $\Sigma$. 
Therefore operations, which add, delete or transform paths, are introduced.  In particular translations in a finite path graph groupoid and in the groupoid $G$ over $\{e_G\}$ are studied. 

\paragraph*{Transformations in finite path groupoid\\[5pt]}

\begin{defi}
Let $\varphi$ be a $C^k$-diffeomorphism on $\Sigma$, which maps surfaces into surfaces. 

Then let $(\Phi_\Gamma,\varphi_\Gamma)$ be a pair of bijective maps, where $\varphi\vert_{V_\Gamma}=\varphi_\Gamma$ and 
\beq\Phi_\Gamma:\PD_\Gamma\Sigma\rightarrow\PD_\Gamma\Sigma\text{ and }\varphi_\Gamma:V_\Gamma\rightarrow V_\Gamma\eq 
such that 
\beq (s\circ\Phi_\Gamma)(\gamma)=(\varphi_\Gamma\circ s)(\gamma),\quad (t\circ \Phi_\Gamma)(\gamma)=(\varphi_\Gamma\circ t)(\gamma)\text{ for all }\gamma\in\PD_\Gamma\Sigma\eq holds such that $(\Phi_\Gamma,\varphi_\Gamma)$ defines a groupoid morphism.

Call the pair $(\Phi_\Gamma,\varphi_\Gamma)$ a \textbf{path-diffeomorphism of a finite path groupoid} $\PD_\Gamma\Sigma$ over $V_\Gamma$. Denote the set of finite path-diffeomorphisms by $\Diff(\PD_\Gamma\Sigma)$.
\end{defi}

Notice that, for $(\gamma,\gp)\in\PD_\Gamma\Sigma^{(2)}$ it is true that
\beq\label{eq requcombi0} \Phi_\Gamma(\gamma\circ\gp)=\Phi_\Gamma(\gamma)\circ\Phi_\Gamma(\gp)
\eq requires that
\beq\label{eq requcombi} (t\circ\Phi_\Gamma)(\gamma)=(s\circ\Phi_\Gamma)(\gp)
\eq Hence from \eqref{eq requcombi0} and \eqref{eq requcombi} it follows that, $\Phi_\Gamma(\idf_v)=\idf_{\varphi_\Gamma(v)}$ is true.

A path-diffeomorphism $(\Phi_\Gamma,\varphi_\Gamma)$ is lifted to $\Hom(\PD_\Gamma\Sigma,G)$. \\
The pair $(\ho_\Gamma\circ\Phi_\Gamma,h_\Gamma\circ\varphi_\Gamma)$ defined by
\beqs \ho_\Gamma\circ\Phi_\Gamma&: \PD_\Gamma\Sigma\rightarrow G,\quad \gamma\mapsto (\ho_\Gamma\circ\Phi_\Gamma)(\gamma)\\
h_\Gamma\circ\varphi_\Gamma&: V_\Gamma\rightarrow \{e_G\},\quad (h_\Gamma\circ\varphi_\Gamma)(v)=e_G
\eqs such that
\beqs &s_{\Hol}((\ho_\Gamma\circ\Phi_\Gamma)(\gamma))=(h_\Gamma\circ\varphi_\Gamma)(s(\gamma))=e_G,\\
&t_{\Hol}(\ho_\Gamma\circ\Phi_\Gamma(\gamma))=(h_\Gamma\circ\varphi_\Gamma)(t(\gamma))=e_G\text{ for all }\gamma\in\PD_\Gamma\Sigma
\eqs whenever $(\ho_\Gamma,h_\Gamma)\in\Hom(\PD_\Gamma\Sigma,G)$ and $(\Phi_\Gamma,\varphi_\Gamma)$ is a path-diffeomorphism, is a \hyperlink{holonomy map for a finite path groupoid}{holonomy map for a finite path groupoid} $\PD_\Gamma\Sigma$ over $V_\Gamma$.

\begin{defi}
A \textbf{left-translation in the finite path groupoid} $\PD_\Gamma\Sigma$ over $V_\Gamma$ at a vertex $v$ is a map defined by
\beqs L_\theta:\PD_\Gamma\Sigma^v\rightarrow \PD_\Gamma\Sigma^{w},\quad \gamma\mapsto L_{\theta}(\gamma):=\theta\circ\gamma
\eqs
for some $\theta\in\PD_\Gamma\Sigma_{v}^{w}$ and all $\gamma\in\PD_\Gamma\Sigma^v$.
\end{defi} 
In analogy a right-translation $R_\theta$ and an inner-translation $I_{\theta,\theta^\prime}$ in the finite path groupoid $\PD_\Gamma\Sigma$ over $V_\Gamma$ at a vertex $v$ can be defined.
\begin{rem}
Let $(\Phi_\Gamma,\varphi_\Gamma)$ be a path-diffeomorphism on a finite path groupoid $\PD_\Gamma\Sigma$ over $V_\Gamma$. Then a left-translation in the finite path groupoid $\PD_\Gamma\Sigma$ over $V_\Gamma$ at a vertex $v$ is defined by a path-diffeomorphism $(\Phi_\Gamma,\varphi_\Gamma)$ and the following object
\beq L_{\Phi_\Gamma}:\PD_\Gamma\Sigma^v\rightarrow \PD_\Gamma\Sigma^{\varphi_\Gamma(v)},\quad \gamma\mapsto L_{\Phi_\Gamma}(\gamma):=\Phi_\Gamma(\gamma)\text{ for }\gamma\in\PD_\Gamma\Sigma^v
\eq
Furthermore a right-translation in the finite path groupoid $\PD_\Gamma\Sigma$ over $V_\Gamma$ at a vertex $v$ is defined by a path-diffeomorphism $(\Phi_\Gamma,\varphi_\Gamma)$ and the following object
\beq R_{\Phi_\Gamma}:\PD_\Gamma\Sigma_v\rightarrow \PD_\Gamma\Sigma_{\varphi_\Gamma(v)},\quad \gamma\mapsto R_{\Phi_\Gamma}(\gamma):=\Phi_\Gamma(\gamma)\text{ for }\gamma\in\PD_\Gamma\Sigma_v
\eq

Finally an inner-translation in the finite path groupoid $\PD_\Gamma\Sigma$ over $V_\Gamma$ at the vertices $v$ and $w$ is defined by
\beqs  I_{\Phi_\Gamma}:\PD_\Gamma\Sigma^v_w\rightarrow \PD_\Gamma\Sigma^{\varphi_\Gamma(v)}_{\varphi_\Gamma(w)},\quad \gamma\mapsto I_{\Phi_\Gamma}(\gamma)=\Phi_\Gamma(\gamma)\text{ for }\gamma\in\PD_\Gamma\Sigma^v_w
\eqs where $(s\circ\Phi_\Gamma)(\gamma)=\varphi_\Gamma(v)$ and $(t\circ\Phi_\Gamma)(\gamma)=\varphi_\Gamma(w)$.
\end{rem}

In the following considerations the right-translation in a finite path groupoid is focused, but there is a generalisation to left-translations and inner-translations.
\begin{defi}
A \hypertarget{bisection of a finite path groupoid}{\textbf{bisection of a finite path groupoid}} $\PD_\Gamma\Sigma$ over $V_\Gamma$ is a map $\sigma:V_\Gamma\rightarrow\PD_\Gamma\Sigma$, which is right-inverse to the map $s:\PD_\Gamma\Sigma\rightarrow V_\Gamma$ (i.o.w. $s\circ\sigma=\id_{V_\Gamma}$) and such that $t\circ\sigma:V_\Gamma\rightarrow V_{\Gamma}$ is a bijective map\footnote{Note that in the infinite case of path groupoids an additional condition for the map $t\circ\sigma:\Sigma\rightarrow\Sigma$ has to be required. The map has to be a diffeomorphism. Observe that, the map $t\circ\sigma$ defines the finite diffeomorphism $\varphi_\Gamma:V_\Gamma\rightarrow V_\Gamma$.}. The set of bisections on $\PD_\Gamma\Sigma$ over $V_\Gamma$ is denoted $\mathfrak{B}(\PD_\Gamma\Sigma)$.
\end{defi}
\begin{rem}\label{rem defiofright}
Discover that, a bisection $\sigma\in\mathfrak{B}(\PD_\Gamma\Sigma)$ defines a path-diffeomorphism $(\varphi_\Gamma,\Phi_\Gamma)\in \Diff(\PD_\Gamma\Sigma)$, where $\varphi_\Gamma=t\circ\sigma$ and $\Phi_\Gamma$ is given by the right-translation $R_{\sigma(v)}:\PD_\Gamma\Sigma_v\rightarrow\PD_\Gamma\Sigma_{\varphi_\Gamma(v)}$ in $\fPG$, where $R_{\sigma(v)}(\gamma)=\Phi_\Gamma(\gamma)$ for all $\gamma\in\PD_\Gamma\Sigma_v$ and for a fixed $v\in V_\Gamma$. The right-translation is defined by 
\beq\label{eq Rendv}R_{\sigma(v)}(\gamma):=
\left\{\begin{array}{ll}
 \gamma\circ\sigma(v) & v=t(\gamma)\\
\gamma\circ\idf_{t(\gamma)} & v\neq t(\gamma)\\
\end{array}\right.\\
\eq whenever $t(\gamma)$ is the target vertex of a non-trivial path $\gamma$ in $\Gamma$. For a trivial path $\idf_v$ the right-translation is defined by $R_{\sigma(v)}(\idf_v)=\idf_{(t\circ\sigma)(v)}$ and $R_{\sigma(v)}(\idf_w)=\idf_{w}$ whenever $v\neq w$. The right-translation $R_{\sigma(v)}$ is required to be bijective. Before this result is proven in lemma \ref{lem path-diffeom} notice the following considerations. 
\end{rem}

Note that, $(R_{\sigma(v)},t\circ \sigma)$ transfers to the holonomy map such that
\beq\label{eq righttransl} (\ho_\Gamma\circ R_{\sigma(t(\gp))}(\gamma\circ\gp)&=\ho_\Gamma(\gamma\circ\gp\circ\sigma(t(\gp)))\\
&=\ho_\Gamma(\gamma)\ho_\Gamma(\gp\circ\sigma(t(\gp)))
\eq is true.
There is a bijective map between a right-translation $R_{\sigma(v)}:\PD_{\Gamma}\Sigma_v\rightarrow\PD_\Gamma\Sigma_{(t\circ\sigma)(v)}$ and a path-diffeomorphism $(\varphi_\Gamma,\Phi_\Gamma)$. In particular observe that, $\sigma\in\mathfrak{B}(\PD_\Gamma\Sigma_v)$ and $(\varphi_\Gamma,\Phi_\Gamma)\in\Diff(\PD_\Gamma\Sigma_v)$. Simply speaking the path-diffeomorphism does not change the source and target vertex at the same time. The path-diffomorphism changes the target vertex by a (finite) diffeomorphism and, therefore, the path is transformed. 

Bisections $\sigma$ in a finite path groupoid can be transfered, likewise path-diffeomorphisms, to holonomy maps. The pair $(\ho_\Gamma\circ \Phi_\Gamma, h_\Gamma\circ\varphi_\Gamma)$ of the maps
defines a pair of maps $(\ho_\Gamma\circ \Phi_\Gamma,h_\Gamma\circ\varphi_\Gamma)$ by 
\beq \ho_\Gamma\circ \Phi_\Gamma:\PD_\Gamma\Sigma_v\rightarrow G\text{ and } h_\Gamma\circ\varphi_\Gamma: V_\Gamma\rightarrow\{e_G\}
\eq which is a \hyperlink{holonomy map for a finite path groupoid}{holonomy map for a finite path groupoid} $\PD_\Gamma\Sigma$ over $V_\Gamma$.

\begin{lem}\label{lem groupbisection}The set $\mathfrak{B}(\PD_\Gamma\Sigma)$ of bisections on the finite path groupoid $\PD_\Gamma\Sigma$ over $V_\Gamma$ forms a group.
\end{lem}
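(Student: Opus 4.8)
The plan is to put on $\mathfrak{B}(\PD_\Gamma\Sigma)$ the convolution product induced by the groupoid composition and to verify the group axioms directly. For two bisections $\sigma_1,\sigma_2$ I would define their product pointwise by
\[(\sigma_1\ast\sigma_2)(v):=\sigma_1(v)\circ\sigma_2\big((t\circ\sigma_1)(v)\big),\qquad v\in V_\Gamma.\]
This is meaningful because the two paths are composable: since $s\circ\sigma_1=\id_{V_\Gamma}$ one has $t(\sigma_1(v))=(t\circ\sigma_1)(v)=s\big(\sigma_2((t\circ\sigma_1)(v))\big)$, so the pair lies in $\PD_\Gamma\Sigma^{(2)}$. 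The first step is closure. From the source--target rules for composition, $s\big((\sigma_1\ast\sigma_2)(v)\big)=s(\sigma_1(v))=v$, hence $s\circ(\sigma_1\ast\sigma_2)=\id_{V_\Gamma}$; moreover $t\circ(\sigma_1\ast\sigma_2)=(t\circ\sigma_2)\circ(t\circ\sigma_1)$ is a composite of two bijections of $V_\Gamma$ and therefore a bijection, so $\sigma_1\ast\sigma_2\in\mathfrak{B}(\PD_\Gamma\Sigma)$.

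Next I would verify associativity and exhibit the unit. Associativity of $\ast$ is inherited from that of $\circ$: expanding both sides, $((\sigma_1\ast\sigma_2)\ast\sigma_3)(v)$ and $(\sigma_1\ast(\sigma_2\ast\sigma_3))(v)$ each reduce to $\sigma_1(v)\circ\sigma_2((t\circ\sigma_1)(v))\circ\sigma_3\big((t\circ\sigma_2\circ t\circ\sigma_1)(v)\big)$ after removing brackets with the groupoid associativity law. The neutral element is the object inclusion $\iota\colon V_\Gamma\to\PD_\Gamma\Sigma$, $v\mapsto\idf_v$, which is a bisection since $s(\idf_v)=v$ and $t\circ\iota=\id_{V_\Gamma}$; the unit laws then give $(\iota\ast\sigma)(v)=\idf_v\circ\sigma(v)=\sigma(v)$ and $(\sigma\ast\iota)(v)=\sigma(v)\circ\idf_{t(\sigma(v))}=\sigma(v)$.

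Finally I would produce inverses. Writing $\psi:=t\circ\sigma$, which is a bijection of $V_\Gamma$ by the definition of a bisection, I set $\sigma^{-1}(v):=\big(\sigma(\psi^{-1}(v))\big)^{-1}$. Using $s(\gamma^{-1})=t(\gamma)$ and $t(\gamma^{-1})=s(\gamma)$ one checks $s\circ\sigma^{-1}=\id_{V_\Gamma}$ and $t\circ\sigma^{-1}=\psi^{-1}$, so $\sigma^{-1}$ is again a bisection; the relations \eqref{groupoid0} then yield $(\sigma\ast\sigma^{-1})(v)=\sigma(v)\circ(\sigma(v))^{-1}=\idf_v$ and, symmetrically, $(\sigma^{-1}\ast\sigma)(v)=\big(\sigma(\psi^{-1}(v))\big)^{-1}\circ\sigma(\psi^{-1}(v))=\idf_v$, i.e. $\sigma\ast\sigma^{-1}=\iota=\sigma^{-1}\ast\sigma$.

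The only delicate point is the bookkeeping of source and target vertices through the inversion map, and it is precisely here that the two defining conditions of a bisection are used essentially: the invertibility of $\psi=t\circ\sigma$ is what allows $\sigma^{-1}$ to be written down at all, and $s\circ\sigma=\id_{V_\Gamma}$ is what makes every convolution composable. I expect the main obstacle to be keeping the index manipulations consistent with the paper's composition convention (composability $t(\gamma_i)=s(\gamma_j)$ together with \eqref{groupoid0}); once the correct unit and inverse identities are pinned down, closure, associativity, the unit and the inverse all follow by direct substitution. An alternative route would be to transport the group structure along the assignment $\sigma\mapsto(\Phi_\Gamma,\varphi_\Gamma)$ into $\Diff(\PD_\Gamma\Sigma)$ from Remark~\ref{rem defiofright}, but that presupposes the bijectivity of the right-translation proved only later, so I would prefer the self-contained convolution argument above.
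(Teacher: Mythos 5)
Your proof is correct and takes essentially the same route as the paper: a pointwise product built from groupoid composition shifted along $t\circ\sigma$, the object inclusion $v\mapsto\idf_v$ as unit, and the inverse $\sigma^{-1}(v)=\bigl(\sigma((t\circ\sigma)^{-1}(v))\bigr)^{-1}$ are exactly the paper's formulas, the only (immaterial) difference being that the paper writes the product in the opposite order, $(\sigma\ast\sigma^\prime)(v)=\sigma^\prime(v)\circ\sigma(t(\sigma^\prime(v)))$, which yields the opposite, isomorphic group. Your write-up is in fact more complete than the paper's, which states closure and associativity without verification.
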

\begin{proofs}The group multiplication is given by
\beqs (\sigma\ast\sigma^\prime)(v) =\sigma^\prime(v)\circ\sigma(t(\sigma^\prime(v)))\text{ for }v\in V_\Gamma
\eqs whenever $\sigma^\prime(v)\in\PD_\Gamma\Sigma^v_{\varphi_\Gamma^\prime(v)}$ and $\sigma(t(\sigma^\prime(v)))\in\PD_\Gamma\Sigma^{(t\circ\sigma^\prime)(v)}_{\varphi_\Gamma(v)}$.

Clearly the group multiplication is associative.
The unit $\id$ is equivalent to the object inclusion $v\mapsto\idf_v$ of the groupoid $\fPG$, where $\idf_v$ is the constant loop at $v$, and the inversion is given by
\beqs \sigma^{-1}(v)=\sigma((t\circ\sigma)^{-1}(v))^{-1}\text{ for }v\in V_\Gamma
\eqs 
\end{proofs}

The group property of bisections $\mathfrak{B}(\PD_\Gamma\Sigma)$ carries over to holonomy maps. Using the group multiplication $\cdot $ of $G$ conclude that
\beqs (\ho_\Gamma\circ R_{(\sigma\ast\sigma^\prime)(v)})(\idf_{v})=\ho_\Gamma\circ (R_{\sigma^\prime(v)}\circ R_{\sigma(t(\sigma^\prime(v)))})(\idf_{v})
= \ho_\Gamma(\sigma^\prime(v))\cdot\ho_\Gamma(\sigma(t(\sigma^\prime(v))))\text{ for }v\in V_\Gamma
\eqs is true. 
\begin{rem}
Moreover right-translations define path-diffeomorphisms, i.e. $R_{(\sigma)(v)}=\Phi_\Gamma$ and $\varphi_\Gamma=t\circ\sigma$ whenever $v\in V_\Gamma$.
But for two bisections $\sigma_\Gp,\breve\sigma_\Gp\in\mathfrak{B}(\PD_\Gamma\Sigma)$ the object $\sigma_\Gp(v)\circ\breve\sigma_\Gp(v)$ is not comparable with $(\sigma_\Gp\ast\breve\sigma_\Gp)(v)$. Then for the composition $\Phi_1(\gamma)\circ\Phi_2(\gamma)$, there exists no path-diffeomorphism $\Phi$ such that $\Phi_1(\gamma)\circ\Phi_2(\gamma)=\Phi(\gamma)$ yields in general. Moreover generally the object $\Phi_1(\gamma)\circ\Phi_2(\gp)=\Phi(\gamma\circ\gp)$ is not well-defined.

But the following is defined
\beq\label{def compositionofdiffeo} R_{(\sigma\ast\sigma^\prime)(v)}(\gamma)=\Phi_\Gamma^\prime(\gamma)\circ\Phi_\Gamma(\idf_{\varphi_\Gamma^\prime(v)})=:(\Phi_\Gamma^\prime\ast\Phi_\Gamma)(\gamma)
\eq whenever $\gamma\in\PD_\Gamma\Sigma_v$, $(\varphi_\Gamma,\Phi_\Gamma)\in\Diff(\PD_\Gamma\Sigma_v)$ and $(\varphi_\Gamma^\prime,\Phi_\Gamma^\prime)\in\Diff(\PD_\Gamma\Sigma_{\varphi_\Gamma^\prime(v)})$ are path-diffeomorphisms  such that $\varphi_\Gamma=t\circ\sigma$, $\Phi_\Gamma=R_{\sigma(\varphi_\Gamma^\prime(v))}$ and  $\varphi_\Gamma^\prime=t\circ\sigma^\prime$, $\Phi_\Gamma^\prime=R_{\sigma^\prime(v)}$.

Moreover for $(\gamma,\gp)\in\PD_\Gamma\Sigma^{(2)}$ and $\gp\in\PD_\Gamma\Sigma_v$ it is true that
\beqs (\Phi_\Gamma^\prime\ast\Phi_\Gamma)(\gamma\circ\gp)=\Phi_\Gamma^\prime(\gamma\circ\gp)\circ\Phi_\Gamma(\idf_{\varphi_\Gamma^\prime(v)}) = \Phi_\Gamma^\prime(\gamma)\circ\Phi_\Gamma^\prime(\gp)\circ\Phi_\Gamma(\idf_{\varphi_\Gamma^\prime(v)}) = \Phi_\Gamma^\prime(\gamma)\circ(\Phi_\Gamma^\prime\ast\Phi_\Gamma)(\gp)
\eqs holds.
\end{rem}

Then the following lemma easily follows.

\begin{lem}\label{lem path-diffeom}Let $\sigma$ be a bisection contained in $\mathfrak{B}(\PD_\Gamma\Sigma)$ and $v\in V_\Gamma$.

The pair $(R_{\sigma(v)},t\circ\sigma)$ of maps such that
\beqs &R_{\sigma(v)}:\PD_\Gamma\Sigma_v\rightarrow\PD_\Gamma\Sigma_{(t\circ \sigma)(v)},\quad &
s\circ R_{\sigma(v)} = (t\circ\sigma)\circ s\\
&t\circ\sigma:V_\Gamma\rightarrow V_\Gamma,\quad &t\circ R_{\sigma(v)}= (t\circ\sigma)\circ t
\eqs defined in remark \ref{rem defiofright} is a path-diffeomorphism in $\fPG$.
\end{lem}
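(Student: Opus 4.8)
The plan is to verify, one at a time, that the pair $(R_{\sigma(v)},t\circ\sigma)$ meets the defining requirements of a path-diffeomorphism of $\fPG$: that $R_{\sigma(v)}$ is a well-defined map into the asserted codomain, that it is intertwined with $t\circ\sigma$ by the source and target maps (condition (\hyperlink{G2}{G2})), that both component maps are bijective, and that $R_{\sigma(v)}$ respects the groupoid composition wherever the latter is defined (condition (\hyperlink{G1}{G1})). Almost all of the ingredients are already at hand: the explicit formula \eqref{eq Rendv} for $R_{\sigma(v)}$ from remark \ref{rem defiofright}, and the group structure of $\mathfrak{B}(\PD_\Gamma\Sigma)$ proved in lemma \ref{lem groupbisection}, which will supply the inverse.

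First I would settle well-definedness and the intertwining (\hyperlink{G2}{G2}). Because $\sigma$ is a bisection, $s\circ\sigma=\id_{V_\Gamma}$, so every $\gamma\in\PD_\Gamma\Sigma_v$ satisfies $t(\gamma)=v=s(\sigma(v))$; hence $(\gamma,\sigma(v))\in\PD_\Gamma\Sigma^{(2)}$ and $R_{\sigma(v)}(\gamma)=\gamma\circ\sigma(v)$ is a bona fide path whose target is $t(\sigma(v))=(t\circ\sigma)(v)$, so $R_{\sigma(v)}$ does send $\PD_\Gamma\Sigma_v$ into $\PD_\Gamma\Sigma_{(t\circ\sigma)(v)}$. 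The target relation $t\circ R_{\sigma(v)}=(t\circ\sigma)\circ t$ is then immediate from $t(R_{\sigma(v)}(\gamma))=(t\circ\sigma)(v)=(t\circ\sigma)(t(\gamma))$, and the source relation follows the same way from \eqref{eq Rendv}: right-composition leaves $s(\gamma)$ fixed, while $t\circ\sigma$ acts as the identity on the source vertices at issue, as its trivial-path clause $R_{\sigma(v)}(\idf_w)=\idf_w$ (for $w\neq v$) records.

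Next I would establish bijectivity. The vertex map $t\circ\sigma$ is a bijection of $V_\Gamma$ by the very definition of a bisection. For $R_{\sigma(v)}$ the candidate inverse is the right-composition $\delta\mapsto\delta\circ\sigma(v)^{-1}$, which is precisely the right-translation attached to the inverse bisection $\sigma^{-1}$ of lemma \ref{lem groupbisection} evaluated at $(t\circ\sigma)(v)$. That the two maps are mutually inverse follows from associativity of $\circ$ together with the inverse and unit relations \eqref{groupoid0}, which let $\sigma(v)^{-1}$ cancel the appended $\sigma(v)$ and return $\gamma$; this simultaneously discharges the bijectivity assertion left open in remark \ref{rem defiofright}.

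The remaining and genuinely delicate point is the morphism property (\hyperlink{G1}{G1}), and I expect this to be the main obstacle. A right-translation does not distribute naively over composition — appending $\sigma(v)$ to a product $\gamma_1\circ\gamma_2$ changes only its final segment — so (\hyperlink{G1}{G1}) must not be read as $R_{\sigma(v)}(\gamma_1\circ\gamma_2)=R_{\sigma(v)}(\gamma_1)\circ R_{\sigma(v)}(\gamma_2)$; indeed the remark preceding the lemma already records that such a factorisation fails in general. The correct formulation stays inside $\PD_\Gamma\Sigma_v$: any composable pair $(\gamma_1,\gamma_2)$ with $\gamma_1\circ\gamma_2\in\PD_\Gamma\Sigma_v$ has $\gamma_2\in\PD_\Gamma\Sigma_v$, and associativity of $\circ$ yields $R_{\sigma(v)}(\gamma_1\circ\gamma_2)=\gamma_1\circ(\gamma_2\circ\sigma(v))=\gamma_1\circ R_{\sigma(v)}(\gamma_2)$, which is exactly the final-segment compatibility encoded in the $\ast$-composition \eqref{def compositionofdiffeo}. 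I would therefore devote most of the write-up to phrasing this identity so that it matches condition (\hyperlink{G1}{G1}) under \eqref{eq requcombi}, after which the conclusion that $(R_{\sigma(v)},t\circ\sigma)\in\Diff(\PD_\Gamma\Sigma)$ is immediate; the well-definedness, intertwining, and bijectivity steps are all routine.
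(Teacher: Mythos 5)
Your treatment of well-definedness, the source/target intertwining, and bijectivity is sound and essentially coincides with the paper's proof (the paper, too, exhibits the inverse as the right-translation attached to the inverse bisection; your evaluation of $\sigma^{-1}$ at $(t\circ\sigma)(v)$ is if anything more careful than the paper's notation). The gap is in your handling of (G1). You assert that the factorisation $R_{\sigma(v)}(\gamma_1\circ\gamma_2)=R_{\sigma(v)}(\gamma_1)\circ R_{\sigma(v)}(\gamma_2)$ ``fails in general'' and must be avoided. In fact this identity is exactly what the paper proves, and it is what the lemma needs: a path-diffeomorphism is by definition a pair of bijective maps intertwined by $s,t$ \emph{which defines a groupoid morphism}, and the groupoid-morphism condition (G1) is stated in precisely this multiplicative form. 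What you are overlooking is the second clause of \eqref{eq Rendv}: $R_{\sigma(v)}$ acts trivially on every non-trivial path whose target is not $v$, i.e.\ $R_{\sigma(v)}(\gamma_1)=\gamma_1\circ\idf_{t(\gamma_1)}=\gamma_1$ whenever $t(\gamma_1)\neq v$. For a composable pair $(\gamma_1,\gamma_2)$ with $t(\gamma_2)=v$ one has $t(\gamma_1)=s(\gamma_2)\neq v$ (provided $\gamma_2$ is not a loop at $v$), so the identity you do prove, $R_{\sigma(v)}(\gamma_1\circ\gamma_2)=\gamma_1\circ R_{\sigma(v)}(\gamma_2)$, \emph{is} the naive factorisation, since $\gamma_1=R_{\sigma(v)}(\gamma_1)$. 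This one-line observation is the first display of the paper's proof, $R_{\sigma(t(\gp))}(\gamma\circ\gp)=\gamma\circ\gp\circ\sigma(t(\gp))=R_{\sigma(t(\gp))}(\gamma)\circ R_{\sigma(t(\gp))}(\gp)$, and without it your write-up never verifies the groupoid-morphism property, so it does not prove the lemma as stated; the step you defer (``phrasing this identity so that it matches condition (G1)'') is the whole content of that part of the proof.

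Two smaller points. The remark you cite as recording the failure of the factorisation actually concerns a different phenomenon: for two distinct path-diffeomorphisms the composite $\Phi_1(\gamma)\circ\Phi_2(\gamma)$ is in general not of the form $\Phi(\gamma)$; it says nothing about a single right-translation distributing over $\circ$. And your own appeal, in the intertwining step, to the trivial-path clause $R_{\sigma(v)}(\idf_w)=\idf_w$ for $w\neq v$ shows you are aware of the case distinction in \eqref{eq Rendv}; it is the analogous clause for non-trivial paths with $t(\gamma)\neq v$ that carries (G1). (Both you and the paper pass over the degenerate case where $\gamma_2$ is a loop at $v$, so that $t(\gamma_1)=v$ as well; the paper's computation tacitly assumes this does not occur.)
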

\begin{proofs}This follows easily from the derivation
\beqs R_{\sigma(t(\gp))}(\gamma\circ\gp)&=\gamma\circ\gp\circ\sigma(t(\gp))
= R_{\sigma(t(\gp))}(\gamma)\circ R_{\sigma(t(\gp))}(\gp)
\eqs 
\beqs R_{\sigma(t(\gamma))}(\idf_{s(\gamma)}\circ\gamma)&=R_{\sigma(t(\gamma))}(\idf_{s(\gamma)})\circ R_{\sigma(t(\gamma))}(\gamma)=\idf_{s(\gamma)}\circ\gamma\circ\sigma(t(\gamma))\\
R_{\sigma(t(\gamma))}(\gamma\circ\idf_{t(\gamma)})&=R_{\sigma(t(\gamma))}(\gamma)\circ R_{\sigma(t(\gamma))}(\idf_{t(\gamma)})=\gamma\circ\sigma(t(\gamma))\circ\idf_{(t\circ\sigma)(t(\gamma))}
\eqs The inverse map satisfies 
\beqs R^{-1}_{\sigma(v)}(\gamma\circ\sigma(v))=R_{\sigma^{-1}(v)}(\gamma\circ\sigma(v))
=\gamma\circ\sigma(v)\circ\sigma^{-1}(v)=\gamma
\eqs whenever $v=t(\gamma)$, 
\beqs R^{-1}_{\sigma(v)}(\gamma)=\gamma
\eqs whenever $v\neq t(\gamma)$ and
\beqs R^{-1}_{\sigma(v)}(\idf_{(t\circ\sigma)(v)})=\idf_{v}
\eqs

Moreover derive
\beqs ( s\circ R_{\sigma(v)})(\gp)= ((t\circ\sigma)\circ s)(\gp)
\eqs for all $\gp\in\PD_\Gamma\Sigma_v$ and a fixed bisection $\sigma\in \mathfrak{B}(\PD_\Gamma\Sigma)$.
 \end{proofs}

Notice that, $L_{\sigma(v)}$ and $I_{\sigma(v)}$ similarly to the pair $(R_{\sigma(v)},t\circ\sigma)$ can be defined. Summarising the pairs $(R_{\sigma(v)},t\circ\sigma)$, $(L_{\sigma(v)},t\circ\sigma)$ and $(I_{\sigma(v)},t\circ\sigma)$ for a bisection $\sigma \in\mathfrak{B}(\PD_\Gamma\Sigma)$ are path-diffeomorphisms of a finite path groupoid $\fPG$.

In general a right-translation $(R_\sigma,t\circ\sigma)$ in the finite path groupoid $\PD_\Gamma\Sigma$ over $\Sigma$ for a bisection $\sigma\in\mathfrak{B}(\PD_\Gamma\Sigma)$ is defined by the bijective maps $R_\sigma$ and $t\circ\sigma$, which are given by 
\beqs 
&R_{\sigma}:\PD_\Gamma\Sigma\rightarrow\PD_\Gamma\Sigma,\quad &
s\circ R_{\sigma} =  s\qquad\quad\text{ }\\
&t\circ\sigma:V_\Gamma\rightarrow V_\Gamma,\quad &t\circ R_{\sigma}= (t\circ\sigma)\circ t\\
&R_\sigma(\gamma):=\gamma\circ\sigma(t(\gamma))\quad\forall \gamma\in \PD_\Gamma\Sigma;\quad R^{-1}_\sigma:=R_{\sigma^{-1}}
\eqs For example for a fixed suitable bisection $\sigma$ the right-translation is $R_\sigma(\idf_v)=\gamma$, then $R^{-1}_\sigma(\gamma)=\gamma\circ\gamma^{-1}=\idf_v$ for $v=s(\gamma)$. Clearly the right-translation $(R_\sigma,t\circ\sigma)$ is not a groupoid morphism in general. 

\begin{defi}
Define for a given bisection $\sigma$ in $\mathfrak{B}(\PD_\Gamma\Sigma)$, the \textbf{right-translation in the groupoid $G$ over $\{e_G\}$} through
\beqs &\ho_\Gamma\circ R_{\sigma}:\PD_\Gamma\Sigma\rightarrow G, \quad \gamma\mapsto (\ho_\Gamma\circ R_\sigma)(\gamma):=\ho_\Gamma(\gamma\circ\sigma(t(\gamma)))= \ho_\Gamma(\gamma)\cdot\ho_\Gamma(\sigma(t(\gamma))) \\
&h_\Gamma\circ t\circ\sigma:V_\Gamma\rightarrow e_G 
\eqs 

Furthermore for a fixed $\sigma\in\mathfrak{B}(\PD_\Gamma\Sigma)$ define
the \textbf{left-translation in the groupoid $G$ over $\{e_G\}$} by
\beqs &\ho_\Gamma\circ L_\sigma:\PD_\Gamma\Sigma\rightarrow G,\quad \gamma\mapsto \ho_\Gamma(\sigma((t\circ\sigma)^{-1}(s(\gamma)))\circ\gamma)=\ho_\Gamma(\sigma((t\circ\sigma)^{-1}(s(\gamma))))\cdot\ho_\Gamma(\gamma)\\
&h_\Gamma\circ t\circ\sigma:V_\Gamma\rightarrow e_G 
\eqs
and the \textbf{inner-translation in the groupoid $G$ over $\{e_G\}$}
\beqs &\ho_\Gamma\circ I_\sigma:\PD_\Gamma\Sigma\rightarrow G,\quad \gamma\mapsto \ho_\Gamma(\sigma((t\circ\sigma)^{-1}(s(\gamma)))\circ\gamma\circ\sigma(t(\gamma)))=\ho_\Gamma(\sigma((t\circ\sigma)^{-1}(s(\gamma))))\cdot\ho_\Gamma(\gamma)\cdot\ho_\Gamma(\sigma(t(\gamma)))\\
&h_\Gamma\circ t\circ\sigma:V_\Gamma\rightarrow e_G 
\eqs such that $I_\sigma=L_{\sigma^{-1}}\circ R_\sigma$.
\end{defi}

The pairs $(R_\sigma,t\circ\sigma)$ and $(L_\sigma,t\circ\sigma)$ are not groupoid morphisms. Whereas the pair $(I_\sigma,t\circ\sigma)$ is a groupoid morphism, since for all pairs $(\gamma,\gp)\in\PD_\Gamma\Sigma^{(2)}$ such that $t(\gamma)=s(\gp)$ it is true that $\sigma(t(\gamma)) \circ\sigma((t\circ\sigma)^{-1}(t(\gamma)))^{-1}=\idf_{t(\gamma)}$ holds. Notice that, in this situation $\sigma(t(\gamma))=\sigma(t(\gamma\circ\gp))$ is satisfied.

\begin{prop}
The map $\sigma\mapsto R_\sigma$ is a group isomorphism, i.e. $R_{\sigma\ast\sigma^\prime}=R_\sigma\circ R_{\sigma^\prime}$ and where $\sigma\mapsto t\circ\sigma$ is a group isomorphism from $\mathfrak{B}(\PD_\Gamma\Sigma)$ to the group of finite diffeomorphisms $\Diff(V_\Gamma)$ in a finite subset $V_\Gamma$ of $\Sigma$. 

The maps $\sigma\mapsto L_\sigma$ and $\sigma\mapsto I_\sigma$ are group isomorphisms.
\end{prop}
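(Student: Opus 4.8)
The plan is to reduce the proposition to a few direct computations in $\PD_\Gamma\Sigma$, using associativity of $\circ$, the multiplication law $(\sigma\ast\sigma^\prime)(v)=\sigma^\prime(v)\circ\sigma(t(\sigma^\prime(v)))$ from Lemma \ref{lem groupbisection}, and the fact that a bisection can be recovered from its associated translation. First I would establish the homomorphism property of $\sigma\mapsto R_\sigma$. Fixing $\gamma$ and writing $w=t(\gamma)$ and $\varphi^\prime=t\circ\sigma^\prime$, unfolding $R_{\sigma^\prime}(\gamma)=\gamma\circ\sigma^\prime(w)$ and then applying $R_\sigma$ gives $(R_\sigma\circ R_{\sigma^\prime})(\gamma)=\gamma\circ\sigma^\prime(w)\circ\sigma(\varphi^\prime(w))$, whereas $R_{\sigma\ast\sigma^\prime}(\gamma)=\gamma\circ(\sigma\ast\sigma^\prime)(w)=\gamma\circ\sigma^\prime(w)\circ\sigma(\varphi^\prime(w))$; associativity of $\circ$ makes these equal, so $R_{\sigma\ast\sigma^\prime}=R_\sigma\circ R_{\sigma^\prime}$.

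Injectivity of $\sigma\mapsto R_\sigma$ is then immediate once I evaluate on the units: $R_\sigma(\idf_v)=\idf_v\circ\sigma(v)=\sigma(v)$, so $R_\sigma$ determines $\sigma$. Since by construction the image is exactly the set of right-translations, which is a group under $\circ$ with unit $R_{\id}=\id$ and inverses $R_\sigma^{-1}=R_{\sigma^{-1}}$ by the homomorphism property, the map $\sigma\mapsto R_\sigma$ is an isomorphism onto the group of right-translations. For $\sigma\mapsto t\circ\sigma$ I would compute $t\big((\sigma\ast\sigma^\prime)(v)\big)=t\big(\sigma(t(\sigma^\prime(v)))\big)=(t\circ\sigma)\big((t\circ\sigma^\prime)(v)\big)$, using only $t(\alpha\circ\beta)=t(\beta)$; hence $t\circ(\sigma\ast\sigma^\prime)=(t\circ\sigma)\circ(t\circ\sigma^\prime)$, a homomorphism into $\Diff(V_\Gamma)$. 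Surjectivity follows from transitivity of the groupoid (the map $s\times t$ is surjective): for a prescribed bijection $\varphi$ of $V_\Gamma$ one picks, for each $v$, a path from $v$ to $\varphi(v)$ and assembles a bisection $\sigma$ with $t\circ\sigma=\varphi$.

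Finally I would treat $L$ and $I$ by the same bookkeeping. Unfolding $L_\sigma(\gamma)=\sigma((t\circ\sigma)^{-1}(s(\gamma)))\circ\gamma$ and composing shows that the prepended segments appear in reversed order, so $L_{\sigma\ast\sigma^\prime}=L_{\sigma^\prime}\circ L_\sigma$; thus $\sigma\mapsto L_\sigma$ is an anti-isomorphism onto the group of left-translations, the statement being understood up to this order reversal. For $I$ I would use the identity $I_\sigma=L_{\sigma^{-1}}\circ R_\sigma$, which simplifies to $I_\sigma(\gamma)=\sigma(s(\gamma))^{-1}\circ\gamma\circ\sigma(t(\gamma))$; since $[(\sigma\ast\sigma^\prime)(v)]^{-1}=\sigma(t(\sigma^\prime(v)))^{-1}\circ\sigma^\prime(v)^{-1}$, the left and right segments now reassemble in matching order and one obtains the genuine homomorphism $I_{\sigma\ast\sigma^\prime}=I_\sigma\circ I_{\sigma^\prime}$, bijective onto the group of inner-translations.

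I expect the main obstacle to be injectivity of $\sigma\mapsto t\circ\sigma$, and with it the precise identification of the codomain with $\Diff(V_\Gamma)$. A bisection is in general not determined by its vertex map $t\circ\sigma$: whenever $\PD_\Gamma\Sigma$ carries non-trivial loops there exist bisections $\sigma\neq\id$ with $t\circ\sigma=\id_{V_\Gamma}$, so literal injectivity can fail. The robust statement is that $\sigma\mapsto t\circ\sigma$ is a surjective homomorphism onto $\Diff(V_\Gamma)$ which is an isomorphism exactly when each bisection is fixed by its endpoints, i.e. on the subgroup of bisections arising from path-diffeomorphisms via Lemma \ref{lem path-diffeom}; I would restrict to that subgroup (equivalently, work modulo vertical bisections) to obtain the claimed isomorphism, the map $\sigma\mapsto R_\sigma$ onto the right-translations serving as the model case in which no such ambiguity arises.
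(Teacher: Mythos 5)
The paper states this proposition with no proof at all, so there is nothing on the paper's side to compare against; your proposal has to stand on its own, and most of it does. The computation $R_{\sigma\ast\sigma^\prime}=R_\sigma\circ R_{\sigma^\prime}$ is correct and uses exactly the multiplication law of lemma \ref{lem groupbisection}; injectivity via $R_\sigma(\idf_v)=\sigma(v)$ is correct; the identity $t\circ(\sigma\ast\sigma^\prime)=(t\circ\sigma)\circ(t\circ\sigma^\prime)$ and the surjectivity of $\sigma\mapsto t\circ\sigma$ from transitivity of $\fPG$ are correct. Your two critical observations are also well taken, and they are defects of the statement itself rather than of your argument: as soon as the groupoid contains a non-trivial loop $\ell$ at some $u\in V_\Gamma$, the vertical bisection $\sigma(u)=\ell$, $\sigma(w)=\idf_w$ for $w\neq u$, satisfies $t\circ\sigma=\id_{V_\Gamma}$, so $\sigma\mapsto t\circ\sigma$ is only a surjective homomorphism onto $\Diff(V_\Gamma)$, not an isomorphism; and with the paper's convention $(\sigma\ast\sigma^\prime)(v)=\sigma^\prime(v)\circ\sigma(t(\sigma^\prime(v)))$ the left-translations compose contravariantly, $L_{\sigma\ast\sigma^\prime}=L_{\sigma^\prime}\circ L_\sigma$, so $\sigma\mapsto L_\sigma$ is an anti-isomorphism onto its image.

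The genuine gap is your unqualified claim that $\sigma\mapsto I_\sigma$ is \emph{bijective} onto the group of inner-translations. The homomorphism identity $I_{\sigma\ast\sigma^\prime}=I_\sigma\circ I_{\sigma^\prime}$ is correct, and your choice to read $I_\sigma$ through $I_\sigma=L_{\sigma^{-1}}\circ R_\sigma$, giving $I_\sigma(\gamma)=\sigma(s(\gamma))^{-1}\circ\gamma\circ\sigma(t(\gamma))$, is the only reading under which $I_\sigma$ is a groupoid morphism with vertex map $t\circ\sigma$ (the paper's displayed formula for $I_\sigma$ is inconsistent with this and should be discarded). But injectivity fails by exactly the mechanism you yourself identified for $t\circ\sigma$: unlike $R_\sigma$, evaluating $I_\sigma$ on units returns only the vertex map, $I_\sigma(\idf_v)=\idf_{(t\circ\sigma)(v)}$, so the loop content of $\sigma$ enters only through conjugation. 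Concretely, for $\Gamma=\{\ell\}$ a single loop at $u$ one has $\mathfrak{B}(\PD_\Gamma\Sigma)\cong\Z$ with $\sigma_n(u)=\ell^n$, yet $I_{\sigma_n}(\ell^m)=\ell^{-n}\circ\ell^m\circ\ell^n=\ell^m$ since the vertex group is abelian, so every $I_{\sigma_n}$ is the identity and the kernel is all of $\mathfrak{B}(\PD_\Gamma\Sigma)$. In general the kernel of $\sigma\mapsto I_\sigma$ consists of the vertical bisections whose loops commute with all paths, so this map, too, becomes an isomorphism only after the restriction you propose at the end for $t\circ\sigma$ (to bisections determined by their endpoint data); as written, the bijectivity assertion for $I$ is made unconditionally, before that caveat, and it is false.
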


There is a generalisation of path-diffeomorphisms in the finite path groupoid, which coincide with the graphomorphism presented by Fleischhack in \cite{Fleischhack06}. In this approach the diffeomorphism $\varphi:\Sigma\rightarrow\Sigma$ changes the source and target vertex of a path $\gamma$. Consequently the path-diffeomorphism $(\Phi,\varphi)$, which implements the inner-translation $I_{\Phi}$ in the path groupoid $\PGs$, is a graphomorphism in the context of Fleischhack. Some element of the set of graphomorphisms is directly related to a right-translation $R_\sigma$ in the path groupoid.  Precisely for every $v\in\Sigma$ and $\sigma\in\mathfrak{B}(\PD\Sigma)$ the pairs $(R_{\sigma(v)},t\circ\sigma)$, $(L_{\sigma(v)},t\circ\sigma)$ and $(I_{\sigma(v)},t\circ\sigma)$ define graphomorphism. Furthermore the right-translation $R_{\sigma(v)}$,  the left-translation $L_{\sigma(v)}$ and the inner-translation $I_{\sigma(v)}$ are required to be bijective maps, and hence the maps cannot map non-trivial paths to trivial paths. This property restricts the set of all graphomorphism, which is generated by these translations. In particular in this article graph changing operations, which change the number of edges of a graph, are studied. Hence the left- or right-translation in a finite path groupoid is used in the further development. Notice that in general, these objects do not define graphomorphism.
Finally notice that, in particular for the graphomorphism $(R_{\sigma(v)},t\circ\sigma)$ and a holonomy map for the path groupoid $\PGs$ a similar relation \eqref{eq righttransl} holds. The last equation is fundamental for the construction of $C^*$-dynamical systems, which contain the analytic holonomy $C^*$-algebra restricted to a finite path groupoid $\fPG$ and a point norm continuous action of the finite path-diffeomorphism group $\Diff(V_\Gamma)$ on this algebra. Clearly the right-, left- and inner-translations $R_\sigma$, $L_\sigma$ and $I_\sigma$ are constructed such that \eqref{eq righttransl} generalises. But note that, in the infinite case considered by Fleischhack the action of the bisections $\mathfrak{B}(\PD\Sigma)$ are not point-norm continuous implemented. The advantage of the usage of bisections is that, the map $\sigma\mapsto t\circ\sigma$ is a group morphism between the group $\mathfrak{B}(\PD\Sigma)$ of bisections in $\PGs$ and the group $\Diff(\Sigma)$ of diffeomorphisms in $\Sigma$. Consequently there is an action of the group of diffeomorphisms in $\Sigma$ on the finite path groupoid, which is used to define an action of the group of diffemorphisms in $\Sigma$ on the analytic holonomy $C^*$-algebra. 

\paragraph*{Transformations in finite graph systems\\[5pt]}
To proceed it is necessary to transfer the notion of bisections and right-translations to finite graph systems. A right-translation $R_{\sigma_\Gamma}$  is a mapping that maps graphs to graphs. Each graph is a finite union of independent edges. This causes problems. Since the definition of right-translation in a finite graph system $\PD_\Gamma$ is often not well-defined for all bisections in the finite graph system and all graphs. For example if the graph $\Gamma:=\{\gamma_1,\gamma_2\}$ is disconnected and the bisection $\tilde\sigma$ in the finite path groupoid $\PD_{\Gamma}\Sigma$ over $V_\Gamma$ is defined by $\tilde\sigma(s(\gamma_1))=\gamma_1$, $\tilde\sigma(s(\gamma_2))=\gamma_2$, $\tilde\sigma(t(\gamma_1))=\gamma_1^{-1}$ and $\tilde\sigma(t(\gamma_2))=\gamma_2^{-1}$ where $V_\Gamma:=\{s(\gamma_1),t(\gamma_1),s(\gamma_2),t(\gamma_2)\}$. Let $\idf_\Gamma$ be the set given by the elements $\idf_{s(\gamma_1)}$,$\idf_{s(\gamma_2)}$,$\idf_{t(\gamma_1)}$ and $\idf_{t(\gamma_2)}$. Then notice that, a bisection $\sigma_\Gamma$, which maps a set of vertices in $V_\Gamma$ to a set of paths in $\PD_\Gamma\Sigma$, is given for example by $\sigma_\Gamma(V_\Gamma)= \{\gamma_1,\gamma_2,\gamma_1^{-1},\gamma_2^{-1}\}$. In this case the right-translation $R_{\sigma_\Gamma(V_\Gamma)}(\idf_\Gamma)$ is equivalent to $\{\gamma_1,\gamma_2,\gamma_1^{-1},\gamma_2^{-1}\}$, which is not a set of independent edges and hence not a graph. Loosely speaking the graph-diffeomorphism acts on all vertices in the set $V_\Gamma$ and hence implements four new edges. But a bisection $\sigma_\Gamma$, which maps a subset $V:= \{s(\gamma_1),s(\gamma_2)\}$ of $V_\Gamma$ to a set of paths,  leads to a translation $R_{\sigma_\Gamma(V)}(\{\idf_{s(\gamma_1)},\idf_{s(\gamma_2)}\})=\{\gamma_1,\gamma_2\}$, which is indeed a graph. Set $\Gp:=\{\gamma_1\}$ and $V^\prime=\{s(\gamma_1)\}$. Then observe that, for a restricted bisection, which maps a set $V^\prime$ of vertices in $V_\Gamma$ to a set of paths in $\PD_\Gp\Sigma$, the right-translation become $R_{\sigma_\Gp(V^\prime)}(\{\idf_{s(\gamma_1)}\})=\{\gamma_1\}$, which defines a graph, too. Notice that $\idf_{s(\gamma_1)}$ is a subgraph of $\Gp$. Hence in the simpliest case new edges are emerging. The next definition of the right-tranlation shows that composed paths arise, too.

\begin{defi}\label{defi bisecongraphgroupioid}
Let $\Gamma$ be a graph, $\fPG$ be a finite path groupoid and let $\PD_\Gamma$ be a finite graph system. Moreover the set $V_\Gamma$ is given by $\{v_1,...,v_{2N}\}$.

A \hypertarget{bisection of a finite graph system}{\textbf{bisection of a finite graph system}} $\PD_\Gamma$ is a map $\sigma_\Gamma:V_\Gamma\rightarrow\PD_\Gamma$ such that there exists a bisection $\tilde \sigma\in\mathfrak{B}(\PD_\Gamma\Sigma)$ such that $\sigma_\Gamma(V)=\{\tilde\sigma(v_i):v_i\in V\}$ whenever $V$ is a subset of $V_\Gamma$.

Define a restriction $\sigma_\Gp:V_\Gp\rightarrow\PD_\Gp$ of a bisection $\sigma_\Gamma$ in $\PD_\Gamma$ by
\beqs \sigma_\Gp(V):=\{ \tilde\sigma(w_k) : & w_k\in V\}
\eqs for each subgraph $\Gp$ of $\Gamma$ and $V\subseteq V_\Gp$. 

A \textbf{right-translation in the finite graph system} $\PD_\Gamma$ is a map $R_{\sigma_\Gp}: \PD_\Gp\rightarrow \PD_\Gp$, which is given by a  bisection $\sigma_\Gp:V_\Gp\rightarrow \PD_\Gp$ such that
\beqs
&R_{\sigma_\Gp}(\Gpp)= R_{\sigma_\Gp}(\{\gppe,...,\gppm,\idf_{w_i}:w_i\in\{s(\gpe),...,s(\gpk)\in V^s_{\Gp}:s(\gpi)\neq s(\gpj) \forall i\neq j\}\setminus V_{\Gpp}\})\\[4pt]
&:=\left\{
\begin{array}{ll}
\gppe,...,\gppj,\gppje\circ\tilde\sigma(t(\gppje)),...,\gppm\circ\tilde\sigma(t(\gppm)),\idf_{w_i}\circ\tilde\sigma(w_i) : & \\[4pt]
w_i\in \{s(\gpe),...,s(\gpk)\in V^s_{\Gp}:s(\gpi)\neq s(\gpj) \forall i\neq j\}\setminus V_{\Gpp}, \quad t(\gppi)\neq t(\gppl)\quad\forall i\neq l; i,l\in\bra j+1,M\ket &\\
\end{array}\right\}\\
&=\Gamma^{\prime\prime}_\sigma\\
\eqs where $\tilde\sigma\in\mathfrak{B}(\PD_{\Gamma}\Sigma)$, $K:=\vert\Gp\vert$ and $M:=\vert\Gpp\vert$, $V^s_{\Gp}$ is the set of all source vertices of $\Gp$ and such that $\Gpp:=\{\gppe,...,\gppm\}$ is a subgraph of $\Gp:=\{\gpe,...,\gpk\}$ and $\Gamma^{\prime\prime}_\sigma$ is a subgraph of $\Gp$.
\end{defi} 

Derive that, for $\tilde\sigma(t(\gamma_i))=\gamma_i^{-1}$ it is true that
$(t\circ\tilde\sigma)(s(\gamma_i^{-1}))=s(\gamma_i)=(t\circ\tilde\sigma)(t(\gamma_i))$ holds.

\begin{exa}
Let $\Gamma$ be a disconnected graph.
Then for a bisection $\tilde\sigma\in\mathfrak{B}(\PD_\Gamma\Sigma)$ such that $\sigma(t(\gamma_i))=\gamma_i^{-1}$ for all $1\leq i\leq \vert\Gamma\vert$ it is true that 
\beqs R_{\sigma_\Gamma}(\Gamma)&=\Big\{\gamma_1\circ\tilde\sigma(t(\gamma_1)),...,\gamma_N\circ\tilde\sigma(t(\gamma_N)),\idf_{s(\gamma_1)}\circ\tilde\sigma(s(\gamma_1)),...,\idf_{s(\gamma_N)}\circ\tilde\sigma(s(\gamma_N))\Big\}\\
&=\{\idf_{s(\gamma_1)},...,\idf_{s(\gamma_N)}\}
\eqs yields. Set $\Gp:=\{\gpe,...,\gpm\}$, then derive
\beqs R_{\sigma_\Gamma}(\Gp)&=\Big\{\gpe\circ\tilde\sigma(t(\gpe)),...,\gpm\circ\tilde\sigma(t(\gpm)),\idf_{s(\gamma_1)}\circ\tilde\sigma(s(\gamma_1)),...,\idf_{s(\gamma_{N-M})}\circ\tilde\sigma(s(\gamma_{N-M}))\Big\}\\
&=\{\idf_{s(\gpe)},...,\idf_{s(\gpm)},\gamma_{1},...,\gamma_{N-M}\}
\eqs if $\Gamma=\Gp\cup\{\gamma_1,...,\gamma_{N-M}\}$.
\end{exa}
To understand the definition of the right-translation notice the following problem.

\begin{problem}\label{prob withoutcond} Consider a subgraph $\Gamma$ of $\tilde\Gamma:=\{\gamma_1,\gamma_2,\gamma_3,\gamma_4\}$, a map $\tilde\sigma:V_{\tilde\Gamma}\rightarrow \PD_{\tilde\Gamma}\Sigma$. Then the map 
\beqs R_{\sigma_{\tilde\Gamma}}(\Gamma)=\{\gamma_1\circ\gamma_1^{-1},\gamma_2\circ\idf_{t(\gamma_2)},\gamma_3\circ\idf_{t(\gamma_3)},\idf_{s(\gamma_1)}\circ\gamma_4 \}=:\Gamma_\sigma
\eqs 
 \begin{center}
\includegraphics[width=0.5\textwidth]{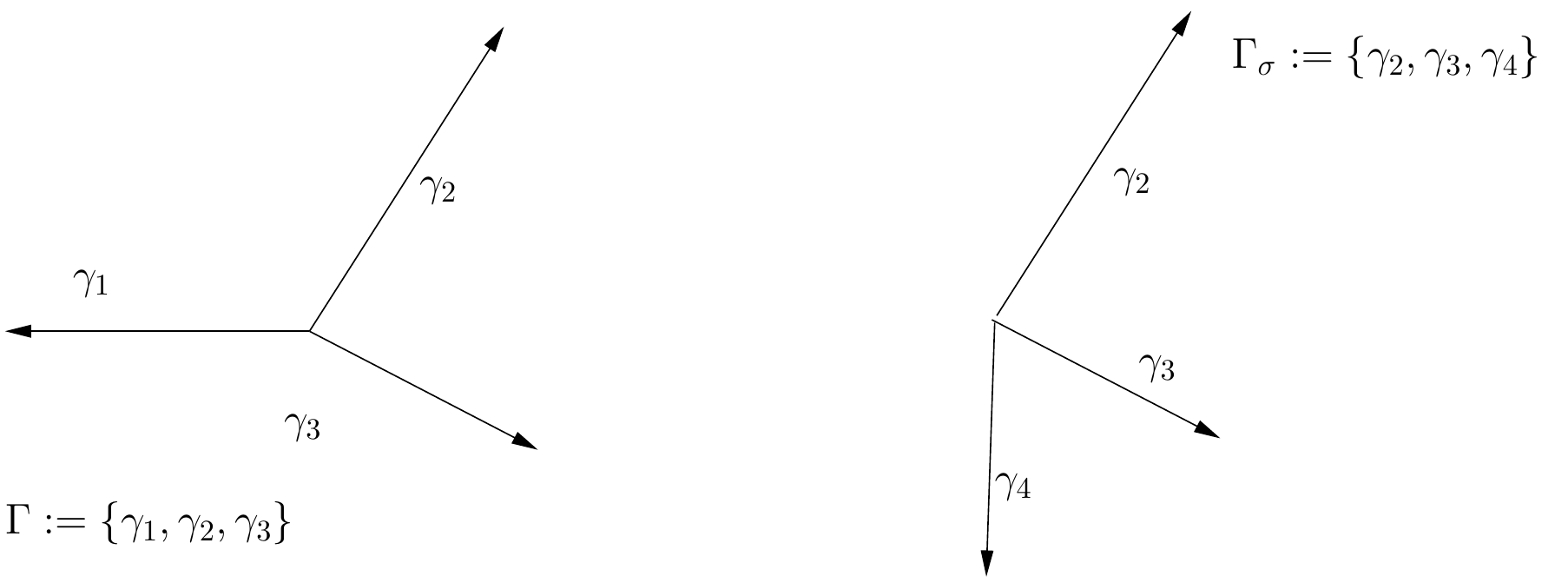}
\end{center} is not a right-translation. This follows from the following fact. Notice that,  the map $\sigma$ maps $t(\gamma_1)\mapsto s(\gamma_1)$, $t(\gamma_2)\mapsto t(\gamma_2)$, $t(\gamma_3)\mapsto t(\gamma_3)$ and $s(\gamma_1)\mapsto t(\gamma_4)$. Then the map $\tilde\sigma$ is not a bisection in the finite path groupoid $\PD_{\tilde\Gamma}\Sigma$ over $V_{\tilde\Gamma}$ and does not define a right-translation $R_{\sigma_{\tilde\Gamma}}$ in the finite graph system $\PD_{\tilde\Gamma}$.

This is a general problem. For every bisection $\tilde\sigma$ in a finite path groupoid such that a graph $\Gamma:=\{\gamma\}$ is translated to $\{\gamma\circ\tilde\sigma(t(\gamma),\tilde\sigma(s(\gamma))\}$. Hence either such translations in the graph system are excluded or the definition of the bisections has to be restricted to maps such that the map $t\circ\tilde\sigma$ is not bijective.  Clearly, the restriction of the right-translation such that $\Gamma$ is mapped to $\{\gamma\circ\tilde\sigma(t(\gamma),\idf_{s(\gamma)}\}$ implies that a simple path orientation transformation is not implemented by a right-translation.

Furthermore there is an ambiguity for graph containing to paths $\gamma_1$ and $\gamma_2$ such that $t(\gamma_1)=t(\gamma_2)$. Since in this case a bisection $\sigma$, which maps $t(\gamma_1)$ to $t(\gamma_3)$, the right-translation is $\{\gamma_1\circ\gamma_3,\gamma_2\circ\gamma_3\}$, is not a graph anymore. 
\end{problem}

\begin{exa}
Otherwise there is for example a subgraph $\Gp$ of $\tilde\Gamma:=\{\gamma_1,\gamma_2,\gamma_3,\gamma_4\}$ and a bisection $\tilde\sigma_{\tilde\Gamma}$ such that 
\beqs \Gamma^\prime_\sigma:=\{\gamma_1\circ\gamma_1^{-1},\gamma_2\circ\idf_{s(\gamma_2)},\gamma_3\circ\idf_{s(\gamma_3)}
\}
\eqs
 \begin{center}
\includegraphics[width=0.5\textwidth]{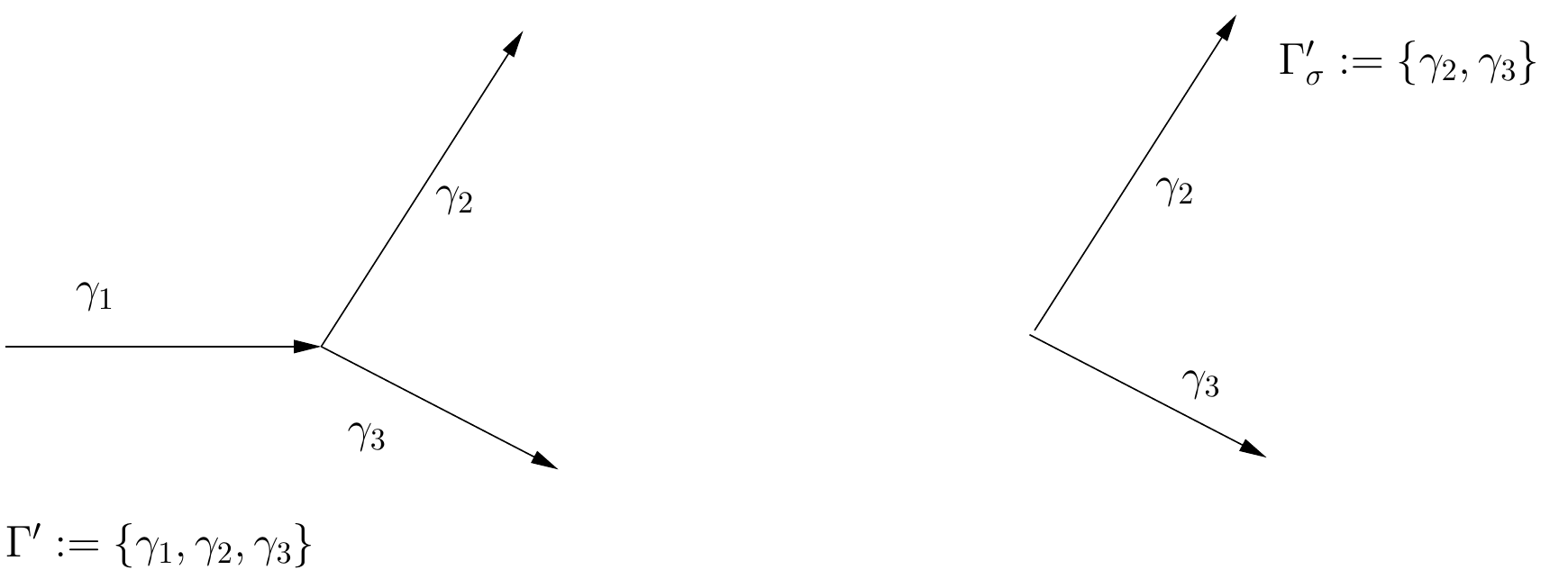}
\end{center} Notice that, $t(\gamma_1)\mapsto s(\gamma_1)$, $t(\gamma_2)\mapsto t(\gamma_2)$, $t(\gamma_3)\mapsto t(\gamma_3)$ and $t(\gamma_4)\mapsto t(\gamma_4)$. Hence the the map $\tilde\sigma_{\tilde\Gamma}:V_{\tilde\Gamma}\rightarrow\PD_{\tilde\Gamma}\Sigma$ is bijective map and consequently a bisection. The bisection $\sigma_{\tilde\Gamma}$ in the graph system $\PD_{\tilde\Gamma}$ defines a right-translation $R_{\sigma_{\tilde\Gamma}}$ in $\PD_{\tilde\Gamma}$.

Moreover for a subgraph $\Gpp:=\{\gamma_2,\gamma_3\}$ of the graph $\breve\Gamma:=\{\gamma_1,\gamma_2,\gamma_3\}$ there exists a map $\sigma_{\breve \Gamma}:V_{\breve\Gamma}\rightarrow\PD_{\breve\Gamma}$ such that
\beqs R_{\sigma_{\breve\Gamma}}(\Gpp)=\{\gamma_2,\gamma_3,\tilde\sigma(s(\gamma_1))\}=\{\gamma_2,\gamma_3,\gamma_1\}
\eqs
\begin{center}
\includegraphics[width=0.5\textwidth]{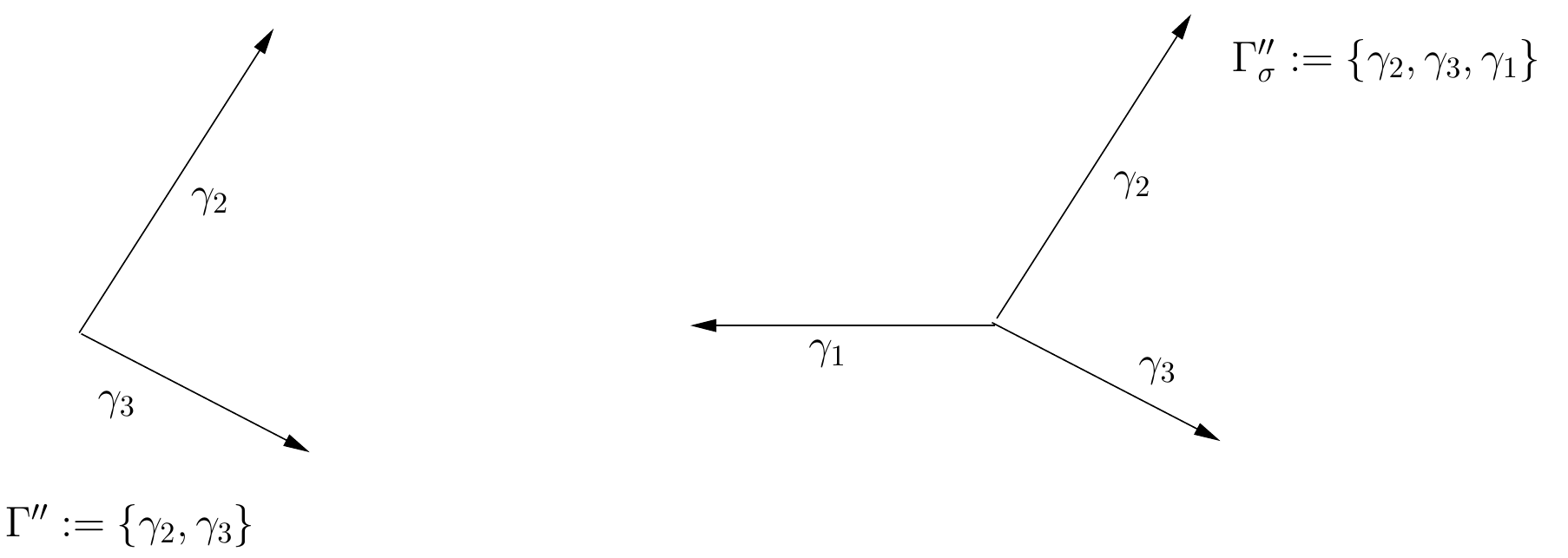}
\end{center}
where $t(\gamma_2)\mapsto t(\gamma_2)$, $t(\gamma_3)\mapsto t(\gamma_3)$ and $s(\gamma_1)\mapsto t(\gamma_1)$. Consequently in this example the map $\tilde\sigma_{\breve \Gamma}$ is a bisection, which defines a right-translation in $\PD_{\breve\Gamma}$.

Note that, for a graph $\Gamma$ such that $\tilde\Gamma$ and $\breve\Gamma$ are subgraphs the bisection $\sigma_{\tilde\Gamma}$ extends to a bisection $\sigma$ in $\PD_\Gamma$ and $\sigma_{\breve\Gamma}$ extends to a bisection $\breve\sigma$ in $\PD_\Gamma$.
\end{exa}

Moreover the bisections of a finite graph system are transfered, analogously, to bisections of a finite path groupoid $\fPG$ to the group $G^{\vert\Gamma\vert}$. Let $\sigma\in\mathfrak{B}(\PD_\Gamma)$ and $(\ho_\Gamma,h_\Gamma)\in\Hom(\PD_\Gamma,G^ {\vert\Gamma\vert})$. Thus there are two maps
\beq \ho_\Gamma\circ R_\sigma:\PD_\Gamma\rightarrow G^{\vert\Gamma\vert}\text{ and }h_\Gamma\circ(t\circ\sigma):V_\Gamma\rightarrow \{e_G\}
\eq which defines a \hyperlink{holonomy map for a finite graph system}{holonomy map for a finite graph system} if $\sigma$ is suitable.

Now, a similar right-translation in a finite graph system in comparison to the right-translation $R_{\sigma(v)}$ in a finite path groupoid is studied.
Let $\sigma_\Gp:V_\Gp\rightarrow\PD_\Gp$ be a restriction of $\sigma_\Gamma\in\mathfrak{B}(\PD_\Gamma)$.  Moreover let $V$ be a subset of $V_ \Gp$, let $\Gpp$ be a subgraph of $\Gp$ and $\Gppp$ be a subgraph of $\Gpp$. Then a right-translation is given by
\beqs &R_{\sigma_\Gp(V)}(\Gpp)\\
&:= 
\left\{\begin{array}{ll}
 R_{\sigma_\Gp}(\{\gppe,...,\gppm,\idf_{w_i}:w_i\in\{s(\gpe),...,s(\gpk)\in V_{\Gp}:s(\gpi)\neq s(\gpj) \forall i\neq j\}\setminus V_{\Gpp}\}& : V_{\Gpp}\subset V\\
R_{\sigma_\Gp}(\{\gppe,...,\gpppn,\idf_{w_i}:w_i\in\{s(\gpe),...,s(\gpk)\in V_{\Gpp}:s(\gpi)\neq s(\gpj) \forall i\neq j\}\setminus V_{\Gppp}\})& \\
\quad \cup\{\idf_{x_i}:x_i\in V\setminus  V_{\Gpp}\}\cup\{\Gpp\setminus\Gppp\}
&: V_{\Gpp}\not\subset V,V_{\Gppp}\subset V\\
       \end{array}\right.
\eqs Loosely speaking, the action of a path-diffeomorphism is somehow localised on a fixed vertex set $V$. 

For example note that for a subgraph $\Gp:=\{\gamma\circ\gp\}$ of $\Gamma:=\{\gamma,\gp\}$ and a subset $V:=\{t(\gp)\}$ of $V_\Gamma$, it is true that
\beqs (\ho_\Gamma\circ R_{\sigma_{\Gamma}(V)})(\gamma\circ\gp)= (\ho_\Gamma\circ R_{\sigma_{\Gamma}(V)})(\gamma)\cdot (\ho_\Gamma\circ R_{\sigma_{\Gamma}(V)})(\gp)= \ho_\Gamma(\gamma)\cdot (\ho_\Gamma\circ R_{\sigma_{\Gamma}(V)})(\gp)
=\ho_\Gamma(\gamma\circ\gp\circ\sigma(t( \gp)))
\eqs yields whenever $\sigma_{\Gamma}\in\mathfrak{B}(\PD_\Gamma\Sigma)$. For a special bisection $\breve\sigma_\Gamma$ it is true that,
\beqs (\ho_\Gamma\circ R_{\breve\sigma_{\Gamma}})(\gamma)= \ho_\Gamma(\gamma\circ\gp)= (\ho_\Gamma\circ R_{\breve\sigma_{\Gamma}})(\gamma)\cdot (\ho_\Gamma\circ R_{\breve\sigma_{\Gamma}})(\gp)
\eqs holds whenever $\breve\sigma_\Gamma\in\mathfrak{B}(\PD_\Gamma\Sigma)$, $\breve\sigma_\Gamma(t(\gp))=\idf_{t(\gp)}$ and $\breve\sigma_\Gamma(t(\gamma))=\gp$. Let $\tilde\sigma$ be the bisection in the finite path groupoid $\PD_\Gamma\Sigma$ that defines the bisection $\breve\sigma$ in $\PD_\Gamma$. Then the last statement is true, since $R_{\breve\sigma_{\Gamma}}(\gp)=\gp\circ\gp^{-1}$ requires $\tilde\sigma_\Gamma:t(\gp)\mapsto s(\gp)$ and $R_{\breve\sigma_{\Gamma}}(\gamma)=\gamma\circ\gp$ needs $\tilde\sigma_\Gamma: t(\gamma)\mapsto t(\gp)$, where $s(\gp)=t(\gamma)$. Then
$R_{\breve\sigma_\Gamma}(\gamma)$ and $R_{\sigma_\Gamma(t(\gp))}(\gamma)$ coincide if $\breve\sigma_\Gamma(t(\gamma))=\sigma_\Gamma(t(\gamma))$ and $\breve\sigma_\Gamma(t(\gp))=\idf_{t(\gp)}$ holds. 
 
\begin{problem}\label{prob righttranslgrouoid}Let $\Gp$ be a subgraph of the graph $\Gamma$, $\sigma_\Gamma$ be a bisection in $\PD_\Gamma$, $\sigma_\Gp:V_\Gp\rightarrow\PD_\Gp$ be a restriction of $\sigma_\Gamma\in\mathfrak{B}(\PD_\Gamma)$.  Moreover let $V$ be a subset of $V_ \Gp$, let $\Gpp:=\{\gamma\circ\gp\}$ be a subgraph of $\Gp$. Let $(\gamma,\gp)\in\PD_\Gp\Sigma^{(2)}$.

Then even for a suitable bisection $\sigma_\Gp$ in $\PD_\Gamma$ it follows that,
\beq\label{eq ineqsolv} R_{\sigma_{\Gp}(V)}(\gamma\circ\gp)\neq R_{\sigma_{\Gp}(V)}(\gamma)\circ R_{\sigma_{\Gp}(V)}(\gp) 
\eq yields. This is a general problem. In comparison with problem \ref{subsec fingraphpathgroup}.\ref{problem group structure on graphs systems} the multiplication map $\circ$ is not well-defined and hence
 \beqs R_{\sigma_{\Gp}(V)}(\gamma)\circ R_{\sigma_{\Gp}(V)}(\gp)
\eqs  is not well-defined. Recognize that, $R_{\sigma_{\Gp}(V)}:\PD_{\Gamma}\rightarrow\PD_\Gamma$.

Consequently in general it is not true that, 
 \beq\label{eq ineqsolv2}(\ho_\Gamma\circ R_{\sigma_{\Gp}(V)})(\gamma\circ\gp)=\ho_\Gamma(R_{\sigma_{\Gp}(V)}(\gamma)\circ R_{\sigma_{\Gp}(V)}(\gp))=(\ho_\Gamma\circ R_{\sigma_{\Gp}(V)})(\gamma)\cdot (\ho_\Gamma\circ R_{\sigma_{\Gp}(V)})(\gp)
\eq yields.
\end{problem}

With no doubt the left-tranlation $L_{\sigma_\Gp}$ and the inner autmorphisms $I_{\sigma_\Gp}$ in a finite graph system $\PD_\Gamma$ for every $\Gp\in\PD_\Gamma$ are defined similarly.

\begin{defi}Let $\sigma_\Gamma\in \mathfrak{B}(\PD_\Gamma)$ be a bisection in the finite graph system $\PD_\Gamma$. Let $R_{\sigma_\Gamma(V)}$ be a right-tranlation, where $V$ is a subset of $V_\Gamma$.

Then the pair $(\Phi_\Gamma,\varphi_\Gamma)$ defined by $\Phi_\Gamma=R_{\sigma_\Gamma(V)}$ (or, respectively, $\Phi_\Gamma=L_{\sigma_\Gamma(V)}$, or $\Phi_\Gamma=I_{\sigma_\Gamma(V)}$) for a subset $V\subseteq V_\Gamma$ and $\varphi_\Gamma=t\circ \sigma_\Gamma$ is called a \textbf{graph-diffeomorphism of a finite graph system}. 
Denote the set of finite graph-diffeomorphisms by $\Diff(\PD_\Gamma)$.
\end{defi}

Let $\Gp$ be a subgraph of $\Gamma$ and $\sigma_\Gp$ be a restriction of  bisection $\sigma_\Gamma$ in $\PD_\Gamma$. Then for example another graph-diffeomorphism $(\Phi_\Gp,\varphi_\Gp)$ in $\Diff(\PD_\Gamma)$ is defined by $\Phi_\Gp=R_{{\sigma_\Gp}(V)}$ for a subset $V\subseteq V_\Gp$ and $\varphi_\Gp=t\circ \sigma_\Gp$.

Remembering that the set of bisections of a finite path groupoid forms a group (refer \ref{lem groupbisection}) one may ask if the bisections of a finite graph system form a group, too. 

\begin{prop}\label{lemma bisecform}The set of bisections $\mathfrak{B}(\PD_\Gamma)$ in a finite graph system $\PD_\Gamma$ forms a group.
\end{prop}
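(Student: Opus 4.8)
The plan is to reduce everything to Lemma \ref{lem groupbisection}, which already establishes that the set $\mathfrak{B}(\PD_\Gamma\Sigma)$ of bisections of the underlying finite path groupoid is a group. The key observation is that, by Definition \ref{defi bisecongraphgroupioid}, every bisection $\sigma_\Gamma\in\mathfrak{B}(\PD_\Gamma)$ is completely determined by a bisection $\tilde\sigma\in\mathfrak{B}(\PD_\Gamma\Sigma)$ through $\sigma_\Gamma(V)=\{\tilde\sigma(v_i):v_i\in V\}$ for every $V\subseteq V_\Gamma$. Conversely, evaluating $\sigma_\Gamma$ on a singleton $\{v\}$ returns the singleton subgraph $\{\tilde\sigma(v)\}$, so the underlying $\tilde\sigma$ is recovered uniquely; hence the assignment $\tilde\sigma\mapsto\sigma_\Gamma$ is injective. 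First I would make this correspondence precise and note that it identifies $\mathfrak{B}(\PD_\Gamma)$ with the subset $\mathfrak{B}_0\subseteq\mathfrak{B}(\PD_\Gamma\Sigma)$ of those path-groupoid bisections whose induced images $\{\tilde\sigma(v_i):v_i\in V\}$ are genuinely subgraphs contained in $\PD_\Gamma$.

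Next I would transport the group operations of Lemma \ref{lem groupbisection} along this identification. Concretely, for $\sigma_\Gamma,\sigma_\Gamma^\prime\in\mathfrak{B}(\PD_\Gamma)$ with underlying bisections $\tilde\sigma,\tilde\sigma^\prime$, I define the product pointwise by
\beqs (\sigma_\Gamma\ast\sigma_\Gamma^\prime)(V):=\{(\tilde\sigma\ast\tilde\sigma^\prime)(v_i):v_i\in V\},\quad (\tilde\sigma\ast\tilde\sigma^\prime)(v)=\tilde\sigma^\prime(v)\circ\tilde\sigma(t(\tilde\sigma^\prime(v)))\eqs
take the unit to be the bisection induced by the object inclusion $v\mapsto\idf_v$, that is $V\mapsto\{\idf_{v_i}:v_i\in V\}$, and the inverse to be the one induced by $\tilde\sigma^{-1}$, where $\tilde\sigma^{-1}(v)=\tilde\sigma((t\circ\tilde\sigma)^{-1}(v))^{-1}$. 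Associativity, the unit laws and the inverse laws then hold in $\mathfrak{B}(\PD_\Gamma)$ precisely because they hold vertexwise in $\mathfrak{B}(\PD_\Gamma\Sigma)$ by Lemma \ref{lem groupbisection}, using that two graph-system bisections coincide if and only if their underlying path-groupoid bisections agree on every vertex.

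The step I expect to be the main obstacle is closure, i.e. showing that $\mathfrak{B}_0$ is genuinely a subgroup of $\mathfrak{B}(\PD_\Gamma\Sigma)$. Here I must check that the product and the inverse of two induced bisections are again induced by elements of $\mathfrak{B}(\PD_\Gamma)$; concretely, that the images $\{(\tilde\sigma\ast\tilde\sigma^\prime)(v_i):v_i\in V\}$ and $\{\tilde\sigma^{-1}(v_i):v_i\in V\}$ are again independent families, hence subgraphs lying in $\PD_\Gamma$, rather than non-independent sets of the kind exhibited in Problem \ref{prob withoutcond}. Since $\tilde\sigma\ast\tilde\sigma^\prime$ and $\tilde\sigma^{-1}$ are honest path-groupoid bisections by Lemma \ref{lem groupbisection}, the source map $s$ remains injective on their singleton images, and $t\circ\tilde\sigma$ lies in $\Diff(V_\Gamma)$; I would use these two facts to exclude the coincidences of target vertices that would otherwise destroy independence. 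Once closure is verified the remaining axioms are inherited automatically, and in fact the argument yields the stronger statement $\mathfrak{B}(\PD_\Gamma)\cong\mathfrak{B}_0\leq\mathfrak{B}(\PD_\Gamma\Sigma)$.
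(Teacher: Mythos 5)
Your identification of $\mathfrak{B}(\PD_\Gamma)$ with a subset $\mathfrak{B}_0\subseteq\mathfrak{B}(\PD_\Gamma\Sigma)$ is unobjectionable (it is essentially Definition \ref{defi bisecongraphgroupioid}), but the step you yourself single out as the main obstacle --- closure of $\mathfrak{B}_0$ under the pointwise product inherited from Lemma \ref{lem groupbisection} --- is not merely difficult, it is false. Your product $(\sigma_\Gamma\ast\sigma_\Gamma^\prime)(V):=\{(\tilde\sigma\ast\tilde\sigma^\prime)(v_i):v_i\in V\}$ is exactly the operation the paper's proof denotes by $\ast_1$ and immediately discards (``this operation is not well-defined in general''); the actual substance of the paper's proof of Proposition \ref{lemma bisecform} is the construction of a different multiplication $\ast_2$, which composes paths only at a restricted set of vertices $V_{\sigma,\sigma^\prime}$, keeps the two paths $\tilde\sigma(w)$, $\tilde\sigma^\prime(w)$ side by side (uncomposed) at vertices $w\in W_{\sigma,\sigma^\prime}$, and inserts trivial paths $\idf_p$ elsewhere. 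So your route is not a shortcut to the paper's argument; it attempts the stronger statement $\mathfrak{B}(\PD_\Gamma)\cong\mathfrak{B}_0\leq\mathfrak{B}(\PD_\Gamma\Sigma)$, which does not hold.

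Here is a concrete counterexample to closure. Let $\Gamma=\{\gamma_1,\gamma_2,\gamma_3,\gamma_4\}$ be a square, $\gamma_1:a\rightarrow b$, $\gamma_2:b\rightarrow c$, $\gamma_3:c\rightarrow d$, $\gamma_4:d\rightarrow a$ with $a,b,c,d$ distinct, and define $\tilde\sigma$ by $\tilde\sigma(a)=\gamma_1$, $\tilde\sigma(b)=\gamma_2$, $\tilde\sigma(c)=\gamma_3$, $\tilde\sigma(d)=\gamma_4$. Then $s\circ\tilde\sigma=\id_{V_\Gamma}$ and $t\circ\tilde\sigma$ is the cyclic permutation $(a\,b\,c\,d)$, so $\tilde\sigma\in\mathfrak{B}(\PD_\Gamma\Sigma)$, and every image $\{\tilde\sigma(v):v\in V\}$ is a subset of the generating edge set, hence a subgraph; thus $\tilde\sigma\in\mathfrak{B}_0$. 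However $(\tilde\sigma\ast\tilde\sigma)(a)=\gamma_1\circ\tilde\sigma(b)=\gamma_1\circ\gamma_2$ and $(\tilde\sigma\ast\tilde\sigma)(b)=\gamma_2\circ\gamma_3$, so the pointwise square sends $V=\{a,b\}$ to $\{\gamma_1\circ\gamma_2,\,\gamma_2\circ\gamma_3\}$, whose two paths overlap along the entire edge $\gamma_2$ and are therefore not independent: this image is not a subgraph, so $\tilde\sigma\ast\tilde\sigma\notin\mathfrak{B}_0$, although it is a perfectly good element of $\mathfrak{B}(\PD_\Gamma\Sigma)$. Note that the two facts you propose to use cannot detect this failure: all source vertices remain distinct, and $t\circ(\tilde\sigma\ast\tilde\sigma)=(t\circ\tilde\sigma)^2$ is still a bijection of $V_\Gamma$ (the double transposition $(a\,c)(b\,d)$); independence is destroyed by interior overlap of the composed paths, not by any coincidence of source or target vertices. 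This is precisely the phenomenon behind Problem \ref{prob withoutcond}, and it is why any correct proof must modify the multiplication itself, as the paper's $\ast_2$ does; in the paper's own example the contrast is explicit, $(\sigma\ast_1\sigma^\prime)(V)=\{\gamma_3\circ\gamma_2,\gamma_1\circ\gamma_2\}\notin\PD_\Gamma$ while $(\sigma\ast_2\sigma^\prime)(V_\Gamma)=\{\idf_{t(\gamma_1)},\idf_{t(\gamma_3)}\}\in\PD_\Gamma$.
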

\begin{proofs}Let $\Gamma$ be a graph and let $V_\Gamma$ be equivalent to the set $\{v_1,...,v_{2N}\}$.
  
First two different multiplication operations are studied. The studies are comparable with the results of the definition \ref{defi bisecongraphgroupioid} of a right-translation in a finite graph system. The easiest multiplication operation is given by $\ast_1$, which is defined by 
\beqs (\sigma\ast_1\sigma^\prime)(V_\Gamma)&:=\{
(\tilde\sigma\ast \tilde\sigma^\prime)(v_1),...,(\tilde\sigma\ast \tilde\sigma^\prime)(v_{2N}):v_i\in V_\Gamma\}
\eqs where $\ast$ denotes the multiplication of bisections on the finite path groupoid $\fPG$. Notice that, this operation is not well-defined in general. In comparison with the definition of the right-translation in a finite graph system one has to take care. First the set of vertices doesn't contain any vertices twice, the map $\sigma$ in the finite path system is bijective, the mapping $\sigma$ maps each set to a set of vertices containing no  
vertices twice and the situation in problem \thesection.\ref{prob withoutcond} has to be avoided.

Fix a bisection $\tilde\sigma$ in a finite path groupoid $\fPG$.  Let $V_{\sigma^\prime}$ be a subset of $V_\Gamma$ where $\Gamma:=\{\gamma_1,...,\gamma_N\}$ and for each $v_i$ in $V_{\sigma^\prime}$ it is true that $v_i\neq v_j$ and $v_i\neq (t\circ\tilde\sigma^\prime)(v_j)$ for all $i\neq j$. Define the set $V_{\sigma,\sigma^\prime}$ to be equal to a subset of the set of all vertices $\{v_k\in V_{\sigma^\prime}: 1\leq k\leq 2N\}$ such that each pair $(v_i,v_j)$ of vertices in $V_{\sigma,\sigma^\prime}$ satisfies $(t\circ(\tilde\sigma\ast\tilde\sigma^\prime))(v_i)\neq (t\circ\tilde\sigma^\prime)(v_j)$ and $(t\circ\tilde\sigma^\prime)(v_i)\neq (t\circ\tilde\sigma^\prime)(v_j)$ for all $i\neq j$.
Define
\beqs W_{\sigma,\sigma^\prime}:=\Big\{
w_i\in \{V_{\sigma}\cap V_{\sigma^\prime}\}\setminus V_{\sigma,\sigma^\prime}:
& (t\circ\tilde\sigma)(w_j)\neq (t\circ\tilde\sigma^\prime)(w_i)\quad\forall i\neq j,\quad 1\leq i,j\leq l\Big\}
\eqs
The set $V_{\sigma,\sigma^\prime,\breve \sigma}$ is a subset of all vertices $\{v_k\in V_{\sigma,\sigma^\prime}: 1\leq k\leq 2N\}$ such that each pair $(v_i,v_j)$ of vertices in $V_{\sigma,\sigma^\prime,\breve\sigma}$ satisfies $(t\circ(\breve\sigma\ast\tilde\sigma\ast\tilde\sigma^\prime))(v_i)\neq (t\circ(\tilde\sigma\ast\tilde\sigma^\prime))(v_j)$ and $(t\circ(\tilde\sigma\ast\tilde\sigma^\prime))(v_i)\neq (t\circ(\tilde\sigma\ast\tilde\sigma^\prime))(v_j)$ for all $i\neq j$.

Consequently define a second multiplication on $\mathfrak{B}(\PD_\Gamma)$ similarly to the operation $\ast_1$. This is done by the following definition. Set
\beqs (\sigma\ast_2\sigma^\prime)(V_{\sigma^\prime})
:=&\left\{
(\tilde\sigma\ast \sigma^\prime)(v_1),...,(\tilde\sigma\ast\sigma^\prime)(v_{k}):
v_1,...,v_k\in V_{\sigma,\sigma^\prime}, 1\leq k\leq 2N\right\}\\
&\cup\left\{ \tilde\sigma(w_1),\sigma^\prime(w_1)...,\tilde\sigma(w_{l}),\sigma^\prime(w_{l}): w_1,...,w_l\in W_{\sigma,\sigma^\prime}, 1\leq l\leq 2N
\right\}\\
&\cup\left\{\idf_{p_1},...,\idf_{p_n}:p_1,...,p_n\in V_{\sigma^\prime}\setminus \{V_{\sigma,\sigma^\prime}\cup W_{\sigma,\sigma^\prime}\}, 1\leq n\leq 2N\right\}
\eqs 

Hence the inverse is supposed to be $\sigma^{-1}(V_\Gamma)=\sigma((t\circ\sigma)^{-1}(V_\Gamma))^{-1}$ such that
\beqs (\sigma\ast_2\sigma^{-1})(V_{\sigma^{-1}})=&\{ (\tilde\sigma\ast \tilde\sigma^{-1})(v_1),...,(\tilde\sigma\ast \tilde\sigma^{-1})(v_{2N}):v_i\in V_{\sigma,\sigma^{-1}}\}\\
&\cup\left\{ \tilde\sigma(w_1),\sigma^\prime(w_1)^{-1}...,\tilde\sigma(w_{l}),\sigma^\prime(w_{l})^{-1}: w_1,...,w_l\in W_{\sigma,\sigma^{-1}}, 1\leq l\leq 2N
\right\}\\
&\cup\left\{\idf_{p_1},...,\idf_{p_n}:p_1,...,p_n\in V_{\sigma^\prime}\setminus \{V_{\sigma,\sigma^{-1}}\cup W_{\sigma,\sigma^{-1}}\}, 1\leq n\leq 2N\right\}
\eqs
\end{proofs} 
Notice that, the problem \thesection.\ref{prob righttranslgrouoid} is solved by a multiplication operation $\circ_2$, which is defined similarly to $\ast_2$. Hence the equality of \eqref{eq ineqsolv} is available and consequently \eqref{eq ineqsolv2} is true. Furthermore a similar remark to \ref{def compositionofdiffeo} can be done.

\begin{exa}
Now consider the following example. Set $\Gp:=\{\gamma_1,\gamma_3\}$, let $\Gamma:=\{\gamma_1,\gamma_2,\gamma_3\}$ and\\ $V_\Gamma:=\{s(\gamma_1),t(\gamma_1),s(\gamma_2),t(\gamma_2),s(\gamma_3),t(\gamma_3): s(\gamma_i)\neq s(\gamma_j),t(\gamma_i)\neq t(\gamma_j)\text{ }\forall i\neq j\}$. \\
Set $V$ be equal to $\{s(\gamma_1),s(\gamma_2),s(\gamma_3)\}$. Take two maps $\sigma$ and $\sigma^\prime$ such that $\sigma^\prime(V)=\{\gamma_1,\gamma_3\}$, $\sigma(V)=\{ \gamma_2\}$, where $(t\circ\tilde\sigma)(s(\gamma_3))=t(\gamma_3)$, $\tilde\sigma^\prime(s(\gamma_3))=\gamma_3$, $\tilde\sigma^\prime(s(\gamma_1))=\gamma_1$ and  $\tilde\sigma(t(\gamma_3))=\gamma_2$. Then $s(\gamma_3)\in V_{\sigma_{\Gp},\sigma^\prime_\Gp}$ and $s(\gamma_1)\in W_{\sigma_{\Gp},\sigma^\prime_\Gp}$.
Derive 
\beqs (\sigma\ast_1\sigma^\prime)(V)= \{\gamma_3\circ\gamma_2,\gamma_1\}
\eqs 
Then conclude that,
\beqs (\sigma\ast_2\sigma^\prime)(V_\Gamma)= \{\gamma_3\circ\gamma_2,\gamma_1\}
\eqs holds. Notice that
\beqs (\sigma\ast_2\sigma^\prime)(V)\neq (\sigma^\prime\ast_2\sigma)(V)= \{\gamma_2,\gamma_1,\gamma_3\}
\eqs is true.
Finally obtain
\beqs (\sigma\ast_2\sigma^{-1})(V_\Gamma)= \{\gamma_3\circ\gamma_3^{-1},\gamma_1\circ\gamma_1^{-1}\}=\{\idf_{s(\gamma_3)},\idf_{s(\gamma_1)}\}
\eqs 
Let $\sigma^\prime(V_\Gamma)=\{\gamma_1,\gamma_3\}$ and $\breve \sigma(V_\Gamma)=\{\gamma_2,\gamma_4\}$. Then notice that,
\beqs (\breve \sigma\ast_1\sigma^\prime)(V_\Gamma)=\{\gamma_3\circ\gamma_2,\gamma_1\}
\eqs and 
\beqs (\breve \sigma\ast_2\sigma^\prime)(V_\Gamma)=\{\gamma_3\circ\gamma_2,\gamma_1,\gamma_4\}
\eqs yields.

Furthermore assume supplementary that $t(\gamma_3)=t(\gamma_1)$ holds. Then calculate the product of the maps $\sigma$ and $\sigma^\prime$:
\beqs (\sigma\ast_1\sigma^\prime)(V)= \{\gamma_3\circ\gamma_2,\gamma_1\circ\gamma_2\}\notin\PD_\Gamma
\eqs  and
\beqs (\sigma\ast_2\sigma^\prime)(V_\Gamma)= \{\idf_{t(\gamma_1)}, \idf_{t(\gamma_3)}\}\in\PD_\Gamma
\eqs
\end{exa}

The group structure of $\mathfrak{B}(\PD_\Gamma)$ transferes to $G$. Let $\tilde\sigma$ be a bisection in the finite path groupoid $\fPSGm$, which defines a bisection $\sigma$ in $\PD_\Gamma$ and let $\tilde\sigma^\prime$ be a bisection in $\fPSGm$, which defines another bisection $\sigma^\prime$ in $\PD_\Gamma$. Let $V_{\sigma,\sigma^\prime}$ be equal to $V_\Gamma$, then derive
\beq &\ho_\Gamma\left((\sigma\ast_2\sigma^\prime)(V_\Gamma)\right) 
= \{\ho_\Gamma((\tilde\sigma\ast\sigma^\prime)(v_1)),...,\ho_\Gamma((\tilde\sigma\ast\sigma^\prime)(v_{2N})) \}\\ 
&=\ho_\Gamma(\sigma^\prime(V_\Gamma)\circ\sigma(t(\sigma^\prime(V_\Gamma))))
=\{\ho_\Gamma(\sigma^\prime(v)\circ\tilde\sigma(t(\sigma^\prime(v_1)))),...,\ho_\Gamma(\sigma^\prime(v_{N})\circ\tilde\sigma(t(\sigma^\prime(v_{N}))))\}\\
&=\{\ho_\Gamma(\sigma^\prime(v))\ho_\Gamma(\tilde\sigma(t(\sigma^\prime(v_1)))),...,\ho_\Gamma(\sigma^\prime(v_{N}))\ho_\Gamma(\tilde\sigma(t(\sigma^\prime(v_{N}))))\}\\
&= \ho_\Gamma(\sigma^\prime(V_\Gamma))\ho_\Gamma(\sigma(V_\Gamma))
\eq
Consequently the right-translation in the finite product $G^{\vert\Gamma\vert}$ is definable.

\begin{defi}Let $\sigma_\Gp$ be in $\mathfrak{B}(\PD_\Gamma)$, $\Gp$ a subgraph of $\Gamma$, $\Gpp$ a subgraph of $\Gp$ and $R_{\sigma_\Gp}$ a right-translation, $L_{\sigma_\Gp}$ a left-translation and $I_{\sigma_\Gp}$ an inner-translation in $\PD_\Gamma$.

Then the \textbf{right-translation in the finite product $G^{\vert\Gamma\vert}$} is given by
\beqs \ho_\Gamma\circ R_{\sigma_\Gp}:\PD_\Gamma\rightarrow G^{\vert\Gamma\vert}, \quad \Gpp\mapsto (\ho_\Gamma\circ R_{\sigma_\Gp})(\Gpp)
\eqs
Furthermore define the \textbf{left-translation in the finite product $G^{\vert\Gamma\vert}$} by
\beqs \ho_\Gamma\circ L_{\sigma_\Gp}:\PD_\Gamma\rightarrow G^{\vert\Gamma\vert},\quad \Gpp\mapsto (\ho_\Gamma\circ L_{\sigma_\Gp})(\Gpp)
\eqs
and the \textbf{inner-translation in the finite product $G^{\vert\Gamma\vert}$}
\beqs \ho_\Gamma\circ I_{\sigma_\Gp}:\PD_\Gamma\rightarrow G^{\vert\Gamma\vert},\quad \Gpp\mapsto (\ho_\Gamma\circ I_{\sigma_\Gp})(\Gpp)
\eqs such that $I_{\sigma_\Gp}=L_{\sigma_\Gp^{-1}}\circ R_{\sigma_\Gp}$.
\end{defi}
\begin{lem}It is true that $R_{\sigma_\Gp\ast_2\sigma_\Gp^\prime}=R_{\sigma_\Gp}\circ R_{\sigma_\Gp^\prime}$, $L_{\sigma_\Gp\ast_2\sigma_\Gp^\prime}=L_{\sigma_\Gp}\circ L_{\sigma_\Gp^\prime}$ and $I_{\sigma_\Gp\ast_2\sigma_\Gp^\prime}=I_{\sigma_\Gp}\circ I_{\sigma_\Gp^\prime}$ for all bisections $\sigma_\Gp$ and $\sigma^\prime_\Gp$ in $\mathfrak{B}(\PD_\Gamma)$.
\end{lem}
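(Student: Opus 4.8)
The plan is to reduce all three identities to the corresponding statement for the finite path groupoid, where it has already been shown that $\tilde\sigma\mapsto R_{\tilde\sigma}$, $\tilde\sigma\mapsto L_{\tilde\sigma}$ and $\tilde\sigma\mapsto I_{\tilde\sigma}$ are group isomorphisms of $\mathfrak{B}(\PD_\Gamma\Sigma)$, that is, $T_{\tilde\sigma\ast\tilde\sigma^\prime}=T_{\tilde\sigma}\circ T_{\tilde\sigma^\prime}$ for $T\in\{R,L,I\}$. Each graph-system bisection $\sigma_\Gp\in\mathfrak{B}(\PD_\Gamma)$ arises from an underlying path-groupoid bisection $\tilde\sigma\in\mathfrak{B}(\PD_\Gamma\Sigma)$, and the graph-system translation $T_{\sigma_\Gp}$ of Definition \ref{defi bisecongraphgroupioid} is built by applying the path-groupoid translation $T_{\tilde\sigma}$ edge-by-edge to a subgraph and decorating the untouched source vertices with identity paths. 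The multiplication $\ast_2$ on $\mathfrak{B}(\PD_\Gamma)$ introduced in Proposition \ref{lemma bisecform} is arranged so that its underlying vertex action is exactly $\tilde\sigma\ast\tilde\sigma^\prime$ on the relevant vertex sets; the computation carried out immediately before this lemma, $\ho_\Gamma((\sigma\ast_2\sigma^\prime)(V_\Gamma))=\ho_\Gamma(\sigma^\prime(V_\Gamma))\,\ho_\Gamma(\sigma(V_\Gamma))$, records precisely this compatibility at the level of holonomies.

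First I would establish the right-translation identity. Fixing a subgraph $\Gpp=\{\gppe,\dots,\gppm\}$ and evaluating $R_{\sigma_\Gp}\circ R_{\sigma_\Gp^\prime}$ on it, the inner map $R_{\sigma_\Gp^\prime}$ sends each edge $\gppi$ whose target vertex is acted upon to $\gppi\circ\tilde\sigma^\prime(t(\gppi))$ and leaves identity contributions on the remaining source vertices, after which $R_{\sigma_\Gp}$ appends $\tilde\sigma$ at the new targets. Applying $R_{\tilde\sigma}\circ R_{\tilde\sigma^\prime}=R_{\tilde\sigma\ast\tilde\sigma^\prime}$ edge-by-edge, the image of $\gppi$ becomes $\gppi\circ(\tilde\sigma\ast\tilde\sigma^\prime)(t(\gppi))$, which is exactly the value assigned by $R_{\sigma_\Gp\ast_2\sigma_\Gp^\prime}$. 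The left-translation identity $L_{\sigma_\Gp\ast_2\sigma_\Gp^\prime}=L_{\sigma_\Gp}\circ L_{\sigma_\Gp^\prime}$ follows by the mirror-image argument, prepending $\tilde\sigma$ at source vertices rather than appending at targets, and the inner-translation identity follows in the same fashion by lifting $I_{\tilde\sigma\ast\tilde\sigma^\prime}=I_{\tilde\sigma}\circ I_{\tilde\sigma^\prime}$ (consistency may also be checked via $I_{\sigma_\Gp}=L_{\sigma_\Gp^{-1}}\circ R_{\sigma_\Gp}$ together with the inverse formula $(\sigma_\Gp\ast_2\sigma_\Gp^\prime)^{-1}=(\sigma_\Gp^\prime)^{-1}\ast_2\sigma_\Gp^{-1}$ from Proposition \ref{lemma bisecform}). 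Finally each identity is transferred to the finite product $G^{\vert\Gamma\vert}$ by composing with $\ho_\Gamma$, exactly as in the displayed computation preceding the lemma.

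The hard part will be the combinatorial bookkeeping that makes this edge-by-edge reduction legitimate. One must verify that the decomposition of the vertex set underlying $\sigma_\Gp^\prime$ into the subsets $V_{\sigma,\sigma^\prime}$ and $W_{\sigma,\sigma^\prime}$, together with the residual identity vertices occurring in the definition of $\ast_2$, matches term-for-term the case distinction in the definition of $R_{\sigma_\Gp}$, so that after composition no path is produced twice and the output is still a set of independent edges, hence a genuine subgraph. This step is delicate precisely because, as Problem \ref{prob righttranslgrouoid} shows, the naive right-translation need not respect composition of paths and the set-level multiplication is not well defined in general; I would therefore work throughout with the repaired multiplication $\circ_2$ indicated in the remark following Proposition \ref{lemma bisecform}, which is what restores the compatibility \eqref{eq ineqsolv2} required to push the path-groupoid identities through to the graph system. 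Once this vertex-set bookkeeping is confirmed, the three equalities hold on every edge and reassemble into the claimed identities.
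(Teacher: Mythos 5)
The paper states this lemma with no proof whatsoever, so your argument has to stand entirely on its own; as written it has a genuine gap, and it sits exactly where you yourself locate the ``hard part.'' Your edge-by-edge reduction rests on the claim that the underlying vertex action of $\sigma_\Gp\ast_2\sigma_\Gp^\prime$ is $\tilde\sigma\ast\tilde\sigma^\prime$, so that $R_{\sigma_\Gp}\bigl(R_{\sigma_\Gp^\prime}(\gamma)\bigr)=\gamma\circ(\tilde\sigma\ast\tilde\sigma^\prime)(t(\gamma))=R_{\sigma_\Gp\ast_2\sigma_\Gp^\prime}(\gamma)$. But by the construction of $\ast_2$ in proposition \ref{lemma bisecform}, the product agrees with $\tilde\sigma\ast\tilde\sigma^\prime$ \emph{only} on the subset $V_{\sigma,\sigma^\prime}$: on $W_{\sigma,\sigma^\prime}$ the product retains the two paths $\tilde\sigma(w)$ and $\sigma^\prime(w)$ separately (they are not composed), and on the residual vertices it assigns identity paths $\idf_{p}$. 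Hence for an edge of $\Gpp$ whose target, or whose translated target $(t\circ\tilde\sigma^\prime)(t(\gamma))$, lands in $W_{\sigma,\sigma^\prime}$ or in the residual set, the two sides of the claimed identity are computed by different rules, and on $W_{\sigma,\sigma^\prime}$ the product bisection is not even single-valued, so it is not clear that it is ``given by'' an underlying path-groupoid bisection in the sense required by definition \ref{defi bisecongraphgroupioid} for $R_{\sigma_\Gp\ast_2\sigma_\Gp^\prime}$ to be defined at all. Equality in precisely these cases is the entire content of the lemma --- the path-groupoid identity $R_{\tilde\sigma\ast\tilde\sigma^\prime}=R_{\tilde\sigma}\circ R_{\tilde\sigma^\prime}$ is a one-line computation --- and your proposal defers it (``once this vertex-set bookkeeping is confirmed\dots'') rather than carrying it out.

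A second, smaller inaccuracy: the displayed holonomy computation you cite as ``recording precisely this compatibility,'' namely $\ho_\Gamma\bigl((\sigma\ast_2\sigma^\prime)(V_\Gamma)\bigr)=\ho_\Gamma(\sigma^\prime(V_\Gamma))\,\ho_\Gamma(\sigma(V_\Gamma))$, is derived in the paper under the explicit hypothesis $V_{\sigma,\sigma^\prime}=V_\Gamma$, i.e.\ exactly when $W_{\sigma,\sigma^\prime}$ and the residual identity vertices are empty; it therefore cannot serve as evidence that the reduction goes through in general. To close the gap you would need to (i) show that $\ast_2$ is closed on $\mathfrak{B}(\PD_\Gamma)$ in the sense of definition \ref{defi bisecongraphgroupioid}, i.e.\ exhibit a single-valued underlying bisection of $\PD_\Gamma\Sigma$ for the product, and (ii) check, against the distinct-target conditions and the $\idf_{w_i}\circ\tilde\sigma(w_i)$ decorations in the definition of $R_{\sigma_\Gp}$, that the three vertex classes $V_{\sigma,\sigma^\prime}$, $W_{\sigma,\sigma^\prime}$ and the residual set reproduce term-for-term the case distinction arising from composing the two translations. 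Your treatment of $L$ by mirror symmetry and of $I$ via $I_{\sigma_\Gp}=L_{\sigma_\Gp^{-1}}\circ R_{\sigma_\Gp}$ is sound only modulo this same missing verification.
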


There is an action of $\mathfrak {B}(\PD_\Gamma)$ on $G^{\vert \Gamma\vert}$ by
\beqs (\zeta_{\sigma_\Gp}\circ\ho_\Gamma)(\Gpp):= (\ho_\Gamma\circ R_{\sigma_\Gp})(\Gpp)
\eqs whenever $\sigma_\Gp\in \mathfrak {B}(\PD_\Gamma)$, $\Gpp\in\PD_\Gp$ and $\Gp\in \PD_\Gamma$.
Then for another $\breve\sigma\in \mathfrak {B}(\PD_\Gamma)$ it is true that,
\beqs ((\zeta_{\breve\sigma_\Gp}\circ\zeta_{\sigma_\Gp})\circ\ho_\Gamma)(\Gpp)= (\ho_\Gamma\circ R_{\breve\sigma\ast_2\sigma_\Gp})(\Gpp)=(\zeta_{\breve\sigma_\Gp\ast_2\sigma_\Gp}\circ\ho_\Gamma)(\Gpp)
\eqs yields.

Recall that, the map $\tilde\sigma\mapsto t\circ\tilde\sigma$ is a group isomorphism between the group of bisections $\mathfrak {B}(\PD_\Gamma\Sigma)$ and the group $\Diff(V_\Gamma)$ of finite diffeomorphisms in $V_\Gamma$. Therefore if the graphs $\Gp=\Gpp$ contain only the path $ \gamma$, then the action $\zeta_{\sigma_\Gp}$ is equivalent to an action of the finite diffeomorphism group $\Diff(V_\Gamma)$. Loosely speaking, the graph-diffeomorphisms $(R_{\sigma_\Gp(V)},t\circ\sigma_\Gp)$ on a subgraph $\Gpp$ of $\Gp$ transform graphs and respect the graph structure of $\Gp$. The diffeomorphism $t\circ\tilde\sigma$ in the finite path groupoid only implements the finite diffeomorphism in $\Sigma$, but it doesn't adopt any path groupoid or graph preserving structure. Summarising the bisections of a finite graph system respect the graph structure and implement the finite diffeomorphisms in $\Sigma$. There is another reason why the group of bisections is more fundamental than the path- or graph-diffeomorphism group. In \cite{Kaminski1,KaminskiPHD} the concept of  $C^*$-dynamical systems is studied. It turns out that, there are three different $C^*$-dynamical systems, each is build from the analytic holonomy $C^*$-algebra and a point-norm continuous action of the group of bisections of a finite graph system. The actions are implemented by one of the three translations, i.e. the left-, right- or inner-translation in the finite product $G^{\vert\Gamma\vert}$. 

Finally the left or right-translations in a finite path groupoid can be studied in the context of natural or non-standard identification of the configuration space. This new concept leads to two different notions of diffeomorphism-invariant states. The actions of path- and graph-diffeomorphism and the concepts of natural or non-standard identification of the configuration space was not used in the context of LQG before.

\subsection{The Lie algebra-valued quantum flux operators associated to surfaces and graphs}\label{sec fluxdef} 

The quantum analogue of a classical connection $A_a(v)$ is given by the holonomy along a path $\gamma$ and is denoted by $\ho(\gamma)$. The quantum flux operator $E_S(\gamma)$, which replaces the classical flux variable $E(S,f^S)$, is given by a map $E_S$ from a graph to the Lie algebra $\go$. Let $\Exp$ be the exponential map from the Lie algebra $\go$ to $G$ and set $U_t(E_S(\gamma)):=\Exp(tE_S(\gamma))$. Then the quantum flux operator $E_S(\gamma)$ and the quantum holonomies $\ho(\gamma)$ satisfy the following canonical commutator relation: 
\[ E_S(\gamma)\ho(\gamma) = i\frac{\dif}{\dif t}\Big\vert_{t=0}U_t(E_S(\gamma))\ho(\gamma)\] where $\gamma$ is a path that intersects the surface $S$ in the target vertex of the path and lies below with respect to the surface orientation of $S$.

In this section different definitions of the quantum flux operator, which is associated to a fixed surface $S$, are presented. For example, the quantum flux operator $E_S$ is defined to be a map from a graph $\Gamma$ to a direct sum $\go\oplus\go$ of the Lie algebra $\go$ associated to the Lie group $G$. This is related to the fact that, one distinguishes between paths that are ingoing and paths that are outgoing with resepect to the surface orientation of $S$. If there are no intersection points of the surface $S$ and the source or target vertex of a path $\gamma_i$ of a graph $\Gamma$, then the map maps the path $\gamma_i$ to zero in both entries. For different surfaces or for a fixed surface different maps refer to different quantum flux operators.  Furthermore, the quantum flux operators are also defined as maps form the graph $\Gamma$ to direct sum $\E\oplus\E$ of the universal enveloping algebra $\E$ of $\go$.
 
\begin{defi}\label{defi intersefunc}Let $\breve S$ be a finite set $\{S_i\}$ of surfaces in $\Sigma$, which is closed under a flip of orientation of the surfaces. Let $\Gamma$ be a graph such that each path in $\Gamma$ satisfies one of the following conditions 
\begin{itemize}
 \item the path intersects each surface in $\breve S$ in the source vertex of the path and there are no other intersection points of the path and any surface contained in $\breve S$,
 \item the path intersects each surface in $\breve S$ in the target vertex of the path and there are no other intersection points of the path and any surface contained in $\breve S$,
 \item the path intersects each surface in $\breve S$ in the source and target vertex of the path and there are no other intersection points of the path and any surface contained in $\breve S$,
 \item the path does not intersect any surface $S$ contained in $\breve S$.
\end{itemize}

Then define the intersection functions $\iota_L:\breve S\times \Gamma\rightarrow \{\pm 1,0\}$ such that
\beqs \iota_L(S,\gamma):=
\left\{\begin{array}{ll}
1 &\text{ for a path }\gamma\text{ lying above and outgoing w.r.t. }S\\
-1 &\text{ for a path }\gamma\text{ lying below and outgoing w.r.t. }S\\
0 &\text{ the path }\gamma\text{ is not outgoing w.r.t. }S
\end{array}\right.
\eqs
and the intersection functions $\iota_R:\breve S\times \Gamma\rightarrow\{\pm 1,0\}$ such that
\beqs \iota_L(S,\gamma):= \left\{\begin{array}{ll}
-1 &\text{ for a path }\gp\text{ lying above and ingoing w.r.t. }S\\
1 &\text{ for a path }\gp\text{ lying below and ingoing w.r.t. }S\\
0 &\text{ the path }\gp\text{ is not ingoing w.r.t. }S
\end{array}\right.
\eqs whenever $S\in\breve S$ and $\gamma\in\Gamma$.

Define a map $\sigma_L:\breve S\rightarrow \go$ such that
\beqs \sigma_L(S)&=\sigma_L(S^ {-1})
\eqs whenever $S\in\breve S$ and $S^ {-1}$ is the surface $S$ with reversed orientation. Denote the set of such maps by $\breve\sigma_L$. Respectively, the map $\sigma_R:\breve S\rightarrow \go$ such that
\beqs \sigma_R(S)&=\sigma_R(S^ {-1})
\eqs whenever $S\in\breve S$. Denote the set of such maps by $\breve\sigma_R$.
Moreover, there is a map $\sigma_L\times \sigma_R:\breve S\rightarrow \go\oplus\go$ such that
\beqs (\sigma_L,\sigma_R)(S)&=(\sigma_L,\sigma_R)(S^ {-1})
\eqs whenever $S\in\breve S$. Denote the set of such maps by $\breve\sigma$.

Finally, define the \textbf{Lie algebra-valued quantum flux set for paths}
\beqs \gop_{\breve S,\Gamma}
:=\bigcup_{\sigma_L\times\sigma_R\in\breve\sigma}\bigcup_{S\in\breve S}\Big\{& (E^L,E^R)\in\Map(\Gamma,\go\oplus\go): 
&(E^L, E^R)(\gamma):=(\iota_L(S,\gamma)\sigma_L(S),\iota_R(S,\gamma)\sigma_R(S))\Big\}
\eqs
where $\Map(\Gamma,\go\oplus\go)$ is the set of all maps from the graph $\Gamma$ to the direct sum  $\go\oplus\go$ of Lie algebras.   
\end{defi}

Observe that, $(\iota_L\times \iota_R)(S^{-1},\gamma)=(-\iota_L\times -\iota_R)(S,\gamma)$ holds for every $\gamma\in\Gamma$. 

Remark that, the condition $E^L(\gamma)=E^R(\gamma^{-1})$ is not required. 

\begin{exa}\label{exa Exa1}
Analyse the following example. Consider a graph $\Gamma$ and two disjoint surface sets $\breve S$ and $\breve T$.
\begin{center}
\includegraphics[width=0.45\textwidth]{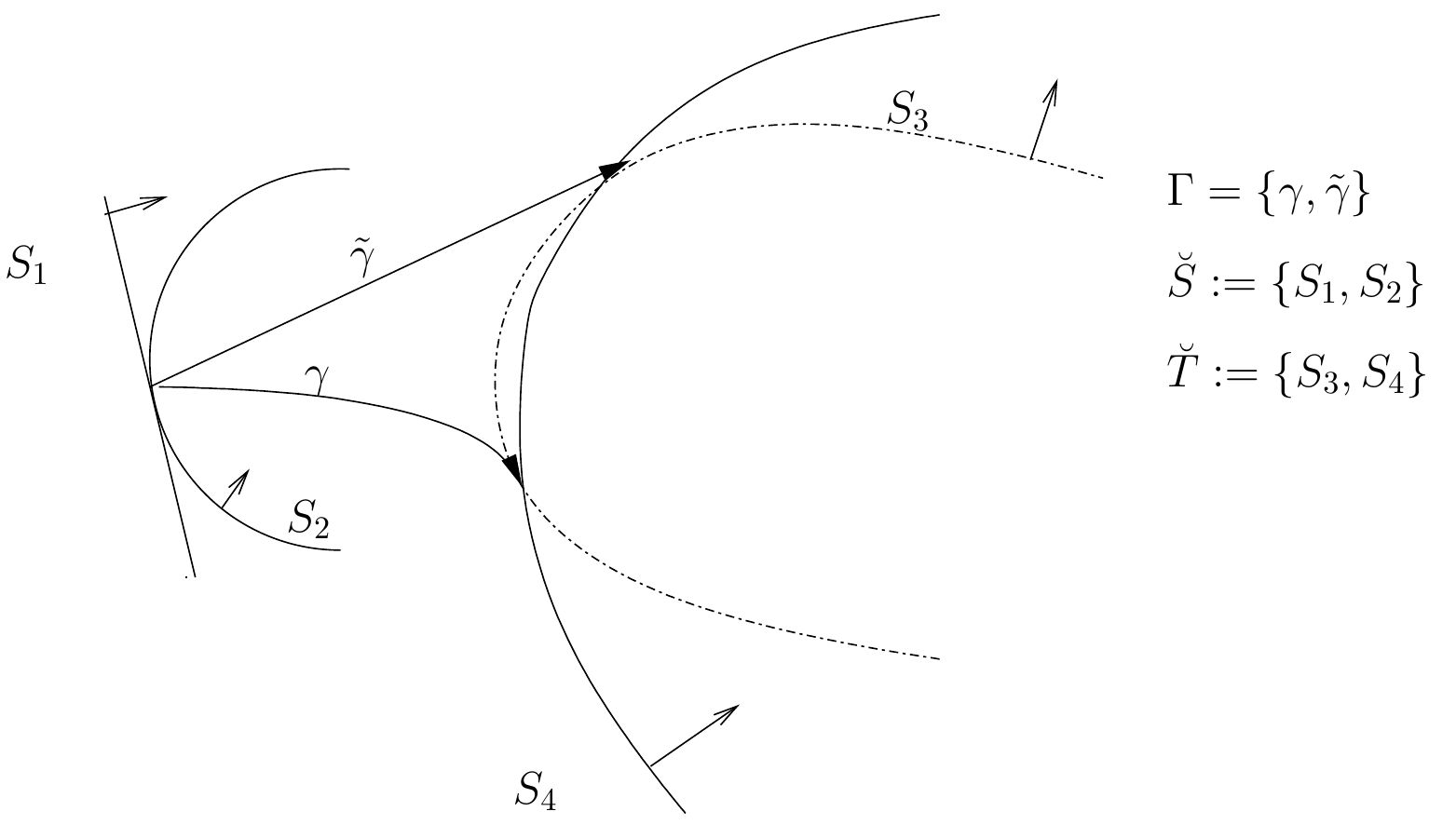}
\end{center}
Then the elements of $\gop_{\breve S,\Gamma}$ are for example given by the maps $E^L_{i}\times E^R_{i}$ for $i=1,2$ such that 
\beqs 
E_1(\gamma)&:= (E^L_{1}, E^R_{1})(\gamma)=(\iota_L(S_1,\gamma)\sigma_L(S_1),\iota_R(S_1,\gamma)\sigma_R(S_1))=(X_{1},0)\\
E_1(\tg)&:= (E^L_{1}, E^R_{1})(\tg)=(\iota_L(S_1,\tg)\sigma_L(S_1),\iota_R(S_1,\tg)\sigma_R(S_1))= (X_1,0)\\
E_2(\gamma)&:= (E^L_{2}, E^R_{2})(\gamma)=(\iota_L(S_2,\gamma)\sigma_L(S_2),\iota_R(S_2,\gamma)\sigma_R(S_2))
=(X_{2},0)\\
E_2(\tg)&:= (E^L_{2}, E^R_{2})(\tg)=(\iota_L(S_2,\tg)\sigma_L(S_2),\iota_R(S_2,\tg)\sigma_R(S_2))
=(X_{2},0)\\
E_3(\gamma)&:= (E^L_{3}, E^R_{3})(\gamma)=(\iota_L(S_3,\gamma)\sigma_L(S_3),\iota_R(S_3,\gamma)\sigma_R(S_3))
=(0,-Y_{3})\\
E_3(\tg)&:= (E^L_{3}, E^R_{3})(\tg)=(\iota_L(S_3,\tg)\sigma_L(S_3),\iota_R(S_3,\tg)\sigma_R(S_3))= (0,-Y_3)\\
E_4(\gamma)&:= (E^L_{4}, E^R_{4})(\gamma)=(\iota_L(S_4,\gamma)\sigma_L(S_4),\iota_R(S_4,\gamma)\sigma_R(S_4))
=(0,Y_{4})\\
E_4(\tg)&:= (E^L_{4}, E^R_{4})(\tg)=(\iota_L(S_4,\tg)\sigma_L(S_4),\iota_R(S_4,\tg)\sigma_R(S_4))= (0,Y_4)
\eqs 

This example shows that, the surfaces $\{S_1,S_2\}$ are similar, whereas the surfaces $\{T_1,T_2\}$ produce different signatures for different paths. Moreover, the set of surfaces are chosen such that one component of the direct sum is always zero. 
\end{exa}

For a particular surface set $\breve S$, the set 
\beqs\bigcup_{\sigma_L\times\sigma_R\in\breve\sigma}\bigcup_{S\in\breve S}
\Big\{ (E^L,E^R)\in\Map(\Gamma,\go\oplus\go): \quad(E^L, E^R)(\gamma):=(\iota_L(S,\gamma)\sigma_L(S),0)\Big\}\eqs can be identified with 
\beqs\bigcup_{\sigma_L\in\breve\sigma_L}\bigcup_{S\in\breve S}\Big\{E\in\Map(\Gamma,\go): \quad
E(\gamma):=\iota_L(S,\gamma)\sigma_L(S)\Big\}
\eqs 
The same is observed for another surface set $\breve T$ and the set $\gop_{\breve T,\Gamma}$ is identifiable with 
\beqs\bigcup_{\sigma_R\in\breve\sigma_R}\bigcup_{T\in\breve T}
\Big\{E\in\Map(\Gamma,\go): \quad
E(\gamma):=\iota_R(T,\gamma)\sigma_R(T)\Big\}
\eqs

The intersection behavoir of paths and surfaces plays a fundamental role in the definition of the flux operator. There are exceptional configurations of surfaces and paths in a graph. One of them is the following.

\begin{defi}
A surface $S$ has the \textbf{surface intersection property for a graph} $\Gamma$ iff the surface intersects each path of $\Gamma$ once in the source or target vertex of the path and there are no other intersection points of $S$ and the path. 
\end{defi}
This is for example the case for the surface $S_1$ or the surface $S_3$, which are presented in example \thesection.\ref{exa Exa1}. Notice that in general, for the surface $S$ there are $N$ intersection points with $N$ paths of the graph. In the example the evaluated map $E_1(\gamma)=(X_1,0)=E_1(\tg)$ for $\gamma,\tg\in\Gamma$ if the surface $S_1$ is considered.

The property of a path lying above or below is not important for the definition of the surface intersection property for a surface. This indicates that the surface $S_4$ in the example \thesection.\ref{exa Exa1} has the surface intersection property, too.

Let a surface $S$ does not have the surface intersection property for a graph $\Gamma$, which contains only one path $\gamma$. Then for example the path $\gamma$ intersects the surface $S$ in the source and target vertices such that the path lies above the surface $S$. Then the map $E^ L\times E^ R$ is evaluated for the path $\gamma$ by
\beqs (E^ L\times E^ R)(\gamma)=(X,-Y)
\eqs
Hence, simply speaking the surface intersection property reduces the components of the map $E^ L\times E^ R$, but for different paths to different components.

Now, consider a bunch of sets of surfaces such that for each surface there is only one intersection point.
\begin{defi}\label{def intprop}
A set $\breve S$ of $N$ surfaces has the \textbf{surface intersection property for a graph $\Gamma$} with $N$ independent edges iff it contain only surfaces, for which each path $\gamma_i$ of a graph $\Gamma$ intersects each surface $S_i$ only once in the same source or target vertex of the path $\gamma_i$, there are no other intersection points of each path $\gamma_i$ and each surface in $\breve S$, and there is no other path $\gamma_j$ that intersects the surface $S_i$ for $i\neq j$ where $1 \leq i,j\leq N$.
\end{defi}
Then for example consider the following configuration.

\begin{exa} 
\begin{center}
\includegraphics[width=0.45\textwidth]{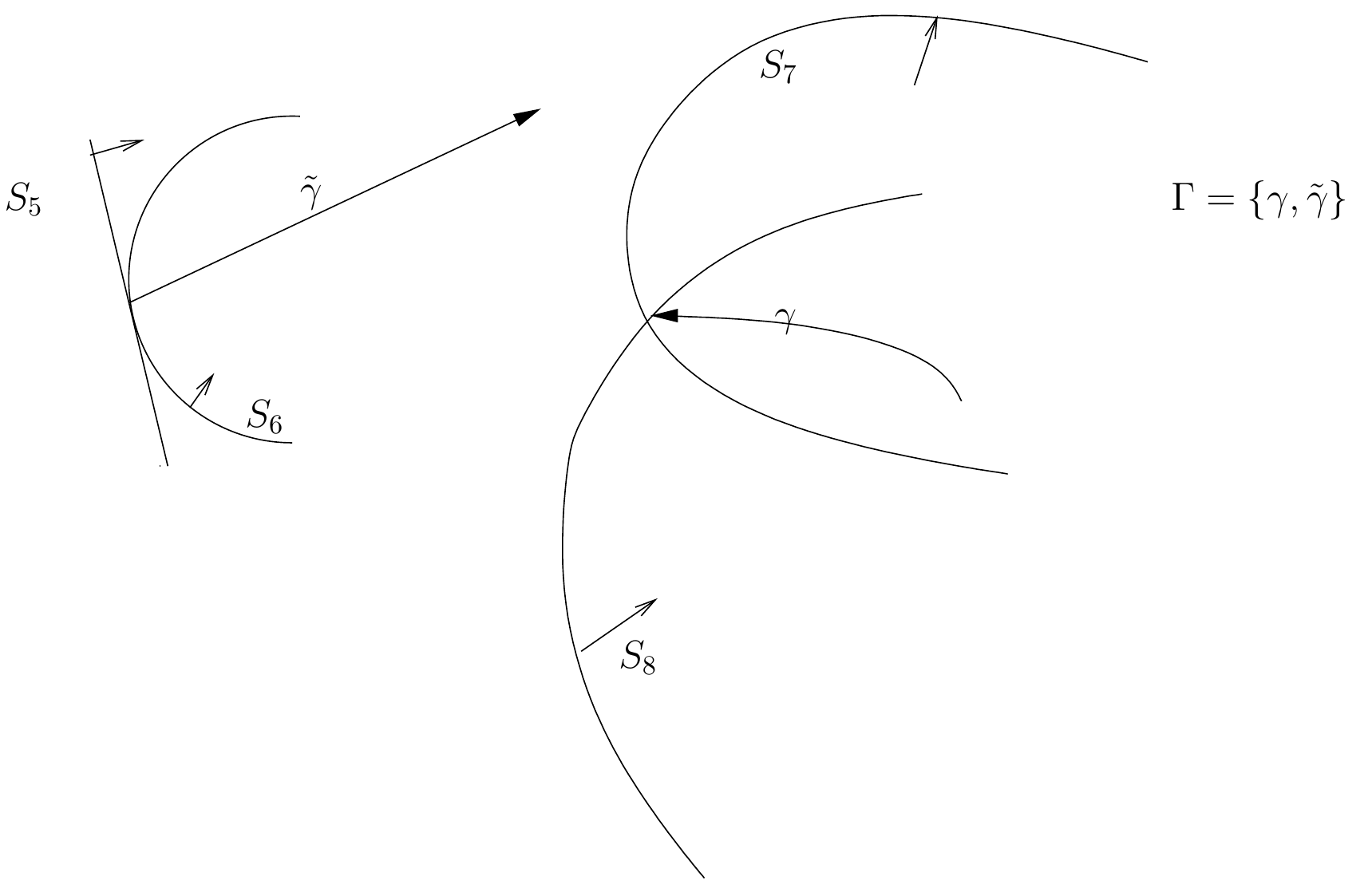}
\end{center} 
The sets $\{S_{6},S_{7}\}$ or $\{S_5,S_{8}\}$ have the surface intersection property for the graph $\Gamma$. 
The images of a map $E$ is
\beqs E_5(\tg)=(X_5,0),\quad E_{8}(\gamma)=(0,Y_{8})
\eqs
\end{exa}
Note that simply speaking, the property indicates that each map reduces to a component of $E^ L\times E^R$ but for different surfaces the map reduces to $E^L$ or $E^R$.

A set of surfaces that has the surface intersection property for a graph can be further specialised by restricting the choice to paths lying ingoing and below with respect to the surface orientations. 
\begin{defi}
A set $\breve S$ of $N$ surfaces has the \hypertarget{simple surface intersection property for a graph}{\textbf{simple surface intersection property for a graph $\Gamma$}} with $N$ independent edges iff it contains only surfaces, for which each path $\gamma_i$ of a graph $\Gamma$ intersects only one surface $S_i$ only once in the target vertex of the path $\gamma_i$, the path $\gamma_i$ lies above and there are no other intersection points of each path $\gamma_i$ and each surface in $\breve S$. 
\end{defi}
\begin{exa}Consider the following example.
\begin{center}
\includegraphics[width=0.45\textwidth]{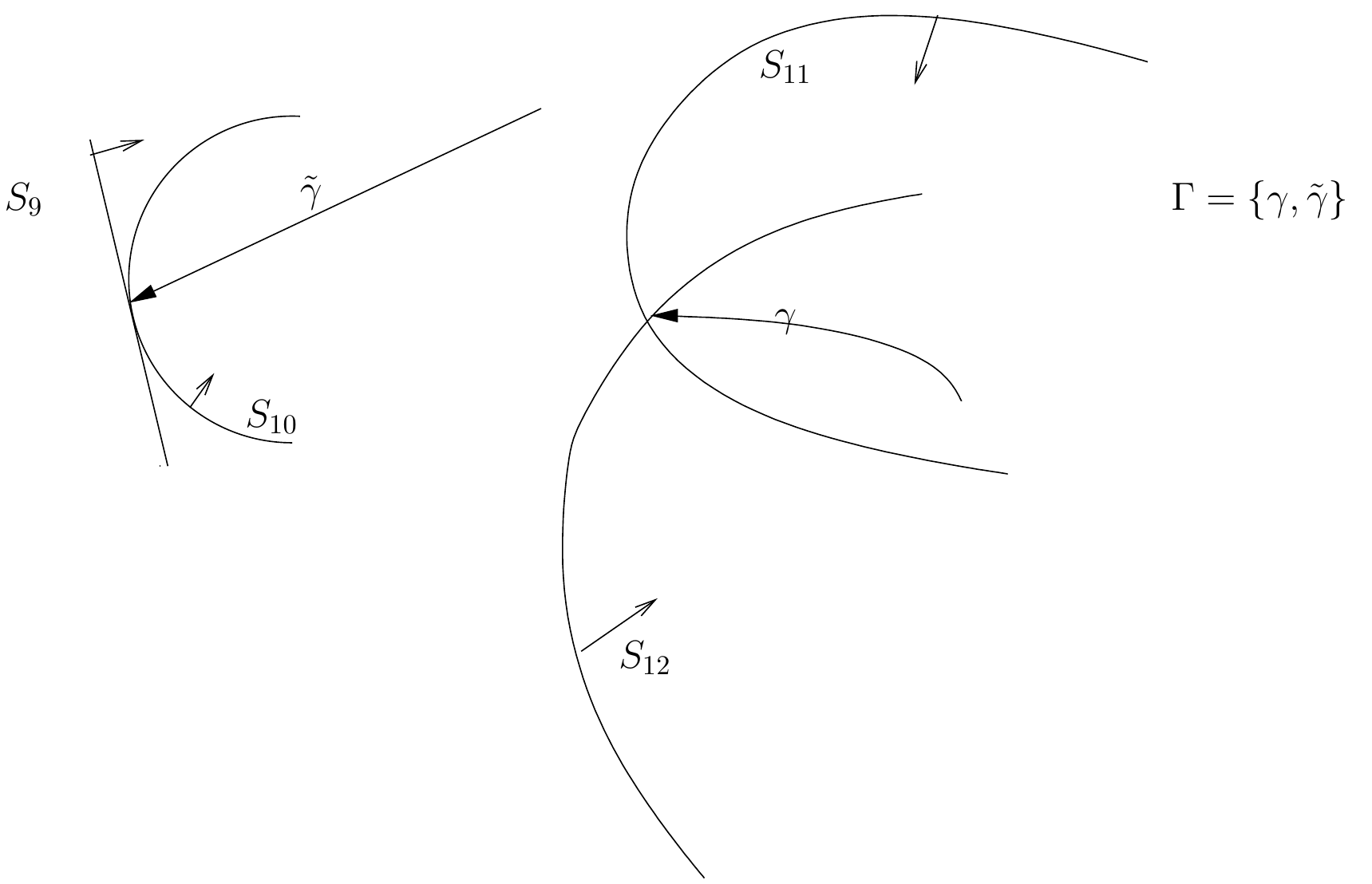}
\end{center} The sets $\{S_{9},S_{11}\}$ or $\{S_{10},S_{12}\}$ have the simple surface intersection property for the graph $\Gamma$.
Calculate
\beqs E_{9}(\tg)=(0,-Y_{9}),\quad E_{11}(\gamma)=(0,-Y_{11})
\eqs
\end{exa}
In this case the set $\gop_{\breve S,\Gamma}$ reduces to
\beqs\bigcup_{\sigma_R\in\breve\sigma_R}\bigcup_{S\in\breve S}\Big\{E\in\Map(\Gamma,\go): \quad
E(\gamma):=-\sigma_R(S)\text{ for }\gamma\cap S= t(\gamma)\Big\}
\eqs Notice that, the set $\Gamma\cap \breve S=\{t(\gamma_i)\}$ for a surface $S_i\in\breve S$ and $\gamma_i\cap S_j\cap S_i =\{\varnothing\}$ for a path $\gamma_i$ in $\Gamma$ and $i\neq j$.

On the other hand, the set of surfaces can be such that each path of a graph intersects all surfaces of the set in the same vertex. This contradicts the assumption that, each path of a graph intersects only one surface once. 
\begin{defi}Let $\Gamma$ be a graph that contains no loops.

A set $\breve S$ of surfaces has the \hypertarget{same intersection property}{\textbf{same surface intersection property for a graph}} $\Gamma$ iff each path $\gamma_i$ in $\Gamma$ intersects with all surfaces of $\breve S$ in the same source vertex $v_i\in V_\Gamma$ ($i=1,..,N$), all paths are outgoing and lie below each surface $S\in\breve S$ and there are no other intersection points of each path $\gamma_i$ and each surface in $\breve S$. 

A surface set $\breve S$ has the \textbf{same right surface intersection property for a graph} $\Gamma$ iff each path $\gamma_i$ in $\Gamma$ intersects with all surfaces of $\breve S$ in the same target vertex $v_i\in V_\Gamma$ ($i=1,..,N$), all paths are ingoing and lie above each surface $S\in\breve S$ and there are no other intersection points of each path $\gamma_i$ and each surface in $\breve S$. 
\end{defi}

Recall the example \thesection.\ref{exa Exa1}. Then the set $\{S_{1},S_{2}\}$ has the same surface intersection property for the graph $\Gamma$.

Then the set $\gop_{\breve S,\Gamma}$ reduces to
\beqs\bigcup_{\sigma_L\in\breve\sigma_L}\bigcup_{S\in\breve S}\Big\{E\in\Map(\Gamma,\go): \quad
E(\gamma):= -\sigma_L(S)\text{ for }\gamma\cap S= s(\gamma)\Big\}
\eqs Notice that, $\gamma\cap S_1\cap ...\cap S_N=s(\gamma)$ for a path $\gamma$ in $\Gamma$, whereas $\Gamma\cap\breve S=\{s(\gamma_i)\}_{1\leq i\leq N}$. Clearly, $\Gamma\cap S_i=s(\gamma_i)$ holds for a surface $S_i$ in $\breve S$. 

Simply speaking the physical intution is that, fluxes associated to different surfaces should act on the same path.
 s
Notice that both properties can be restated for other surface and path configurations. Hence, a surface set can have the simple or same surface intersection property for paths that are outgoing and lie above (or ingoing and below, or outgoing and below). The important fact is related to the question if the intersection vertices are the same for all surfaces or not.

In section \ref{subsec fingraphpathgroup} the concept of finite graph systems is introduced. The following remark shows that, the properties simply generalises to this new structure.
\begin{rem}
A set $\breve S$ has the surface intersection property for a finite graph system $\PD_\Gamma$ iff the set $\breve S$ has the surface intersection property for each subgraph of $\Gamma$ and $\Gamma$. 

A set $\breve S$ has the same surface intersection property for a finite orientation preserved graph system $\PD^{\op}_\Gamma$ associated to a graph $\Gamma$ (with no loops) iff the set $\breve S$ has the same surface intersection property the graph $\Gamma$.

A set $\breve S$ has the simple surface intersection property for a finite orientation preserved\footnote{Let $\breve S$ be equal to $S$. Then  notice that the property of all graphs being orientation preserved subgraphs is necessary, since, for a subgraph $\Gp:=\{\gp\}$ of $\Gamma$ the graph $\{\gp^{-1}\}$ is a subgraph of $\Gamma$, too. Consequently, if there is a surface $S$ intersecting a path $\gp$ such that $\gp$ is ingoing and lies above, then $S$ intersects the path $\gp^{-1}$ such that $\gp^{-1}$ is outgoing and lies above. This implies that, the surface $S$ cannot have the same surface intersection property for each subgraph of $\Gamma$.} graph system $\PD^{\op}_\Gamma$ associated to a graph $\Gamma$ iff the set $\breve S$ has the simple surface intersection property for the graph $\Gamma$.
\end{rem}

\begin{defi}Let $\breve S$ be a surface set and $\Gamma$ be a graph such that the only intersections of the graph and each surface in $\breve S$ are contained in the vertex set $V_\Gamma$.

Then the set of images $\{E(\gamma): E\in\gop_{\breve S,\Gamma}\}$ of flux maps for a fixed path $\gamma$ in $\Gamma$ is denoted by $\bar\gop_{\breve S,\gamma}$. 
\end{defi}

\begin{prop}\label{prop Liealgebrastructfluxes}Let $\breve S$ be a set of surfaces and $\Gamma$ be a fixed graph (with no loops) such that the set $\breve S$ has the same surface intersection property for a graph $\Gamma$. Moreover, let $\breve T$ be a set of surfaces and $\Gamma$ be a fixed graph such that the set $\breve T$ has the simple surface intersection property for a graph $\Gamma$.

Then the set $\bar\gop_{\breve S,\gamma}$ is equipped with a structure, which is induced from the Lie algebra structure of $\go$, such that it forms a Lie algebra. 
The the set $\bar\gop_{\breve T,\gamma}$ is equipped with a structure to form a Lie algebra, too.
\end{prop}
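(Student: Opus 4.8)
The plan is to show that, in each case, the image set is exactly one of the two summands of the direct-sum Lie algebra $\go\oplus\go$ and therefore inherits its bracket as a Lie subalgebra. Recall that $\go\oplus\go$ is a Lie algebra under the componentwise bracket $[(X_1,Y_1),(X_2,Y_2)]=([X_1,X_2],[Y_1,Y_2])$, so it suffices in each case to identify $\bar\gop_{\breve S,\gamma}$ (respectively $\bar\gop_{\breve T,\gamma}$) with a linear subspace of $\go\oplus\go$ that is closed under this bracket.

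First I would evaluate the intersection functions under the same surface intersection property. Since $\breve S$ has this property for $\Gamma$, the fixed path $\gamma$ is outgoing and lies below every $S\in\breve S$, so that $\iota_L(S,\gamma)=-1$ and $\iota_R(S,\gamma)=0$; consequently every $E\in\gop_{\breve S,\Gamma}$ satisfies $E(\gamma)=(-\sigma_L(S),0)$, in agreement with the reduction of $\gop_{\breve S,\Gamma}$ recorded in the discussion preceding the proposition. This already yields the inclusion $\bar\gop_{\breve S,\gamma}\subseteq\go\oplus\{0\}$. For the reverse inclusion I would use that $\sigma_L$ is an arbitrary element of $\breve\sigma_L$, constrained only by $\sigma_L(S)=\sigma_L(S^{-1})$: given any $X\in\go$ I pick $\sigma_L$ with $\sigma_L(S)=-X$, so that the associated flux map yields $E(\gamma)=(X,0)$. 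Hence $\bar\gop_{\breve S,\gamma}=\go\oplus\{0\}$, which is a linear subspace closed under the componentwise bracket, and the projection $(X,0)\mapsto X$ is a Lie algebra isomorphism onto $\go$; this is precisely the structure induced from $\go$.

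Finally I would treat $\breve T$ with the simple surface intersection property by the mirror-image argument. Here $\gamma$ meets its unique surface in the target vertex and lies above, so $\iota_R(T,\gamma)=-1$ and $\iota_L(T,\gamma)=0$, while any surface of $\breve T$ not meeting $\gamma$ contributes $(0,0)$; thus $E(\gamma)=(0,-\sigma_R(T))$ and, letting $\sigma_R(T)$ range over $\go$, I obtain $\bar\gop_{\breve T,\gamma}=\{0\}\oplus\go$. This is again a linear subspace of $\go\oplus\go$ closed under the componentwise bracket, hence a Lie subalgebra isomorphic to $\go$.

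I do not expect a genuine obstacle here; the only point requiring care is to confirm that the same (respectively simple) surface intersection property forces exactly one component of $E(\gamma)$ to vanish for every admissible flux map. This is what guarantees that the image is a single summand, i.e. a subspace automatically closed under the componentwise bracket, rather than a subset of $\go\oplus\go$ on which the bracket could leave the set. Once this is verified, antisymmetry and the Jacobi identity are inherited for free from those of $\go\oplus\go$.
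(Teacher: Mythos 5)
Your proposal is correct and takes essentially the same route as the paper's proof: the same (resp. simple) surface intersection property forces one component of $E(\gamma)$ to vanish, the freedom in choosing $\sigma_L$ (resp. $\sigma_R$) makes the remaining component sweep out all of $\go$, and the Lie algebra structure is then the one induced from $\go$. The only difference is packaging --- the paper verifies the vector-space axioms directly, with closure of addition coming from $-\sigma^1_L(S_1)-\sigma^2_L(S_2)=-\sigma^3_L([S])$ for another admissible map $\sigma^3_L$, and then defines the bracket by $\bra E_1(\gamma),E_2(\gamma)\ket:=\bra X_1,X_2\ket$, whereas you first establish the set equality $\bar\gop_{\breve S,\gamma}=\go\oplus\{0\}$ (resp. $\{0\}\oplus\go$), from which all closure properties follow at once.
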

\begin{proofs}
\textbf{Step 1: linear space over $\CB$}\\Consider a path $\gamma$ in $\Gamma$ that lies above and ingoing w.r.t. the surface orientation of each surface $S$ in $\breve S$ and ingoing and above with respect to $T$. Then there is a map $E_S$ such that
\beqs E_S(\gamma)=- X
\eqs There exists an operation $+$ given by the map $s: \bar\gop_{\breve S,\gamma}\times\bar\gop_{\breve S,\gamma} \rightarrow \bar\gop_{\breve S,\gamma}$ such that 
\beqs (E^L_{1}(\gamma), E^L_{2}(\gamma))\mapsto s(E^L_{1}(\gamma), E^L_{2}(\gamma)):=E^L_{1}(\gamma)+ E^L_{2}(\gamma)=-\sigma^1_L(S_1)-\sigma^2_L(S_2)
=-\sigma^3_L([S])
\eqs since $\sigma_L^i\in\breve\sigma_L$ and where $[S]$ denotes an arbitrary representative of the set $\breve S$.
Respectively it is defined 
\beqs (E^L_{1}(\gamma), E^L_{2}(\gamma))\mapsto s(E^L_{1}(\gamma), E^L_{2}(\gamma)):=E^L_{1}(\gamma)+ E^L_{2}(\gamma)=-\sigma_R^1(T)-\sigma^2_R(T)=-\sigma_R^3(T)
\eqs whenever $\sigma^i_R\in\breve\sigma_R$ and $T\in\breve T$.
There is an inverse
\beqs E(\gamma) - E(\gamma)=X - X=0\eqs
and a null element
\beqs E(\gamma) +E_0(\gamma)=X\eqs
whenever $E_0(\gamma)=-\sigma_L(S) =0$.
Notice the following map 
\beq\label{eq Liealghom1}\bar\gop_{\breve S,\gamma}\times\bar\gop_{\breve S,\gp}\ni( E_{1}(\gamma), E_{2}(\gp))\mapsto E_{1}(\gamma)\dot{+} E_{2}(\gp)\in\go
\eq is not considered, since, this map is not well-defined. 
One can show easily that $(\bar\gop_{\breve S,\gamma},+)$ is an additive group. The scalar multiplication is defined by
\beqs \lambda\cdot E(\gamma)=\lambda X
\eqs for all $\lambda\in\CB$ and $X\in\go$.
Finally, prove that $(\bar\gop_{\breve S,\gamma},+)$ is a linear space over $\CB$.

\textbf{Step 2: Lie bracket} is defined by the Lie bracket of the Lie algebra $\go$ and
\beqs \bra E_{1}(\gamma),E_{2}(\gamma)\ket: =\bra X_{1},X_{2}\ket\eqs
for $E_{1}(\gamma),E_{2}(\gamma)\in\bar\gop_{\breve S,\gamma}$ and $\gamma\in\Gamma$. 
\end{proofs}

If a surface set $\breve S$ does not have the same or simple surface intersection property for the graph $\Gamma$, then the surface set can be decomposed into several sets and the graph $\Gamma$ can be decomposed into a set of subgraphs. Then for each modified surface set there is a subgraph such that required condition is fulfilled. 

\begin{defi}Let $\breve S$ a set of surfaces and $\Gamma$ be a fixed graph (with no loops) such that the set $\breve S$ has the same (or simple) surface intersection property for a graph $\Gamma$. 

The universal enveloping Lie algebra of the Lie algebra $\bar\gop_{\breve S,\gamma}$ of electric fluxes for paths of a path $\gamma$ in $\Gamma$ and all surfaces in $\breve S$ is called the \textbf{universal enveloping flux algebra $\bar\Ep_{\breve S,\gamma}$ associated to a path and a finite set of surfaces}.
\end{defi}
For a detailed construction of the universal enveloping flux algebra associated a surface set and a path refer to the section \ref{subsec constrholfluxalg} and definition \ref{def universalliefluxalg}.

Now, the definitions are rewritten for finite orientation preserved graph systems. 

\begin{defi}Let $\breve S$ be a surface set and $\Gamma$ be a graph such that the only intersections of the graph and each surface in $\breve S$ are contained in the vertex set $V_\Gamma$. $\PD_\Gamma$ denotes the finite graph system associated to $\Gamma$. Let $\E$ be the universal Lie enveloping algebra of $\go$.

Define the set of \textbf{Lie algebra-valued quantum fluxes for graphs}
\beqs \go_{\breve S,\Gamma}:= \bigcup_{\sigma_L\times\sigma_R\in\breve\sigma}\bigcup_{S\in\breve S}\Big\{ E_{S,\Gamma}\in\Map(\PD_\Gamma,\bigoplus_{\vert E_\Gamma\vert}\go\oplus \bigoplus_{\vert E_\Gamma\vert}\go):\quad 
&E_{S,\Gamma}:=E_S\times...\times E_S\\&\text{ where }E_S(\gamma):=(\iota_L(\gamma,S)\sigma_L(S),\iota_R(\gamma,S)\sigma_R(S)),\\
&E_S\in\gop_{\breve S,\Gamma},S\in\breve S,\gamma\in\Gamma\Big\}\eqs 

Moreover, define
\beqs \E_{\breve S,\Gamma}:= 
\bigcup_{\sigma_L\times\sigma_R\in\breve\sigma}\bigcup_{S\in\breve S}\Big\{ E_{S,\Gamma}\in\Map(\PD_\Gamma,\bigoplus_{\vert E_\Gamma\vert}\E\oplus \bigoplus_{\vert E_\Gamma\vert}\E):\quad 
&E_{S,\Gamma}:=E_S\times...\times E_S\\&\text{ where }E_S(\gamma):=(\iota_L(\gamma,S)\sigma_L(S),\iota_R(\gamma,S)\sigma_R(S)),\\
&E_S\in\E_{\breve S,\Gamma},S\in\breve S,\gamma\in\Gamma\Big\}
\eqs 

The set of all images of the linear hull of all maps in $\go_{\breve S,\Gamma}$ for a fixed surface set $\breve S$ and a fixed graph $\Gamma$ is denoted by $\bar\go_{\breve S,\Gamma}$.
The set of all images of the linear hull of all maps in $\go_{\breve S,\Gamma}$ for a fixed surface set $\breve S$ and a fixed subgraph $\Gp$ of $\Gamma$ is denoted by $\bar\go_{\breve S,\Gp\leq \Gamma}$. 
\end{defi}

Note that, the set of Lie algebra-valued quantum fluxes for graphs can be generalised for the inductive limit graph system $\PD_{\Gamma_\infty}$. This follows from the fact that each element of the inductive limit graph system $\PD_{\Gamma_\infty}$ is a graph.

\begin{prop}
Let $\breve S$ be a set of surfaces and $\PD^{\op}_\Gamma$ be a finite orientation preserved graph system such that the set $\breve S$ has the same surface intersection property for a graph $\Gamma$ (with no loops).  

The set $\bar\go_{\breve S,\Gamma}$ forms a Lie algebra and is called the \textbf{Lie flux algebra associated a graph and a finite surface set}.The \textbf{universal enveloping flux algebra $\bar\E_{\breve S,\Gamma}$ associated a graph and a finite surface set} is the enveloping Lie algebra of $\bar\go_{\breve S,\Gamma}$.
\end{prop}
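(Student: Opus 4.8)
The strategy is to reduce the statement to the single-path result of Proposition \ref{prop Liealgebrastructfluxes} and to realise $\bar\go_{\breve S,\Gamma}$ as a Lie subalgebra of a finite direct sum of copies of $\go$, one copy for each edge of $\Gamma$. Since a finite direct sum of Lie algebras is again a Lie algebra, and a subspace closed under the bracket is a Lie subalgebra, it will suffice to transport the structure of $\go$ edgewise and to verify closure of the set of images under the componentwise operations.

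First I would exploit the same surface intersection property. Because $\Gamma$ carries no loops and every generating edge $\gamma_i$ meets every surface $S\in\breve S$ only in its common source vertex, each $\gamma_i$ being outgoing and lying below, one reads off $\iota_R(\gamma_i,S)=0$ and $\iota_L(\gamma_i,S)=-1$ uniformly in $i$ and in $S$. Consequently, for every $E_{S,\Gamma}\in\go_{\breve S,\Gamma}$ the right summand of $\bigoplus_{\vert E_\Gamma\vert}\go\oplus\bigoplus_{\vert E_\Gamma\vert}\go$ vanishes identically, so that $\bar\go_{\breve S,\Gamma}$ is contained in the single direct sum $\bigoplus_{\vert E_\Gamma\vert}\go$ indexed by the edges of $\Gamma$. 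The component attached to the edge $\gamma_i$ of any image is exactly an element of the path-wise flux set $\bar\gop_{\breve S,\gamma_i}$, which already carries the Lie algebra structure supplied by Proposition \ref{prop Liealgebrastructfluxes}.

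Next I would equip $\bar\go_{\breve S,\Gamma}$ with its linear structure and bracket slotwise. Addition and scalar multiplication over $\CB$ act through the operations built in Step 1 of Proposition \ref{prop Liealgebrastructfluxes}; in the $i$-th slot two images are added inside $\bar\gop_{\breve S,\gamma_i}\subseteq\go$, using that surfaces intersecting $\gamma_i$ in the same source vertex obey $-\sigma_L(S_1)-\sigma_L(S_2)=-\sigma_L^3([S])$ for a representative $[S]$. The Lie bracket is likewise defined slotwise,
\beqs \bra E_{S,\Gamma},E_{S',\Gamma}\ket_i:=\bra E_S(\gamma_i),E_{S'}(\gamma_i)\ket,\eqs
with the right-hand bracket taken in $\bar\gop_{\breve S,\gamma_i}$. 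Bilinearity, antisymmetry and the Jacobi identity then propagate from each slot to $\bar\go_{\breve S,\Gamma}$, since the direct sum inherits every such identity componentwise and each component satisfies them by Proposition \ref{prop Liealgebrastructfluxes}.

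The step I expect to be the main obstacle is verifying that $\bar\go_{\breve S,\Gamma}$, being the set of images of the \emph{linear hull} of $\go_{\breve S,\Gamma}$, is genuinely closed under the slotwise bracket, so that it is a Lie subalgebra rather than merely a bracket-stable subset of some larger sum. Here the same surface intersection property is indispensable: the requirement that all surfaces meet a given path in one and the same vertex with a common orientation is precisely what rules out the ill-defined cross-path combination of the type \eqref{eq Liealghom1} and forces each edge to contribute an independent copy of a Lie subalgebra of $\go$. Granting this closure one obtains $\bar\go_{\breve S,\Gamma}$ as a Lie algebra, identified with a Lie subalgebra of $\bigoplus_{\vert E_\Gamma\vert}\go$. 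The concluding assertion is then immediate: the universal enveloping flux algebra $\bar\E_{\breve S,\Gamma}$ is defined to be the universal enveloping algebra of this Lie algebra, whose existence and uniqueness follow from the standard universal construction.
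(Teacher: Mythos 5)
Your proposal takes the same route as the paper's own proof: the same surface intersection property kills the $\iota_R$-component, every image is identified with an element of $\bigoplus_{\vert E_\Gamma\vert}\go$, and the slotwise closure computation $-\sigma^1_L(S_1)-\sigma^2_L(S_2)=-\sigma^3_L(\bra S\ket)$ is exactly the computation that constitutes the paper's entire argument. One correction, however, at the step you yourself single out as the main obstacle: the same surface intersection property does \emph{not} make each edge contribute an independent copy of a Lie subalgebra of $\go$. Since every map in $\go_{\breve S,\Gamma}$ has the form $E_S\times\dots\times E_S$ with $E_S(\gamma)=-\sigma_L(S)$ independent of $\gamma$, every image, and hence every image of the linear hull, is a \emph{diagonal} element $(X,\dots,X)$ of $\bigoplus_{\vert E_\Gamma\vert}\go$; thus $\bar\go_{\breve S,\Gamma}$ is the diagonal copy of $\go$ (which is why the paper remarks afterwards that $\go_{\breve S,\Gamma}=\go_{S_i,\Gamma}$ for every $S_i\in\breve S$), not a direct sum of edgewise independent subalgebras --- that picture belongs to the \emph{simple} surface intersection property treated in the following proposition, where different surfaces meet different edges. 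The closure you ``grant'' then needs no granting: the slotwise bracket of two diagonal elements is $(\bra X,Y\ket,\dots,\bra X,Y\ket)$, again diagonal, and since $\breve\sigma_L$ consists of all maps $\breve S\rightarrow\go$ with $\sigma_L(S)=\sigma_L(S^{-1})$, the value $\bra X,Y\ket$ is realised as $-\sigma^3_L$ evaluated on a representative surface, i.e.\ by another admissible map --- the same one-line computation you already carried out for addition. With that replacement your argument is complete and coincides with the paper's.
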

\begin{proof}
This follows from the observation that, $\go_{\breve S,\Gamma}$ is identified with
\beqs  \bigcup_{\sigma_L\in\breve\sigma_L}\bigcup_{S\in\breve S}\Big\{ E_{S,\Gamma}\in\Map(\PD^{\op}_\Gamma,\bigoplus_{\vert E_\Gamma\vert}\go):\quad 
&E_{S,\Gamma}:=E_S\times...\times E_S\\&\text{ where }E_S(\gamma):=-\sigma_L(S),
E_S\in\gop_{\breve S,\Gamma},S\in\breve S,\gamma\in\Gamma\Big\}\eqs 
and the addition operation
\beqs
E^1_{S_1,\Gamma}(\Gamma) + E^2_{S_2,\Gamma}(\Gamma)&:=
\big(E^1_{S_1}(\gamma_1) +E^2_{S_2}(\gamma_1),...,E^1_{S_1}(\gamma_N) +E^2_{S_2}(\gamma_N)\big)\\
&=(-\sigma^1_L(S_1)-\sigma^2_L(S_2),...,-\sigma^1_L(S_1)-\sigma^2_L(S_2))\\
&=(E^3_{[S]}(\gamma_1),...,E^3_{[S]}(\gamma_N))
\eqs holds whenever $\Gamma:=\gamma_1,...,\gamma_N$.
\end{proof}
Notice that indeed it is true that,
\beqs \go_{\breve S,\Gamma}=\go_{S_i,\Gamma}
\eqs for every $S_i\in\breve S$. The more general definition is due to physical arguments.

\begin{prop}
Let $\breve T$ be a set of surfaces and $\PD^{\op}_\Gamma$ be a finite orientation preserved graph system such that the set $\breve T$ has the simple surface intersection property for $\Gamma$. 

The set $\bar\go_{\breve T,\Gamma}$ forms a Lie algebra.
\end{prop}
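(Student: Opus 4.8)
The plan is to mirror the proof of the preceding proposition for the same surface intersection property, replacing the left data $(\iota_L,\sigma_L)$ by the right data $(\iota_R,\sigma_R)$ and exploiting the disjointness built into the simple surface intersection property. First I would invoke the reduction already recorded in the text: under the simple surface intersection property every path $\gamma_i$ meets its surface in the target vertex while lying above, so it is ingoing and not outgoing, whence $\iota_L(\gamma_i,S)=0$ for all $S\in\breve T$. Consequently the left summand of every generator vanishes and $\go_{\breve T,\Gamma}$ is identified with
\beqs
\bigcup_{\sigma_R\in\breve\sigma_R}\bigcup_{S\in\breve T}\Big\{ E_{S,\Gamma}\in\Map(\PD^{\op}_\Gamma,\bigoplus_{\vert E_\Gamma\vert}\go):\quad
&E_{S,\Gamma}:=E_S\times\dots\times E_S,\\
&E_S(\gamma):=-\sigma_R(S),\ E_S\in\gop_{\breve T,\Gamma},\ S\in\breve T,\ \gamma\in\Gamma\Big\}
\eqs
exactly as in the same surface case but with $\sigma_R$ in place of $\sigma_L$.

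Next I would record the decisive structural difference from the same surface case. Since $\breve T$ consists of $N$ surfaces and $\Gamma$ of $N$ independent edges, and each surface is met by exactly the one path $\gamma_i$ (there being no other intersection points of $\gamma_i$ with any surface of $\breve T$), one has $E_{S_i}(\gamma_j)=0$ for $j\neq i$, so the generator evaluates to
\beqs
E_{S_i,\Gamma}(\Gamma)=(0,\dots,0,-\sigma_R(S_i),0,\dots,0)
\eqs
with the nonzero entry in the $i$-th slot. As $\sigma_R$ ranges over $\breve\sigma_R$ the element $-\sigma_R(S_i)$ ranges over all of $\go$, so the linear hull of the images identifies $\bar\go_{\breve T,\Gamma}$ with the direct sum $\bigoplus_{i=1}^{N}\go$. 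This is the point where the simple property yields a genuine direct sum, whereas the same surface property produced only the diagonal copy of $\go$.

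Finally I would equip this direct sum with the componentwise bracket inherited from $\go$, namely
\beqs
\bra E_{1}(\Gamma),E_{2}(\Gamma)\ket:=(\bra X_1,Y_1\ket,\dots,\bra X_N,Y_N\ket)
\eqs
for $E_1(\Gamma)=(X_1,\dots,X_N)$ and $E_2(\Gamma)=(Y_1,\dots,Y_N)$. This is well defined and closed by Proposition \ref{prop Liealgebrastructfluxes} applied slotwise to each path $\gamma_i$, and bilinearity, antisymmetry and the Jacobi identity hold componentwise because they hold in $\go$; hence $\bar\go_{\breve T,\Gamma}$ forms a Lie algebra.

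The main obstacle is not the algebra but the bookkeeping: one must verify carefully from the simple surface intersection property that each surface contributes to exactly one edge-slot, so that the images occupy distinct summands and the linear hull is the full direct sum $\bigoplus_{i=1}^{N}\go$ rather than a diagonal. Once this disjointness is established, the reduction to the right data and the componentwise bracket make closure and the Lie axioms immediate.
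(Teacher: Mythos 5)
Your proof is correct and takes essentially the same route as the paper: the simple surface intersection property forces each generator $E_{S_i,\Gamma}(\Gamma)$ to be nonzero in exactly one slot, so the linear hull of the images is the full direct sum $\bigoplus_{i=1}^{N}\go$, which carries the componentwise Lie bracket. You are in fact slightly more careful than the paper's own note, which labels the surviving data $\sigma_L$ (inconsistent with its earlier reduction to $\sigma_R$ for this intersection property) and leaves the componentwise bracket and Lie axioms implicit.
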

Notice this follows from the fact that, $\go_{\breve T,\Gamma}$ reduces to
\beqs \bigcup_{\sigma_L\in\breve\sigma_L}\Big\{ E_{\breve T,\Gamma}\in\Map(\PD^{\op}_\Gamma,\bigoplus_{\vert E_\Gamma\vert}\go):\quad 
&E_{\breve T,\Gamma}:=E_{T_1}\times...\times E_{T_N}\\&\text{ where }E_{T_i}(\gamma_i):=-\sigma_L(T_i),E_S\in\gop_{\breve S,\Gamma},T_i\in\breve T,\\
&\gamma_i\cap T_i=t(\gamma_i),\gamma\in\Gamma\Big\}\eqs 
since,
\beqs E_{S_1,\Gamma}(\Gamma)+ ...+ E_{S_N,\Gamma}(\Gamma)&= (E_{T_1}(\gamma_1),0,...,0)) + (0,E_{T_2}(\gamma_2),0,...,0)) + ...+ (0,...,0,E_{T_N}(\gamma_N))) \\
&=(E_{T_1}(\gamma_1),...,E_{T_N}(\gamma_N))=:E_{\breve T,\Gamma}(\Gamma)
\eqs yields.

If it is additionally required that ,$E^L_S(\gamma_1)=-E^R_S(\gamma_2)$ holds, then actions of $\bar\go_{\breve S,\Gamma}$ on a configuration space have to be very carefully implemented. 

The Lie flux algebra and the universal enveloping flux algebra for the inductive limit graph system $\PD_{\Gamma_\infty}$ and a fixed suitable surface set $\breve S$ are denoted by $\bar\go_{\breve S}$ and $\bar\E_{\breve S}$. 

\section{The holonomy-flux cross-product $^*$-algebras}\label{subsec constrholfluxalg}
\subsection{The holonomy-flux cross-product $^*$-algebra}

Recall that, there is a big bunch of actions on the naturally identified configuration space $\Ab_\Gamma$, which is connected to the requirement of paths lying above or below the surface and are ingoing or outgoing w.r.t. the surface orientation of a surface $S$. Recall the Lie flux algebra $\bar \go_{\breve S,\Gamma}$, which is given by the evaluation of all maps for a fixed finite orientation preserved graph system associated to a graph $\Gamma$ and a suitable surface set $\breve S$. Refer to \ref{sec fluxdef} for a precise definition. Let $\bar \go_{\breve S,\Gamma}^{\CB}$ be the complexified Lie flux algebra, then $\bar\E_{\breve S,\Gamma}$ denotes the universal enveloping flux algebra. 

For simplicity the investigations start with a graph $\Gamma$, which contains only one path $\gamma$, and one surface $S$. Clearly the following definitions can be generalised to finite orientation preserved graph system associated to an arbitrary graph $\Gamma$ and a suitable surface set $\breve S$. 
\begin{defi}
Let the graph $\Gamma$ contains only a path $\gamma$ and $S$ be a surface such that the path lies below and outgoing w.r.t. the surface orientation of this surface. Set $E_S(\Gamma)=:X_S$. Then the \textbf{right-invariant flux vector field} $e^{\overrightarrow{L}}$ is defined by
\beqs\bra E_S(\Gamma),f_\Gamma\ket:=e^{\overrightarrow{L}}(f_\Gamma)\eqs where
\beq\label{CommRel1} e^{\overrightarrow{L}}(f_\Gamma)(\ho_\Gamma(\gamma))=\frac{\dif}{\dif t}\Big\vert_{t=0} f_\Gamma(\exp(t X_S)\ho_\Gamma(\gamma))&\text{ for }X_S\in\go, \ho_\Gamma(\gamma)\in G, t\in\R
\eq whenever $f_\Gamma\in C^\infty(\Ab_\Gamma)$ and $E_S(\Gamma)\in\bar \go_{S,\Gamma}$. 

Respectively for a path $\gamma$ lying above and outgoing w.r.t. the surface orientation it is true that,
\beq\label{CommRel2} e^{\overleftarrow{L}}(f_\Gamma)(\ho_\Gamma(\gamma))=\frac{\dif}{\dif t}\Big\vert_{t=0} f_\Gamma(\exp(-tX_S)\ho_\Gamma(\gamma)),\text{ for }X_S\in\go,\ho_\Gamma(\gamma)\in G, t\in\R
\eq holds if $-E_S(\Gamma)=:X_S$.
Since $E_S\in\go_{\breve S,\Gamma}$ there exists a skew-adjoint operator $E_S(\Gamma)^+$ that satisfies
\beqs \bra E_S(\Gamma)^+,f_\Gamma\ket &= e^{\overleftarrow{L}}(f_\Gamma)
\eqs where
\beqs \bra E_S(\Gamma)^+,f_\Gamma\ket= \bra E_{S^{-1}}(\Gamma),f_\Gamma\ket\eqs
\end{defi}

A quantum flux operator of a surface $\tilde S$ and a path $\gamma$ lying below and outgoing with respect to the surface orientation of $S$ can be changed by a flip of the path orientation such that the path $\gamma$ lies below and ingoing. 

Recall the map $\breve.:C^\infty(\Ab_\Gamma)\rightarrow C^\infty(\Ab_\Gamma)$ s.t. 
\beqs f_\Gamma(\ho_\Gamma(\gamma_1),...,\ho_\Gamma(\gamma_n))\mapsto \breve{f}_\Gamma(\ho_\Gamma(\gamma_1),...,\ho_\Gamma(\gamma_n)):=f_\Gamma(\ho_\Gamma(\gamma)^{-1},...,\ho_\Gamma(\gamma_n)^{-1})\eqs
\begin{defi}
Define the \textbf{surface and graph orientation flip operator} as a map\\ $\FF:C^\infty(\Ab_\Gamma)\times\bar\go_{\breve S,\Gamma}\rightarrow C^\infty(\Ab_\Gamma)\times\bar\go_{\breve S,\Gamma} $ 
\beqs &\FF(f_\Gamma,E_S(\Gamma))=(\breve f_\Gamma,E_{S^{-1}}(\Gamma))=(\breve f_\Gamma,E_{S}^+(\Gamma)),\quad \FF(f_\Gamma,E_S(\Gamma)^+)=(\breve f_\Gamma,E_{S}(\Gamma))\\
&\FF(f_\Gamma^*,E_S(\Gamma))=(\breve f_\Gamma^*,E_{S^{-1}}(\Gamma))\\[5pt]
& (\FF\circ\pr_1)(f_\Gamma,E_S(\Gamma))=\FF(f_\Gamma)=\breve f_\Gamma,\\
& (\FF\circ\pr_1)(f_\Gamma,E_S(\Gamma))= \FF(E_S(\Gamma))=E_{S^{-1}}(\Gamma)\\
\eqs
\end{defi}
Notice that, $f_\Gamma^*=\overline{f_\Gamma}$ holds whenever $f_\Gamma\in C^\infty(\Ab_\Gamma)$.

\begin{defi}
Let the graph $\Gamma$ contains only a path $\gamma$ and $S$ be a surface such that the path lies below and outgoing w.r.t. the surface orientation of this surface. Set $s(\gamma)=v$ and $-E_S(\Gamma)=:Y_S$.

The \textbf{left-invariant flux vector field} $e^{\overleftarrow{R}}$ is realized as the following commutator
\beqs \bra E_{S}(\Gamma),f_\Gamma\ket =:e^{\overleftarrow{R}}(f_\Gamma)\eqs
where 
\beq\label{CommRel3} e^{\overleftarrow{R}}(f_\Gamma)(\ho_\Gamma(\gamma))=\frac{\dif}{\dif t}\Big\vert_{t=0}  f_\Gamma(\ho_\Gamma(\gamma) \exp(-t Y_S)))\text{ for }Y_S\in\go, t\in\R
\eq 
whenever $f_\Gamma\in C^\infty(\Ab_\Gamma)$ and $E_S\in\go_{\breve S,\Gamma}$. There exists a skew-adjoint operator $E_S(\Gamma)^+$ such that 
\beqs \bra E_S(\Gamma)^+,f_\Gamma\ket &:= e^{\overrightarrow{R}}(f_\Gamma)
\eqs where 
\beq\label{CommRel4} e^{\overrightarrow{R}}(f_\Gamma)(\ho_\Gamma(\gamma))=\frac{\dif}{\dif t}\Big\vert_{t=0}  f_\Gamma(\ho_\Gamma(\gamma) \exp(t Y_S)))\text{ for }Y_S\in\go, t\in\R
\eq 
is satisfieds.
\end{defi}

Summarising, the flux operators are implemented as differential operators $e^{\overrightarrow{L}}$ (or $e^{\overrightarrow{R}}$) on $\Ab_\Gamma$ commuting with the right (or left) shifts.

\begin{defi}Let $A$ be an (associative complex) algebra.

A \textbf{homomorphism of a Lie algebra} $\go$ in $A$ is a map $\tilde\tau:\go\rightarrow A$ such that 
\beqs &\tilde\tau(\alpha X+\beta Y)=\alpha\tilde\tau(X)+\beta\tilde\tau(Y),\\
&\tilde\tau(\bra X,Y\ket)=\tilde\tau(X)\tilde\tau(Y)-\tilde\tau(Y)\tilde\tau(X)
\eqs whenever $X,Y\in\go$ and $\alpha,\beta\in\R$.
\end{defi}

With no doubt, there is a map $\tilde\tau_1:\bar\go_{\breve S,\Gamma}\rightarrow C(\Ab_\Gamma)$ defined by $\tilde\tau_1(E_S(\Gamma))f_\Gamma=\bra E_S(\Gamma),f_\Gamma\ket$ for $f_\Gamma\in C^\infty(\Ab_\Gamma)$, which is indeed a homomorphism of a Lie flux algebra $\bar\go_{\breve S,\Gamma}$ associated to a suitable surface set $\breve S$ and a graph in $C^\infty(\Ab_\Gamma)$.

\begin{lem}\label{lem flip}Fix an element $E_S(\Gamma)\in \bar\go_{\breve S,\Gamma}$. Let $\check\tau_1:C^\infty(\Ab_\Gamma)\rightarrow C^\infty(\Ab_\Gamma)$ be a map such that $\check\tau_1(E_S(\Gamma))(f_\Gamma):=\bra E_S(\Gamma),f_\Gamma\ket$ or, equivalently, $\check\tau_1(E_S(\Gamma))(f_\Gamma):=e^{\overrightarrow{L}}(f_\Gamma)$ for each function $f_\Gamma\in C^\infty(\Ab_\Gamma)$.

Then $\check\tau_1\circ\FF$ defines an $^*$-isomorphism $\FD$ on $C^\infty(\Ab_\Gamma)$ to $C^\infty(\Ab_\Gamma)$ by 
\beqs (\FD\circ e^{\overrightarrow{L}})(f_\Gamma)= \bra E_{S^{-1}}(\Gamma),\breve f_\Gamma\ket
=e^{\overleftarrow{R}}(\breve f_\Gamma)
\eqs which implements a flip of the path orientation.
\end{lem}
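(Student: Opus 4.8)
The plan is to read $\FD$ off from its two building blocks, verify the two asserted equalities separately, and treat the $^*$-isomorphism claim last. First I would unwind the composition $\check\tau_1\circ\FF$. By definition $\check\tau_1(E_S(\Gamma))(f_\Gamma)=\bra E_S(\Gamma),f_\Gamma\ket=e^{\overrightarrow{L}}(f_\Gamma)$, so a value of $e^{\overrightarrow{L}}$ carries the data of the pair $(f_\Gamma,E_S(\Gamma))$. Applying $\FF$ to this pair gives $(\breve f_\Gamma,E_{S^{-1}}(\Gamma))$ by the definition of the surface-and-graph orientation flip, and re-applying $\check\tau_1$ to the flipped flux yields $\check\tau_1(E_{S^{-1}}(\Gamma))(\breve f_\Gamma)=\bra E_{S^{-1}}(\Gamma),\breve f_\Gamma\ket$. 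This is precisely the first asserted equality $(\FD\circ e^{\overrightarrow{L}})(f_\Gamma)=\bra E_{S^{-1}}(\Gamma),\breve f_\Gamma\ket$, so it holds by construction once one checks independence of the chosen representative pair; this follows because $\tilde\tau_1$ is a homomorphism of the Lie flux algebra $\bar\go_{\breve S,\Gamma}$ and $\FF$ acts compatibly on the flux factor.

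The computational heart is the second equality $\bra E_{S^{-1}}(\Gamma),\breve f_\Gamma\ket=e^{\overleftarrow{R}}(\breve f_\Gamma)$. I would first invoke the skew-adjoint relation $\bra E_{S^{-1}}(\Gamma),\cdot\ket=\bra E_S(\Gamma)^+,\cdot\ket=e^{\overleftarrow{L}}(\cdot)$ to rewrite the left-hand side as $e^{\overleftarrow{L}}(\breve f_\Gamma)$, and then compare the exponential-map expressions \eqref{CommRel2} and \eqref{CommRel3} evaluated on the flipped function. Writing $g=\ho_\Gamma(\gamma)$ and $X=E_S(\Gamma)$, the key manipulation is the inversion identity $(\exp(tX)g)^{-1}=g^{-1}\exp(-tX)$ combined with $\breve f_\Gamma(h)=f_\Gamma(h^{-1})$ and the chain rule at $t=0$: this is the standard fact that pullback by group inversion interchanges right-invariant and left-invariant vector fields, while the surface flip $E_S(\Gamma)\mapsto E_{S^{-1}}(\Gamma)=E_S(\Gamma)^+$ supplies the compensating sign $X\mapsto -X$. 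Carrying out the derivative on each side and matching them produces the identity $e^{\overleftarrow{L}}(\breve f_\Gamma)=e^{\overleftarrow{R}}(\breve f_\Gamma)$.

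Finally I would establish that $\FD$ is a $^*$-isomorphism of $C^\infty(\Ab_\Gamma)$. On the configuration factor $\FD$ acts through the pullback $f_\Gamma\mapsto\breve f_\Gamma$ by the inversion $\ho_\Gamma(\gamma)\mapsto\ho_\Gamma(\gamma)^{-1}$ on $\Ab_\Gamma$. Since this inversion is a diffeomorphism of $\Ab_\Gamma$ (the restriction of group inversion on $G^{\vert\Gamma\vert}$), its pullback is an algebra homomorphism; it is involutive, $\breve{\breve f_\Gamma}=f_\Gamma$, hence bijective; and the relation $\FF(f_\Gamma^*,E_S(\Gamma))=(\breve f_\Gamma^*,E_{S^{-1}}(\Gamma))$ gives $\breve{(f_\Gamma^*)}=(\breve f_\Gamma)^*$, so it respects the involution. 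Together these yield the $^*$-isomorphism property, and the orientation-flip interpretation is then just the statement that $\FD$ conjugates the right-invariant flux action into the left-invariant one.

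The main obstacle I anticipate is the sign-and-orientation bookkeeping in the middle step: the definitions of $e^{\overrightarrow{L}}$, $e^{\overleftarrow{L}}$ and $e^{\overleftarrow{R}}$ each build in a choice of above/below and ingoing/outgoing together with a sign in front of $E_S(\Gamma)$ (compare \eqref{CommRel1}, \eqref{CommRel2}, \eqref{CommRel4}), and one must align these conventions so that inverting the group argument converts the left-translation derivative defining $\bra E_{S^{-1}}(\Gamma),\breve f_\Gamma\ket$ into the right-translation derivative defining $e^{\overleftarrow{R}}(\breve f_\Gamma)$ with no residual sign.
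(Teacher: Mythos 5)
Your proposal is correct and its computational core coincides with the paper's own proof: both rest on the single identity $f_\Gamma(\exp(tX_S)\ho_\Gamma(\gamma))=\breve f_\Gamma(\ho_\Gamma(\gamma)^{-1}\exp(-tX_S))$ differentiated at $t=0$, i.e.\ the fact that pullback by group inversion exchanges the right-translation derivative with the left-translation one, with the orientation flip $E_S(\Gamma)\mapsto E_{S^{-1}}(\Gamma)=-E_S(\Gamma)$ absorbing the residual sign. The only divergence is in how the $^*$-morphism property is dispatched: you argue directly that $f_\Gamma\mapsto\breve f_\Gamma$ is pullback by the inversion diffeomorphism of $\Ab_\Gamma$, hence an involutive algebra automorphism respecting the involution via $\FF(f_\Gamma^*,E_S(\Gamma))=(\breve f_\Gamma^*,E_{S^{-1}}(\Gamma))$, whereas the paper instead checks compatibility of $\check\tau_1\circ\FF$ with the bracket by a second computation that applies it to $\bra E_S(\Gamma),f_\Gamma\ket$, using an auxiliary surface $S_0$ meeting no path of $\Gamma$ and the constant function $\idf_\Gamma$. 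Your treatment is more transparent on this point and makes bijectivity explicit, which the paper leaves implicit; the paper's version stays entirely inside the formalism of the maps $\check\tau_1$ and $\FF$. One caution: your intermediate assertion $e^{\overleftarrow{L}}(\breve f_\Gamma)=e^{\overleftarrow{R}}(\breve f_\Gamma)$ should not be read as a pointwise identity at the same group element; what the computation actually yields, and what the paper writes, is $e^{\overrightarrow{L}}(f_\Gamma)(\ho_\Gamma(\gamma))=e^{\overleftarrow{R}}(\breve f_\Gamma)(\ho_\Gamma(\gamma)^{-1})$, so the equality holds only after the evaluation point is inverted along with the function and the flux. This is exactly the sign-and-argument bookkeeping you flag as the main obstacle, and it is resolved precisely in the way you anticipate.
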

\begin{proofs}This is true, since,
\beqs  e^{\overrightarrow{L}}(f_\Gamma)(\ho_\Gamma(\gamma))&=\frac{\dif}{\dif t}\Big\vert_{t=0} f_\Gamma(\exp(tX_S)\ho_\Gamma(\gamma))
=\frac{\dif}{\dif t}\Big\vert_{t=0} \breve f_\Gamma(\ho_\Gamma(\gamma)^{-1}\exp(-tX_S)) \\&=e^{\overleftarrow{R}}(\breve{f}_\Gamma)(\ho_\Gamma(\gamma)^{-1}) 
\eqs and
\beqs (\check\tau_1\circ\FF)( E_{S_0}(\Gamma))( \bra E_S(\Gamma),f_\Gamma\ket)&= e^{\overrightarrow{L}}(f_\Gamma)(\ho_\Gamma(\gamma))= e^{\overleftarrow{R}}(\breve{f}_\Gamma)(\ho_\Gamma(\gamma)^{-1})\\&= \bra (\check\tau_1\circ\FF)(E_S(\Gamma))(\idf_\Gamma),(\check\tau_1\circ\FF)( E_{S_0}(\Gamma))(f_\Gamma)\ket
\eqs hold where $S_0$ is a surface, which does not intersect any path in $\Gamma$ and $\idf_\Gamma$ is the constant function for any $\Gamma$.
\end{proofs}

There is also an $^*$-isomorphism $\tilde\FD$ presented by
\beqs (\tilde\FD\circ e^{\overrightarrow{L}})(f_\Gamma)= \bra E_{S^{-1}}(\Gamma), f_\Gamma\ket
=e^{\overrightarrow{R}}(f_\Gamma)
\eqs connected to a flip of the path and surface orientation.

Now, the focus lies on quantum fluxes, which takes values in the enveloping algebra of $\go$.
 
\begin{defi}\label{def universalliefluxalg}Let $\breve S$ be a surface set such that $\breve S$ has the same surface intersection property for a graph $\Gamma$.

Then the \textbf{tensor algebra of flux operators} is defined by \[\TD(\breve S):= \bigoplus_{k=0}^\infty \bar\go_{\breve S,\Gamma}^{\CB\text{ } \otimes_k}\] 
There is a natural inclusion $j:\bar\go^{\CB}_{\breve S,\Gamma}\rightarrow \TD(\breve S)$, $E_S(\Gamma)\mapsto (E_S(\Gamma))^{\otimes^1}$. Denote by $\bar\E_{\breve S,\Gamma}$ the \textbf{universal enveloping $^*$-algebra for flux operators} generated by the quotient of $\TD(\breve S)$ and a two sided ideal $I$ expressed by
\beqs I=\Big\{&j(E_{S_1}(\Gamma))\otimes j(E_{S_2}(\Gamma))-j(E_{S_2}(\Gamma))\otimes j(E_{S_1}(\Gamma))-j(\bra E_{S_1}(\Gamma),E_{S_2}(\Gamma)\ket):\\&E_{S_1},E_{S_2}\in\go^{\CB}_{\breve S},S_K\in\breve S,K=1,2\Big\}\eqs

The antilinear and antimultiplicative involution $^+$ is given by 
\beqs &(E_{S_1}(\Gamma)\times ... \times E_{S_k}(\Gamma))^+=E_{S_k}^+(\Gamma)\times ...\times E_{S_1}^+(\Gamma),\\
&E_{S_K}(\Gamma)^+=-E_{S_K}(\Gamma)\text{ for }E_{S_K}(\Gamma)\in\bar\go_{\breve S,\Gamma}\text{ and }K=1,...,k
\eqs
\end{defi}

Recall the structure of the enveloping algebra of $\go$. 
Moreover there is a bilinear map \[\tau_1 :C^\infty(\Ab_{\Gamma})\times\bar\go^{\CB}_{\breve S,\Gamma}\rightarrow C^\infty(\Ab_{\Gamma})\] such that for $(f_\Gamma,E_S(\Gamma))\in C^\infty(\Ab_\Gamma)\times\bar\go^{\CB}_{\breve S,\Gamma}$ such that it is true that,
\beq \tau_1(f_\Gamma,E_S(\Gamma)) = \bra E_S(\Gamma),f_\Gamma\ket
\eq holds, which is further generalised to \[\tau_2 :C^\infty(\Ab_{\Gamma})\times\bar\go^{\CB}_{\breve S,\Gamma}\otimes\bar\go^{\CB}_{\breve S,\Gamma}\rightarrow C^\infty(\Ab_{\Gamma})\] such that
\beq \tau_2(f_\Gamma,E_{S_1}(\Gamma)\cdot E_{S_2}(\Gamma)) =- \bra E_{S_1}(\Gamma), \bra E_{S_2}(\Gamma),f_\Gamma\ket\ket
\eq 

Hence in general there is a bilinear map $\tau:C^\infty(\Ab_{\Gamma})\times\bar\E_{\breve S,\Gamma}\rightarrow C^\infty(\Ab)$ 
\beq &\tau (f_\Gamma,E_{S_1}(\Gamma)\cdot...\cdot E_{S_n}(\Gamma))= \bra E_{S_1}(\Gamma),\bra ...,\bra E_{S_n}(\Gamma),f_\Gamma\ket\ket ... \ket\eq such that $\tilde\tau:\bar\E_{\breve S,\Gamma}\rightarrow C^\infty(\Ab)$ where $\tilde\tau(E_S(\Gamma))f_\Gamma=\bra E_S(\Gamma),f_\Gamma\ket$ for $f_\Gamma\in C^\infty(\Ab_\Gamma)$ is a unit-preserving homomorphism. 

The following corollary implies that due to the universality structure of $\bar\E_{\breve S}$, this map $\tilde\tau$ is unique.
\begin{cor}\label{cor uniquhom} 
Let $A$ be a unital algebra and $\tilde\tau$ be a homomomorphism of a Lie algebra $\go$ into $A$. Then there exsits a unique unit-preserving homomorphism of the universal enveloping flux algebra $\E$ of $\go$ into $A$ which extends $\tilde\tau$. 
\end{cor}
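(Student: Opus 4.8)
The plan is to recognise this statement as the universal property of the universal enveloping algebra and to prove it by first extending $\tilde\tau$ to the tensor algebra and then factoring through the defining ideal. Recall (as in definition \ref{def universalliefluxalg}) that $\E$ is realised as the quotient $T(\go)/I$ of the tensor algebra $T(\go)=\bigoplus_{k=0}^\infty\go^{\otimes k}$ by the two-sided ideal $I$ generated by the elements $j(X)\otimes j(Y)-j(Y)\otimes j(X)-j(\bra X,Y\ket)$ for $X,Y\in\go$, where $j:\go\rightarrow T(\go)$ is the canonical inclusion.

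First I would extend $\tilde\tau$ to $T(\go)$. Since $A$ is unital and associative, the linear map $\tilde\tau:\go\rightarrow A$ extends uniquely, by the universal property of the tensor algebra, to a unit-preserving algebra homomorphism $T(\tilde\tau):T(\go)\rightarrow A$ given on homogeneous elements by
\beqs T(\tilde\tau)(X_1\otimes\cdots\otimes X_k)=\tilde\tau(X_1)\cdots\tilde\tau(X_k)\eqs
and sending the degree-zero summand $\CB$ onto $\CB\cdot 1_A$. Next I would verify that $T(\tilde\tau)$ annihilates $I$. Using the hypothesis that $\tilde\tau$ is a homomorphism of Lie algebras, i.e. $\tilde\tau(\bra X,Y\ket)=\tilde\tau(X)\tilde\tau(Y)-\tilde\tau(Y)\tilde\tau(X)$, the generators of $I$ are sent to
\beqs T(\tilde\tau)\big(j(X)\otimes j(Y)-j(Y)\otimes j(X)-j(\bra X,Y\ket)\big)=\tilde\tau(X)\tilde\tau(Y)-\tilde\tau(Y)\tilde\tau(X)-\tilde\tau(\bra X,Y\ket)=0\eqs
Because $T(\tilde\tau)$ is an algebra homomorphism and $I$ is the two-sided ideal generated by these elements, it follows that $T(\tilde\tau)(I)=\{0\}$. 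Hence $T(\tilde\tau)$ descends to a well-defined unit-preserving algebra homomorphism $\hat\tau:\E=T(\go)/I\rightarrow A$ with $\hat\tau\circ\pi=T(\tilde\tau)$, where $\pi$ is the quotient projection. By construction $\hat\tau$ restricts to $\tilde\tau$ on $j(\go)$, so it extends $\tilde\tau$.

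Finally I would settle uniqueness: the algebra $\E$ is generated as a unital algebra by the image $j(\go)$, so any unit-preserving homomorphism $\E\rightarrow A$ extending $\tilde\tau$ is forced to agree with $\hat\tau$ on a generating set and therefore everywhere. The only genuine content of the argument lies in the verification in the second step that the Lie-homomorphism property of $\tilde\tau$ matches exactly the relations cutting out $I$; everything else is the formal universal property, so I do not expect any real obstacle here beyond keeping the bookkeeping of the ideal and the unit consistent.
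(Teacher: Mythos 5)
Your proposal is correct: it is the standard universal-property argument (extend to the tensor algebra, kill the ideal using the Lie-homomorphism identity, descend to the quotient, and get uniqueness because $j(\go)$ generates $\E$ as a unital algebra). The paper itself states this corollary without proof, relying on it as a known fact about universal enveloping algebras, so your argument supplies exactly the canonical reasoning the paper implicitly invokes; there is no gap, and no divergence from the intended approach to report.
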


\begin{lem}Let $\breve S$ be a set of surfaces which has the same intersection surface property for a finite orientation preserved graph system associated to $\Gamma$.

Then $C^\infty(\Ab_\Gamma)$ is a left $\bar\E_{\breve S,\Gamma}$-module algebra. The action of $\bar\E_{\breve S,\Gamma}$ on  $C^\infty(\Ab_\Gamma)$ is given by $E_S(\Gamma)\rhd f_\Gamma:= e^{\overrightarrow{L}}(f_\Gamma)$.
\end{lem}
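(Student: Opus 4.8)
The plan is to first upgrade the prescribed formula $E_S(\Gamma)\rhd f_\Gamma:=e^{\overrightarrow{L}}(f_\Gamma)$ into a bona fide left module structure, and then to verify the two module-algebra compatibility conditions. For the module structure I would take the map $\tilde\tau_1$ from the text, sending each flux $E_S(\Gamma)$ to the operator $f_\Gamma\mapsto\bra E_S(\Gamma),f_\Gamma\ket=e^{\overrightarrow{L}}(f_\Gamma)$ on the associative algebra $\End(C^\infty(\Ab_\Gamma))$ of linear operators. This is a homomorphism of the Lie algebra $\bar\go_{\breve S,\Gamma}$, which is well-defined as a Lie algebra precisely because $\breve S$ has the same surface intersection property. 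By Corollary \ref{cor uniquhom} it extends uniquely to a unit-preserving algebra homomorphism $\tilde\tau:\bar\E_{\breve S,\Gamma}\rightarrow\End(C^\infty(\Ab_\Gamma))$, and defining $h\rhd f_\Gamma:=\tilde\tau(h)(f_\Gamma)$ makes $C^\infty(\Ab_\Gamma)$ a left $\bar\E_{\breve S,\Gamma}$-module that restricts to the stated action on generators.

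For the module-algebra axioms I would use that $\bar\E_{\breve S,\Gamma}$, being the universal enveloping algebra of $\bar\go^{\CB}_{\breve S,\Gamma}$, carries its canonical cocommutative Hopf structure, in which every $E_S(\Gamma)$ is primitive: $\Delta(E_S(\Gamma))=E_S(\Gamma)\otimes\idf+\idf\otimes E_S(\Gamma)$ and $\epsilon(E_S(\Gamma))=0$. The conditions to check are $h\rhd(f_\Gamma g_\Gamma)=\sum(h_{(1)}\rhd f_\Gamma)(h_{(2)}\rhd g_\Gamma)$ and $h\rhd\idf_\Gamma=\epsilon(h)\idf_\Gamma$. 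On a generator the first is exactly the Leibniz rule $e^{\overrightarrow{L}}(f_\Gamma g_\Gamma)=e^{\overrightarrow{L}}(f_\Gamma)g_\Gamma+f_\Gamma e^{\overrightarrow{L}}(g_\Gamma)$, which I would obtain by applying the product rule to $\frac{\dif}{\dif t}\Big\vert_{t=0}f_\Gamma(\exp(tX_S)\ho_\Gamma(\gamma))g_\Gamma(\exp(tX_S)\ho_\Gamma(\gamma))$ in the defining expression \eqref{CommRel1}; the second reduces to $e^{\overrightarrow{L}}(\idf_\Gamma)=0$, clear since a constant has vanishing derivative.

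To extend from generators to all of $\bar\E_{\breve S,\Gamma}$ I would use the standard subalgebra argument. Let $C$ be the set of $h\in\bar\E_{\breve S,\Gamma}$ satisfying both identities for all $f_\Gamma,g_\Gamma\in C^\infty(\Ab_\Gamma)$. Then $\idf\in C$ and every $E_S(\Gamma)\in C$, and $C$ is closed under multiplication: for $x,y\in C$, the homomorphism property $\tilde\tau(xy)=\tilde\tau(x)\tilde\tau(y)$ together with multiplicativity of the coproduct $\Delta(xy)=\Delta(x)\Delta(y)$ gives $(xy)\rhd(f_\Gamma g_\Gamma)=x\rhd(y\rhd(f_\Gamma g_\Gamma))=\sum((xy)_{(1)}\rhd f_\Gamma)((xy)_{(2)}\rhd g_\Gamma)$ and $(xy)\rhd\idf_\Gamma=\epsilon(xy)\idf_\Gamma$. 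Since $\idf$ and the primitive elements generate $\bar\E_{\breve S,\Gamma}$ as an algebra, $C=\bar\E_{\breve S,\Gamma}$, which completes the proof.

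I expect the Leibniz computation itself to be harmless. The steps that demand the most care are the bookkeeping in the extension step, where one must keep the multiplicativity of the enveloping-algebra coproduct aligned with the algebra-homomorphism (not merely Lie) property of $\tilde\tau$, and a preliminary verification that $\tilde\tau_1$ respects brackets with the correct sign for right-invariant vector fields, since this is exactly what licenses the appeal to Corollary \ref{cor uniquhom}.
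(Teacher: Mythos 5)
Your proposal is correct and follows essentially the same route as the paper's proof: the left module structure is obtained from the Lie-algebra homomorphism $\tilde\tau_1$ extended through the universal property (corollary \ref{cor uniquhom}), and the module-algebra conditions reduce on generators to the Leibniz rule for $e^{\overrightarrow{L}}$ (product rule in \eqref{CommRel1}) together with $e^{\overrightarrow{L}}(\idf_\Gamma)=0$, which is exactly what the paper checks. The only difference is that you make explicit what the paper leaves implicit --- the Hopf-algebraic bookkeeping (primitivity of the $E_S(\Gamma)$, multiplicativity of the coproduct, and the subalgebra argument extending the two identities from generators to all of $\bar\E_{\breve S,\Gamma}$) and the sign check that the right-invariant vector fields respect the bracket --- which is a tightening of the same argument rather than a different method.
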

\begin{proofs}This following from the fact that, $C^\infty(\Ab_\Gamma)$ is a left $\bar\E_{\breve S,\Gamma}$-module, which is defined by the map 
\beqs E_S(\Gamma)\rhd f_\Gamma:= e^{\overrightarrow{L}}(f_\Gamma)=\tau(f_\Gamma,E_S(\Gamma))\text{ for }E_S(\Gamma)\in\bar\E_{\breve S,\Gamma},  f_\Gamma\in C^\infty(\Ab_\Gamma)
\eqs which is obviously bilinear and $1\rhd f_\Gamma=f_\Gamma$ is satisfied. Moreover 	
\beqs E_{S_1}(\Gamma) \rhd( E_{S_2}(\Gamma)\rhd f_\Gamma )= (E_{S_1}(\Gamma) \cdot E_{S_2}(\Gamma))\rhd f_\Gamma
\eqs holds. Furthermore it turns out to be left $\bar\E_{\breve S,\Gamma}$-module algebra, since, additionally, 
\beqs &E_S(\Gamma)\rhd (f_\Gamma k_\Gamma)= (e^{\overrightarrow{L}}(f_\Gamma)) k_\Gamma + f_\Gamma (e^{\overrightarrow{L}}(k_\Gamma))\text{ and }\\
&E_S(\Gamma)\rhd \idf_\Gamma= 0 \text{ for all }E_S(\Gamma)\in\bar \go_{\breve S,\Gamma}
\eqs
yields.
\end{proofs}

\begin{lem}Let $\breve S$ be a set of surfaces which has the appropriate same intersection surface property for a finite orientation preserved graph system associated to $\Gamma$.

Then $C^\infty(\Ab_\Gamma)$ is a right $\bar\E_{\breve S,\Gamma}$-module algebra. The action of $\bar\E_{\breve S,\Gamma}$ on  $C^\infty(\Ab_\Gamma)$ is given by $E_S(\Gamma)\lhd f_\Gamma:= e^{\overrightarrow{R}}(f_\Gamma)$.
\end{lem}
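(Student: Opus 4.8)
The plan is to run the proof of the preceding left-module lemma in mirror image, replacing the flux vector field $e^{\overrightarrow{L}}$, which differentiates the left translation $\exp(tX_S)\ho_\Gamma(\gamma)$, by $e^{\overrightarrow{R}}$, which differentiates the right translation $\ho_\Gamma(\gamma)\exp(tY_S)$, and correspondingly interchanging the roles of left and right in the compatibility law. Since $\breve S$ has the (appropriate) same surface intersection property for the orientation preserved graph system associated with $\Gamma$, the set $\bar\go_{\breve S,\Gamma}$ carries a Lie algebra structure and its universal enveloping $^*$-algebra $\bar\E_{\breve S,\Gamma}$ of Definition \ref{def universalliefluxalg} is well defined; under the same hypothesis each $e^{\overrightarrow{R}}(f_\Gamma)$ is a well-defined element of $C^\infty(\Ab_\Gamma)$, so the candidate action makes sense on all of $\bar\E_{\breve S,\Gamma}$ and not only on $\bar\go_{\breve S,\Gamma}$.

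First I would put $E_S(\Gamma)\lhd f_\Gamma:=e^{\overrightarrow{R}}(f_\Gamma)$ and record the easy facts: the assignment is $\CB$-bilinear and the unit acts trivially, $\idf\lhd f_\Gamma=f_\Gamma$ (equivalently $1\lhd f_\Gamma=f_\Gamma$), both being immediate from $\frac{\dif}{\dif t}\big\vert_{t=0}f_\Gamma(\ho_\Gamma(\gamma)\exp(tY_S))$ depending linearly on $Y_S$ and vanishing at $Y_S=0$.

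The essential step is to identify the module compatibility. Whereas $e^{\overrightarrow{L}}$ realised $\bar\go_{\breve S,\Gamma}$ as a homomorphism of Lie algebras into the endomorphisms of $C^\infty(\Ab_\Gamma)$ (so that by Corollary \ref{cor uniquhom} it extended to a unit-preserving homomorphism of $\bar\E_{\breve S,\Gamma}$ and produced a left module), the field $e^{\overrightarrow{R}}$ is the opposite-handed one: differentiating right translations of the argument reproduces the bracket with the opposite sign, so $E_S(\Gamma)\mapsto e^{\overrightarrow{R}}(\,\cdot\,)$ is a Lie anti-homomorphism. By the universal property of Corollary \ref{cor uniquhom} this extends uniquely to an anti-homomorphism of $\bar\E_{\breve S,\Gamma}$ into $\End(C^\infty(\Ab_\Gamma))$, which is exactly a right-module structure: concretely
\[E_{S_1}(\Gamma)\lhd\big(E_{S_2}(\Gamma)\lhd f_\Gamma\big)=\big(E_{S_2}(\Gamma)\cdot E_{S_1}(\Gamma)\big)\lhd f_\Gamma,\]
with the order of the enveloping-algebra product reversed relative to the left-module lemma. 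This reversal is the heart of the matter and the step I expect to demand the most care, since it hinges on tracking how right translations of $\ho_\Gamma(\gamma)$ compose against the product of $\bar\E_{\breve S,\Gamma}$ and on applying the universal property with the correct variance.

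Finally I would upgrade the module to a module algebra by verifying the Leibniz rule and the normalisation
\[E_S(\Gamma)\lhd(f_\Gamma k_\Gamma)=\big(e^{\overrightarrow{R}}(f_\Gamma)\big)k_\Gamma+f_\Gamma\big(e^{\overrightarrow{R}}(k_\Gamma)\big),\qquad E_S(\Gamma)\lhd\idf_\Gamma=0.\]
Both are immediate because $e^{\overrightarrow{R}}$ is a first-order differential operator: differentiating the pointwise product $f_\Gamma\!\big(\ho_\Gamma(\gamma)\exp(tY_S)\big)\,k_\Gamma\!\big(\ho_\Gamma(\gamma)\exp(tY_S)\big)$ at $t=0$ gives the product rule, and the constant function $\idf_\Gamma$ has vanishing derivative. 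Together with the bilinearity, unitality and reversed compatibility above, this shows that the generators act as derivations compatible with the coproduct, so $C^\infty(\Ab_\Gamma)$ is a right $\bar\E_{\breve S,\Gamma}$-module algebra with the stated action $E_S(\Gamma)\lhd f_\Gamma=e^{\overrightarrow{R}}(f_\Gamma)$.
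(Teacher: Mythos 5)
The paper itself offers no proof of this lemma: it is stated as the right-handed mirror of the preceding left-module lemma, whose proof consists of exactly the checks you carry out (bilinearity and unitality of the action, compatibility with the product of $\bar\E_{\breve S,\Gamma}$, and the Leibniz rule together with $E_S(\Gamma)\lhd\idf_\Gamma=0$). So your proposal follows the route the paper intends, and it is in one respect more careful: you make explicit that a right module requires the order-reversed compatibility $E_{S_1}(\Gamma)\lhd\big(E_{S_2}(\Gamma)\lhd f_\Gamma\big)=\big(E_{S_2}(\Gamma)\cdot E_{S_1}(\Gamma)\big)\lhd f_\Gamma$, i.e.\ that the generators must act through a Lie algebra anti-homomorphism, which then extends uniquely by the universal property of Corollary \ref{cor uniquhom} (applied with the opposite variance). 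The paper never writes this reversal down --- its left-module proof simply asserts the unreversed identity --- so this is a genuine and necessary addition rather than a cosmetic one.

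One point in your key step needs repair, however. You justify the anti-homomorphism property by saying that differentiating right translations of the argument reproduces the bracket with the opposite sign. Under the standard convention (the bracket of $\go$ being that of left-invariant vector fields), the opposite is true: the operators $f_\Gamma\mapsto\frac{\dif}{\dif t}\big\vert_{t=0}f_\Gamma(\ho_\Gamma(\gamma)\exp(tX))$ are precisely the left-invariant fields and satisfy $[X^{L},Y^{L}]=[X,Y]^{L}$, a homomorphism; it is differentiation of \emph{left} translations that yields an anti-homomorphism. What rescues your conclusion is the sign convention built into the paper's definition of $e^{\overrightarrow{R}}$: there one differentiates right translation by $\exp(tY_S)$ with $Y_S:=-E_S(\Gamma)$, so the assignment is $E_S(\Gamma)\mapsto -E_S(\Gamma)^{L}$, and negating a Lie homomorphism does produce a Lie anti-homomorphism, hence a right $\bar\E_{\breve S,\Gamma}$-module structure after extension. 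So the statement you need is true for the paper's $e^{\overrightarrow{R}}$, but for a different reason than the one you give; as written, the justification of the very step you single out as the delicate one does not stand on its own, and it should be replaced by the computation that tracks this sign flip.
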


Finally, the definition of the holonomy-flux $^*$-algebra in LQG by the authors \cite[Def.2.7]{LOST06} is rewritten for the case of a fixed graph $\Gamma$. The vector space $C^\infty(\Ab_\Gamma)\otimes\bar\E_{\breve S,\Gamma}$ is equipped with the multiplication
\beq\label{ eq multiholfluxalg} (f_\Gamma^1\otimes E_{S_1}(\Gamma))\cdot (f_\Gamma^2\otimes E_{S_2}(\Gamma)) 
= - \tau(f_\Gamma^2,E_{S_1}(\Gamma))\otimes E_{S_1}(\Gamma)\cdot E_{S_2}(\Gamma)
\eq such that a Lie algebra bracket is derived
\beq\label{eq LiealgstructureHolFlux}
&\bra f_\Gamma^1\otimes E_{S_1}(\Gamma), f_\Gamma^2\otimes E_{S_2}(\Gamma)\ket =\\
&\qquad-(\tau (f_\Gamma^2,E_{S_1}(\Gamma)) -\tau(f_\Gamma^1,E_{S_2}(\Gamma)))F\otimes \bra E_{S_1}(\Gamma),E_{S_2}(\Gamma)\ket
\eq Notice if $S_1$ and $S_2$ are disjoint the commutator on $C^\infty(\Ab_\Gamma)\otimes\bar\E_{\breve S,\Gamma}$ is zero.
Calculate the commutator 
\beq\label{eq commutatordisjointelements} & \bra (f_\Gamma\otimes \idf), (\idf \otimes E_{S}(\Gamma))\ket
= -\tau(f_\Gamma,E_{S}(\Gamma))\otimes E_{S}(\Gamma)
\eq
Additionally, the algebra is equipped with an involution such that this algebra is a unital associative $^*$-algebra.
In this work the algebra is slightly modificated. 

\begin{defi}Let $\breve S$ be a set of surfaces which has the appropriate same intersection surface property for a finite orientation preserved graph system associated to $\Gamma$.

The \textbf{holonomy-flux cross-product $^*$-algebra associated a graph $\Gamma$ and a surface set $\breve S$} is given by the left or right cross-product $^*$-algebra
\[C^\infty(\Ab_\Gamma)\rtimes_{L}\bar\E_{\breve S,\Gamma} \text{ or } C^\infty(\Ab_\Gamma)\rtimes_{R}\bar\E_{\breve S,\Gamma} \] which is defined by the vector space $C^\infty(\Ab_\Gamma)\otimes\bar\E_{\breve S,\Gamma}$ with the multiplication given by
\beqs (f^1_\Gamma\otimes E_{S_1}(\Gamma))\cdot_L(f^2_\Gamma\otimes E_{S_2}(\Gamma))
=f_\Gamma^1(E_{S_1}(\Gamma)\rhd f^2_\Gamma)\otimes E_{S_2}(\Gamma) 
+ f_\Gamma^1 f^2_\Gamma \otimes E_{S_1}(\Gamma)\cdot E_{S_2}(\Gamma)
\eqs or respectively
\beqs (f^1_\Gamma\otimes E_{S_1}(\Gamma))\cdot_R(f^2_\Gamma\otimes E_{S_2}(\Gamma))
=(E_{S_2}(\Gamma)\lhd f^1_\Gamma)f_\Gamma^2\otimes E_{S_1}(\Gamma)
+ f^1_\Gamma f_\Gamma^2\otimes E_{S_1}(\Gamma)\cdot E_{S_2}(\Gamma)
\eqs
and the involution
\beqs (f_\Gamma\rhd E_S(\Gamma))^*=\bar f_\Gamma\rhd E_S(\Gamma)^+
\eqs
or respectively
\beqs (f_\Gamma\lhd E_S(\Gamma))^*=\bar f_\Gamma\lhd E_S(\Gamma)^+
\eqs whenever $E_{S_1}(\Gamma),E_{S_2}(\Gamma),E_{S}(\Gamma)\in\bar\E_{\breve S,\Gamma}$ and $f^1_\Gamma,f^2_\Gamma,f_\Gamma\in C^\infty(\Ab_\Gamma)$.

The \textbf{holonomy-flux cross-product $^*$-algebra associated a surface set $\breve S$} is given by the left or right cross-product $^*$-algebra
\[C^\infty(\Ab)\rtimes_{L} \bar \E_{\breve S}\text{ or } C^\infty(\Ab)\rtimes_{R} \bar \E_{\breve S}\]
which are the inductive limit of the families $\{(C^\infty(\Ab_\Gamma)\rtimes_{L}\bar\E_{\breve S,\Gamma},\beta_{\Gamma,\Gp}\times\check\beta_{\Gamma,\Gp})\}$ or $\{(C^\infty(\Ab_\Gamma)\rtimes_{R}\bar\E_{\breve S,\Gamma},\beta_{\Gamma,\Gp}\times\check\beta_{\Gamma,\Gp})\}$ where $\check\beta_{\Gamma,\Gp}: \bar \E_{\breve S,\Gamma}\rightarrow \bar \E_{\breve S,\Gp}$ are suitable unit-preserving $^*$-homomorphisms for a suitable set $\breve S$ of surfaces that preserve the left or right vector field structure. 
\end{defi}

Summarising, the unital holonomy-flux cross-product $^*$-algebra $C^\infty(\Ab)\rtimes_L\bar \E_{\breve S}$ may be thought of as the universal algebra generated by $C^\infty(\Ab_\Gamma)$ and $\bar\E_{\breve S,\Gamma}$ with respect to the commutator relation
\beq E_S(\Gamma)f_\Gamma= E_S(\Gamma)\rhd f_\Gamma + f_\Gamma E_S(\Gamma)
\eq

Derive for suitable surface $S$ and a graph $\Gamma$ the following commutator relation between elements of the holonomy-flux cross-product $^*$-algebra
\beq &\bra (f^1_\Gamma\otimes E_{S_1}(\Gamma)), (f^2_\Gamma\otimes E_{S_2}(\Gamma))\ket\\
&=
f_\Gamma^1(E_{S_1}(\Gamma)\rhd f^2_\Gamma)\otimes E_{S_2}(\Gamma) + f_\Gamma^1 f^2_\Gamma \otimes E_{S_1}(\Gamma)\cdot E_{S_2}(\Gamma)\\&\quad  
- f_\Gamma^2(E_{S_2}(\Gamma)\rhd f^1_\Gamma)\otimes E_{S_1}(\Gamma)- f_\Gamma^2 f^1_\Gamma\otimes E_{S_2}(\Gamma)\cdot E_{S_1}(\Gamma)\\
&= 
f_\Gamma^1(E_{S_1}(\Gamma)\rhd f^2_\Gamma)\otimes E_{S_2}(\Gamma)   
- f_\Gamma^2(E_{S_2}(\Gamma)\rhd f^1_\Gamma)\otimes E_{S_1}(\Gamma)
+ f_\Gamma^1f^2_\Gamma \otimes \bra E_{S_1}(\Gamma),E_{S_2}(\Gamma)\ket
\eq which can be compared to the definition usually used in LQG, which is illustrated in \eqref{eq LiealgstructureHolFlux}. The definitions do not coincide, since, in LQG the ACZ- holonomy-flux algebra $C^\infty(\Ab_\Gamma)\otimes\bar\E_{\breve S,\Gamma}$ is defined by the multiplication \eqref{ eq multiholfluxalg}. This shows that, the holonomy-flux cross-product $^*$-algebra is a modificated holonomy-flux $^*$-algebra if it is compared with the $^*$-algebra presented in \cite{LOST06}. 

Notice
\beq &\bra (f^1_\Gamma\otimes \idf), (f^2_\Gamma\otimes E_{S}(\Gamma))\ket\\
&=
f_\Gamma^1 f^2_\Gamma\otimes E_{S}(\Gamma) + f_\Gamma^1 f^2_\Gamma \otimes E_{S}(\Gamma)
- f_\Gamma^2(E_{S}(\Gamma)\rhd f^1_\Gamma)\otimes \idf- f_\Gamma^2 f^1_\Gamma\otimes E_{S}(\Gamma)\\
&= 
f_\Gamma^1 f^2_\Gamma\otimes E_{S}(\Gamma) 
- f_\Gamma^2(E_{S}(\Gamma)\rhd f^1_\Gamma)\otimes \idf
\eq holds.
Observe that, the commutator 
\beq\label{eq commutatordisjointelements2} &\bra (f_\Gamma\otimes \idf), (\idf \otimes E_{S}(\Gamma))\ket\\
&=
f_\Gamma \otimes E_{S}(\Gamma) + f_\Gamma \otimes E_{S}(\Gamma)  
- (E_{S}(\Gamma)\rhd f_\Gamma)\otimes \idf -  f_\Gamma\otimes E_{S}(\Gamma)\\
&= 
f_\Gamma \otimes E_{S}(\Gamma)  
- (E_{S}(\Gamma)\rhd f_\Gamma)\otimes \idf 
\eq is different from the commutator \eqref{eq commutatordisjointelements} of the holonomy-flux $^*$-algebra.

Clearly for different surface sets there are a lot of different holonomy-flux cross-product $^*$-algebras. For example let $\breve S$ be a set of $N$ surfaces and let $\Gamma$ be a graph with $N$ independent edges such that every surface $S_i$ in $\breve S$ intersects only one path $\gamma_i$ of a graph $\Gamma$ only once in the target vertex of the path $\gamma_i$, the path $\gamma_i$ lies above and there are no other intersection points of each other path $\gamma_j$ and the surface $S_i$ in $\breve S$ ($i\neq j$). Moreover let $\breve T$ be a set of $N$ surfaces and let $\Gamma$ be a graph with $N$ independent edges such that every surface $T_i$ in $\breve T$ intersects only one path $\gamma_i$ of a graph $\Gamma$ only once in the source vertex of the path $\gamma_i$, the path $\gamma_i$ lies below and there are no other intersection points of each other path $\gamma_j$ and the surface $T_i$ in $\breve T$ ($i\neq j$). 

Then the sets $\breve S$ and $\breve T$ have the simple surface intersection property for $\Gamma$. Then there exists two different holonomy-flux cross-product $^*$-algebras $C^\infty(\Ab)\rtimes_L\bar \E_{\breve S}$ and $C^\infty(\Ab)\rtimes_R\bar \E_{\breve T}$. 

Consider $C^\infty(\Ab_{\Gamma})$ as a $^*$-subalgebra of the analytic holonomy $C^*$-algebra $\Alg_\Gamma:=C(\Ab_{\Gamma})$. 
Moreover refer to Sakai \cite{Sakai} or Bratteli and Robinson \cite{BratteliRobinsonB1} for the definition of $^*$-derivations.

\begin{lem}\label{lem diffop} For any graph $\Gamma$ and a surface $S$, which has the same intersection surface property for a finite orientation preserved graph system associated to  $\Gamma$, the object
\beq i \bra E_S(\Gamma)^+E_S(\Gamma),f_{\Gamma}\ket=:\delta^2_{S,\Gamma}(f_{\Gamma})\eq defines a unbounded symmetric $^*$-derivation $\delta_{S,\Gamma}^2$ on $C(\Ab_{\Gamma})$ with domain $C^\infty(\Ab_{\Gamma})$, in other words\\ $\delta_{S,\Gamma}^2\in\Der(C^\infty(\Ab_{\Gamma}),C(\Ab_{\Gamma}))$. 
\end{lem}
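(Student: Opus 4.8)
The plan is to verify directly that $\delta^2_{S,\Gamma}$ satisfies the defining properties of a symmetric $^*$-derivation in the sense of Sakai and Bratteli--Robinson: its domain $C^\infty(\Ab_\Gamma)$ is a $^*$-subalgebra dense in the $C^*$-algebra $C(\Ab_\Gamma)$, it is linear and obeys the Leibniz rule, it is symmetric in the sense $\delta^2_{S,\Gamma}(f_\Gamma^*)=\delta^2_{S,\Gamma}(f_\Gamma)^*$, and it is unbounded. First I would set $H:=E_S(\Gamma)^+E_S(\Gamma)$ and record from the involution of Definition \ref{def universalliefluxalg} that $H$ is self-adjoint, since $H^+=E_S(\Gamma)^+(E_S(\Gamma)^+)^+=E_S(\Gamma)^+E_S(\Gamma)=H$ using $(E_S(\Gamma)^+)^+=E_S(\Gamma)$. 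I would then read $\delta^2_{S,\Gamma}(f_\Gamma)=i\bra H,f_\Gamma\ket$ as the inner commutator $i[H,f_\Gamma]$ inside the holonomy-flux cross-product $^*$-algebra, which is legitimate because the defining commutation relation $E_S(\Gamma)f_\Gamma=(E_S(\Gamma)\rhd f_\Gamma)+f_\Gamma E_S(\Gamma)$ identifies the bracket $\bra E_S(\Gamma),f_\Gamma\ket$ with the right-invariant field $E_S(\Gamma)\rhd f_\Gamma=e^{\overrightarrow{L}}(f_\Gamma)$.

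Linearity is immediate. The Leibniz rule I would extract from the elementary identity $[H,f_\Gamma k_\Gamma]=[H,f_\Gamma]k_\Gamma+f_\Gamma[H,k_\Gamma]$, valid for any fixed $H$, so that $i\,\mathrm{ad}(H)$ is a derivation on the cross-product algebra and restricts to one on $C^\infty(\Ab_\Gamma)$. This is precisely where the main obstacle sits, and it must be treated with care: the naive ``second-order action'' $(E_S(\Gamma)^+E_S(\Gamma))\rhd(f_\Gamma k_\Gamma)$, evaluated through the left $\bar\E_{\breve S,\Gamma}$-module-algebra structure and the coproduct of $\bar\E_{\breve S,\Gamma}$, is \emph{not} a derivation: it carries the extra symmetric cross term $-2\,e^{\overrightarrow{L}}(f_\Gamma)\,e^{\overrightarrow{L}}(k_\Gamma)$ and obeys only a second-order Leibniz rule. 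Hence the delicate step is to reconcile two computations — the commutator $i[H,f_\Gamma]$, which is a genuine derivation but a priori lands in the full cross-product algebra, and the $C(\Ab_\Gamma)$-valued action $-i\,e^{\overrightarrow{L}}(e^{\overrightarrow{L}}(f_\Gamma))$ fixed by the unique unit-preserving homomorphism $\tilde\tau$ of Corollary \ref{cor uniquhom}. I would normal-order the residual flux factor $E_S(\Gamma)$ past the functions in $i[H,f_\Gamma]$ and use the same-surface-intersection hypothesis on $S$ together with the single-path reduction to argue that the leftover flux term acts trivially on the relevant configuration variables, so that the derivation does take values in $C(\Ab_\Gamma)$ as claimed. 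This reconciliation is the heart of the matter and the point I expect to be hardest.

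For symmetry I would compute, using $H^+=H$, antimultiplicativity of the involution, and $f_\Gamma^*=\overline{f_\Gamma}$, that $(i[H,f_\Gamma])^*=-i[H,f_\Gamma]^*=i[H,f_\Gamma^*]$, i.e. $\delta^2_{S,\Gamma}(f_\Gamma)^*=\delta^2_{S,\Gamma}(f_\Gamma^*)$; the $^*$-isomorphism of Lemma \ref{lem flip}, in which passing to $E_S(\Gamma)^+$ implements the path-orientation flip $f_\Gamma\mapsto\breve f_\Gamma$, can be invoked to make the symmetry of $H$ transparent already at the level of the invariant vector fields. Finally, unboundedness follows from the concrete second-order form: evaluating $\delta^2_{S,\Gamma}$ on a sequence of matrix coefficients of $G^{\vert\Gamma\vert}$ of growing weight produces functions bounded in the supremum norm whose images have norms tending to infinity, so $\delta^2_{S,\Gamma}$ admits no bounded extension to $C(\Ab_\Gamma)$ and $C^\infty(\Ab_\Gamma)$ is a proper domain. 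Assembling linearity, the Leibniz rule, symmetry, and unboundedness then yields $\delta^2_{S,\Gamma}\in\Der(C^\infty(\Ab_\Gamma),C(\Ab_\Gamma))$.
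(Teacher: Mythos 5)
The paper states Lemma \ref{lem diffop} without any proof, so there is no argument in the paper to compare yours against; judged on its own merits, your proposal has a genuine gap, and it sits exactly at the step you yourself single out as ``the heart of the matter.'' You propose to read $\delta^2_{S,\Gamma}(f_\Gamma)$ as the commutator $i\bra H,f_\Gamma\ket=i(Hf_\Gamma-f_\Gamma H)$ with $H:=E_S(\Gamma)^+E_S(\Gamma)$, taken inside the cross-product algebra so that Leibniz is automatic, and then to argue by normal-ordering plus the same-surface-intersection hypothesis that the leftover flux term ``acts trivially,'' so that the commutator lands in $C(\Ab_{\Gamma})$. That claim is false. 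Applying the defining relation $E_S(\Gamma)f_\Gamma=(E_S(\Gamma)\rhd f_\Gamma)+f_\Gamma E_S(\Gamma)$ twice gives
\beqs
\bra E_S(\Gamma)^+E_S(\Gamma),f_\Gamma\ket
= E_S(\Gamma)^+\rhd\big(E_S(\Gamma)\rhd f_\Gamma\big)
+\big(E_S(\Gamma)\rhd f_\Gamma\big)\,E_S(\Gamma)^+
+\big(E_S(\Gamma)^+\rhd f_\Gamma\big)\,E_S(\Gamma),
\eqs
and with $E_S(\Gamma)^+=-E_S(\Gamma)$ the flux part collapses to $-2\big(E_S(\Gamma)\rhd f_\Gamma\big)\otimes E_S(\Gamma)$, a nonzero first-order element of $C^\infty(\Ab_\Gamma)\rtimes_L\bar\E_{\breve S,\Gamma}$ whenever $e^{\overrightarrow{L}}(f_\Gamma)\neq 0$. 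The same-surface-intersection property only controls \emph{which} holonomy variables the flux differentiates; it neither makes $e^{\overrightarrow{L}}(f_\Gamma)$ vanish nor annihilates the tensor factor $E_S(\Gamma)$, and the paper's own remark following the lemma (that $E_S(\Gamma)$ is not a multiplier of $C(\Ab_\Gamma)$) already signals that such a term cannot be absorbed into $C(\Ab_\Gamma)$. An appeal to the Hilbert space representation (e.g. $\dif U(E_S(\Gamma))\Omega_M^\Gamma=0$) would only produce statements about expectation values in a state, not the asserted membership $\delta^2_{S,\Gamma}(f_\Gamma)\in C(\Ab_{\Gamma})$.

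Consequently the two readings you try to reconcile are genuinely different operators and cannot be identified: $i\,\adj(H)$ satisfies Leibniz but takes values outside $C(\Ab_\Gamma)$, while the function-valued reading through the map $\tau$, namely (up to sign) $i\,e^{\overrightarrow{L}}\big(e^{\overrightarrow{L}}(f_\Gamma)\big)$, lies in $C^\infty(\Ab_\Gamma)$ but --- as you correctly note --- obeys only the second-order Leibniz rule with defect $-2\,e^{\overrightarrow{L}}(f_\Gamma)\,e^{\overrightarrow{L}}(k_\Gamma)$, hence is not a derivation. Your proof needs one single object enjoying both properties, and your own normal-ordering computation exhibits the obstruction: the difference between the two candidates is exactly the non-vanishing term $-2\big(E_S(\Gamma)\rhd f_\Gamma\big)\otimes E_S(\Gamma)$. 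The symmetry and unboundedness parts of your argument are fine in themselves but moot until this is resolved; as it stands, the simultaneous Leibniz rule and $C(\Ab_\Gamma)$-valuedness --- which is the entire content of the lemma --- is not established by your argument and cannot be established along the route you sketch.
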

Since multiplier algebra of a unital and commutative $C^*$-algebra is the algebra itself, the elements $E_S(\Gamma)$ are not contained in the multiplier algebra of $\Alg_\Gamma=C(\Ab_\Gamma)$. Hence the derivation defined in the lemma \ref{lem diffop} is not inner.  

Following the notion of infinitesimal representations $\dif U$ in a $C^*$-algebra introduced by Woronowicz \cite[p.8]{WoroNap} the flux operators can be also understood in the following way. 
Recall the unbounded operators $e^{\overleftarrow{R}},e^{\overleftarrow{L}}$ defined in  \eqref{CommRel1}, \eqref{CommRel4}. These operators are infinitesimal representations or differentials of the Lie flux group $\bar G_{\breve S,\Gamma}$ in $\KD(L^2(\Ab_\Gamma,\mu_\Gamma))$. They correspond to the set $\Rep(\bar G_{\breve S,\Gamma},\KD(L^2(\Ab_\Gamma,\mu_\Gamma)))$ of unitary representations $U$ of $\bar G_{\breve S}$, which are analysed in \cite{Kaminski1,KaminskiPHD}. Therefore rewrite
\beqs e^{\overrightarrow{L}}(f_\Gamma):= \dif U(E_S(\Gamma)) f_\Gamma\text{ for }f_\Gamma\in \DD(\dif U)\text{ and }E_S(\Gamma)\in\bar\E_{\breve S,\Gamma}
\eqs
The domain of the infinitesimal representations $\dif U$ is defined by
\beqs \DD(\dif U):=\{f_\Gamma\in C(\Ab_\Gamma): &\text{ the mapping }\bar G_{\breve S,\Gamma}\ni \rho_{S,\Gamma}(\Gamma)\mapsto \|U(\rho_{S,\Gamma}(\Gamma))f_\Gamma\|\\& \text{ is a }C^\infty( \bar G_{\breve S,\Gamma})- \text{ function }\}
\eqs which is a dense subset in $C(\Ab_\Gamma)$.

\subsection{Heisenberg holonomy-flux cross-product $^*$-algebras}\label{subsec Heisenbergalgebras}
The structures of Hopf algebras have been presented by Schm\"udgen and Klimyk \cite{KlimSchmued94}. The authors have rewritten the algera of Quantum mechanics in terms of the Hopf $^*$-algebra $(\Pol(\R^n),\bigtriangleup)$ of coordinate functions. Then $\Pol(\R^n)\rtimes \R^n$ is the Heisenberg algebra of Quantum Mechanics, where the elements are differential operators with polynomial coefficients. In this section similar algebras for Loop Quantum Gravity are studied.

First of all in mathematics a further cross-product, which is called the Heisenberg double, using the properties of bialgebras has been constructed. A particular bialgebra is a Hopf algebra. Let $G$ be either a connected compact Lie group and $\textbf{G}$ a simple matrix Lie group. Therefore consider either the Hopf $^*$-algebra $(C^\infty(G),\bigtriangleup)$, or the Hopf $^*$-algebra $(\Pol(\textbf{G}),\bigtriangleup)$ of coordinate functions on the group $\textbf{G}$, or the Hopf $^*$-algebra $(\Rep(G),\bigtriangleup)$ of representative functions on the group $G$. Then restrict the $^*$-algebra $\Pol(\textbf{G})$ or $\Rep(G)$ to a $^*$-subalgebra of $C^\infty(G)$, which is denoted by $\Pol^\infty(\textbf{G})$ or $\Rep^\infty(G)$. Suppose that $\la .,.\ra: \E\times \Pol^\infty(\textbf{G})\rightarrow \CB$ denotes the dual paring of $(\Pol^\infty(\textbf{G}),\bigtriangleup)$ or respectively $(\Rep^\infty(G),\bigtriangleup)$ and Hopf algebra $(\E,\hat\bigtriangleup)$, where $\E$ denote the universal enveloping flux algebra of $G$. The dual pairing is defined by
\beq \la E,f\ra:= \frac{\dif}{\dif t}\Big\vert_{t=0} f(e_G\exp(t E)) \text{ for }E\in\E\text{ and }f\in \Rep^\infty(G)
\eq
Then the Heisenberg double $\Rep^\infty(G)\rtimes_H \E$ is defined by the bilinear map
\beq E\rhd_H f:= \la E,\idf\ra f + \la E,f\ra =\la E,f\ra 
\eq and the multiplication
\beq (f_1,E_1)\cdot_H (f_2,E_2):= \la E_1,\idf\ra f_1f_2\otimes E_2 + \la \idf,f_2\ra f_1\otimes E_1E_2
\eq 
These objects are used to define similar objects for LQG.
\begin{defi} \label{defi Heisenberg}Let $G$ be either a connected compact Lie group or a simple matrix Lie group. Moreover let $\Ab_\Gamma$ for a graph $\Gamma$ be the set of generalised connections for $G$ such that $\Ab_\Gamma$ is identified with $G^N$ naturally, where $N=\vert E_\Gamma\vert$. Suppose that $\breve S$ has the simple intersection surface property for a finite orientation preserved graph system associated to $\Gamma$. Then $\bar\go_{\breve S,\Gamma}$ is identified with $\go^N$.

The \textbf{Heisenberg representation-holonomy-flux $^*$-algebra of the graph $\Gamma$ and the surface set $\breve S$} is given by \[\Rep^\infty(\Ab_\Gamma)\rtimes_H\bar\E_{\breve S,\Gamma}\]
The \textbf{Heisenberg polynomial-holonomy-flux $^*$-algebra of the graph $\Gamma$ and the surface set $\breve S$} is given by \[\Pol^\infty(\Ab_\Gamma)\rtimes_{H,\Pol}\bar\E_{\breve S,\Gamma}\]
The \textbf{Heisenberg holonomy-flux $^*$-algebra of the graph $\Gamma$ and the surface set $\breve S$} is given by \[C^\infty(\Ab_\Gamma)\rtimes_H\bar\E_{\breve S,\Gamma}\]
\end{defi}
Notice that, an element of $\Pol^\infty(\textbf{G}^N)$
is a matrix element $(\ho_\Gamma)_{ij}$ of a $M\times M$ matrix.
This elements are called coordinate functions $v^{i}_j(\ho_\Gamma)=(\ho_\Gamma)_{ij}$ on a simple matrix Lie group $\textbf{G}$.
Then an element of $\Pol^\infty(\textbf{G}^N)\rtimes\bar\E^N$ is for example given by
\beqs (\ho_\Gamma)_{ij}(\ho_\Gamma)_{kl} \otimes E_S(\Gamma)
\eqs where by natural identification $h_\Gamma:=\ho_\Gamma(\Gamma)$ is an element of $\textbf{G}^N$ and $E_S(\Gamma)$ is an element of the universal enveloping flux algebra  $\bar\E^N$ of the Lie group $\textbf{G}^N$. In this case the map
\beqs (\ho_\Gamma)_{ij}\rhd_{H,\Pol} E_S(\Gamma):=u^i_j(E_S(\Gamma)\rhd \ho_\Gamma)
\eqs is bilinear and defines the left $\bar\E^N$-module algebra $\Pol^\infty(\textbf{G}^N)$. 
Clearly these Heisenberg cross-product algebras defined above are not equivalent to a holonomy-flux cross-product $^*$-algebra and they are in particular Heisenberg doubles in the sense of Schm\"udgen and Klimyk. 

Similarly to the different automorphic actions on the $C^*$-algebra $C(\Ab_\Gamma)$ there are a lot of different Heisenberg doubles depending on the number of intersections and the orientations of the surface and paths.
 
\section{Representations and states of the holonomy-flux cross-product $^*$-algebra}\label{subsec Repholfluxcross}
\paragraph*{Surface-orientation-preserving graph-diffeomorphism-invariant states of the holonomy-flux cross-product $^*$-algebra}\hspace{10pt}

A $^*$-representations of the universal enveloping flux algebra $\bar\E_{\breve S,\Gamma}$ is given by the infinitesimal representation $\dif U$ of a unitary representation $U$ of $\bar G_{\breve S,\Gamma}$ in $C(\Ab_\Gamma)$. In general $^*$-representations can be defined on arbitrary $^*$-algebra, but there is no necessary condition that a unitary representation $U$ of the Lie group $\bar G_{\breve S,\Gamma}$ on a Hilbert space exists such that the commutator is equivalent to the infinitesimal representation. 
Mathematically $^*$-representations of Lie algebras are required to recover the structure of the Lie algebra. \begin{defi}
Let $\DD$ be a dense subspace of a Hilbert space $\HS$. A \textbf{$^*$-representation of a Lie algebra} $\go$ on $\DD$ is a mapping $\pi$ of $\go$ into $L(\DD)$ such that
\begin{enumerate}
 \item $\pi(\alpha X +\beta Y)=\alpha\pi(X)+\beta\pi(Y)$
 \item\label{repofLA} $\pi(\bra X,Y\ket)=\pi(X)\pi(Y)-\pi(Y)\pi(X)$
 \item\label{repofLA2} $ \la \pi(X)\phi,\varphi\ra=\la\phi,\pi(X^+)\varphi\ra$
\end{enumerate} whenever $X,Y\in\go$, $\alpha,\beta\in\R$ and $\phi,\varphi\in\DD$ 
where $L(\DD)$ vector space of linear mappings of $X$ into $X$.
\end{defi}
Notice that, from $\pi(X)\in L(\DD)$ and property \ref{repofLA2} it follows that $\pi(X)\in\Lop^+(\DD)$ (refer to Appendix). 

Let $\breve S$ be a surface set with same surface intersection property for  a finite orientation preserved graph system associated to a graph $\Gamma$.

In LQG the flux operators $E_S(\Gamma)$ are represented as differential operators $\dif U$ on the Hilbert space $\HS_\Gamma:=L^2(\Ab_\Gamma,\mu_\Gamma)$ of square integrable functions on $\Ab_\Gamma$. 
The domain $\DD(\dif U)$ of the infinitesimal representations $\dif U$ is the set of all functions $\psi_\Gamma$ in $L^2(\Ab_\Gamma,\mu_\Gamma)$ such that for each $E_S(\Gamma)\in\bar \go_{\breve S,\Gamma}$ the limit
\beqs \dif U(E_S(\Gamma))\psi_\Gamma :=  \lim_{t\rightarrow 0}\frac{(U(\exp(t E_S(\Gamma)))-\idf)\psi_\Gamma}{t}\eqs exists weakly in $L^2(\Ab_\Gamma,\mu_\Gamma)$. Notice that, there is a domain $\DD(\dif \UD)$ for all infinitesimals that corresponds to unitary representations $U\in\Rep(\bar G_{\breve S,\Gamma},\KD(L^2(\Ab_\Gamma,\mu_\Gamma)))$. The requirement of the existence of the limit is equivalent to the condition that, the function $\rho_S(\Gamma)\mapsto\la U(\rho_{S,\Gamma}(\Gamma)\psi_\Gamma,\phi_\Gamma\ra$ is in $C^\infty( \bar G_{\breve S,\Gamma})$ for each $\phi_\Gamma\in L^2(\Ab_\Gamma,\mu_\Gamma)$. With no doubt $C^\infty(\Ab_ \Gamma)\subset \DD(\dif U)$ holds.

Now recognize a short remark. The domain of the unbounded operator $\dif U$ can be rewritten in the following way.
Let $\{X_{S_1},...,X_{S_d}\}$ be a basis of $\bar\go_{\breve S,\Gamma}$ where $S_1,...,S_d\in\breve S$. Then by a corollary \cite[Cor.10.1.10]{Schmuedgen90} the domain $D(\dif U)$ is equivalent to the set of all elements $\psi_\Gamma\in L^2(\Ab_\Gamma,\mu_\Gamma)$ such that for all $X_{S_k}$ where $k=1,...,d$ and $\phi_\Gamma\in L^2(\Ab_\Gamma,\mu_\Gamma)$ the function $\R\ni t\mapsto \la U(\exp(tX_{S_k})\psi_\Gamma,\phi_\Gamma\ra$ is in $C^\infty(\R)$. Notice that, $t\mapsto U(\exp(tX_k))$ is a unitary representation of the Lie group $\R$ for each element $X_{S_k}$ of the basis of $\bar \go_{\breve S,\Gamma}$, too. Consequently it is assumed that, $U\in\Rep(\R,\KD(\HS_\Gamma))$ for each $X_{S_k}$. The operators $X_{S_k}$ corresponding infinitesimal representation $\dif U(X_{S_K})$ are called the infinitesimal generators of $U$. This reformulation can be used to understand the connection between the construction of the holonomy-flux $^*$-algebra of Lewandowski, Oko\l{}\'{o}w, Sahlmann and Thiemann \cite{LOST06} and the Weyl $C^*$-algebra of Fleischhack \cite{Fleischhack06}.

Summarising the operators $e^{\overleftarrow{L}}$ and $e^{\overleftarrow{R}}$ are defined on a dense linear subspace $\DD(\dif \UD)$ of the Hilbert space $\HS_\Gamma:=L^2(\Ab_\Gamma,\mu_\Gamma)$, and their adjoint operators  $e^{\overrightarrow{R}},e^{\overrightarrow{L}}$ defined on $\DD(\dif \UD^*)$. In particular, $e^{\overleftarrow{L}}$ and $e^{\overleftarrow{R}}$ are elements of the set 
\beqs\Lop^+_\UD(\DD(\dif \UD)):=\{\dif U\in\Lop(\DD(\dif \UD)):\text{ } &U\in\Rep( \bar G_{\breve S,\Gamma},\KD(\HS_\Gamma)),\\& \DD(\dif \UD)\subset \DD(\dif \UD^*), \dif U^*\DD(\dif \UD)\subset \DD(\dif \UD)\}
\eqs where $\Lop^+_\UD(\DD(\dif \UD))\subset \Lop(\DD(\dif \UD))$ and $\Lop(\DD(\dif \UD))$ denotes the set of all linear operators from $\DD(\dif \UD)$ to $\DD(\dif \UD)$ and $\DD(\dif \UD^*)$ the domain of the adjoint of the linear operator $\dif U$. 

Now it is obvious that, $\dif U$ is a $^*$-representation of $\bar\go_{\breve S,\Gamma}$ on $\DD(\dif \UD)$.

In analogy to the result of Schm\"udgen in \cite[Prop 10.1.6]{Schmuedgen90} the following proposition holds.
\begin{prop}Let $\breve S$ be a surface set with same surface intersection property for a finite orientation preserved graph system associated to a graph $\Gamma$.

Let $\bar\E_{\breve S,\Gamma}$ be the universal enveloping Lie flux $^*$-algebra for a surface set $\breve S$ and $U$ a unitary representation of $\bar G_{\breve S,\Gamma}$ on the Hilbert space $L^2(\Ab_\Gamma,\mu_\Gamma)$. 

Then $(\dif U(E_S(\Gamma)))(f_\Gamma):=e^{\overrightarrow{L}}(f_\Gamma)$ defines a $^*$-repesentation $\dif U$ of $\bar\E_{\breve S,\Gamma}$ on a dense subdomain $\DD(\dif U)$ of the Hilbert space $L^2(\Ab_\Gamma,\mu_\Gamma)$. 
\end{prop}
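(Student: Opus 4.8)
The plan is to assemble the representation of the full enveloping algebra from the already-established Lie-algebra $^*$-representation on the smooth domain, and to lean on the universal property recorded in Corollary \ref{cor uniquhom}. The starting data is the observation made immediately before the statement: the infinitesimal representation $\dif U$ is a $^*$-representation of the Lie flux algebra $\bar\go_{\breve S,\Gamma}$ on the dense subspace $\DD(\dif U)$, with each $\dif U(E_S(\Gamma))$ acting as the skew-symmetric operator $e^{\overrightarrow{L}}$, and with $\DD(\dif U)\subset\DD(\dif U^*)$ together with $\dif U^*\DD(\dif U)\subset\DD(\dif U)$, so that $\dif U$ takes values in the $O^*$-algebra $\Lop^+_\UD(\DD(\dif U))$.

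First I would verify that $\DD(\dif U)$ is a common invariant core for all generators, i.e. $\dif U(E_S(\Gamma))\DD(\dif U)\subset\DD(\dif U)$ for every $E_S(\Gamma)\in\bar\go_{\breve S,\Gamma}$; this is exactly what the inclusions in the definition of $\Lop^+_\UD(\DD(\dif U))$ encode, and it is the standard invariance of the space of $C^\infty$-vectors of the unitary representation $U$ of $\bar G_{\breve S,\Gamma}$. This invariance is what makes each composition $\dif U(E_{S_1}(\Gamma))\cdots\dif U(E_{S_n}(\Gamma))$ a well-defined element of $\Lop^+_\UD(\DD(\dif U))$, so that $\Lop^+_\UD(\DD(\dif U))$ genuinely is a unital associative complex algebra into which $\bar\go_{\breve S,\Gamma}$ maps as a homomorphism of Lie algebras in the sense of the earlier definition.

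Next, since $\Lop^+_\UD(\DD(\dif U))$ is a unital algebra and $\dif U:\bar\go_{\breve S,\Gamma}\rightarrow\Lop^+_\UD(\DD(\dif U))$ is such a Lie homomorphism, Corollary \ref{cor uniquhom} supplies a unique unit-preserving algebra homomorphism $\overline{\dif U}:\bar\E_{\breve S,\Gamma}\rightarrow\Lop^+_\UD(\DD(\dif U))$ extending $\dif U$, characterised by $\overline{\dif U}(E_{S_1}(\Gamma)\cdots E_{S_n}(\Gamma))=\dif U(E_{S_1}(\Gamma))\cdots\dif U(E_{S_n}(\Gamma))$; it automatically annihilates the defining ideal $I$ of $\bar\E_{\breve S,\Gamma}$ because $\dif U$ already intertwines the Lie bracket with the operator commutator. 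It then remains to intertwine the involutions, i.e. to check $\la\overline{\dif U}(E)\phi,\psi\ra=\la\phi,\overline{\dif U}(E^+)\psi\ra$ for all $E\in\bar\E_{\breve S,\Gamma}$ and $\phi,\psi\in\DD(\dif U)$. Since $^+$ is antilinear and antimultiplicative with $E_S(\Gamma)^+=-E_S(\Gamma)$ on generators, and each $\dif U(E_S(\Gamma))$ is skew-symmetric on $\DD(\dif U)$, one shifts the operators across the inner product one at a time, $\la\dif U(X_1)\cdots\dif U(X_n)\phi,\psi\ra=\la\dif U(X_2)\cdots\dif U(X_n)\phi,\dif U(X_1^+)\psi\ra$, each step legitimate precisely because $\dif U(X_1^+)\psi$ again lies in $\DD(\dif U)$; iterating yields $\la\phi,\dif U(X_n^+)\cdots\dif U(X_1^+)\psi\ra=\la\phi,\overline{\dif U}(E^+)\psi\ra$. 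Density of $\DD(\dif U)$ follows from $C^\infty(\Ab_\Gamma)\subset\DD(\dif U)$ and the density of $C^\infty(\Ab_\Gamma)$ in $L^2(\Ab_\Gamma,\mu_\Gamma)$.

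I expect the main obstacle to be the domain bookkeeping rather than any algebraic identity: one must be certain that $\DD(\dif U)$ is truly invariant under all generators, so that arbitrary products in $\bar\E_{\breve S,\Gamma}$ act and the stepwise adjoint computation never leaves the domain. This is the content of the $C^\infty$-vector (G\aa rding) invariance theorem for the unitary representation $U$, and it is the reason the statement is posed in parallel with Schm\"udgen's Proposition 10.1.6; once invariance is secured, both the universal extension and the $^*$-property are formal.
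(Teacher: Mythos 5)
Your proposal is correct and follows essentially the same route as the paper's own proof: both rest on the generator-level $^*$-representation of $\bar\go_{\breve S,\Gamma}$ on $\DD(\dif U)$ and invoke Corollary \ref{cor uniquhom} to extend it uniquely to the enveloping algebra $\bar\E_{\breve S,\Gamma}$. The difference is only one of emphasis: the paper spends its effort re-verifying the Lie-algebra-level properties inside the proof (skew-symmetry giving $\dif U(X_S)^*=\dif U(X_S^+)$, and the bracket--commutator compatibility via weak continuity of $U$), whereas you take those as established beforehand and instead spell out the domain invariance and the stepwise transfer of the involution to products, details the paper leaves implicit.
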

\begin{proofs} 
Let $\tilde\tau_1$ be a homomorphism of the Lie algebra $\bar\go_{\breve S,\Gamma}$ in $C^\infty(\Ab_\Gamma)$. Furthermore $\dif \tilde U$ is a $^*$-homomorphism of $\bar\go_{\breve S,\Gamma}$ into the $O^*$-algebra $\Lop^+(\DD(\dif U))$ such that $\dif \tilde U(\idf)=1$. Then $\dif \tilde U$ defines a $^*$-representation of $\bar\go_{\breve S,\Gamma}$ on the domain $\DD(\dif U)$. There exists a unique extension of $\dif \tilde U$ to an homomorphism $\dif U$ of the $^*$-algebra $\bar\E_{\breve S,\Gamma}$ into the $^*$-algebra $\Lop^+(\DD(\dif U))$, which defines a $^*$-representation of $\bar\E_{\breve S,\Gamma}$ by corollary \ref{cor uniquhom}.

Consequently one shows that, the map $\bar\go_{\breve S,\Gamma}\ni X_S\mapsto \dif U(X_S)$ is a $^*$-representation of $\bar\go_{\breve S,\Gamma}$  on $\DD(\dif U)$. For a suitable surface $S$ and a graph $\Gamma$ set $E_S(\Gamma)=X_S$. First derive that,
\beqs &\la\dif U(X_S)\varphi_\Gamma,\phi_\Gamma\ra
=\frac{\dif}{\dif t}\Big\vert_{t=0}\la U(\exp(tX_S)) \psi_\Gamma,\varphi_\Gamma\ra=\frac{\dif}{\dif t}\Big\vert_{t=0}\la \psi_\Gamma, U^*(\exp(tX_S)) \varphi_\Gamma\ra\\
&=\frac{\dif}{\dif t}\Big\vert_{t=0}\la \psi_\Gamma, U(-\exp(tX_S)) \varphi_\Gamma\ra=-\frac{\dif}{\dif t}\Big\vert_{t=0}\la \psi_\Gamma, U(\exp(tX_S)) \varphi_\Gamma\ra = - \la\varphi_\Gamma,\dif U(X_S)\phi_\Gamma\ra\\
\eqs yields for $\psi_\Gamma,\varphi_\Gamma\in\HS_\Gamma$. Remember that, $X^+_S=-X_S$ for $E_S(\Gamma)\in \go_{\breve S,\Gamma}$ to conclude $\dif U(X_S)^*=-\dif U(X_S)=\dif U(X_S^+)$.

Hence the crucial property is \ref{repofLA}. Let $X\mapsto U(\exp(tX))$ be weakly continuous, then derive
\beqs &\la (\dif U(X)\dif U(Y)- \dif U(Y)\dif U(X))\varphi_\Gamma,\phi_\Gamma\ra\\
&=\frac{\dif}{\dif t}\Big\vert_{t=0} \Big( \frac{\dif}{\dif s}\Big\vert_{s=0} \big\la U(\exp(-tX_S)\exp(-sY_S)) \psi_\Gamma,\varphi_\Gamma\big\ra\Big)\\
&\quad -\frac{\dif}{\dif s}\Big\vert_{s=0} \Big( \frac{\dif}{\dif t}\Big\vert_{t=0} \big\la U(\exp(-sY_S)\exp(-tX_S)) \psi_\Gamma,\varphi_\Gamma\big\ra\Big)\\[3pt]
&=  \big\la \left(Y_S X_S  - X_SY_S \right)\psi_\Gamma,\varphi_\Gamma\big\ra=\big\la\bra Y_S,X_S\ket  \psi_\Gamma,\varphi_\Gamma\big\ra\\
&=\frac{\dif}{\dif t}\Big\vert_{t=0}\big\la U(t\bra Y_S,X_S\ket)  \psi_\Gamma,\varphi_\Gamma\big\ra\\
&=\frac{\dif}{\dif t}\Big\vert_{t=0}\big\la U(-t\bra X_S,Y_S\ket)  \psi_\Gamma,\varphi_\Gamma\big\ra
\eqs
\end{proofs}

Remark that, the unbounded operator $\dif U$ of $\bar \E_{\breve S,\Gamma}$ and the operator $\dif U(E_S(\Gamma))$ for a fixed element $E_S(\Gamma)\in\bar \E_{\breve S,\Gamma}$ are not equivalent, since for example the domains are different. Observe that, for an infinitesimal generator $\dif U(E_S(\Gamma))$ of the strongly continuous one-parameter unitary group $\R\ni t\mapsto U(\exp(tE_S(\Gamma)))$ on $L^2(\Ab_\Gamma,\mu_\Gamma)$ define the self-adjoint $i\dif U(E_S(\Gamma))$ on the domain $D(\dif U(E_S(\Gamma)))$. Clearly the subset $D(\dif U)$ is contained in $D(\dif U(E_S(\Gamma)))$. Therefore different special flux operators $E_S(\Gamma)$ or all flux operators $E_S(\Gamma)$ can be analysed. 

Moreover the operators the $\dif U$ and $\dif U(E_S(\Gamma))$ have different self-adjointness properties. Indeed the $^*$-representation $\dif U$ on  $D(\dif U)$ is self-adjoint \cite[Cor.10.2.3]{Schmuedgen90}, whereas $\dif U(E_S(\Gamma))$ for any hermitian elliptic element $E_S(\Gamma)$ of $\bar\E_{\breve S,\Gamma}$ on the domain $D(\dif U)$ is essentially self-adjoint \cite[Cor. 10.2.5]{Schmuedgen90}. 

Finally for general elliptic elements in $\bar\E_{\breve S,\Gamma}$ the adjoint operator $\dif U(E_S(\Gamma))^*$ is equivalent to the closure w.r.t. the graph topology of $\dif U(E_S^+(\Gamma))$, \cite[Cor.10.2.7]{Schmuedgen90}. For an abelian or compact Lie group $G$ it turns out that, the adjoint $\dif U(E_S(\Gamma))^*$ is equivalent to the closure w.r.t. the graph topology of $\dif U(E_S^+(\Gamma))$ for all $E_S(\Gamma)\in\bar\E_{\breve S,\Gamma}$. 

Summarising the issue of domains of the different differential operators have to be carefully analysed. 

According to the observations of the Lie flux group $C^*$-algebra, the universal enveloping flux $^*$-algebra $\bar\E_{\breve S,\Gamma}$ can be considered. This algebra itself can be shown to be equivalent to the algebra of differential operators on $C^\infty(\bar G_{\breve S,\Gamma})$. Observe that, due to the different structure of $\bar G_{\breve S,\Gamma}$ and $\Ab_\Gamma$ the identification of both sets is valid only for suitable surface sets and graphs.

\begin{prop}Let $G$ be a compact Lie group and the set $\breve S$ has the same intersection surface property for a finite orientation preserved graph system associated to a graph $\Gamma$. Set $N=\vert E_\Gamma\vert$ and identify $\Ab_\Gamma$ with $G^N$ naturally.

Then the universal enveloping Lie flux $^*$-algebra $\bar\E_{\breve S,\Gamma}$ is $^*$-isomorphic to the $O^*$-algebra $\DD_{\breve S}(\bar G_{\breve S,\Gamma})$ of differential operators on $C^\infty(G^N)$ in the Hilbert space $L^2(G^N,\mu_N)$, where $\DD_{\breve S}(\bar G_{\breve S,\Gamma})$ is the algebra of all right-invariant differential operators \(\dif U_{\overleftarrow{L}}(\bar\E_{\breve S,\Gamma})\big\vert_{C^\infty(G^N)}\) on $G^N$.
\end{prop}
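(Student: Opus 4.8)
The plan is to exhibit the isomorphism directly via the right-invariant flux vector fields and then establish bijectivity from the universal property together with the Poincar\'e--Birkhoff--Witt (PBW) theorem. Because $\breve S$ has the same surface intersection property for $\Gamma$, Proposition \ref{prop Liealgebrastructfluxes} guarantees that $\bar\go_{\breve S,\Gamma}$ is a Lie algebra (abstractly a diagonal copy of $\go$ inside $\go^N$ under the identification $\Ab_\Gamma\cong G^N$), and each generator $E_S(\Gamma)$ acts on $C^\infty(G^N)$ as the right-invariant first-order operator $e^{\overrightarrow{L}}$ of \eqref{CommRel1}. First I would set $\dif U_{\overleftarrow{L}}(E_S(\Gamma)):=e^{\overrightarrow{L}}$ and check that $X_S\mapsto e^{\overrightarrow{L}}$ is a homomorphism of $\bar\go_{\breve S,\Gamma}$ into the $O^*$-algebra $\Lop^+(C^\infty(G^N))$: linearity is clear, the bracket relation $\bra E_{S_1}(\Gamma),E_{S_2}(\Gamma)\ket$ is reproduced by the module-algebra action already recorded for $C^\infty(\Ab_\Gamma)$, and the involution matches since $E_S(\Gamma)^+=-E_S(\Gamma)$ corresponds to the formal adjoint of a right-invariant vector field with respect to the bi-invariant Haar measure $\mu_N$, which equals its negative because $G$ is compact, hence unimodular.

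By Corollary \ref{cor uniquhom} this Lie algebra homomorphism extends uniquely to a unit-preserving $^*$-homomorphism $\dif U_{\overleftarrow{L}}:\bar\E_{\breve S,\Gamma}\rightarrow\Lop^+(C^\infty(G^N))$, which is precisely the infinitesimal representation featured in the proposition preceding this one. Surjectivity onto $\DD_{\breve S}(\bar G_{\breve S,\Gamma})$ is then immediate, since that $O^*$-algebra is defined as the image $\dif U_{\overleftarrow{L}}(\bar\E_{\breve S,\Gamma})\big\vert_{C^\infty(G^N)}$; the only thing worth remarking is that this image genuinely exhausts the right-invariant differential operators on $G^N$, which follows from the classical fact that every such operator is a polynomial in the right-invariant vector fields attached to a basis of $\go$.

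The essential work, and the step I expect to be the main obstacle, is injectivity. Here I would fix a basis $\{X_{S_1},\dots,X_{S_d}\}$ of $\bar\go_{\breve S,\Gamma}$ and invoke PBW to write any element of $\bar\E_{\breve S,\Gamma}$ as a finite linear combination of ordered monomials $X_{S_1}^{k_1}\cdots X_{S_d}^{k_d}$. Each such monomial is sent to a differential operator on $G^N$ of order $k_1+\dots+k_d$ whose principal symbol is the corresponding product of the symbols of the right-invariant vector fields. The decisive point is that the fields attached to $X_{S_1},\dots,X_{S_d}$ are pointwise linearly independent on $G^N$: already on a single factor the right-invariant fields of a basis of $\go$ form a frame, and the diagonal embedding keeps them independent. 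Consequently their symmetric products of fixed total degree are linearly independent in the symbol algebra at each point, so a kernel element would have vanishing top-degree symbol, forcing its highest-order coefficients to be zero; a descending induction on the differential order then annihilates all coefficients and yields injectivity.

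This injectivity argument is the compact-group analogue of Schm\"udgen's identification of the enveloping algebra with the invariant differential operators, so it parallels the proof of the result cited just before the proposition. The one delicate aspect is to run the symbol computation for the genuine operators on the manifold $G^N$ rather than for restrictions to a proper invariant subspace; compactness of $G$ ensures $C^\infty(G^N)$ carries enough jets at every point (it separates higher-order jets) for the principal-symbol map to be injective on each homogeneous degree, which closes the argument and promotes the bijective $^*$-homomorphism $\dif U_{\overleftarrow{L}}$ to the asserted $^*$-isomorphism $\bar\E_{\breve S,\Gamma}\cong\DD_{\breve S}(\bar G_{\breve S,\Gamma})$.
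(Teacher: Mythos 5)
Your overall strategy is sound, and it in fact supplies an argument where the paper provides none: the proposition is stated without proof, resting only on the preceding appeal to Schm\"udgen's identification of an enveloping algebra with invariant differential operators. Your three ingredients are exactly the classical ones. The generators act by the diagonal right-invariant fields $e^{\overrightarrow{L}}$, which are skew-symmetric on $C^\infty(G^N)\subset L^2(G^N,\mu_N)$ because Haar measure on a compact group is bi-invariant, so the assignment is a homomorphism of $\bar\go_{\breve S,\Gamma}$ into $\Lop^+(C^\infty(G^N))$ compatible with $E_S(\Gamma)^+=-E_S(\Gamma)$; corollary \ref{cor uniquhom} extends it uniquely to a unit-preserving $^*$-homomorphism of $\bar\E_{\breve S,\Gamma}$; and injectivity follows from PBW together with your principal-symbol induction, which is valid here precisely because the diagonal fields attached to a basis $X_{S_1},\dots,X_{S_d}$ of $\bar\go_{\breve S,\Gamma}$ are pointwise linearly independent on $G^N$ (their first components already form part of a frame on the first factor). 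This is a complete proof of the statement as the paper defines $\DD_{\breve S}(\bar G_{\breve S,\Gamma})$, namely as the image $\dif U_{\overleftarrow{L}}(\bar\E_{\breve S,\Gamma})\big\vert_{C^\infty(G^N)}$.

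One remark in your surjectivity paragraph is false and should be deleted: the image does \emph{not} exhaust the right-invariant differential operators on $G^N$ when $N\geq 2$. Under the same surface intersection property the flux Lie algebra is only the diagonal copy of $\go$ inside $\go\oplus\dots\oplus\go$, so its enveloping algebra maps onto the subalgebra generated by the $d$ diagonal fields, not onto the image of the enveloping algebra of $\go^N$, which is what ``all right-invariant differential operators on $G^N$'' means; the classical fact you invoke (every invariant operator is a polynomial in the invariant fields of a basis) applies to a basis of the Lie algebra of the group being translated, and here the acting group is the diagonal flux group $\bar G_{\breve S,\Gamma}\cong G$, not $G^N$. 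Concretely, for $G=U(1)$ and $N=2$ the image is $\CB[\partial_1+\partial_2]$, a proper subalgebra of $\CB[\partial_1,\partial_2]$: the operator $\partial_1$ is right-invariant on $G^2$ but is not a polynomial in $\partial_1+\partial_2$. This error is harmless to your proof, since surjectivity onto $\DD_{\breve S}(\bar G_{\breve S,\Gamma})$ is definitional; but if the proposition were read as asserting equality with the full algebra of right-invariant operators on $G^N$, it would be false, so the exhaustion claim must not be left standing.
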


Summarising there are different involutive algebras, like the analytic holonomy algebra associated a graph, the universal enveloping Lie flux $^*$-algebra associated a graph and a surface set or the holonomy-flux cross-product $^*$-algebra associated a graph represented on the Hilbert space $\HS_\Gamma$.

\begin{theo}\label{theo repholfluxcrossstar}
Let $\breve S$ be a surface set having the same intersection surface property for a finite orientation preserved graph system associated to $\Gamma$.

There exists the following $^*$-representations of the analytic holonomy $C^*$-algebra $C^\infty(\Ab_\Gamma)$, the universal enveloping Lie flux $^*$-algebra $\bar\E_{\breve S,\Gamma}$ and the holonomy-flux cross-product $^*$-algebra $C^\infty(\Ab_\Gamma)\rtimes_{L}\bar\E_{\breve S,\Gamma}$ for a graph $\Gamma$ and a surface set $\breve S$ on the Hilbert space $\HS_\Gamma=L^2(\Ab_\Gamma,\mu_\Gamma)$ on $C^\infty(\Ab_\Gamma)$:
\beqs 
&\Phi_M(f_\Gamma) \psi_\Gamma = f_\Gamma\psi_\Gamma\text{ for } f_\Gamma\in C^\infty(\Ab_\Gamma) \\
&\Phi_M(f^*_\Gamma) \psi_\Gamma = \overline{f_\Gamma}\psi_\Gamma\text{ for } f_\Gamma\in C^\infty(\Ab_\Gamma) \\
&\dif U (E_S(\Gamma))\psi_\Gamma =\bra E_S(\Gamma), \psi_\Gamma \ket\text{ for } E_S(\Gamma)\in \bar\E_{\breve S,\Gamma} \\
&\dif U(E_S(\Gamma)^+)\psi_\Gamma =\bra E_S(\Gamma)^+, \psi_\Gamma \ket\text{ for } E_S(\Gamma)\in \bar\E_{\breve S,\Gamma} \\
&\pi(f_\Gamma\otimes E_S(\Gamma)) \psi_\Gamma = 
\frac{1}{2} \bra E_S(\Gamma), f_\Gamma \ket \psi_\Gamma +\frac{1}{2} f_\Gamma\bra E_S(\Gamma),\psi_\Gamma\ket \text{ for } f_\Gamma\otimes E_S(\Gamma)\in C^\infty(\Ab_\Gamma)\rtimes\bar\E_{\breve S,\Gamma}\\
&\pi((f_\Gamma\otimes E_S(\Gamma))^*) \psi_\Gamma = 
\frac{1}{2}\bra E_S(\Gamma)^+, f^*_\Gamma \ket \psi_\Gamma + \frac{1}{2} f_\Gamma\bra E_S(\Gamma)^+,\psi_\Gamma\ket
\text{ for } f_\Gamma\otimes E_S(\Gamma)\in C^\infty(\Ab_\Gamma)\rtimes\bar\E_{\breve S,\Gamma}
\eqs whenever $\psi_\Gamma\in C^\infty(\Ab_\Gamma)$.
For two surfaces $S_1\cap S_2=\varnothing$ the representation satisfies
\beqs &\pi(\bra f_\Gamma^1\otimes E_{S_1}(\Gamma), f_\Gamma^2 \otimes E_{S_2}(\Gamma)\ket)\psi_\Gamma 
\\
&= \frac{1}{4} \bra E_{S_2}(\Gamma), f_\Gamma^1\bra E_{S_1}(\Gamma), f^2_\Gamma\ket\ket \psi_\Gamma 
+\frac{1}{4} f_\Gamma^1\bra E_{S_1}(\Gamma), f^2_\Gamma\ket \bra E_{S_2}(\Gamma),\psi_\Gamma\ket\\&\quad
-\frac{1}{4} \bra E_{S_1}(\Gamma), f_\Gamma^2\bra E_{S_2}(\Gamma), f^1_\Gamma\ket \ket \psi_\Gamma 
-\frac{1}{4} f_\Gamma^2\bra E_{S_2}(\Gamma), f^1_\Gamma\ket\bra E_{S_1}(\Gamma),\psi_\Gamma\ket
\eqs whenever $\psi_\Gamma\in C^\infty(\Ab_\Gamma)$.

The representation $\pi$ of the holonomy-flux cross-product $^*$-algebra is called the \textbf{Heisenberg representation of $C^\infty(\Ab_\Gamma)\rtimes_{L}\bar\E_{\breve S,\Gamma}$ on $\HS_\Gamma$}.
\end{theo}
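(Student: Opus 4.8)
The plan is to establish the three maps as $^*$-representations one at a time on the common invariant dense domain $C^\infty(\Ab_\Gamma)\subseteq\DD(\dif U)\subseteq\HS_\Gamma$, so that every operator lies in the $O^*$-algebra $\Lop^+(\DD(\dif U))$ and $^*$ denotes the $O^*$-adjoint. For $\Phi_M$ there is nothing beyond recalling that $C^\infty(\Ab_\Gamma)$ is a commutative $^*$-subalgebra of $C(\Ab_\Gamma)$ acting by (bounded) multiplication operators, so that $\Phi_M(f_\Gamma)\Phi_M(k_\Gamma)=\Phi_M(f_\Gamma k_\Gamma)$ and $\Phi_M(f_\Gamma)^*=\Phi_M(\overline{f_\Gamma})=\Phi_M(f_\Gamma^*)$ hold, with $C^\infty(\Ab_\Gamma)$ invariant under multiplication. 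For $\dif U$ I would simply invoke the preceding proposition, in which $\dif U$ was already shown to be a $^*$-representation of $\bar\E_{\breve S,\Gamma}$: the Lie homomorphism $\bar\go_{\breve S,\Gamma}\ni X_S\mapsto\dif U(X_S)$ extends uniquely to $\bar\E_{\breve S,\Gamma}$ by Corollary \ref{cor uniquhom}, the bracket is represented by the operator commutator, and $\dif U(E_S(\Gamma))^*=\dif U(E_S(\Gamma)^+)$ follows from $E_S(\Gamma)^+=-E_S(\Gamma)$ together with the weak continuity of $t\mapsto U(\exp(tX_S))$.

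The substance of the theorem is the map $\pi$, which in operator form reads $\pi(f_\Gamma\otimes E_S(\Gamma))=\tfrac12\Phi_M(\bra E_S(\Gamma),f_\Gamma\ket)+\tfrac12\Phi_M(f_\Gamma)\,\dif U(E_S(\Gamma))$; linearity is immediate. Compatibility with the involution I would check first: expanding $\pi(f_\Gamma\otimes E_S(\Gamma))^*$ on $C^\infty(\Ab_\Gamma)$ using that the multiplication operators have $O^*$-adjoint given by complex conjugation and that $\dif U(E_S(\Gamma))^*=\dif U(E_S(\Gamma)^+)$, and then comparing with the stated expression for $\pi((f_\Gamma\otimes E_S(\Gamma))^*)$, the skew-adjointness $E_S(\Gamma)^+=-E_S(\Gamma)$ and the flip Lemma \ref{lem flip} (relating $e^{\overrightarrow{L}}$ to $e^{\overleftarrow{R}}$ under $\breve{\,\cdot\,}$) supply exactly the bookkeeping needed to match the two sides.

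The decisive and most delicate step is compatibility of $\pi$ with the cross-product multiplication $\cdot_L$. Here I would expand $\pi(f^1_\Gamma\otimes E_{S_1}(\Gamma))\,\pi(f^2_\Gamma\otimes E_{S_2}(\Gamma))$ against $\psi_\Gamma\in C^\infty(\Ab_\Gamma)$, using repeatedly that each $\dif U(E_S(\Gamma))$ is a derivation, i.e. $\dif U(E_S(\Gamma))(g_\Gamma h_\Gamma)=(\dif U(E_S(\Gamma))g_\Gamma)h_\Gamma+g_\Gamma\,\dif U(E_S(\Gamma))h_\Gamma$ — precisely the content of the lemma that $C^\infty(\Ab_\Gamma)$ is a left $\bar\E_{\breve S,\Gamma}$-module algebra — and that nested brackets collapse to the action of products via $\dif U(E_{S_1}(\Gamma)E_{S_2}(\Gamma))=\dif U(E_{S_1}(\Gamma))\dif U(E_{S_2}(\Gamma))$. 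This must be matched against $\pi$ applied to $f^1_\Gamma(E_{S_1}(\Gamma)\rhd f^2_\Gamma)\otimes E_{S_2}(\Gamma)+f^1_\Gamma f^2_\Gamma\otimes E_{S_1}(\Gamma)E_{S_2}(\Gamma)$. I expect the main obstacle to lie exactly in reconciling the Weyl-symmetric ordering — the two factors $\tfrac12$ — with associativity: the purely multiplicative terms, the single-derivation terms and the double-derivation terms must be grouped so that they balance, and the naive expansion mixes them with coefficients that demand careful ordering. I would therefore first settle the two extreme cases, the pure configuration element $f_\Gamma\otimes\idf$ and the pure flux element $\idf\otimes E_S(\Gamma)$, and only then treat the general product.

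The clean organizing computation is the commutator for disjoint surfaces. For $S_1\cap S_2=\varnothing$ the enveloping-algebra bracket $\bra E_{S_1}(\Gamma),E_{S_2}(\Gamma)\ket$ vanishes, so by \eqref{eq LiealgstructureHolFlux} the cross-product bracket reduces to $f^1_\Gamma(E_{S_1}(\Gamma)\rhd f^2_\Gamma)\otimes E_{S_2}(\Gamma)-f^2_\Gamma(E_{S_2}(\Gamma)\rhd f^1_\Gamma)\otimes E_{S_1}(\Gamma)$. Forming $\pi(f^1_\Gamma\otimes E_{S_1}(\Gamma))\pi(f^2_\Gamma\otimes E_{S_2}(\Gamma))-\pi(f^2_\Gamma\otimes E_{S_2}(\Gamma))\pi(f^1_\Gamma\otimes E_{S_1}(\Gamma))$ and applying the Leibniz rule, all purely multiplicative cross-terms cancel and the surviving four terms reproduce exactly the displayed four-term formula; the represented form of \eqref{eq commutatordisjointelements2} is then recovered by the substitution $f^1_\Gamma=f_\Gamma$, $E_{S_1}(\Gamma)=\idf$, $f^2_\Gamma=\idf$. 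Throughout, the only analytic care needed is that all identities be asserted on the dense domain $C^\infty(\Ab_\Gamma)$, where the unbounded $\dif U(E_S(\Gamma))$ act as honest derivations, the self-adjointness and domain questions being deferred to the cited corollaries of Schm\"udgen.
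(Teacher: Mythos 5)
Your treatment of $\Phi_M$, of $\dif U$ (via the preceding proposition and Corollary \ref{cor uniquhom}), and your checks of linearity and of the involution for $\pi$ coincide with what the paper actually verifies: its proof consists of exactly those computations plus the disjoint-surface commutator, and nothing else. The genuine gap lies in what you call the decisive step. Multiplicativity of $\pi$ with respect to $\cdot_L$ is not merely delicate -- it is false, so no amount of Leibniz-rule bookkeeping will close it. Take $f^1_\Gamma=f^2_\Gamma=\idf_\Gamma$ and $E_{S_1}(\Gamma)=E_{S_2}(\Gamma)=E_S(\Gamma)\in\bar\go_{\breve S,\Gamma}$. Since $E_S(\Gamma)\rhd\idf_\Gamma=0$, the cross-product multiplication gives $(\idf_\Gamma\otimes E_S(\Gamma))\cdot_L(\idf_\Gamma\otimes E_S(\Gamma))=\idf_\Gamma\otimes E_S(\Gamma)^2$, whence, using the composition rule $\bra E_{S}(\Gamma)^2,\psi_\Gamma\ket=\bra E_S(\Gamma),\bra E_S(\Gamma),\psi_\Gamma\ket\ket$ for $\tau$,
\beqs \pi\big((\idf_\Gamma\otimes E_S(\Gamma))\cdot_L(\idf_\Gamma\otimes E_S(\Gamma))\big)\psi_\Gamma=\tfrac{1}{2}\bra E_S(\Gamma),\bra E_S(\Gamma),\psi_\Gamma\ket\ket,
\eqs
while $\pi(\idf_\Gamma\otimes E_S(\Gamma))\psi_\Gamma=\tfrac{1}{2}\bra E_S(\Gamma),\psi_\Gamma\ket$ yields
\beqs \pi(\idf_\Gamma\otimes E_S(\Gamma))\,\pi(\idf_\Gamma\otimes E_S(\Gamma))\psi_\Gamma=\tfrac{1}{4}\bra E_S(\Gamma),\bra E_S(\Gamma),\psi_\Gamma\ket\ket.
\eqs
The mismatch $\tfrac{1}{2}\neq\tfrac{1}{4}$ shows $\pi(ab)\neq\pi(a)\pi(b)$: the Weyl-symmetric ordering respects the linear structure, the involution and (for disjoint surfaces) the commutator, but not the associative product. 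This is precisely why the paper's own proof never attempts the verification you plan, and you should not attempt it either; the two ``extreme cases'' you propose to settle first already exhibit the failure, since $\pi(\idf_\Gamma\otimes E_S(\Gamma))=\tfrac{1}{2}\dif U(E_S(\Gamma))$ carries an irremovable factor $\tfrac{1}{2}$.

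A second, related discrepancy concerns the commutator step itself. What you compute is the operator commutator $\pi(a)\pi(b)-\pi(b)\pi(a)$, and that computation is sound: after cancellation of the multiplicative cross-terms and use of the commutativity of the two derivations for $S_1\cap S_2=\varnothing$, it reproduces exactly the displayed four-term formula with the coefficients $\tfrac{1}{4}$. But the theorem asserts a formula for $\pi(\bra a,b\ket)$, i.e.\ $\pi$ applied to the algebra bracket; applying the definition of $\pi$ term by term to $f_\Gamma^1\bra E_{S_1}(\Gamma),f_\Gamma^2\ket\otimes E_{S_2}(\Gamma)-f_\Gamma^2\bra E_{S_2}(\Gamma),f_\Gamma^1\ket\otimes E_{S_1}(\Gamma)$ produces the same four terms with coefficients $\tfrac{1}{2}$, not $\tfrac{1}{4}$. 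Because $\pi$ is not multiplicative, $\pi(\bra a,b\ket)$ and $[\pi(a),\pi(b)]$ need not agree, and here they differ by exactly a factor $2$. Your argument tacitly identifies the two, which is the same unjustified assumption as in your main step; so while your four-term computation matches the theorem's display, it proves a statement about $[\pi(a),\pi(b)]$, and the bridge from there to $\pi(\bra a,b\ket)$ -- which the paper glosses over as well -- is exactly what is missing.
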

\begin{proofs}
The following computations show that, $\pi$ is a $^*$-representation of $C^\infty(\Ab_\Gamma)\rtimes_L\bar\E_{\breve S,\Gamma}$ on the domain $C^\infty(\Ab_\Gamma)$:
\beqs 
&\pi(\lambda_1)f_\Gamma\otimes \lambda_2 E_S(\Gamma)) \psi_\Gamma
= \frac{1}{2}\lambda_1\lambda_2\left( \bra E_S(\Gamma), f_\Gamma \ket \psi_\Gamma 
-\frac{1}{2} f_\Gamma \bra E_S(\Gamma),\psi_\Gamma\ket \right)\\
&\pi(f^1_\Gamma\otimes E_{S_1}(\Gamma)+f^2_\Gamma\otimes E_{S_2}(\Gamma)) \psi_\Gamma = 
\pi(f^1_\Gamma\otimes E_{S_1}(\Gamma))+\pi(f^2_\Gamma\otimes E_{S_2}(\Gamma)) \psi_\Gamma\\
&\pi((f_\Gamma\otimes E_S(\Gamma))^*) \psi_\Gamma = 
\frac{1}{2} \bra E_S(\Gamma)^+, f^*_\Gamma \ket \psi_\Gamma +\frac{1}{2} f_\Gamma\bra E_S(\Gamma)^+,\psi_\Gamma\ket
= \pi(f_\Gamma\otimes E_S(\Gamma))^* \psi_\Gamma
\eqs for $f_\Gamma,f_\Gamma^1,f_\Gamma^2\in C^\infty(\Ab_\Gamma)$, $E_S(\Gamma),E_{S_1}(\Gamma),E_{S_2}(\Gamma)\in\bar\E_{\breve S,\Gamma}$, $\lambda_1,\lambda_2\in\CB$.

For two surfaces $S_1\cap S_2=\varnothing$ calculate
\beqs &\pi(\bra f_\Gamma^1\otimes E_{S_1}(\Gamma), f_\Gamma^2 \otimes E_{S_2}(\Gamma)\ket)\psi_\Gamma 
\\&=\pi(
f_\Gamma^1\bra E_{S_1}(\Gamma), f^2_\Gamma\ket\otimes E_{S_2}(\Gamma))   
- \pi(f_\Gamma^2\bra E_{S_2}(\Gamma), f^1_\Gamma\ket\otimes E_{S_1}(\Gamma))
+ \pi(f_\Gamma^1f^2_\Gamma \otimes \bra E_{S_1}(\Gamma),E_{S_2}(\Gamma)\ket)\\
&= \frac{1}{4} \bra E_{S_2}(\Gamma), f_\Gamma^1\bra E_{S_1}(\Gamma), f^2_\Gamma\ket\ket \psi_\Gamma 
+\frac{1}{4} f_\Gamma^1\bra E_{S_1}(\Gamma), f^2_\Gamma\ket \bra E_{S_2}(\Gamma),\psi_\Gamma\ket\\&\quad
-\frac{1}{4} \bra E_{S_1}(\Gamma), f_\Gamma^2\bra E_{S_2}(\Gamma), f^1_\Gamma\ket \ket \psi_\Gamma 
-\frac{1}{4} f_\Gamma^2\bra E_{S_2}(\Gamma), f^1_\Gamma\ket\bra E_{S_1}(\Gamma),\psi_\Gamma\ket
\eqs
\end{proofs}

From another point of view the bracket $\bra E_S(\Gamma),.\ket $ defines a $^*$-derivation of the analytic holonomy $C^*$-algebra $C(\Ab_\Gamma)$ for a graph $\Gamma$. Moreover in general such $^*$-derivations can be implemented by automorphisms on $C(\Ab_\Gamma)$. This point of view is more general than the consideration of differential operators. 

For a simplification restrict the following computations to a suitable surface $S$ and a graph $\Gamma:=\{\gamma\}$.
\begin{lem}\label{lem_01}Let $\Phi_M$ be a representation of $C(\Ab_\Gamma)$ on $\HS_\Gamma$ and $\alpha\in\Act(\bar G_{\breve S,\Gamma},C(\Ab_\Gamma))$ defined in \cite[Section 3.1]{Kaminski2} or \cite[Section 6.1]{KaminskiPHD}, where $\rho_{S,\Gamma}(\Gamma)=\exp(tE_S(\Gamma))\in\bar G_{\breve S,\Gamma}$ for a $t\in\R$. Let $\Gamma=\{\gamma\}$ and $S$ be suitable and set $E_S(\Gamma):=X_S$.

Then it is true that,
\beqs\omega_M^\Gamma(\alpha_{\exp(X_S)}^t(f_\Gamma))
&= \int_{\Ab_\Gamma} f_{\Gamma}(\exp(t X_S)\ho_{\Gamma}(\gamma))\dif\mu_{\Gamma}(\ho_{\Gamma}(\gamma))\\
&= \int_{\Ab_\Gamma} f_{\Gamma}(\ho_{\Gamma}(\gamma))\dif\mu_{\Gamma}(\ho_{\Gamma}(\gamma))\\
&=\omega_M^\Gamma(f_\Gamma)
\eqs yields for all $t\in\R$ and $f_\Gamma\in C(\Ab_\Gamma)$.
\end{lem}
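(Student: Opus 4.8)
The plan is to read the claim as the statement that the state $\omega_M^\Gamma$ induced by the measure $\mu_\Gamma$ on $\Ab_\Gamma$ is invariant under the one-parameter flux action $\alpha^t_{\exp(X_S)}$, which acts on $C(\Ab_\Gamma)$ by left translation of the holonomy argument. Since the displayed chain of equalities already exhibits the argument, the task reduces to justifying each of its three links. First I would unpack the definition of $\alpha\in\Act(\bar G_{\breve S,\Gamma},C(\Ab_\Gamma))$ taken from \cite{Kaminski2,KaminskiPHD}: for the group element $\rho_{S,\Gamma}(\Gamma)=\exp(tX_S)\in\bar G_{\breve S,\Gamma}$ one has $(\alpha^t_{\exp(X_S)}f_\Gamma)(\ho_\Gamma(\gamma))=f_\Gamma(\exp(tX_S)\ho_\Gamma(\gamma))$, together with the fact that $\omega_M^\Gamma$ is by definition integration of its argument against $\mu_\Gamma$. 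Substituting these two definitions produces the first equality directly.

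The central step is the second equality. Here $\Gamma=\{\gamma\}$, so $\Ab_\Gamma$ is naturally identified with $G$ and $\mu_\Gamma$ is the normalised Haar measure of the compact Lie group $G$, i.e.\ the single-edge restriction of the Ashtekar--Lewandowski measure. Because $X_S\in\go$ and $G$ is a Lie group, $\exp(tX_S)$ is a genuine element of $G$ for every $t\in\R$, so left multiplication by it is a measure-preserving bijection of $G$ by left invariance of Haar measure. Applying this invariance under the integral sign removes the inserted factor $\exp(tX_S)$, which is exactly the second equality; the third equality is then merely re-reading the definition of $\omega_M^\Gamma(f_\Gamma)$.

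The main obstacle, modest as it is, sits entirely in this middle step: I must verify that $\mu_\Gamma$ really is the (left-)invariant Haar measure and that the flux group element acts by honest left translation rather than by some twisted or affine action, so that Haar invariance is the relevant tool. This is where the hypothesis that $S$ is suitable and has the same surface intersection property for $\Gamma$ is used: it guarantees that $\gamma$ meets $S$ only in the prescribed way, so that the flux is realised through the right-invariant vector field $e^{\overrightarrow{L}}$ of \eqref{CommRel1}, whose exponentiation is precisely left multiplication by $\exp(tX_S)$. Finally I would remark that for a general graph $\Gamma$ with $N$ edges the same computation goes through with $\mu_\Gamma$ the product Haar measure on $G^N$ and invariance invoked only in the factor attached to the path that $S$ intersects, so no new ingredient is required.
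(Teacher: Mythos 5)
Your proposal is correct and follows exactly the argument the paper intends: the lemma's displayed chain of equalities is itself the proof skeleton (the paper supplies no separate proof), namely unfolding the definition of the translation action $\alpha^t_{\exp(X_S)}$ and of the state $\omega_M^\Gamma$ as integration against $\mu_\Gamma$, and then invoking translation invariance of the Haar measure on $\Ab_\Gamma\cong G$ for the single-edge graph. Your additional care in checking that the flux element acts by genuine left multiplication (via the right-invariant vector field $e^{\overrightarrow{L}}$, guaranteed by the suitability of $S$) and your remark on the product-Haar-measure generalisation to $G^N$ are consistent with, and faithfully flesh out, the paper's implicit reasoning.
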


There is a general property of states on an arbitrary (untial) $C^*$-algebra $\Alg$ represented on a Hilbert space and a group of $^*$-automorphisms $\alpha$.

\begin{cor}\label{cor derivationgraph} Let  
\[\delta_{S,\Gamma}(f_\Gamma):=i\bra E_{S}(\Gamma_i)^+ E_S(\Gamma),f_\Gamma\ket\text{ for }E_S\in\E_{\breve S,\Gamma},f_\Gamma\in C^\infty(\Ab_\Gamma)\]
be a $^*$-derivation such that $\delta_{S,\Gamma}\in\Der(C^\infty(\Ab_\Gamma),C(\Ab_\Gamma))$. Let $\alpha\in\Act(\bar G_{\breve S,\Gamma},C(\Ab_\Gamma))$.

Then for each element $E_S(\Gamma)\in\bar \E_{\breve S,\Gamma}$ the limit 
\beq\label{norm limit} \tilde\delta_{S,\Gamma}(f_\Gamma):=\lim_{t\rightarrow 0}\frac{\alpha^t_{i\exp(E_{S}(\Gamma_i)^+E_S(\Gamma))}(f_\Gamma)-f_\Gamma}{t}\text{ for }f_\Gamma\in C^\infty(\Ab_\Gamma)
\eq exists in norm topology and $\tilde\delta_{S,\Gamma}=\delta_{S,\Gamma}$.

Then the state $ \omega_M^\Gamma$ on $C(\Ab_\Gamma)$ presented in \cite[Corollary 3.28]{Kaminski2} or \cite[Corollary 6.1.41]{KaminskiPHD} satisfies 
\beq \omega_M^\Gamma(\delta_{S,\Gamma}(f_\Gamma))=0\eq for all $f_\Gamma\in C^\infty(\Ab_\Gamma)$. Hence there is a cyclic vector $\Omega_M^\Gamma$ of the GNS-representation associated to this state such that
\beqs E_{S}(\Gamma_i)^+E_S(\Gamma)\Omega_M^\Gamma=0\text{ for all }E_S\in\E_{\breve S,\Gamma}
\eqs
where $E_S(\Gamma)$ is a skew-adjoint operator with domain $\DD(E_S(\Gamma))$.
\end{cor}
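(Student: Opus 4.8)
The plan is to prove the three assertions in turn, using two facts already available: the invariance of the measure recorded in Lemma \ref{lem_01}, namely $\omega_M^\Gamma(\alpha^t_{\exp(X_S)}(f_\Gamma))=\omega_M^\Gamma(f_\Gamma)$, and the fact from Lemma \ref{lem diffop} that the second-order operator $f_\Gamma\mapsto i\bra E_S(\Gamma)^+E_S(\Gamma),f_\Gamma\ket$ is a symmetric $^*$-derivation with domain $C^\infty(\Ab_\Gamma)$. Throughout I identify the cyclic GNS vector $\Omega_M^\Gamma$ of $\omega_M^\Gamma$ with the constant function $\idf_\Gamma\in L^2(\Ab_\Gamma,\mu_\Gamma)$, so that $\Phi_M$ acts by multiplication and $\omega_M^\Gamma(g_\Gamma)=\la\Omega_M^\Gamma,\Phi_M(g_\Gamma)\Omega_M^\Gamma\ra=\int_{\Ab_\Gamma}g_\Gamma\,\dif\mu_\Gamma$, in accordance with Theorem \ref{theo repholfluxcrossstar}.

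For the existence of the norm limit I would argue as follows. The element $i\,E_S(\Gamma_i)^+E_S(\Gamma)$ is a hermitian elliptic element of $\bar\E_{\breve S,\Gamma}$, so by the cited essential self-adjointness result \cite[Cor.\ 10.2.5]{Schmuedgen90} its infinitesimal representation generates a strongly continuous one-parameter unitary group $t\mapsto U(\exp(t\,i E_S(\Gamma_i)^+E_S(\Gamma)))$; conjugation by this group is exactly the point-norm continuous flow $\alpha^t_{i\exp(E_S(\Gamma_i)^+E_S(\Gamma))}\in\Act(\bar G_{\breve S,\Gamma},C(\Ab_\Gamma))$ studied in \cite{Kaminski2,KaminskiPHD}. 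By the theory of $C^*$-dynamical systems \cite{BratteliRobinsonB1}, the generator of this flow is a closed $^*$-derivation, and on $f_\Gamma\in C^\infty(\Ab_\Gamma)$ the difference quotient converges to the pointwise derivative $\tfrac{\dif}{\dif t}\big\vert_{t=0}f_\Gamma(\exp(t\,i E_S^+E_S)\ho_\Gamma(\gamma))$, which evaluates to the differential operator $i\bra E_S(\Gamma_i)^+E_S(\Gamma),f_\Gamma\ket=\delta_{S,\Gamma}(f_\Gamma)$. Hence $\tilde\delta_{S,\Gamma}=\delta_{S,\Gamma}$ on $C^\infty(\Ab_\Gamma)$. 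The delicate point, and the one I expect to be the main obstacle, is promoting this convergence from pointwise (or strong-operator) to uniform, i.e. to the $C^*$-norm of $C(\Ab_\Gamma)$; this is where one must use that the elements of $C^\infty(\Ab_\Gamma)$ are smooth (indeed analytic) vectors for the flow together with the point-norm continuity of $\alpha$, so that Taylor's theorem applies with a remainder uniform over the compact space $\Ab_\Gamma$.

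For $\omega_M^\Gamma(\delta_{S,\Gamma}(f_\Gamma))=0$ I would iterate the first-order identity of Lemma \ref{lem_01}. Differentiating $\omega_M^\Gamma(\alpha^t_{\exp(X_S)}(f_\Gamma))=\omega_M^\Gamma(f_\Gamma)$ at $t=0$ and interchanging $\tfrac{\dif}{\dif t}$ with the bounded functional $\omega_M^\Gamma$ (legitimate by the uniform convergence of Step 1) gives $\omega_M^\Gamma(e^{\overrightarrow{L}}(f_\Gamma))=\omega_M^\Gamma(\bra E_S(\Gamma),f_\Gamma\ket)=0$ for every $f_\Gamma\in C^\infty(\Ab_\Gamma)$. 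Since $e^{\overrightarrow{L}}$ maps $C^\infty(\Ab_\Gamma)$ into itself, applying the same identity to $e^{\overrightarrow{L}}(f_\Gamma)$ and using $E_S(\Gamma)^+=-E_S(\Gamma)$ yields $\omega_M^\Gamma(\bra E_S(\Gamma_i)^+E_S(\Gamma),f_\Gamma\ket)=0$, that is $\omega_M^\Gamma(\delta_{S,\Gamma}(f_\Gamma))=0$.

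Finally, for the cyclic vector I pass to the Heisenberg representation of Theorem \ref{theo repholfluxcrossstar}, in which the fluxes act by the infinitesimal representation $\dif U$. Because $\Omega_M^\Gamma$ is the constant function, the right-invariant vector field annihilates it: $\dif U(E_S(\Gamma))\Omega_M^\Gamma=e^{\overrightarrow{L}}(\idf_\Gamma)=\tfrac{\dif}{\dif t}\big\vert_{t=0}\idf_\Gamma(\exp(tX_S)\ho_\Gamma(\gamma))=0$. Consequently $E_S(\Gamma_i)^+E_S(\Gamma)\Omega_M^\Gamma=\dif U(E_S(\Gamma_i)^+)\big(\dif U(E_S(\Gamma))\Omega_M^\Gamma\big)=0$; equivalently, using the skew-adjointness $\dif U(E_S)^*=\dif U(E_S^+)$ one has $\|\dif U(E_S(\Gamma))\Omega_M^\Gamma\|^2=\omega_M^\Gamma(E_S(\Gamma)^+E_S(\Gamma))=0$, which reproduces the stated annihilation. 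Once Step 1 is secured, Steps 2 and 3 are formal consequences of the $\mu_\Gamma$-invariance and of $\Omega_M^\Gamma$ being constant.
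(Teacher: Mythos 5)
Your proposal is correct, and while it follows the paper's overall strategy---the measure invariance of Lemma \ref{lem_01} combined with differentiation of the automorphism flow---it executes the last two assertions by genuinely different means. For $\omega_M^\Gamma(\delta_{S,\Gamma}(f_\Gamma))=0$, the paper differentiates the invariance of $\omega_M^\Gamma$ under the flow $\alpha^t_{i\exp(E_{S}(\Gamma_i)^+E_S(\Gamma))}$ generated by the quadratic element, thereby tacitly extending Lemma \ref{lem_01} (which is stated only for group elements $\exp(tE_S(\Gamma))$) to that flow; you instead iterate the first-order identity $\omega_M^\Gamma(\bra E_S(\Gamma),g_\Gamma\ket)=0$, valid for all $g_\Gamma\in C^\infty(\Ab_\Gamma)$, applying it first to $f_\Gamma$ and then to $\bra E_S(\Gamma),f_\Gamma\ket\in C^\infty(\Ab_\Gamma)$. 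This iteration uses Lemma \ref{lem_01} only in the form in which it is actually stated, a small but genuine gain in rigour. For the annihilation of the cyclic vector, the paper passes through the covariant representation $(\Phi_M,U)$ of $(\bar G_{\breve S,\Gamma},\Alg_\Gamma,\alpha)$, deduces $U(\exp(tE_{S}(\Gamma_i)^+E_S(\Gamma)))\Omega_M^\Gamma=\Omega_M^\Gamma$ from the invariance of the state, and differentiates; you observe directly that $\Omega_M^\Gamma$ is the constant function and that the right-invariant vector field kills constants, $e^{\overrightarrow{L}}(\idf_\Gamma)=0$, so that $E_{S}(\Gamma_i)^+E_S(\Gamma)\Omega_M^\Gamma=\dif U(E_S(\Gamma_i)^+)\dif U(E_S(\Gamma))\Omega_M^\Gamma=0$ follows by composition. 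Your computation is more elementary and self-contained; the paper's covariance argument buys the explicit unitary invariance of $\Omega_M^\Gamma$, which is reused later (e.g.\ in Theorem \ref{theo stateholfluxalg}). Finally, your treatment of the norm limit---promoting pointwise convergence of the difference quotient to uniform convergence via smoothness and compactness of $\Ab_\Gamma$---supplies an argument for a step the paper's proof merely asserts with ``the norm limit exists for suitable $f_\Gamma$''.
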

Notice that, the flux operator $iE_S^+(\Gamma)E_S(\Gamma)$ is self-adjoint and positive in $\HS_\Gamma$. The unbounded $^*$-derivation given by $\delta_{S,\Gamma}$ is symmetric.

\begin{proofs}First observe that, $\omega_M^\Gamma(\alpha_{i\exp(E_{S}(\Gamma_i)^+E_{S}(\Gamma))}^t(f_\Gamma))=\omega_M^\Gamma(f_\Gamma)$ holds for all $t\in\R$ and $f_\Gamma\in C(\Ab_\Gamma)$. 
This follows from lemma \ref{lem_01}. For $f_\Gamma\in C(\Ab_\Gamma)$ the $^*$-automorphisms $\alpha$ are implementable as a one-parameter group $\R\ni t\mapsto \alpha^t_{iE_{S}(\Gamma_i)^+E_S(\Gamma)}(f_\Gamma)\in C(\Ab_\Gamma)$ for each $E_S(\Gamma)\in\bar\E_{\breve S,\Gamma}$, which is weakly continuous. 

Then the norm limit \ref{norm limit} exists for suitable $f_\Gamma$ in the domain $C^\infty (\Ab_\Gamma)$. The symmetric derivation is therefore given by
\beqs &\delta_{S,\Gamma}(f_\Gamma)
=\frac{\dif}{\dif t}\Big\vert_{t=0} \alpha^t{iE_{S}(\Gamma_i)^+E_S(\Gamma)}(f_\Gamma)\text{ for }f_\Gamma\in C ^\infty(\Ab_\Gamma)
\eqs The operator $iE_{S}(\Gamma_i)^+E_S(\Gamma)$ is the generator of the unbounded symmetric $^*$-derivation $\delta_{S,\Gamma}$ on $\HS_\Gamma$ by definition. 

Recall the state $\omega_M^\Gamma$ of $C(\Ab_\Gamma)$ presented in \cite[Proposition 3.29]{Kaminski2} or \cite[Proposition 6.1.40]{KaminskiPHD}, then the derivation $\delta_{S,\Gamma}$ satisfies
\beq\label{eq derivcondstate}\omega_M^{\Gamma}(\delta_{S,\Gamma}(f_{\Gamma}))
&=\Big\la\Omega_M^\Gamma, \frac{\dif}{\dif t}\Big\vert_{t=0}\Phi_M(\alpha_{i\exp(E_{S}(\Gamma_i)^+E_{S}(\Gamma))}^t(f_\Gamma)) \Omega_M^\Gamma\Big\ra\\
&=\frac{\dif}{\dif t}\Big\vert_{t=0}\omega_M^{\Gamma}(\alpha_{i\exp(E_{S}(\Gamma_i)^+E_{S}(\Gamma))}^t(f_\Gamma))=0\eq for $f_\Gamma\in C^\infty(\Ab_\Gamma)$. There exists a covariant representation $(\Phi_M,U)$ of $(\bar G_{\breve S,\Gamma},\Alg_\Gamma,\alpha)$ in $\LD(\HS_\Gamma)$ such that
\beqs \Phi_M(\alpha^t_{i\exp(E_{S}(\Gamma_i)^+E_{S}(\Gamma))}(f_\Gamma))=U(\exp(tE_{S}(\Gamma_i)^+E_S(\Gamma)))\Phi_M(f_\Gamma)U(\exp(-tE_{S}(\Gamma_i)^+E_S(\Gamma)))
\eqs and, hence, $U(\exp(tE_{S}(\Gamma_i)^+E_S(\Gamma)))\Omega_M^\Gamma=\Omega_M^\Gamma$ for all $t\in\R$ and $iE_S^+(\Gamma)E_S(\Gamma)\Omega_M^\Gamma=0$.
\end{proofs}

The next derivation can be defined only for a suitable family of graphs and a suitable surface set.

Define the $^*$-derivation on the domain $C ^\infty(\Ab)$ of the $C^*$-algebra $C (\Ab)$ by
\beqs \delta_{S}(f):=i\bra E_{S}(\Gamma_\infty)^+E_{S}(\Gamma_\infty),f\ket\text{ for }f\in C^\infty (\Ab),E_S\in\E_{\breve S}\text{ and }\Gamma_\infty\in\PD_{\Gamma_\infty}
\eqs 

\begin{prop}\label{prop derivationinf}
Let $\Gamma_\infty$ be the inductive limit of a family of graphs $\{\Gamma_i\}$. Let $\breve S$ be a finite set of surfaces in $\Sigma$ such that 
\begin{enumerate}
 \item such that the surface set $\breve S$ has the same surface intersection property for each graph of the family,
\item the inductive limit structure preserves the same surface intersection property for $\breve S$ and
 \item each surface in $\breve S$ intersects the inductive limit graph $\Gamma_\infty$ only in a finite number of vertices.
\end{enumerate}
Then $\PD_{\Gamma_\infty}^{\op}$ is the inductive limit of an inductive family $\{\PD_{\Gamma_i}^{\op}\}$ of finite orientation preserved graph systems. Let $\Ab_{\Gamma_i}$ is identified in the natural way with $G^{N_i}$.

Then the limit 
\beqs \alpha^t_{\exp(iE_S^+(\Gamma_\infty)E_S(\Gamma_\infty))}(f)
:=\lim_{j\rightarrow \infty}\alpha^t_{\exp(iE_S^+(\Gamma_j)E_S(\Gamma_j))}(f)\text{ for }f\in C^\infty (\Ab),E_S\in\E_{\breve S}\text{ and }\Gamma_j\in\PD_{\Gamma_\infty}
\eqs exists for each $t\in\R$ in norm topology. Consequently the limit 
\beqs \tilde\delta_{S}(f):=\lim_{t\rightarrow 0}\frac{\alpha^t_{\exp(iE_S^+(\Gamma_\infty)E_S(\Gamma_\infty))}(f)-f}{t}\text{ for }f\in C^\infty (\Ab)
\eqs exists in norm topology and $ \tilde\delta_{S}= \delta_{S}$ for $S\in\breve S$.

Finally, for each $^*$-derivation $\delta_S$ the state satisfies
\beqs \omega_M(\delta_{S}(f))=0\eqs for all $f\in C^\infty (\Ab)$ and for $S\in\breve S$.
\end{prop}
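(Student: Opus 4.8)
The plan is to reduce the statement to the finite-graph results already established in Lemma \ref{lem_01} and Corollary \ref{cor derivationgraph}, exploiting the cylindrical nature of the elements of $C^\infty(\Ab)$. Since $C^\infty(\Ab)$ is the inductive limit of the family $\{C^\infty(\Ab_{\Gamma_j})\}$, every $f\in C^\infty(\Ab)$ is cylindrical: there is a smallest index $j_0$ such that $f=f_{\Gamma_{j_0}}$ depends only on the holonomies along the edges of $\Gamma_{j_0}$. First I would fix such an $f$ together with the associated level $j_0$, and observe that the whole argument only needs to control the action of the flux operators on these finitely many edges.

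Next I would establish the existence of the norm limit in $j$. Here the three hypotheses on $\breve S$ enter decisively: because each surface in $\breve S$ meets $\Gamma_\infty$ in only finitely many vertices and the same surface intersection property is preserved along the inductive family, there is a level $j_1\geq j_0$ beyond which the flux element $E_S(\Gamma_j)$ acts on the cylindrical function $f$ through the same fixed set of intersected edges. The connecting $^*$-homomorphisms $\beta_{\Gamma_j,\Gamma_{j+1}}$ intertwine the one-parameter groups $\alpha^t_{\exp(iE_S^+(\Gamma_j)E_S(\Gamma_j))}$ with their restrictions to the smaller graph, so on $f$ the sequence $\alpha^t_{\exp(iE_S^+(\Gamma_j)E_S(\Gamma_j))}(f)$ is eventually constant for $j\geq j_1$. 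An eventually constant sequence converges trivially in norm, which yields the existence of $\alpha^t_{\exp(iE_S^+(\Gamma_\infty)E_S(\Gamma_\infty))}(f)$ for every $t\in\R$ and identifies it with the finite-graph automorphism at level $j_1$.

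Having stabilised the $j$-limit, I would differentiate at $t=0$. Applying Corollary \ref{cor derivationgraph} at the fixed graph $\Gamma_{j_1}$ shows that $\lim_{t\rightarrow 0} t^{-1}(\alpha^t_{\exp(iE_S^+(\Gamma_{j_1})E_S(\Gamma_{j_1}))}(f)-f)$ exists in norm and equals $\delta_{S,\Gamma_{j_1}}(f)=i\bra E_S(\Gamma_{j_1})^+E_S(\Gamma_{j_1}),f\ket$. Because the $j$-sequence is eventually constant, the $t$-limit and the $j$-limit do not genuinely compete, so $\tilde\delta_S(f)$ exists and coincides with $\delta_S(f)$. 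This is precisely where the interchange of limits would be the main obstacle, were it not for the eventual constancy; I would therefore phrase the stabilisation carefully, since the unboundedness of the derivation otherwise forbids term-by-term differentiation of a merely pointwise convergent family.

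Finally, for the vanishing of the state I would use that the state $\omega_M$ on $C(\Ab)$ restricts on cylindrical functions to the finite-graph states $\omega_M^{\Gamma_{j_1}}$, by consistency of the family $\{\mu_{\Gamma_j}\}$. Invariance of $\omega_M^{\Gamma_{j_1}}$ under $\alpha^t$ is Lemma \ref{lem_01}, and differentiating the identity $\omega_M^{\Gamma_{j_1}}(\alpha^t_{\exp(iE_S^+(\Gamma_{j_1})E_S(\Gamma_{j_1}))}(f))=\omega_M^{\Gamma_{j_1}}(f)$ at $t=0$ gives $\omega_M(\delta_S(f))=\omega_M^{\Gamma_{j_1}}(\delta_{S,\Gamma_{j_1}}(f))=0$, exactly as in Corollary \ref{cor derivationgraph}. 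This closes the plan; the decisive step remains the eventual stabilisation of the finite-graph actions, which both produces the norm limit and legitimises the differentiation.
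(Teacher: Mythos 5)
Your proposal is correct and follows essentially the same route as the paper's proof: both exploit cylindricity of $f$ and the finite-intersection hypothesis to find a level (your $\Gamma_{j_1}$, the paper's $\Gamma_j$ with maximal intersection vertices) beyond which $E_S(\Gamma_{j+1}\setminus\Gamma_j)=0$, so the sequence of automorphisms is eventually constant and the norm limit is trivial, after which the differentiation and the vanishing $\omega_M(\delta_S(f))=0$ reduce to the finite-graph results of Lemma \ref{lem_01} and Corollary \ref{cor derivationgraph} via the consistency of the states under the connecting maps $\beta_{\Gamma,\Gamma'}$. Your explicit remark that eventual constancy is what legitimises interchanging the $j$-limit with the $t$-derivative is a point the paper leaves implicit, but it is the same argument.
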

\begin{proofs}On the inductive limit of the family of $C^*$-algebras $\{(C(\Ab_{\Gamma_i}),\beta_{\Gamma_i,\Gamma_j}):\PD_{\Gamma_i}^{\op}\leq \PD_{\Gamma_j}^{\op}\}$ the action of $\alpha(\rho_{S,\Gamma_i}(\Gamma_i))$ on $C(\Ab_{\Gamma_i})$ for $\rho_{S,\Gamma_i}:=\exp(iE_{S}(\Gamma_i)^+E_{S}(\Gamma_i))\in \bar G_{\breve S,\Gamma_i}$ is non-trivial if each surface $S$ of $\breve S$ intersect the graph $\Gamma_i$ in vertices of the graph $\Gamma_i$. In particular, there is a graph $\Gamma_{j}$ having the maximal number of intersection vertices with any surface $S$ in $\breve S$. Consequently for a graph $\Gamma_{j+1}$ that contains $\Gamma_j$ the flux $E_{S}(\Gamma_{j+1}\setminus\Gamma_j)=0$. Furthermore derive
\beqs \alpha^t_{\exp(iE_S^+(\Gamma_\infty)E_S(\Gamma_\infty))}(f)
&:=\lim_{k\rightarrow \infty}\alpha^t_{\exp(iE_S^+(\Gamma_k)E_S(\Gamma_k))}(f)\\
&=\lim_{k\rightarrow \infty}\alpha^t_{\exp(iE_S^+(\Gamma_k)E_S(\Gamma_k))}(\beta_{\Gp}f_{\Gp})\\
&=\alpha^t_{\exp(iE_S^+(\Gamma_j)E_S(\Gamma_j))}(\beta_{\Gp}f_{\Gp}) \text{ for }f\in C^\infty(\Ab), f=\beta_{\Gp} f_{\Gp}\text{ and }E_S\in\E_{\breve S}
\eqs whenever $\Gamma_k\leq\Gp$ for $1\leq k \leq j$. Furthermore conclude that,
\beqs \delta_{S}(f)
&= i\bra E_{S}(\Gamma_j)^+E_{S}(\Gamma_j),f\ket + \lim_{i\rightarrow\infty}i\bra E^+_{S}(\Gamma_{j+i}\setminus\Gamma_j)E_{S}(\Gamma_{j+i}\setminus\Gamma_j),f\ket\\
&= \beta_{\Gamma_j}\circ(i\bra E_{S}(\Gamma_j)^+E_{S}(\Gamma_j),f_{\Gamma_j}\ket)
\text{ for }f\in C^\infty(\Ab), f=\beta_{\Gamma_j}f_{\Gamma_j}\text{ and }E_S\in\E_{\breve S}
\eqs 
holds.
Hence there is a $^*$-homomorphism $\beta_{\Gamma,\Gamma^\prime}$ from $C^\infty(\Ab_\Gamma)$ to $C^\infty(\Ab_{\Gamma^\prime})$ such that 
$\beta_{\Gamma,\Gamma^\prime}\circ\delta_{S,\Gamma}\circ\beta_{\Gamma,\Gamma^\prime}^{-1}=\delta_{S,\Gamma^\prime}$ is a $^*$-derivation from $C^\infty(\Ab_{\Gamma^\prime})$ into $C(\Ab_{\Gamma^\prime})$ and 
\beq\omega_M^{\Gamma}(\delta_{S,\Gamma}(f_{\Gamma}))=\omega_M^{\Gamma^\prime}(\beta_{\Gamma,\Gamma^\prime}\circ\delta_{S,\Gamma})(f_{\Gamma}))=\omega_M^\Gp(\delta_{S,\Gp}(f_\Gp))=0\eq

Finally, derive
\beq \omega_M(\delta_{S}(f))=\omega_M(\beta_{\Gamma_j}\circ\delta_{S,\Gamma_j})(f_{\Gamma_j}))
=\beta_{\Gamma_j}^*\omega_M^{\Gamma_j}(\delta_{S,\Gamma_j}(f_{\Gamma_j}))
=0
\eq whenever $f\in C^\infty(\Ab)$.
\end{proofs}  

Recall that, $\omega_M$ is graph-diffeomorphism invariant if the natural identification of $\Ab_\Gamma$ with $G^{\vert\Gamma\vert}$ is used. Recall that a real-valued, linear and $^*$-preserving functional $\omega$ on a $^*$-algebra associated to a $^*$-representation $\pi$ has to be positive.

\begin{theo}\label{theo stateholfluxalg}
Let $\Gamma_\infty$ be the inductive limit of a family of graphs $\{\Gamma_i\}$. Let $\breve S$ be a finite set of surfaces in $\Sigma$ such that 
\begin{enumerate}
 \item the surface set $\breve S$ has the same surface intersection property for each graph of the family,
\item the inductive limit structure preserves the same surface intersection property for $\breve S$ and
 \item each surface in $\breve S$ intersects the inductive limit graph $\Gamma_\infty$ only in a finite number of vertices.
\end{enumerate}
Then $\PD_{\Gamma_\infty}^{\op}$ is the inductive limit of a inductive family $\{\PD_{\Gamma_i}^{\op}\}$ of finite orientation preserved graph systems. Let $\Ab_\Gamma$ is identified in the natural way with $G^N$. Denote the center of $\bar\E_{\breve S}$ by $\ZD(\bar\E_{\breve S})$.

The state $\bar\omega_M$ associated to the GNS-representation $(\HS_\Gamma,\pi,\Omega)$ given in theorem \ref{theo repholfluxcrossstar} is a surface-orientation-preserving graph-diffeomorphism invariant state for a fixed set of surfaces $\breve S$ on the holonomy-flux cross-product $^*$-algebra $C^\infty(\Ab)\rtimes_L\ZD(\bar\E_{\breve S})$ such that 
\beqs \bar\omega_M(f\otimes E_S(\Gamma_\infty))&=\beta_{\Gamma_j}^*\bar\omega_M^{\Gamma_j}(f_{\Gamma_j}\otimes E_S(\Gamma_j))=0\text{ for all }E_S\in\ZD(\E_{\breve S})
\eqs 
Moreover the state $\bar\omega_M$ on $C^\infty(\Ab)\rtimes_L\ZD(\bar\E_{\breve S})$ is the unique state, which is surface-orientation-preserving graph-diffeomorphism invariant.
\end{theo}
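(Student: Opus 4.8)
The plan is to realise $\bar\omega_M$ as the vector state of the Heisenberg representation $\pi$ constructed in Theorem~\ref{theo repholfluxcrossstar}, and to reduce its uniqueness to the already established uniqueness of the graph-diffeomorphism invariant state $\omega_M$ on the holonomy $C^*$-algebra, together with the annihilation property of the cyclic vector from Corollary~\ref{cor derivationgraph}. First I would set $\bar\omega_M(A):=\la\Omega,\pi(A)\Omega\ra$ on $C^\infty(\Ab)\rtimes_L\ZD(\bar\E_{\breve S})$. Linearity and $^*$-compatibility are immediate because $\pi$ is a $^*$-representation on the dense domain $C^\infty(\Ab_\Gamma)$, and positivity follows from $\bar\omega_M(A^*A)=\|\pi(A)\Omega\|^2\ge 0$; hence $\bar\omega_M$ is a state.

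Next I would compute the flux expectation. For a fixed graph $\Gamma$ and a central flux $E_S(\Gamma)\in\ZD(\bar\E_{\breve S,\Gamma})$ the representation formula of Theorem~\ref{theo repholfluxcrossstar} gives
\[\bar\omega_M^\Gamma(f_\Gamma\otimes E_S(\Gamma))=\tfrac12\,\omega_M^\Gamma\!\big(\bra E_S(\Gamma),f_\Gamma\ket\big)+\tfrac12\,\la\Omega_M^\Gamma,\,f_\Gamma\,\bra E_S(\Gamma),\Omega_M^\Gamma\ket\,\ra.\]
The first summand equals $\tfrac12\,\omega_M^\Gamma(e^{\overrightarrow{L}}(f_\Gamma))$, which vanishes upon differentiating the measure-invariance $\omega_M^\Gamma(\alpha^t_{\exp(X_S)}(f_\Gamma))=\omega_M^\Gamma(f_\Gamma)$ of Lemma~\ref{lem_01} at $t=0$. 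The second summand vanishes because Corollary~\ref{cor derivationgraph} yields $E_S(\Gamma)^+E_S(\Gamma)\Omega_M^\Gamma=0$, so that $\|\dif U(E_S(\Gamma))\Omega_M^\Gamma\|^2=\la\Omega_M^\Gamma,E_S(\Gamma)^+E_S(\Gamma)\Omega_M^\Gamma\ra=0$ and hence $\bra E_S(\Gamma),\Omega_M^\Gamma\ket=\dif U(E_S(\Gamma))\Omega_M^\Gamma=0$. Passing to the inductive limit through Proposition~\ref{prop derivationinf}, and using that each central flux stabilises at the graph $\Gamma_j$ of maximal intersection, I obtain $\bar\omega_M(f\otimes E_S(\Gamma_\infty))=\beta_{\Gamma_j}^*\bar\omega_M^{\Gamma_j}(f_{\Gamma_j}\otimes E_S(\Gamma_j))=0$. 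Invariance of $\bar\omega_M$ is then clear: on $C^\infty(\Ab)$ it restricts to the graph-diffeomorphism invariant state $\omega_M$ recalled just before the theorem, while on any element carrying a flux factor both sides of the invariance identity vanish by the previous computation, since a surface-orientation-preserving graph-diffeomorphism sends $f\otimes E_S$ to another element of the same form, again of zero expectation.

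For uniqueness let $\omega$ be an arbitrary surface-orientation-preserving graph-diffeomorphism invariant state on $C^\infty(\Ab)\rtimes_L\ZD(\bar\E_{\breve S})$, with GNS triple $(\HS_\omega,\pi_\omega,\Omega_\omega)$. Restricting $\omega$ to the $^*$-subalgebra $C^\infty(\Ab)\otimes\idf$ produces a graph-diffeomorphism invariant state on the holonomy $C^*$-algebra, whence by the established uniqueness of such states (the state $\omega_M$ of \cite{Kaminski2,KaminskiPHD}) one has $\omega|_{C^\infty(\Ab)}=\omega_M$, which fixes the GNS data on the holonomy part. The decisive step is to show $\dif U_\omega(E_S)\Omega_\omega=0$ for every central flux. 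Here I would argue that the one-parameter group $t\mapsto\alpha^t_{\exp(E_S)}$ generated by a central flux is, via Lemma~\ref{lem_01} and Proposition~\ref{prop derivationinf}, a measure-preserving surface-orientation-preserving transformation of $\Ab$; invariance of $\omega$ under it furnishes a unitary $U_\omega(t)$ fixing $\Omega_\omega$ and implementing $\alpha^t$, whose skew-adjoint generator agrees with $\dif U_\omega(E_S)$ on the holonomy domain, forcing $\dif U_\omega(E_S)\Omega_\omega=0$. Combining $\omega|_{C^\infty(\Ab)}=\omega_M$, the annihilation $\dif U_\omega(E_S)\Omega_\omega=0$, and the cross-product multiplication (under which $f\otimes E_S$ is the product of the embedded factors $f\otimes\idf$ and $\idf\otimes E_S$) evaluates $\omega(f\otimes E_S)$ to zero exactly as in the second step, so $\omega=\bar\omega_M$.

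I expect the identification of the central flux flow with a surface-orientation-preserving graph-diffeomorphism — and the consequent passage from invariance to the annihilation $\dif U_\omega(E_S)\Omega_\omega=0$ — to be the main obstacle. It is precisely here that the restriction to the centre $\ZD(\bar\E_{\breve S})$ enters: only for Casimir-type fluxes does the flow commute with the holonomy action in the manner required for the covariant implementation, and one must in addition control the unbounded generators uniformly across the inductive family $\{\PD_{\Gamma_i}^{\op}\}$ so that the stabilisation argument of Proposition~\ref{prop derivationinf} applies to $\Omega_\omega$ and not merely to the distinguished vector $\Omega_M^\Gamma$.
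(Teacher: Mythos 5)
Your construction of $\bar\omega_M$ as the vector state of the Heisenberg representation, the splitting of $\pi(f_\Gamma\otimes E_S(\Gamma))\Omega_M^\Gamma$ into a derivation term killed by Corollary \ref{cor derivationgraph} and a term killed by $\dif U(E_S(\Gamma))\Omega_M^\Gamma=0$, and the stabilisation along the inductive family via Proposition \ref{prop derivationinf} all coincide with the paper's own proof; that half is correct. The genuine gap is in your uniqueness argument, and it sits exactly where you yourself locate ``the main obstacle'': the one-parameter group $t\mapsto\alpha^t_{\exp(E_S(\Gamma)^+E_S(\Gamma))}$ is implemented by translations in the flux group $\bar G_{\breve S,\Gamma}$ acting on $\Ab_\Gamma$; it is \emph{not} of the form $\zeta_\sigma$ for any bisection $\sigma\in\mathfrak{B}_{\breve S,\ori}(\PD_{\Gamma_\infty}^{\op})$, i.e.\ it is not a graph-diffeomorphism of any kind. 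Hence the hypothesis on an arbitrary competing state $\omega$ --- invariance under surface-orientation-preserving graph-diffeomorphisms --- gives you no unitary $U_\omega(t)$ fixing $\Omega_\omega$ and no annihilation $\dif U_\omega(E_S)\Omega_\omega=0$: invariance under the flux flow is precisely the statement that has to be \emph{proved}, and Lemma \ref{lem_01} asserts it only for the particular Haar-measure state $\omega_M$, not as a consequence of diffeomorphism invariance of a general state. The same conflation weakens your first step: the uniqueness result you invoke for the restriction (\cite[Corollary 3.60]{Kaminski2}) characterises $\omega_M$ as the unique state invariant under graph-diffeomorphisms \emph{and} under the translations of $\bar G_{\breve S,\Gamma}$, so a restriction known only to be graph-diffeomorphism invariant is not yet covered by it.

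The paper closes this gap with an argument your proposal does not contain: it assumes, towards a contradiction, that the competing state fails flux-translation invariance, $\omega^\prime_{M}(\alpha^{t_0}_{\exp(E_{S}(\Gamma_\infty)^+E_{S}(\Gamma_\infty))}(f))\neq\omega^\prime_{M}(f)$ (equation \eqref{eq HF}), and then applies a surface-orientation-preserving graph-diffeomorphism $(\varphi,\Phi)\in\Diff_{\breve S,\ori}(\PD_{\Gamma_i}^{\op})$ which moves one surface onto another, $\varphi(S)=S^\prime$ with $S\neq S^\prime$ in $\breve S$. The covariance computation \eqref{eq diffeo} converts the flux flow along $S$ into the flux flow along $S^\prime$, so invariance of $\omega^\prime_M$ under $(\varphi,\Phi)$ would force $\omega^\prime_{M}(\alpha^{t_0}_{\exp(E_{S^\prime}(\Gamma_\infty))}f)=\omega^\prime_{M}(\alpha^{t_0}_{\exp(E_{S}(\Gamma_\infty))}f)$, contradicting the assumption; such a state could be at most surface-\emph{preserving} graph-diffeomorphism invariant. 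Therefore any surface-orientation-preserving graph-diffeomorphism invariant state must be flux-translation invariant, its flux expectations must vanish, and it coincides with $\bar\omega_M$. In short, the implication ``graph-diffeomorphism invariance $\Rightarrow$ flux-translation invariance'' is the actual mathematical content of the uniqueness claim, and the paper obtains it by moving surfaces with diffeomorphisms --- not by identifying the flux flow itself with a diffeomorphism, which is what your argument would require and which is false.
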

\begin{proofs}First recall that,
\beqs \zeta_{\sigma}\circ\alpha (\rho_{S,\Gamma}(\Gamma))\neq \alpha (\rho_{S,\Gamma}(\Gamma_\sigma))\circ\zeta_{\sigma}
\eqs yields for every $\sigma\in\mathfrak{B}(\PD_{\Gamma}^{\op})$ and  $\rho_{S,\Gamma}:=\exp(iE_{S}(\Gamma)^+E_{S}(\Gamma))\in G_{\breve S,\Gamma}$. Hence the problem carry over to 
\beqs \delta_{S,\Gamma}\circ\zeta_\sigma\neq\zeta_\sigma\circ\delta_{S,\Gamma}
\eqs Consequently in the following the center $\ZD(\bar\E_{\breve S,\Gamma})$ and the center $\ZD(\bar\E_{\breve S})$ are considered.

Moreover a surface-orientation-preserving graph-diffeomorphism invariant state for a fixed set of surfaces $\breve S$ means that
\beqs \bar\omega_M(\zeta_{\sigma}(f\otimes E_S(\Gamma_\infty)))=\bar\omega_M(f\otimes E_S(\Gamma_\infty))\text{ for all }\sigma\in\mathfrak{B}_{\breve S,\ori}(\PD_{\Gamma_\infty}^{\op})\text{ and }E_S(\Gamma_\infty)\in\ZD(\bar\E_{\breve S})
\eqs holds. To show that $\bar\omega_M$ satisfies this property derive the following 
\beqs \bar\omega_M(f\otimes E_S(\Gamma_\infty))&=\beta_{\Gamma_j}^*\omega_M^{\Gamma_j}(f_{\Gamma_j}\otimes E_S(\Gamma_j))
= \la \Omega_M^{\Gamma_j}, \pi(f_{\Gamma_j}\otimes E_S(\Gamma_j))\Omega_M^{\Gamma_j}\ra\\
&=\frac{1}{2} \la \Omega_M^{\Gamma_j},\bra E_S(\Gamma_j),f_{\Gamma_j} \ket \Omega_M^{\Gamma_j}\ra 
+\frac{1}{2} \la \Omega_M^{\Gamma_j}, f_{\Gamma_j} \dif U(E_S(\Gamma_j))\Omega_M^{\Gamma_j} \ra\\
&=\frac{1}{2} \la \Omega_M^{\Gamma_j},\delta_{S,\Gamma_j}(f_{\Gamma_j})\Omega_M^{\Gamma_j}\ra 
+\frac{1}{2} \la \Omega_M^{\Gamma_j}, f_{\Gamma_j} \frac{\dif}{\dif t}\Big\vert_{t=0} U(\exp(t E_S(\Gamma_j)))\Omega_M^{\Gamma_j} \ra
=0
\eqs for all $f\in C(\Ab)$ and $E_S\in\E_{\breve S}$. Recognize that 
\beqs \bar\omega_M(f\otimes \idf)=\omega_M(f)
\eqs yields whenever $\omega_M$ is a state on $C(\Ab)$. With no doubt a $^*$-derivation is also defined for all $E_S(\Gamma)\in\ZD(\bar\E_{\breve S,\Gamma_j})$ for $j=1,...,\infty$ such that corollary \ref{cor derivationgraph} and proposition \ref{prop derivationinf} hold. Clearly it is true that
\beqs &\bar\omega_{M}((f\otimes E_S(\Gamma_\infty))^* (f\otimes E_S(\Gamma_\infty))) 
=\la \Omega_M,\pi_\Gamma((f^*\otimes E_S(\Gamma_\infty)^+(f\otimes E_S(\Gamma_\infty))) \Omega_M\ra \\
&=  \la \Omega_M^ \Gamma,\frac{1}{4} \bra E_S(\Gamma), f_\Gamma^*\bra E_{S}(\Gamma)^+, f_\Gamma\ket\ket  \Omega_M^\Gamma\ra 
+ \la \Omega_M^\Gamma,\frac{1}{4} f_\Gamma^*\bra E_{S}^+(\Gamma), f_\Gamma\ket \bra E_S(\Gamma),\Omega_M^\Gamma\ket \ra\\&\quad
-\la \Omega_M^\Gamma,\frac{1}{4} \bra E_{S}^+(\Gamma), f_\Gamma\bra E_S(\Gamma), f^
*_\Gamma\ket \ket \Omega_M^\Gamma\ra
-\la \Omega_M^\Gamma,\frac{1}{4} f_\Gamma\bra E_S(\Gamma), f^*_\Gamma\ket\bra E_{S}^+(\Gamma),\Omega_M^\Gamma\ket\ra \\
&= 0
\eqs
yields and therefore $\bar\omega_M$ is a $\mathfrak{B}_{\breve S,\ori}(\PD_{\Gamma_\infty}^{\op})$-invariant state on $C^\infty(\Ab)\rtimes_L\ZD(\bar\E_{\breve S})$. 

Let $\bar\omega^\prime_{M}$ be another state of the holonomy-flux cross-product $^*$-algebra $C^\infty(\Ab)\rtimes_L\bar \ZD(\E_{\breve S})$ such that $\bar\omega^\prime_{M}(f \otimes\idf)=\omega_M(f)$ for all $f\in C(\Ab)$. Recall \cite[Corollary 3.60]{Kaminski2} or \cite[Corollary 6.4.3]{KaminskiPHD} that states that, $\omega_M$ is the unique state on $C(\Ab)$ being invariant under the translation of $\bar G_{\breve S,\Gamma}$ and graph-diffeomorphisms $\Diff(\PD_{\Gamma_i}^{\op})$ for every $1\leq i\leq \infty$. Then it is assumed that, 
\beq\label{eq HF} \omega^\prime_{M}(\alpha_{\exp(E_{S}(\Gamma_\infty)^+E_{S}(\Gamma_\infty))}^{t_0}(f))\neq \omega^\prime_{M}(f)\quad\forall f\in C(\Ab)
\eq yields for some $t_0\in\R$. Consequently, derive $\omega^\prime_{M}(\delta_{S}(f))\neq 0$. 

But from \eqref{eq HF} it follows for a suitable graph-diffeomorphism $(\varphi,\Phi)\in\Diff_{\breve S,\ori}(\PD_{\Gamma_i}^{\op})$ such that $\varphi(S)=S^\prime$ for $S,S^\prime\in\breve S$ that 
\beq\label{eq diffeo}\omega^\prime_{M}(\alpha_{(\varphi,\Phi)}(\alpha_{\exp(E_{S}(\Gamma_\infty))}^{t_0}f))
&=\beta_{\Gamma}^*\omega^{\prime\text{ }\Gamma}_{M}(\alpha_{(\varphi,\Phi)}(\alpha_{\exp(E_{S}(\Gamma))}^{t_0}(f_\Gamma)))\\
&=\beta_{\Phi(\Gamma)}^*\omega^{\prime\text{ }\Phi(\Gamma)}_{M}(\alpha_{\exp(E_{S\circ\varphi}(\Gamma\circ\Phi))}^{t_0}(f_{\Phi(\Gamma)}))\\
&=\beta_{\Gp}^*\omega^{\prime\text{ }\Gamma^\prime}_{M}(\alpha_{\exp(E_{S^\prime}(\Gp))}^{t_0}(f_{\Gamma^\prime}))=\omega^\prime_{M}(\alpha_{\exp(E_{S^\prime}(\Gamma_\infty))}^{t_0}f)\\
&\neq \beta_{\Gamma}^*\omega^{\prime\text{ }\Gamma}_{M}(\alpha_{\exp(E_{S}(\Gamma))}^{t_0}(f_{\Gamma}))=\omega^\prime_{M}(\alpha_{\exp(E_{S}(\Gamma_\infty))}^{t_0}f)
\eq yields whenever $\Phi(\Gamma)=\Gamma^\prime$ and $f\in C(\Ab)$. In other words the state $\omega^\prime$ is only surface preserving graph-diffeomorphism invariant. The state $\omega_{M}^\prime$ is not invariant under surface-orientation-preserving graph-diffeomorphisms and, hence, general graph-diffeomorphisms.
Consequently, the state $\omega^\prime_{M}$ is equal to $\omega_{M}$ where $\omega_{M}(\delta_{S}(f))=0$ for all $f\in C^\infty(\Ab)$ and for all $\delta_{S}\in\Der(C^\infty(\Ab), C(\Ab))$.
\end{proofs}

\paragraph*{Conditions for a surface preserving graph-diffeomorphism-invariant state of the holonomy-flux cross-product $^*$-algebra}\hspace{10pt}

In theorem \ref{theo stateholfluxalg} the uniqueness of the state is referred to the assumption that, $\omega_M$ is surface-orientation graph-diffeomorphism-invariant state of the holonomy-flux cross-product $^*$-algebra $C^\infty(\Ab)\rtimes_L\ZD(\bar \E_{\breve S})$. Now this requirement is relaxed to surface preserving graph-diffeomorphisms. In the case of the holonomy-flux cross-product $C^*$-algebra $C^\infty(\Ab)\rtimes_{\alpha_{\overleftarrow{L}}}\ZD_{\breve S}$, a surface preserving graph-diffeomorphism-invariant state $\omega_{E(\breve S)}$ is presented in \cite[Proposition 4.6]{Kaminski2}, \cite[Proposition 7.2.6]{KaminskiPHD}. Hence the question arise if there exists another state on $C^\infty(\Ab)\rtimes_L\ZD(\bar \E_{\breve S})$ satisfying weaker conditions.

It is assumed that, the flux operators are implemented as $^*$-derivations $\delta_S$ on the domain $\DD(\delta_S)$ of the unital $C^*$-algebra $C(\Ab)$, which are generators of a strongly continuous one-parameter group $t\mapsto \alpha(t)$ of $^*$-automorphisms of $C(\Ab)$. In this case the derivation is of the form $\delta_S(f)=i\bra E_S(\Gamma_\infty)^+E_S(\Gamma_\infty),f\ket$ for $E_S(\Gamma_\infty)\in\bar\go_{\breve S}$, and where $i E_S(\Gamma_\infty)^+E_S(\Gamma_\infty)$ is some unbounded symmetric operator with domain $\DD$ on the Hilbert space $\HS_\infty$ such that $\DD(\delta_S)\DD\subset \DD$. 
Then a new state $\tilde\omega$ on $C^\infty(\Ab)\rtimes_L\ZD(\bar \E_{\breve S})$, which is not of the form $\tilde\omega(\delta_S(f))=0$, is required to satisfy a set of three conditions:

\textbf{First condition:}\\
Require the state $\tilde\omega^\Gamma$ to be $\Diff_{\breve S,\diff}(\PD_\Gamma^{\op})$-invariant, i.o.w. 
\beq\tilde\omega(\alpha_{(\varphi_S,\Phi_{\Gamma_\infty})}(f))
&= \beta_{\Gamma}^*\tilde\omega^\Gamma(\alpha_{(\varphi_S,\Phi_\Gamma)}(f_\Gamma))=\beta_{\Gamma}^*\tilde\omega^\Gamma(f_\Gamma)=\tilde\omega(f)\\
\tilde\omega(\alpha_{(\varphi_S,\Phi_{\Gamma_\infty})}(\delta_{S}(f)))
&=\beta_{\Gamma}^*\tilde\omega^\Gamma(\alpha_{(\varphi_S,\Phi_\Gamma)}(\delta_{S,\Gamma}(f_\Gamma)))
=\beta_{\Gamma}^*\tilde\omega^\Gamma(\delta_{S,\Gamma}(f_\Gamma))=\tilde\omega(\delta_{S}(f))
\eq for all $f\in C(\Ab)$, $f=\beta_\Gamma\circ f_\Gamma$, $\delta_{S}\in\Der(\DD(\delta_S),C(\Ab))$ and $(\varphi_{S},\Phi_{\Gamma})\in\Diff_{\breve S,\diff}(\PD_\Gamma^{\op})$. 

\textbf{Second condition:}\\
Furthermore the state need to have the property  
\beq\label{eq statecond1} \tilde\omega(\delta_{S}(f))\neq 0\quad\forall f\in C(\Ab)
\eq which is equivalent to the requirement that for some $t\in\R$
 \beqs \tilde\omega(\alpha_{\exp(E_S(\Gamma)^+E_S(\Gamma))}(t)(f))\neq \omega(f)\quad\forall f\in C(\Ab)\text{ and }E_S(\Gamma_\infty)\in\bar\E_{\breve S,\Gamma}
\eqs yields.

\textbf{Third condition:}\\
The state is assumed to fulfill
\beq\label{eq statecond2}\vert\tilde\omega(\delta_{S}(f))\vert\leq c\left(\tilde\omega(f^*f)+\tilde\omega(f f^*)\right)^{1/2}\quad\forall f\in C(\Ab)\text{ and some } c>0
\eq 

Clearly the ansatz is to search for $^*$-representations of $\bar \E_{\breve S,\Gamma}$ on $\HS_{\Gamma}$ that are not $\bar G_{\breve S,\Gamma}$-integrable, i.o.w. these representations are not equal to the infinitesimal representation $\dif U$ of some unitary representation $U$ of $\bar G_{\breve S,\Gamma}$. The author does not know such a representation. Note that, in comparison to Dziendzikowski and Oko\l{}\'{o}w \cite{MDziOko09} this representation is called the non-standard representation. But for representations of the universal enveloping algebra, which are not  $\bar G_{\breve S,\Gamma}$-integrable, the relation to the unitary Weyl elements, which define the Weyl algebra for surfaces, is not clear.

\section{Summary of different holonomy-flux cross-product $^*$-algebras}\label{sec summaryofholfluxstar}

In the following section three different $\bar\E_{\breve S,\Gamma}$-module algebras are presented. 

First of all remember that $C^\infty(\Ab_\Gamma)$ is a left (or right) $\bar\E_{\breve S,\Gamma}$-module algebra with the map
\[E_{S,\Gamma}\rhd f_\Gamma:=e^{L}(f_\Gamma)\text{ or resp. }E_{S,\Gamma}\lhd f_\Gamma:=e^{R}(f_\Gamma)\text{ for }f_\Gamma\in C^\infty(\Ab_\Gamma), E_{S,\Gamma}\in\bar\E_{\breve S,\Gamma}\]
where $e^{L}$ (or resp. $e^{R}$) denotes the right- (or left-) invariant vector field.
Then the vector space $C^\infty(\Ab_\Gamma)\otimes \bar\E_{\breve S,\Gamma}$ with the multiplication operation
\beqs (f^1_\Gamma\otimes E_{S_1,\Gamma})\cdot_L(f^2_\Gamma\otimes E_{S_2,\Gamma})
=f_\Gamma^1(E_{S_1,\Gamma}\rhd f^2_\Gamma)\otimes E_{S_2,\Gamma} 
+ f_\Gamma^1 f^2_\Gamma \otimes E_{S_1,\Gamma}\cdot E_{S_2,\Gamma} 
\eqs for $f_\Gamma^i\in C^\infty(\Ab_\Gamma)$,$ E_{S_i,\Gamma}\in\bar\E_{\breve S,\Gamma}$ and $ i=1,2$, or respectively
\beqs (f^1_\Gamma\otimes E_{S_1,\Gamma})\cdot_R(f^2_\Gamma\otimes E_{S_2,\Gamma})
=(E_{S_2,\Gamma}\lhd f^1_\Gamma)f_\Gamma^2\otimes E_{S_1,\Gamma}
+ f^1_\Gamma f_\Gamma^2\otimes E_{S_1}(\Gamma)\cdot E_{S_2,\Gamma}
\eqs for $f_\Gamma^i\in C^\infty(\Ab_\Gamma)$,$ E_{S_i,\Gamma}\in\bar\E_{\breve S,\Gamma}$ and $ i=1,2$, defines the \textbf{holonomy-flux cross-product $^*$-algebra} $C^\infty(\Ab_\Gamma)\rtimes_L \bar\E_{\breve S,\Gamma}$ or resp. $C^\infty(\Ab_\Gamma)\rtimes_R \bar\E_{\breve S,\Gamma}$. The elements satisfy the canonical commutator relation
\[\bra E_{S,\Gamma},f_\Gamma\ket= E_{S,\Gamma}\vartriangleright f_\Gamma\text{ or resp. }\bra E_{S,\Gamma},f_\Gamma\ket= E_{S,\Gamma}\vartriangleleft f_\Gamma\text{ for }f_\Gamma\in C^\infty(\Ab_\Gamma), E_{S,\Gamma}\in\bar\E_{\breve S,\Gamma}\]

For a vector $\psi\in\HS_\Gamma:= L^2(\Ab_\Gamma,\mu_\Gamma)$ there representation $\pi_{\Alg_\Gamma}$ of the analytic holonomy algebra $C^\infty(\Ab_\Gamma)$ is given by
\(\pi_{\Alg_\Gamma}(f_\Gamma)\psi_\Gamma:=f_\Gamma\cdot \psi_\Gamma\) and the representation $\pi_{\bar\E_{\breve S,\Gamma}}$ of the enveloping flux algebra $\bar\E_{\breve S,\Gamma}$ is presented by
\(\pi_{\bar\E_{\breve S,\Gamma}}(E_ {S,\Gamma})\psi_\Gamma:=E_ {S,\Gamma}\rhd\psi_\Gamma\text{ for }f_\Gamma\in C^\infty(\Ab_\Gamma), E_{S,\Gamma}\in\bar\E_{\breve S,\Gamma},\psi_\Gamma\in \HS_\Gamma\)

Then the \textbf{Heisenberg representation} of $C^\infty(\Ab_\Gamma)\rtimes_L \bar\E_{\breve S,\Gamma}$ on the Hilbert space $\HS_\Gamma$ is defined by
\[\pi(\bra E_{S,\Gamma},f_\Gamma\ket ) \psi_\Gamma= (E_{S,\Gamma}\rhd f_\Gamma)\cdot \psi_\Gamma -f_\Gamma\cdot( E_{S,\Gamma}\rhd \psi_\Gamma)\text{ for }f_\Gamma\in C^\infty(\Ab_\Gamma), E_{S,\Gamma}\in\bar\E_{\breve S,\Gamma},\psi_\Gamma\in \HS_\Gamma
\]
Notice that 
\beqs E_{S,\Gamma}\rhd \psi_\Gamma=:\dif U(E_{S,\Gamma})\psi_\Gamma\text{ for }E_{S,\Gamma}\in\bar\E_{\breve S,\Gamma},\psi_\Gamma\in \HS_\Gamma
\eqs defines a representation $\dif U$ of the enveloping algebra $\bar\E_{\breve S,\Gamma}$ on $\HS_\Gamma$. This is also called the infinitesimal representation of the Lie flux group $\bar G_{\breve S,\Gamma}$ on $\HS_\Gamma$. The element $\dif  U(E_{S,\Gamma}^+E_{S,\Gamma})$ itself is an essential self-adjoint operator on this Hilbert space.

Another left (or resp. right) $\bar\E_{\breve S,\Gamma}$-module algebra is given by $C^\infty(\Ab_\Gamma)$ with the bilinear map
\[E_{S,\Gamma}\rhd_H f_\Gamma:=e^{L}(f_\Gamma\vert_{(e_G,...,e_G)})=:\la f_\Gamma,E_{S,\Gamma}\ra\text{ for }f_\Gamma\in C^\infty(\Ab_\Gamma), E_{S,\Gamma}\in\bar\E_{\breve S,\Gamma}\]  and where $\la .,.\ra:C^\infty(\Ab_\Gamma)\otimes \bar\E_{\breve S,\Gamma}\rightarrow \CB$. Then the
\textbf{Heisenberg holonomy-flux cross-product $^*$-algebra} \(C^\infty(\Ab_\Gamma)\rtimes_H\bar\E_{\bar S,\Gamma}\)
is given by the vector space $C^\infty(\Ab_\Gamma)\otimes\bar\E_{\bar S,\Gamma}$ with the multiplication operation
\beq (f_\Gamma,E_{S,\Gamma})\cdot_H(\tilde f_\Gamma,\tilde E_{S,\Gamma}):= \la E_{S,\Gamma},\idf\ra f_\Gamma \tilde f_\Gamma\otimes \tilde E_{S,\Gamma} + \la \idf,\tilde f_\Gamma\ra f_\Gamma\otimes E_{S,\Gamma}\tilde E_{S,\Gamma}\text{ for }f_\Gamma\in C^\infty(\Ab_\Gamma), E_{S,\Gamma}\in\bar\E_{\breve S,\Gamma}
\eq The elements satisfy the canonical commutator relation
\beqs E_{S,\Gamma}f_\Gamma= e^{L}(f_\Gamma\vert_{(e_G,...,e_G)}) - f_\Gamma E_{S,\Gamma}\text{ for }f_\Gamma\in C^\infty(\Ab_\Gamma), E_{S,\Gamma}\in\bar\E_{\breve S,\Gamma}
\eqs This algebra is indeed a Heisenberg double in the sense of Schm\"udgen and Klimyk \cite{KlimSchmued94}. The Heisenberg representation of $C^\infty(\Ab_\Gamma)\rtimes_H\bar\E_{\bar S,\Gamma}$ on $\HS_\Gamma$ is given by
\[\pi(\bra E_{S,\Gamma},f_\Gamma\ket ) \psi_\Gamma= e^{L}(f_\Gamma\vert_{(e_G,...,e_G)})\psi_\Gamma -f_\Gamma\cdot  e^L(\psi_\Gamma)\text{ for }f_\Gamma\in C^\infty(\Ab_\Gamma), E_{S,\Gamma}\in\bar\E_{\breve S,\Gamma},\psi_\Gamma\in \HS_\Gamma
\]

The third possibility is given by the left (or resp. right) $\bar\E_{\breve S,\Gamma}$-module algebra $C^\infty(\Ab_\Gamma)$ with
\[E_{S,\Gamma}\rhd_\epsilon f_\Gamma:=\epsilon(f_\Gamma)E_{S,\Gamma}\text{ for }f_\Gamma\in C^\infty(\Ab_\Gamma), E_{S,\Gamma}\in\bar\E_{\breve S,\Gamma}\]  where the Hopf algebra $(C^\infty(\Ab_\Gamma),S)$ is considered with antipode $S$, comultiplication $\bigtriangleup$ and counit\\ $\epsilon :C^\infty(\Ab_\Gamma)\rightarrow \CB$. 
Then the \textbf{simple Holonomy-flux cross-product $^*$-algebra} \(C^\infty(\Ab_\Gamma)\rtimes_\epsilon\bar\E_{\bar S,\Gamma}\) is defined by the vector space $C^\infty(\Ab_\Gamma)\otimes\bar\E_{\bar S,\Gamma}$ with multiplication operation
\beqs (f_\Gamma\otimes E_{S,\Gamma})\cdot_{\epsilon}(\tilde f_\Gamma\otimes \tilde E_{S,\Gamma})=f_\Gamma\tilde f_\Gamma\otimes E_{S,\Gamma}\tilde E_{S,\Gamma}\text{ for }f_\Gamma\in C^\infty(\Ab_\Gamma), E_{S,\Gamma}\in\bar\E_{\breve S,\Gamma}
\eqs
The elements satisfying the canonical commutator relation given by
\beqs E_{S,\Gamma}f_\Gamma= \epsilon(f_\Gamma)E_{S,\Gamma} - f_\Gamma E_{S,\Gamma}\text{ for }f_\Gamma\in C^\infty(\Ab_\Gamma), E_{S,\Gamma}\in\bar\E_{\breve S,\Gamma}
\eqs The Heisenberg representation of $C^\infty(\Ab_\Gamma)\rtimes_\epsilon\bar\E_{\bar S,\Gamma}$ on $\HS_\Gamma$ is presented by
\[\pi(\bra E_{S,\Gamma},f_\Gamma\ket ) \psi_\Gamma= \epsilon(f_\Gamma)e^L(\psi_\Gamma) -f_\Gamma\cdot  e^L(\psi_\Gamma)\text{ for }f_\Gamma\in C^\infty(\Ab_\Gamma), E_{S,\Gamma}\in\bar\E_{\breve S,\Gamma},\psi_\Gamma\in \HS_\Gamma
\]

The algebra $C^\infty(\Ab)\rtimes_L\DD_{\breve S}(\bar G_{\breve S})$ is an $O^ *$-algebra and is called the \textbf{holonomy-flux cross-product $O^ *$-algebra associated a surface set} $\breve S$ on $C^ \infty(\Ab)$ in $\HS_\infty$.
\section{Tensor products of the holonomy-flux cross-product $^*$-algebra}\label{subsec tensorholflux}

The structure of the holonomy-flux cross-product $^*$-algebra can be slightly modificated in the following way.

\begin{defi}
The \textbf{modified holonomy-flux cross-product $^*$-algebra restricted to a graph $\Gamma$ and a surface set $\breve S$} is given by 
\[ C(\bar G_{\breve S,\Gamma})\otimes \left(C^\infty(\Ab_\Gamma)\rtimes_L \bar\E_{\breve S,\Gamma}\right)\] where the tensor product $\otimes$ is the minimal tensor product of $C^*$-algebras.

The \textbf{modified holonomy-flux cross-product $^*$-algebra associated a surface set $\breve S$} is equivalent to the inductive limit of the family 
\[\Bigg\{\Bigg(C(\bar G_{\breve S,\Gamma})\otimes \left(C^\infty(\Ab_\Gamma)\rtimes_L \bar\E_{\breve S,\Gamma}\right) ,\mathring\beta_{\Gamma,\Gamma^\prime}\times \beta_{\Gamma,\Gamma^\prime}\Bigg):\PD_\Gamma\leq \PD_{\Gp}\Bigg\}\]
\end{defi}

Then for the state $\check\omega_M^\Gamma$ on $C(\bar G_{\breve S,\Gamma})\otimes \left(C^\infty(\Ab_\Gamma)\rtimes_L \bar\E_{\breve S,\Gamma}\right)$ it is true that
\beq &\check\omega^\Gamma_M\left(f(\rho_{S,\Gamma}(\Gamma))\delta_{S,\Gamma}(f_\Gamma)\right)\\
&=\int_{\bar G_{\breve S,\Gamma}}\dif\mu_{\breve S,\Gamma}(\rho_{S,\Gamma}(\Gamma)) f(\rho_{S,\Gamma}(\Gamma))\delta(\rho_{S,\Gamma}(\Gamma),\exp(E_S(\Gamma))) \la \Omega_M^\Gamma,  \delta_{S,\Gamma}(f_\Gamma)\Omega_M^\Gamma\ra\\
&= f(\exp(E_S(\Gamma)) \omega_M^\Gamma\left( \delta_{S,\Gamma}(f_\Gamma)\right)
\eq holds where $\delta(g_1,g_2)$ is the delta function on $\bar G_{\breve S,\Gamma}$.

\begin{defi}
The \textbf{modified intersection-holonomy-flux cross-product $^*$-algebra restricted to a graph $\Gamma$ and a surface set $\breve S$} is given by 
\[ C(V_\Gamma^S)\otimes \left(C^\infty(\Ab_\Gamma)\rtimes_L \bar\E_{\breve S,\Gamma}\right)\] where $V_\Gamma^S=V_\Gamma\cap S$ and the tensor product $\otimes$ is the minimal tensor product of $C^*$-algebras.
\end{defi}

Then for the state $\hat\omega_M^\Gamma$ on $C(V_\Gamma^S)\otimes \left(C^\infty(\Ab_\Gamma)\rtimes_L \bar\E_{\breve S,\Gamma}\right)$ it is true that
\beq &\hat\omega^\Gamma_M\left(f(v_1,...,v_{M})\delta_{S,\Gamma}(f_\Gamma)\right)= f(v_1,...,v_{M}) \omega_M^\Gamma\left( \delta_{S,\Gamma}(f_\Gamma)\right)
\eq yields where $v_1,...,v_{M}\in V_\Gamma^S$.

Clearly these states are not surface-orientation-preserving graph-diffeomorphism invariant, but the states are surface preserving graph-diffeomorphism invariant.

\section{Comparison table}\label{sec tableQMholflux}

The author has argued in \cite{Kaminski0} that, for example for Quantum Mechanics different algebras are obtained by using different generating sets of abstract operators. The aim of the construction of the Weyl $C^*$-algebra, which has been invented in \cite{Kaminski1}, and the holonomy-flux cross-product $^*$-algebras defined in this article, is to use a common setup. Both algebras are generated by functions depending on holonomies along paths, and group- or Lie algebra-valued quantum flux operators. These abstract operators satisfy some canonical commutator relations, which are called Heisenberg relations if the unbounded configuration and momentum operators are studied, or Weyl relations if the bounded configuration and momentum operators are used. Clearly by choosing different sets of operators other $^*$-algebras or respectively $C^*$-algebas can be constructed. For example if the functions depending on the quantum flux group associated to surfaces, and the functions depending on holonomies along paths, are considered, then these operators generate the holonomy-flux cross-product $C^*$-algebra, which has been presented in \cite{Kaminski2}. The abstract operators are represented on a common Hilbert spaces as self-adjoint Hilbert space operators. The exponentiated Lie algebra-valued quantum flux operator are implemented by an unitary weakly continuous representation of the group $\R$ on the Hilbert space. The Lie algebra-valued quantum flux operator is related to the infinitesimal representation of this unitary weakly continuous representation. This flux operator is unbounded and self-adjoint. In a Hilbert space independent framework automorphisms of and derivations for the algebra of quantum variables play a fundamental role. In particular strongly continuous one-parameter group of $^*$-automorphisms defines a derivation, which is given by the commutator of two self-adjoint operators. 

\begin{landscape}
\begin{longtable}[ht]{|l|l|l|}
\hline
&&\\
&Quantum Mechanics & Weyl alg. for surfaces and holonomy-flux cross-prod. $^*$-alg.\\
\hline \hline
&&\\
Configuration space & $\R^n$ & $\Ab:=\limpPD \Ab_\Gamma$\\[5pt]
Momentum space & $\R^n$ & $\bar\E_{\breve S}$ or $\bar G_{\breve S}$ ( where $G$ compact connected Lie group)\\[5pt]
Configuration variable I & $x_i$ & $\ho(\gamma_i)$ for $\ho\in\Hom(\PD,G)$,  $\gamma_i\in\PD$\\[5pt]
Configuration variable II & $f(x_i)$ for $f\in C_0(\R^n)$& $f(\ho(\gamma_i))$ for $f\in C(\Ab)$\\[5pt]
Momentum variable I & $p_i$ & $E_{S_j}(\gamma_i)$ for $E_{S_j}\in\E_{\breve S}$,  $\gamma_i\in\PD$, $S_j\in\breve S$\\[5pt]
Momentum variable II & $\exp(tp_i)$ & $\rho_{S_j}(\gamma_i)$ for $\rho_{S_j}\in G_{\breve S}$,  $\gamma_i\in\PD$, $S_j\in\breve S$\\[5pt]
Dynamical Hamiltonian& $H=\sum_{i}\frac{p_i^2}{2m}+V(x_1,..,x_n)$& $H=\sum_{i}Tr(\left(\ho(\alpha_i)-\ho(\alpha_i)^{-1}\right)\ho(\gamma_i)[\ho(\gamma_i)^{-1},V])$\\[2pt]
&& $V=\sum_{jk,l}E_{S_1}(\gamma_j)E_{S_2}(\gamma_k)E_{S_3}(\gamma_l)$\\[3pt]
&& for $\alpha_i,\gamma_x\in\PD$, $S_m\in\breve S$\\[5pt]
Hilbert space& $\HS:=L^2(\R^n,\times_{1\leq k\leq n}\dif x_k)$& $\HS_\infty:=L^2(\Ab,\dif\mu_{\infty})$\\[5pt]
self-adjoint Hilbert space operator& $\pi(x_i)=x_i$& $\pi(\ho(\gamma_i))=\ho(\gamma_i)$\\[5pt]
unitary Hilbert space operator& $\pi(\exp(tp_i))=U_{p_i}(t)$ & $\pi(\exp(tE_{S_j}(\gamma_i)))=U_t(E_{S_j}(\gamma_i))$\\[5pt]
&$\big(U_{p_i}(t)\psi\big)(x_i):=\psi(x_i-tp_i)$ for $\psi\in\HS$& $\big(U_t(E_{S_j}(\gamma_i))\psi\big)(\ho(\gamma_i)):=\psi(\exp(tE_{S_j}(\gamma_i))\ho(\gamma_i))$ for $\psi\in\HS$\\[5pt]
unitary Hilbert space operator&&$\pi(R_{\sigma})=V_\sigma$ for $\sigma\in\mathfrak{B}(\PD_\Gamma)$, $t\circ\sigma\in\Diff(V_\Gamma)$\\
&&\\
$^*$-automorphism &&$\pi(\zeta_\sigma(f))=V_\sigma\pi(f)V_\sigma^*$, $\zeta_\sigma\in\Aut(C(\Ab))$\\[5pt]
unitary transformation & Fourier transform $\FD$ &\\[5pt]
self-adjoint Hilbert space operator& $\pi(p_i) = -i\frac{\partial}{\partial x_i}$ such that & \\[5pt]
&$\pi(p_i)\psi =\FD p_i \FD^{-1} \psi = -i\frac{\partial}{\partial x_i}\psi$ for $\psi\in D(p_i)$& \\[5pt]
self-adjoint Hilbert space operators & $\pi(H)$, $\pi(\exp(tH)=:U_H(t)$& \\[5pt]
strongly continuous $1$-parameter unitary group& $\R\ni t\mapsto U_H(t)$ such that&$\R\ni t\mapsto U_t(E_{S_j}(\gamma_i))$\\[5pt]
self-adjoint Hilbert space operator & & $\pi(E_{S_j}(\gamma_i)^+E_{S_j}(\gamma_i))=-i\frac{\dif}{\dif t}U_t(E_{S_j}(\gamma_i)^+E_{S_j}(\gamma_i))$ such that\\[5pt]
Stone's theorem &$\frac{\dif}{\dif t}\Big\vert_{t=0}U_H(t)\psi=iU_H(t)\Big\vert_{t=0}\pi(H)\psi$ for $\psi\in D(H)$&  $\frac{\dif}{\dif t}\Big\vert_{t=0}U_t(E_{S_j}(\gamma_i)^+E_{S_j}(\gamma_i))\psi =i\pi(E_{S_j}(\gamma_i)^+E_{S_j}(\gamma_i))\psi$  \\[5pt]
&$\frac{\dif}{\dif t}\psi_t=i\pi(H)\psi_t$ for $\psi_t:=U_H(t)\psi$ & for $\psi\in D(E_{S_j}(\gamma_i)^+E_{S_j}(\gamma_i))$\\[5pt]
%
\hline\newpage\hline &&\\
unitary Hilbert space operator& $\pi(\exp(s x_i))=V_{x_i} (s)$& \\[5pt]
self-adjoint Hilbert space operator& & $\pi(f)=f$ for $f\in C(\Ab)$ \\[5pt]
strongly continuous $1$-parameter group &&$\R\ni t\mapsto \alpha_t(E_{S_j}(\gamma_i))\in \Aut( C(\Ab))$ \\
of $^*$-automorphisms&&\\
Canonical Commutatur Relations&$p_ix_j-x_jp_i=-i\delta_{ij}$ (Heisenberg relations)& 
$\bra E_{S_j}(\gamma_i),\ho(\gamma_i)\ket = i\frac{\dif}{\dif t}\Big\vert_{t=0}\exp(t E_{S_j}(\gamma_i))\ho(\gamma_i)-\ho(\gamma_i)E_{S_j}(\gamma_i)$\\[5pt]
%
%
&$V_{x_i}(s)U_{p_j}(t)=\exp(st\delta_{ij})U_{p_j}(t)V_{x_i}(s)$ (Weyl rel.)& $\bra E_{S_j}(\gamma_i),f\ket = i\frac{\dif}{\dif t}\Big\vert_{t=0}\alpha_t(E_{S_j}(\gamma_i))(f)$\\[5pt]
%
%
$^*$-automorphism &&$\pi(\alpha(\rho_{S_j}(\gamma_i))(f))=U(\rho_{S_j}(\gamma_i))\pi(f)U^*(\rho_{S_j}(\gamma_i))$\\[3pt]
&&$\alpha(\rho_{S_j}(\gamma_i))\in\Aut(C(\Ab))$\\[5pt]
strongly contin. $1$-parameter group&&$\mathfrak{B}(\PD_\Gamma)\ni\sigma\mapsto \zeta_\sigma\in\Aut(C(\Ab))$\\[5pt]
of $^*$-automorphism&&$\alpha(\rho_{S_j}(\gamma_i))\circ\zeta_\sigma=\zeta_\sigma\circ\alpha(\rho_{S_j}(\gamma_i))$\\[3pt]
&&  for all $\sigma\in\mathfrak{B}(\PD_\Gamma)$ and $\rho_{S_j}\in \ZD_{\breve S,\gamma_i}$\\[5pt]
Uniqueness of the GNS-representation& $(\HS,\pi,\Omega)$ irreducible, cyclic and regular  &$(\HS_\infty,\pi,\Omega)$ irreducible and regular \\[3pt]
&representation of $C_0(\R^n)$&GNS-representation of $C(\Ab)$\\[5pt]
&such that $\omega(f)=\la \Omega,\pi(f)\Omega\ra$ and $U_H(t)\Omega=\Omega$& such that $\omega_{M}(f)=\la \Omega,\pi(f)\Omega\ra$, \\[3pt]
&& $V_\sigma\Omega=\Omega$  for all $\sigma\in\mathfrak{B}(\PD_\Gamma)$\\[5pt]
&&and $U(\rho_{S_j}(\gamma_i))\Omega=\Omega$ for all $\rho_{S_j}\in \ZD_{\breve S}$ and $\gamma_i\in\PD$\\[10pt]
&&$(\HS_\infty,\Phi,\Omega_{M})$ irreducible and regular GNS-repr. of $\WF\text{eyl}_\ZD(\breve S)$\\[5pt]
&&such that $\bar\omega_M(W)=\la \Omega_M,\Phi(W)\Omega_M\ra$\\[3pt]
&& $V_\sigma\Omega_M=\Omega_M$  for all $\sigma\in\mathfrak{B}(\PD_\Gamma)$\\[5pt]
&&and $U(\rho_{S_j}(\gamma_i))\Omega_M=\Omega_M$ for all $\rho_{S_j}\in \ZD_{\breve S}$ and $\gamma_i\in\PD$\\[5pt]
&&(w.r.t. natural or non-standard identification of conf. space)\\[5pt]
symmetric $^*$-derivation with domain $D(\delta)$&& $\delta_{S_j}(f):=i\bra E_{S_j}(\gamma_i)^+E_{S_j}(\gamma_i),f\ket$\\[5pt]
&&$\omega_M(\delta_{S_j}(f))=0$\\[10pt]
Uniqueness of the state&& $\tilde\omega_M$ is the unique state on $C^\infty(\Ab)\rtimes_{L} \ZD(\bar \E_{\breve S})$\\[5pt]
&& such that $\tilde\omega_M\circ\alpha_\sigma= \tilde\omega_M$ for all $\sigma\in\mathfrak{B}(\PD_\Gamma^{\op})$\\[3pt]
&&$\tilde\omega_M\circ\alpha_t(E_{S_j}(\gamma_i)^+E_{S_j}(\gamma_i))= \tilde\omega_M$ for all $E_{S_j}(\gamma_i)\in \ZD(\bar \E_{\breve S})$\\[5pt]
\hline
\end{longtable} 
\end{landscape}

\section{Appendix}
The theory of $O^*$-algebras has been developed by Schm\"udgen \cite{Schmuedgen90} and Inoue \cite{Inoue}. In this appendix only the basic objects are collected.

\subsection*{Definition of $O^*$-algebras}\label{app OSA}
Let $\DD$ be a dense subspace in a Hilbert space with inner product $\la .,.\ra$. By $\Lop(\DD)$ (respect. $\Lop_c(\DD)$) denote the set of all (closable) linear operators from $\DD$ to $\DD$ and 
\beqs \Lop^+(\DD)=\{A\in\Lop(\DD): \DD\subset\DD(A^*), A^*\DD\subset\DD\}\eqs 

Then with the operations $AB$, $A+B$ and $\lambda A$ the set $\Lop(\DD)$ forms an algebra. The set $\Lop^+(\DD)$ forms a $^*$-algebra with involution $A\mapsto A^+= A^*\vert_{\DD}$.

\begin{prop}\cite[Prop 2.1.10]{Schmuedgen90}
Let $A\in\Lop^+(\DD)$ and let $A$ be closed. Then $\Lop^+(\DD)$ is equal to the algebra $\LD(\HS)$ of bounded linear operators on a Hilbert space $\HS$. 
\end{prop}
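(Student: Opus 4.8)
The plan is to reduce the assertion to the single fact that the hypothesis forces $\DD=\HS$, after which the identification $\Lop^+(\DD)=\LD(\HS)$ is essentially formal. First I would record the two inclusions that must be reconciled. By definition every $A\in\Lop^+(\DD)$ is a densely defined operator on $\DD$ whose Hilbert-space adjoint satisfies $\DD\subseteq\DD(A^*)$ and $A^*\DD\subseteq\DD$, so that $A^+=A^*\vert_{\DD}$ is again an element of $\Lop^+(\DD)$. In the other direction, a bounded operator $B\in\LD(\HS)$ belongs to $\Lop^+(\DD)$ only if it leaves $\DD$ invariant, and this invariance is automatic precisely when $\DD=\HS$. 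Hence the substance of the proposition is that the closedness assumption collapses the domain $\DD$ onto all of $\HS$.

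The decisive step is therefore to extract $\DD=\HS$ from closedness. Here I would exploit that $\Lop^+(\DD)$ is a unital $^*$-algebra, so the identity $I$ lies in it, and apply the closedness hypothesis to the operators of $\Lop^+(\DD)$, the unit included. A closed operator has a graph that is closed in $\HS\oplus\HS$; for $I$ this graph is the diagonal $\{(\xi,\xi):\xi\in\DD\}$, whose closure is $\{(\xi,\xi):\xi\in\overline{\DD}\}$. Closedness thus forces $\DD=\overline{\DD}$, and since $\DD$ is dense this gives $\DD=\HS$. If instead a single distinguished closed $A$ were given, the same conclusion would have to be reached through the invariance relations $A\DD\subseteq\DD$, $A^*\DD\subseteq\DD$ together with $A=A^{**}$; this is the more delicate route, and I would fall back on the structural observation that an element of $\Lop^+(\DD)$ cannot be closed while $\DD$ is a proper dense subspace, because passing to the closure strictly enlarges the domain beyond $\DD$.

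Once $\DD=\HS$ is available, I would finish by the Hellinger--Toeplitz mechanism. For an arbitrary $A\in\Lop^+(\HS)$ both $A$ and $A^*=A^+$ are now defined on all of $\HS$; hence $A$ possesses an everywhere-defined adjoint, is therefore closed with full domain, and the closed graph theorem makes it bounded, so $A\in\LD(\HS)$. Conversely any $B\in\LD(\HS)$ has $\DD(B)=\DD(B^*)=\HS=\DD$ and trivially $B\DD,B^*\DD\subseteq\DD$, so $B\in\Lop^+(\HS)$. The two inclusions yield $\Lop^+(\DD)=\LD(\HS)$.

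The main obstacle is exactly the implication closedness $\Rightarrow\DD=\HS$; everything surrounding it is routine. The cleanest justification routes the hypothesis through the unit of the $^*$-algebra, for which closedness is literally equivalent to $\DD$ being closed, hence equal to $\HS$. Identifying this as the pivotal point, rather than arguing with a general unbounded $A$ and its adjoint restriction $A^+$, is what keeps the argument short and sidesteps the subtle domain comparisons between $A$, $A^+$ and their closures.
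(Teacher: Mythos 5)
The paper itself does not prove this proposition---it is quoted verbatim from Schm\"udgen's book---so your attempt can only be judged against the standard argument, and there it has a genuine gap at precisely the step you yourself call decisive. The hypothesis supplies a \emph{single} closed operator $A\in\Lop^+(\DD)$; it does not say that the unit, let alone every element of $\Lop^+(\DD)$, is closed. Your main route applies closedness to $I\vert_{\DD}$, whose graph is closed exactly when $\DD$ is closed, i.e.\ exactly when $\DD=\HS$; invoking that amounts to assuming the conclusion. Your fallback for the honest reading of the hypothesis---that ``an element of $\Lop^+(\DD)$ cannot be closed while $\DD$ is a proper dense subspace, because passing to the closure strictly enlarges the domain beyond $\DD$''---is circular: that sentence \emph{is} the contrapositive of what must be proved, and the ``because'' clause merely restates it. Nothing in your text rules out, a priori, a closed unbounded $A$ with proper dense domain (for general operators this is perfectly possible: every unbounded self-adjoint operator is closed with proper dense domain); what excludes it here are the invariance relations $A\DD\subseteq\DD$ and $A^*\DD\subseteq\DD$, which your argument never actually uses.

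The missing step can be filled as follows, and this is where the real work lies. Since $A\DD\subseteq\DD\subseteq\DD(A^*)$, one has $\DD\subseteq\DD(A^*A)\subseteq\DD(A)=\DD$, so $N:=A^*A$ has domain exactly $\DD$, and $N$ is self-adjoint by von Neumann's theorem because $A$ is closed and densely defined (this is the only place the closedness of $A$ enters, and it is essential). Moreover $N\DD\subseteq\DD$, because $A\xi\in\DD$ and $A^*\DD\subseteq\DD$; hence $\DD(N^2)=\DD(N)$. But a self-adjoint operator $N$ with $\DD(N^2)=\DD(N)$ must be bounded: if $\spec(N)$ were unbounded, pick spectral values $\mu_n\geq 2^n$, orthonormal vectors $\xi_n$ in the ranges of disjoint spectral projections around the $\mu_n$, and set $\xi=\sum_n c_n\xi_n$ with $\vert c_n\vert^2=\mu_n^{-3}$; then $\xi\in\DD(N)$ but $\xi\notin\DD(N^2)$, a contradiction. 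A bounded, densely defined, closed (self-adjoint) operator has closed domain, so $\DD=\DD(N)=\HS$. Only from this point on is your final paragraph sound: with $\DD=\HS$, any $B\in\Lop^+(\HS)$ has an everywhere-defined adjoint, hence is closed, hence bounded by the closed graph theorem, and together with the trivial converse inclusion this yields $\Lop^+(\DD)=\LD(\HS)$.
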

Observe 
\[\Lop^+(\DD)\subset\Lop_c(\DD)\subset\Lop(\DD)\]

\begin{defi}
A subalgebra of $\Lop(\DD)$ contained in $\Lop_c(\DD)$ is said to be an\textbf{ $O$-algebra on $\DD$ in $\HS$}, and a $^*$-subalgebra of $\Lop^+(\DD)$ is said to be an \textbf{$O^*$-algebra on $\DD$ in $\HS$}. 
\end{defi}

\subsection*{Representations of $^*$-algebras}\label{app rep} 
\begin{defi}Let $\Alg$ be a $^*$-algebra with unit $\idf$ and let $\DD$ be a dense subspace of a Hilbert space $\HS$.
  
The map $\pi:\Alg\rightarrow\Lop(\DD)$ is a \textbf{$^*$-representation of a $^*$-algebra $\Alg$} on a Hilbert space $\HS$ if
\begin{enumerate}
 \item there exists a dense subset $\DD$ of $\HS$ such that
\beqs \DD\subset \bigcap_{A\in\Alg}\left(D(\pi(A))\cap D(\pi(A)^*)\right)\eqs
 \item for every $A,B\in\Alg$ and $\lambda\in\CB$ 
\beqs \pi(A+B)=\pi(A)+\pi(B),\quad &\pi(\lambda A)=\lambda\pi(A)\\
\pi(AB)=\pi(A)\pi(B),\quad & \pi(A^*)=\pi(A)^*\\
\pi(\idf)=I&
\eqs
\end{enumerate}
\end{defi}

\section*{Acknowledgements}
The work has been supported by the Emmy-Noether-Programm (grant FL 622/1-1) of the Deutsche Forschungsgemeinschaft.

\addcontentsline{toc}{section}{References}

\end{document}